\let\origmaketitle\maketitle
\def\maketitle{
  \begingroup
  \def\uppercasenonmath##1{} 
  \let\MakeUppercase\relax 
	\origmaketitle
  \endgroup
}
\numberwithin{equation}{section}
\newcounter{smallarabics}
\newenvironment{arabicenumerate}
{\begin{list}{{\normalfont\textrm{(\arabic{smallarabics})}}}
  {\usecounter{smallarabics}\setlength{\itemindent}{0cm}
   \setlength{\leftmargin}{5ex}\setlength{\labelwidth}{4ex}
   \setlength{\topsep}{0.75\parsep}\setlength{\partopsep}{0ex}
   \setlength{\itemsep}{0.75ex}}}
{\end{list}}
\newcommand{\ben}{\begin{arabicenumerate}}
\newcommand{\een}{\end{arabicenumerate}}
\def\init{\setcounter{equation}{0}}
\newenvironment{notations}
{\begin{list}{{\normalfont\textrm{-}}}
  {\setlength{\itemindent}{0cm}
   \setlength{\leftmargin}{2ex}\setlength{\labelwidth}{4ex}
   \setlength{\topsep}{0.75\parsep}\setlength{\partopsep}{1ex}
   \setlength{\itemsep}{1ex}}
}
{\end{list}}
\newtheorem{theorem}{Theorem}[section]
\newtheorem{assumption}[theorem]{Hypothesis}
\newtheorem{proposition}[theorem]{Proposition}
\newtheorem{lemma}[theorem]{Lemma}
\newtheorem{corollary}[theorem]{Corollary}
\theoremstyle{definition}
\newtheorem{definition}[theorem]{Definition}
\newtheorem{remark}[theorem]{Remark}
\newtheorem{example}[theorem]{Example}
\newcommand{\beq}{\begin{equation}}
\newcommand{\eeq}{\end{equation}}
\newcommand{\bea}{\begin{aligned}}
\newcommand{\eea}{\end{aligned}}
\newcommand{\bex}{\begin{example}}
\newcommand{\eex}{\end{example}}
\def\bel{\begin{lemma}}\def\eel{\end{lemma}}
\def\bet{\begin{theorem}}\def\eet{\end{theorem}}
\def\bed{\begin{definition}}\def\eed{\end{definition}}
\def\ber{\begin{remark}}\def\eer{\end{remark}}
\def\Lemma{Lem.~}
\def\Lemmas{Lem.~}
\def\beproof{\noindent{\bf Proof.}\ }
\renewenvironment{proof}{\beproof}{\qed}
\newcommand{\step}[1]{{\noindent\emph{Step #1.}} }
\def\fantom{\\ &\phantom{=}\,}
\renewcommand{\leq}{\leqslant}\renewcommand{\geq}{\geqslant}
\newcommand{\open}[1]{\mathopen{}\mathclose{\left]#1 \right[}}
\newcommand{\norm}[1]{\left\|{#1}\right\|}
\newcommand{\module}[1]{\left|#1\right|}
\def\st{{ \ |\  }}
\newcommand*{\defeq}{:=}\newcommand*{\eqdef}{=:}
\newcommand{\mat}[4]{\begin{pmatrix}#1 &#2  \\ #3 &#4 \end{pmatrix}}
\newcommand{\col}[2]{\begin{pmatrix}#1 \\#2\end{pmatrix}}
\def\rr{{\mathbb R}}\def\cc{{\mathbb C}}\def\nn{{\mathbb N}}
\def\bS{\mathbb{S}}
 \def\i{{\rm i}}
\def\e{{e}}
\newcommand{\Sig}{\Sigma}
\def\cD{{\mathcal D}}\def\cE{{\mathcal E}}\def\cF{{\mathcal F}}\def\cH{{\mathcal H}}\def\cN{\mathcal{N}}\def\cU{{\mathcal U}}\def\cV{{\mathcal V}}\def\cW{{\mathcal W}}
\def\rx{{\rm x}}\def\rk{{\rm k}}
\renewcommand{\Im}{\operatorname{Im}}
\renewcommand{\Re}{\operatorname{Re}}
\DeclareMathOperator{\Diff}{Diff}
\DeclareMathOperator{\supp}{supp}
\DeclareMathOperator{\Op}{Op}
\DeclareMathOperator{\Ker}{Ker}
\DeclareMathOperator{\Ran}{Ran}
\DeclareMathOperator{\Dom}{Dom}
\DeclareMathOperator{\rs}{rs}
\newcommand{\one}{\boldsymbol{1}}
\newcommand{\traa}[1]{\mskip-6mu\upharpoonright_{#1}}
\def\slim{{\rm s-}\lim}
\def\p{\partial}
\def\cinf{C^\infty}
\def\cinfb{C^{\infty}_{\rm b}}
\def\coinf{{C}^\infty_{\rm c}}
\def\zero{{\rm\textit{o}}}
\def\vol{{\rm vol}}
\def\12{\frac{1}{2}}
\def\wf{{\rm WF}}
\def\WF{{\rm WF}}
\def\dvol{\mathop{}\!d{\rm vol}}
\def\bar{\overline}
\def\dual{\!\cdot \!}
\def\CCR{{\rm CCR}}
\def\medcap{\textstyle\bigcap}
\def\tosim{\xrightarrow{\sim}}
\def\bij{\xrightarrow{\sim}}
\def\calde{Calder\'{o}n }
\def\Riem{{\rm Riem}_{\rg}}
\def\Ric{{\rm Ric}}
\def\rg{{g}}
\def\rh{{h}}
\def\rR{{\rm R}}
\def\rk{{k}}
\def\ru{{\bf u}}
\def\rU{{\bf U}}
\def\bs{{s}}
\def\rer{{r}}
\def\bG{{\rm G}} 
\def\scal{{\rm R}}
\def\tr{{\rm tr}}
\def\div{{\rm div}}
\def\nab{\nabla}
\def\tM{{\tilde M}}
\def\tV{\tilde{V}}
\newcommand\Texp{{\rm  Texp}}
\def\ha{\widehat}
\def\BT{{BT}}
\def\trace{{\rm tr}}
\def\div{{\rm div}}
\def\nab{\nabla}
\def\Diff{{\rm Diff}}
\def\cF{\mathcal{F}}
\def\tV{\tilde{V}}
\def\trh{\tilde{\rh}}
\def\ta{\tilde{a}}
\def\tA{\tilde{A}}
\def\tD{{\tilde{D}}}
\def\teps{\tilde{\epsilon}}
 \def\tr{\tilde{r}}
 \def\tb{\tilde{b}}
 \def\trho{\tilde{\varrho}}
\def\dvol{d{\rm vol}}
\def\cinfsc{C^{\infty}_{\rm sc}}
\def\tsig{\tilde{\sigma}}
\def\ha{}
\def\AT{\mathcal{A}T}
\newcommand{\modif}[1]{{#1}}
\newcommand{\newmodif}[1]{{#1}}
\author{}
\address{Laboratoire de Math\'ematiques d'Orsay, Universit\'e Paris-Saclay, France}
\email{christian.gerard@math.u-psud.fr}
\author{}
\address{Dipartimento di Matematica, Universit\`a di Genova, Italy}
\email{murro@dima.unige.it}
\author[]{\normalsize Christian \textsc{G\'erard}, Simone \textsc{Murro} \& Micha{\l} 
\textsc{Wrochna}}
\address{\modif{Mathematical Institute, Utrecht University, The Netherlands}}
 \email{\modif{m.wrochna@uu.nl}}
\keywords{linearized Einstein equations, microlocal analysis, Quantum Field Theory on curved spacetimes, Hadamard states, elliptic boundary value problems}
\subjclass[2020]{81T20, 83C05, 58J47, 58J45, 58J32}
\date{November 2023}
\title[\modif{Wick rotation of linearized gravity and \calde projectors}]{\modif{\large Wick rotation of linearized gravity in Gaussian time}\\ \modif{ \large and \calde projectors}}
\begin{document}

\begin{abstract}\modif{Motivated by the quantization of linearized gravity, we consider gauge-fixed linearized Einstein equations and  their Wick rotation near a Cauchy surface. We show that \calde projectors  for the Wick-rotated equations induce Hadamard bi-solutions on the Lorentzian level.  On the other hand, we find smoothing obstructions to gauge-invariance and positivity conditions needed in quantization. These obstructions are primarily due to boundary terms arising in the Wick-rotated theory and depend on the boundary conditions. 
} 
\end{abstract}

\maketitle

\section{Introduction and summary}\label{sec:intro}

\subsection{Introduction} The  quantization of linearized gravity has remained  incomplete  for a long time.  The structure of the classical theory needed for the quantization is nowadays relatively well understood \cite{FH,HS,BFR,BDM} and various possible candidates for physical states are discussed in the physics literature  in some cases of highly symmetric Einstein metrics, it is however unclear if they are indeed positive. The  rigorous construction of physical {states} has remained an open problem: in fact, there is presently no mathematically satisfactory answer even for perturbations of Minkowski space.

Beside the prerequisite of being positive, the main criterion for a state to be physical is the so-called \emph{Hadamard condition}, which is needed for the renormalization of non-linear quantities (see e.g.~\cite{FNW,AFO,FV})  and for the mathematical formulation of perturbative quantum gravity  \cite{BFR,BFH,rejzner}. Unfortunately, the standard deformation argument for the existence of Hadamard states~\cite{FNW} does not apply to linearized gravity. In fact, the spacetime $(M,\rg)$ which serves as the background for the linearization  must  be a solution of the non-linear Einstein equations, so it cannot be arbitrarily deformed. Furthermore, more advanced techniques based on pseudo\-differential calculus~\cite{junker,GOW} fail to preserve the gauge invariance of the equations if applied directly, whereas conformal scattering methods~\cite{DMP} are affected by divergent behaviour at conformal infinity~\cite{BDM}. A pseudodifferential construction of Hadamard states is known in the case of Yang--Mills fields linearized around a possibly non-zero solution~\cite{GW1},  it uses however a spacetime deformation argument and various methods which are specific to differential forms.

Let us explain the problem in more detail. Let $(M,\rg)$ be a globally hyperbolic spacetime with $\dim M =4$, such that $\rg$ solves the non-linear Einstein equations. We consider the two differential operators
$$
P=  - \square-  I \circ d \circ \delta + 2\,\Riem, \quad  K=I\circ d,
$$
acting on  symmetric $(0,2)$-tensors, resp.~$(0,1)$-tensors, 
where: \smallskip
\ben\setlength{\itemindent}{0.65cm}
\item[] $\square$ is the  d'Alember\-tian associated to $\rg$, i.e.~$(\square u)_{ab} =  \nab^{c}\nab_{c}u_{ab}$, 
\item[] $I$ is the trace reversal, i.e.~$(Iu)_{ab}= u_{ab}- \frac{1}{2} \trace_{\rg}(u) \rg_{ab}$,
\item[] $d$ the symmetric differential, i.e.~$(d w)_{a b}= \nab_{(a}u_{b)}$,
\item[]  $\delta$ is the formal adjoint of $d$, i.e.~$(\delta u)_{a}= -2\nab^{c}u_{ca}$,
\item[] $\Riem$ is the Ricci operator, i.e.~$\Riem(u)_{ab}= \rR\indices{_{a}^{cd}_{b}}u_{cd}= \rR\indices{^{c}_{ab}^{d}}u_{cd}$,
\een \smallskip
(see Subsect.~\ref{sec:lg.1.1} for more details on our conventions). Then \modif{the equation} $Pu=0$ \modif{is equivalent to} the \emph{Einstein equations linearized around $\rg$} \modif{(these are usually formulated using the linearized Einstein operator $\12 P \circ I$)}. The identity $PK=0$ means that $P$ is invariant under  linear gauge transformations $u\mapsto u+ Kw$, and it is responsible for the fact that $P$ is not hyperbolic. The \emph{de Donder gauge} or \emph{harmonic gauge} consists in considering the  \modif{``gauge-fixed''} hyperbolic operator  
$$
D_2= P+  K K^\star,
$$ where $K^\star$ is the formal adjoint for a non-positive Hermitian form  involving $I$ (see Subsect.~\ref{sec:lg.2}). Then, solutions of the linearized Einstein equations $Pu=0$ are obtained by solving $D_2 u =0$ with the gauge condition $K^\star u=0$. An auxiliary role is played by the hyperbolic operator $D_1=K^\star K$ \modif{acting on $(0,1)$-tensors}, which satisfies $D_2 K = K D_1$.

In the quantization problem, of particular relevance are \emph{bi-solutions} which correspond to two-point functions of states. In fact, finding a Hadamard state amounts to constructing  a pair of operators $\lambda_{2}^{\pm}$  \modif{acting on symmetric $(0,2)$-tensors on $M$} such that:
$$
\bea
(1) & \quad D_{2}\lambda^\pm_{2}=\lambda^\pm_{2} D_{2} =0  \,\mbox{  and  }   \,  \i(\lambda_2^+ - \lambda_2^-) \mbox{ is the causal propagator of } D_2,\\\
(2) & \quad   (\lambda^\pm_{2})^{\star}=\lambda^\pm_{2} \,\mbox{  and  }   \, \lambda^\pm_{2}: \Ran K|_{\coinf}\to\Ran K|_{\cD'},\\
(3) & \quad\lambda^\pm_{2}  \geq 0 \, \mbox{ on }  \Ker K^{\star}|_{\coinf},\\
(4) & \quad \wf(\lambda_2^\pm)'  \subset \cN^{\pm}\times \cN^{\pm}.
\eea
$$
The second part of (1) corresponds to the \emph{canonical commutations relations}. Condition~(2) is the \emph{gauge invariance} and (3) the \emph{positivity}. Put together, they ensure that one gets a well-defined quantum state on the physical space $K^\star h=0$. Then, the Gelfand--Naimark--Segal construction yields quantum field operators for linearized gravity. In addition, one requires the wavefront set estimate (4): this is the celebrated \emph{Hadamard condition} (see Subsect.~\ref{sss:had} for more details) which ensures the correct short-distance behavior of fields.

The first significant difficulty as compared to the scalar Klein--Gordon equation is that the Hermitian form in  (2)--(3) is \emph{not} positive and in consequence the positivity condition becomes extremely delicate. Furthermore, the use of pseudo\-differential calculus is helpful to get condition (4), but it interacts  badly with conditions (2)--(3). The impossibility of using a spacetime deformation argument makes various  previously developed  methods inapplicable.   

\subsection{\modif{Setting}} In this paper we  \modif{focus on} globally hyperbolic backgrounds $(M,\rg)$ with the following properties I. and II.  First of all, we assume:
\begin{enumerate}
 \item[I.]  $(M,\rg)$ is a Lorentzian manifold of bounded geometry near a Cauchy surface $\Sigma$.
\end{enumerate}
\noindent Roughly speaking, the bounded geometry assumption  means that all relevant geometric quantities associated with $\rg$ are bounded with all their derivatives relatively to some fixed reference Riemannian metric $\hat g$, see Subsect.~\ref{an-bg.1}.  Using Gaussian normal coordinates to $\Sigma$ we obtain  a diffeomorphism   $\chi$ defined in a neighborhood of $\Sigma$  such that
$$
\chi^{*} \rg= - dt^{2}+\rh_{t},
$$ 
where the function $t\mapsto \rh_t$ takes values in Riemannian metrics of bounded geometry and is bounded with all its derivatives. Our second assumption is:
\begin{enumerate}
 \item[II.]   The map $t\mapsto \rh_t$  is  real analytic.
\end{enumerate}
The precise formulation is given in Subsect.~\ref{hippopotamus}. In practice it is often easier to check the stronger condition of \emph{bounded analytic geometry}, which is an analyticity condition in all variables, see Subsect.~\ref{an-bg.2}. We show that it is satisfied in examples such as Minkowski, de Sitter, Kerr--Kruskal and Schwarzschild--de Sitter spacetimes.



\subsection{Plan of \modif{paper}} 
The main idea \modif{discussed in the present paper}, inspired by Euclidean gravity approaches to quantization, is to consider an elliptic operator $\tilde D_2$ which is a \emph{Wick-rotated} version of the hyperbolic operator $D_2$ near a fixed Cauchy surface $\Sigma$.  We then construct the Cauchy data of $\lambda_2^\pm$ in terms of the \emph{Calder\'on projectors} $\tilde c_2^\pm$ for  $\tilde D_2$. Denoting by $s\modif{=\i^{-1} t}$ the imaginary time variable and fixing suitable boundary conditions,  this means that $\tilde c_2^\pm$ is the projection to Cauchy data of $L^2$ solutions of the elliptic problem $\tilde D_2 u=0$ in $\{ \pm s >0\}$. To put it briefly, Cauchy data for an   elliptic problem are used as Cauchy data in Lorentzian signature.

The motivation is that this type of construction is known to carry good positivity and wavefront set properties in the real analytic, scalar case \cite{GW2,schapira}. Furthermore $\tilde c_2^\pm$ are given by a direct formula in terms of the inverse of $\tilde D_2$, so one could hope to prove the gauge invariance condition (2)   using algebraic identities satisfied by $\tilde D_2$.

However, it is not a priori clear how to do the Wick rotation in a way that guarantees the invertibility of a boundary value problem associated with the elliptic operator $\tilde D_2$, nor how to prove the Hadamard condition without assuming full analyticity. Worst of all, the  {Calder\'on projectors} $\tilde c_2^\pm$  come with some positivity properties indeed (as expected from the scalar case), but with respect to an unphysical  Euclidean version of the physical Hermitian form, so there is no particular reason for the positivity condition (3) to be satisfied.

In the present paper we tackle \modif{part of these issues in the following order.}

\begin{notations}
\item In Sect.~\ref{sec:gauge} we recall the general structure of linear gauge theories on Lorentzian spacetimes in the formalism due to Hack--Schenkel \cite{HS}, which in particular serves us to justify conditions (1)--(4). In Sect.~\ref{an-bg} we introduce the notions of bounded geometry which  enter the assumptions I.~and II.~and we derive various examples. As an intermediate step, in Thm.~\ref{cauchykow} we show   that it suffices to check analytic bounded geometry assumptions for the initial data of Einstein metrics.  
  
\item    In Sect.~\ref{sec:lg} we explain how exactly linearized gravity fits into the framework of Sect.~\ref{sec:gauge}. We then use Gaussian normal coordinates near a Cauchy surface $\Sigma$ and a parallel transport argument to reduce $D_2$ to an operator of the form $\p_t^2 + a_2(t)$, where $a_2(t)$ is a family of elliptic operators acting in the spatial variables.

\item In Sect.~\ref{sec01}  we consider the hyperbolic operator $\p_t^2 + a_2(t)$ and in parallel we study boundary value problems for its Wick-rotated elliptic analogue $-\p_s^2 + a_2(\i s)$. Using Shubin's pseudo\-differential calculus on manifolds of bounded geometry, we construct a parametrix for \emph{Hadamard projectors} $c_2^\pm$ (operators that project to Cauchy data of solutions with wavefront set in $\cN^\pm$) and for Calder\'on projectors $\tilde c_2^\pm$  \modif{at $s=0$ for the operator $\tilde D_2$ with Dirichlet boundary conditions at some finite imaginary times $s=\pm T$}. The two parametrices are deduced from similar pseudo\-differential factorizations of the respective operator $\p_t^2 + a_2(t)$ or  $-\p_s^2 + a_2(\i s)$, and are therefore related: we show that in fact, \emph{Hadamard and Calder\'on projectors coincide modulo a smoothing term}. In conclusion,  we can define $\lambda_2^\pm$ using $\tilde c_2^\pm$ as Cauchy data, and then $\lambda_2^\pm$ satisfies the Hadamard condition.  


\item Finally, Sect.~\ref{sec3} \modif{discusses gauge-invariance and positivity on the physical space. The operators $\lambda_2^\pm$ defined from the Calder\'on projectors $\tilde c_2^\pm$ are positive for an auxiliary scalar product, but this scalar product unfortunately differs from the physical Hermitian form in condition (3). However, on the Cauchy surface level,  the two inner products coincide on tensors which have \emph{no mixed components and  are invariant under trace reversal}. The idea is then to find a gauge transformation that maps to tensors of this type, at least on the level of Cauchy data at $t=0$. This is achieved in Lem.~\ref{lem:alt2} in the elliptic setting in the case of Dirichlet boundary conditions.  The problem with that strategy is that $\lambda_2^\pm$ turns out \emph{not} to satisfy gauge invariance (2), and this results in positivity modulo a smoothing term. The  obstruction to gauge invariance is due to the choice of boundary conditions for $\tilde D_2$---this motivates further work , possibly on different boundary conditions in Wick-rotated gravity. }
\end{notations}

These results will be used in subsequent works on quantization of linearized gravity in different settings.

\subsection{Bibliographical remarks} The importance of  Hadamard states for renormalization was realized already in the 1970s,  cf.~Allen--Folacci--Ottewill \cite{AFO} for the case of linearized gravity.  More recently it was re-emphasi\-zed  in works 
by Brunetti, Fredenhagen, Rejzner and other authors \cite{BFR,FR,BFH,rejzner} on the perturbative approach to effective theories of quantum gravity.

The problem of constructing Hadamard states in linear gauge theories was considered in the simplest case of Maxwell equations by Furlani \cite{furlani},  Fewster--Pfenning \cite{FP}, Dappiaggi--Siemssen \cite{DS} and Finster--Strohmaier \cite{FS}. The construction of Hadamard states for Yang--Mills equations linearized around the zero solution  in the BRST framework is due to Hollands \cite{hollands}. Later, Gérard--Wrochna \cite{GW1} considered Yang--Mills equations linearized around non-zero solutions. As pointed out in the introduction, none of these constructions can be  adapted to linearized Einstein equations. The case of linearized Einstein equations on asymptotically flat spacetimes was studied by Benini--Dappiaggi--Murro \cite{BDM} with methods drawing from earlier works of Ashtekar--Magnon-Ashtekar \cite{AA} and Dappiaggi--Moretti--Pinamonti \cite{DMP}, the quantization turns out however to be limited to a subspace of classical degrees of freedom due to divergences at null infinity.

The general structure of quantized linearized gravity and various candidates for \modif{two-point functions of} states on specific spacetimes were studied in many works, too numerous to be listed here exhaustively \modif{(however, the physics literature does not appear to address the problem of positivity)}.

 Building on works among others by Moncrief \cite{moncrief},  the  symplectic structure of linearized gravity
was studied by Fewster--Hunt \cite{FH}; the latter was then built into a more general framework for linear gauge theories (used in the present paper)  by Hack--Schenkel \cite{HS}. The work of Fewster--Hunt also considers the \emph{TT gauge}\footnote{This stands for ``transverse and traceless'', though one should be aware that outside of the special case of Minkowski space one usually  means by this just ``traceless''.} (often used in the physics literature when discussing quantization of linearized gravity) and examines if and under what assumptions it can be implemented on spaces of space-compact solutions.

In particular, many authors  analyzed the question of the existence of a maximally symmetric state on de Sitter space (see e.g.~\cite{dS1,dS2} and references therein), which appears to be extremely subtle due to infrared problems. In our construction, Dirichlet boundary conditions at finite imaginary times  act as a universal infrared regularization (cf.~\cite[Subsect.~4.5]{GW2} for an illustration in the scalar case), but they manifestly break the symmetries on top of the issues with gauge-invariance. In that particular case, however, it is possible to Wick rotate to the sphere and then  boundary conditions are no longer needed, at the cost of having to deal with infrared problems directly: this will be addressed in a  forthcoming paper.


Calder\'on projectors have been studied in various settings, including recently general fibred cusp operators by Fritzsch--Grieser--Schrohe \cite{Fritzsch2023}. One of the main  differences between our situation and the ones typically considered in the literature  is that $\{s=0\}$ plays the role of an interface between two {symmetric} regions, rather than being the boundary of one region which can be extended in an arbitrary way. We also mention  the  recent construction of the Hartle--Hawking--Israel state for scalar fields with closely related techniques  \cite{HHI}.

Finally, we remark that there has been  much recent progress in  the analysis of linearized Einstein equations with microlocal methods  in the context of black hole stability, see e.g.~Hintz--Vasy \cite{Hintz2018a} and Häfner--Hintz--Vasy \cite{HHV}. The techniques we use here are largely different due to the local-in-time character of the problem. However, both kinds of developments  are expected to be relevant to the construction of Hadamard  states with distinguished asymptotic properties,  a problem which remains open for now in the case of linearized gravity.

\subsection{Notation}\label{sec01.1}

Before moving on to the main part of the paper, let us introduce the relevant notation.

\subsubsection{Sections of vector bundles}\label{sec01.1.1}
Let  $V\xrightarrow{\pi}M$ be  a finite rank complex vector bundle  over a smooth manifold $M$.  

\begin{notations}
\item If $\Sigma\subset M$ is a smooth manifold we denote by $V|_{\Sigma}\xrightarrow{\pi}\Sigma$ the restriction of $V$ to $\Sigma$.
\item We denote by $\cinf(M; V)$, resp.~$\coinf(M; V)$ the space of smooth, resp.~ compactly supported smooth sections of $V$.  
\item We denote by $\cD'(M; V)$, resp.~$\cE'(M; V)$ the space of distributional, resp.~ compactly supported distributional  sections of $V$. 
\item If $\Omega\subset M$ is an open set with smooth boundary and $F(M; V)$ is one of the above spaces, we denote by $\bar{F}(\Omega; V)\subset F(\Omega; V)$ the space  of restrictions of $F(M; V)$ to $\Omega$. For example $\bar{\cD'}(\Omega; V)$ is the space of  {\em extendible} distributional sections in $\cD'(\Omega; V)$.
\item If $\Sigma\subset M$ is a submanifold, we denote by $F(\Sigma; V)$ the corresponding space of sections of the restriction $V|_{\Sigma}\xrightarrow{\pi}\Sigma $  of $V$ to $\Sigma$.
\end{notations}

Equipped with their natural seminorms, all the above vector spaces of sections are Fr\'echet spaces.

We use the same notations if $V$ is a finite dimensional vector space, i.e.~we write simply $V$ instead of the trivial vector bundle $M\times V$.

\subsubsection{Globally hyperbolic spacetimes}
We use the convention  $(-,+,\dots,+)$ for the Lo\-rent\-zian signature. Let us recall that a \emph{globally hyperbolic spacetime} is a smooth  Lorentzian manifold equipped with a time orientation and    having a  Cauchy surface $\Sigma$, i.e., a  closed subset of $M$ which is intersected exactly once by each maximally extended time-like curve.
By the Bernal--S\'anchez theorem, this definition implies that  $M$ admits smooth space-like Cauchy surfaces.

\begin{notations}
\item We denote by $J_{\pm}(K)$ the future/past causal shadow of  $K\subset M$.
\item If $M$ is a globally hyperbolic spacetime  we   denote by $\cinf_{\rm sc}(M;V)$ the space of space-compact sections, i.e.~sections in $\cinf(M;V)$ with compactly supported restriction to a Cauchy surface.
\end{notations}

\subsubsection{Distributional kernels and wavefront sets}

 If $u\in \cD'(M; V)$ we denote by $\WF (u)\subset T^{*}M\setminus\zero$ its {\em wavefront set}, which is invariantly defined using local trivializations of $V$. 
 \begin{notations}
  \item If $V_{i}\xrightarrow{\pi}M_{i}$ are two vector bundles as above and $A: \coinf(M_{1}; V_{1})\to \cD'(M_{2}; V_{2})$ is linear continuous, then $A$ admits a distributional kernel, still denoted by $A\in \cD'(M_{2}\times M_{1}; V_{2}\boxtimes V_{1})$.  
  \item We denote by $\WF(A)'\subset (T^{*}M_{2}\times T^{*}M_{1})\setminus \zero$ its {\em primed} wavefront set, defined by 
  \[
  \Gamma'= \{((x_{2}, \xi_{2}), (x_{1}, - \xi_{1})) \st ((x_{2}, \xi_{2}), (x_{1}, \xi_{1}))\in \Gamma\}\hbox{ for }\Gamma\subset T^{*}M_{2}\times T^{*}M_{1}.
  \]
  \end{notations}
  
  \subsubsection{Hermitian bundles}\label{sec0.1.1b}
If $V$ is equipped with a fiberwise non-degenerate Hermitian form $(\cdot| \cdot)_{V}$, we say that $V$ is a {\em Hermitian bundle}. If the Hermitian form is positive definite, we say that $V$ is a {\em Hilbertian bundle}. 

 We will always assume that $M$ is equipped with a pseudo-Riemannian or Riemannian metric $g$ and denote by $d\vol_{g}$ the associated volume form.
 
\begin{notations} 
\item  If $V$ is a Hermitian bundle over $M$ and $U\subset M$ is an open set,  we set
\[
(u| u)_{V(U)}\defeq \int_{U}(u(x)| u(x))_{V}d\vol_{g}, \ u\in \coinf(U; V).
\]
We use $(\cdot|\cdot)_{V(M)}$ to inject $\coinf(M; V)$, resp.~$\cinf(M; V)$, into $\cD'(M; V)$, resp.~$\cE'(M; V)$.
\item  We denote by  $A^*$   the formal adjoint of an operator $A$. 
\item If we need to  consider simultaneously two different Hermitian structures, the two adjoints of $A$ will be denoted by  $A^{*}$ and $A^\star$.
 \end{notations}

\subsubsection{Differential operators}\label{sec01.1.2}
If $V_{1}$, $V_{2}$ are two vector bundles over $M$, we denote by  $\Diff(M;V_{1},V_{2})$ (resp.~$\Diff^m(M;V_{1},V_{2})$) the corresponding space of differential operators (resp.~differential operators of order $m$), equipped with their Fr\'echet space topologies.

\begin{notations}
 \item We abbreviate $\Diff^{(m)}(M; V, V)$ by  $\Diff^{(m)}(M; V)$.
\item  If $A\in \Diff(M;V_{1},V_{2})$ we denote by $A|_{\cinf}$, resp.~$A|_{\coinf}$,  $A|_{\cinfsc}$ its restriction to the space $\coinf(M;V_1)$, resp.~$\cinf(M;V_1)$, $\cinfsc(M; V_{1})$, and we use analogous notation for distributional sections (or vector-valued distributions) $\cD'(M;V_1)$ and compactly supported distributional sections $\cE'(M;V_1)$. 
\end{notations}

\subsubsection{Time dependent objects}\label{sec01.1.3}
If $I\subset \rr$ is an interval  and $\cF$ is a Fr\'echet space whose topology is defined by a family of  seminorms $\norm{ \cdot}_{n}$, $n\in \nn$, we denote by $\cinfb(I, \cF)$ the space of maps $f: I\to \cF$ such that $\sup_{t\in I}\| \p_{t}^{p}f(t)\|_{n}<\infty$ for all $n, p\in \nn$. 
Equipped with the obvious seminorms, $\cinfb(I, \cF)$ is a Fr\'echet space.

We use this notation to define for example $\cinfb(I; \cinfb(\Sigma; V))$, etc.

\subsubsection{Evolution groups}\label{sec01.1.4}
Let $\cH$ be a Hilbert space, $\cD\subset \cH$  a dense subspace, $I\subset \rr$ an interval and $I\ni t\mapsto H(t)$ a map with values in closed linear operators in $L(\cD, \cH)$.   Assume that   for some $z_{0}\in \cc$ one has:
\[
\begin{array}{rl}
i)&z_{0}+ \i \rr\in \rho(H(t)) \,\mbox{ and }  \,\sup_{\rr}\|(z_{0}+ \i \lambda- H(t))^{-1}\|_{B(\cH)}<\infty\ \ \forall t\in I,\\[2mm]
ii)&I\ni t\mapsto H(t)\in L(\cD, \cH)\hbox{ is  strongly }C^{1}.
\end{array}
\]
Then by a result of Kato \cite{K2},  see e.g.~\cite{SG} for a recent exposition, there exists a unique propagator $I\times I\ni (t,s)\mapsto U(t,s)\in B(\cH)$ such that $U(t,s): \cD\to \cD$ and:
\[
 \begin{cases}
\p_{t}U(t,s)u= \i H(t)U(t, s)u,\\[2mm]
U(s,s)u= u
\end{cases} 
\]
for all $u\in \cD$ and $t,s\in I$.
Following the physics literature we will denote $U(t, s)$ by $\Texp(\i \int_{s}^{t}H(\sigma)d\sigma)$.

\section{Preliminaries on gauge theories}\label{sec:gauge}\init

In this section we recall an abstract formalism for the quantization of (linear) gauge theories on curved spacetimes \cite{HS}.  We discuss various equivalent phase spaces that can be used for the algebraic quantization, in particular the phase space obtained by fixing a Cauchy surface and considering initial data.

We also formulate the  definition of Hadamard states for gauge theories and give conditions on the Cauchy surface covariances of a state that imply the Hadamard property.
\subsection{Solution spaces for hyperbolic equations}\label{ss:hyperbolic} Before discussing gauge theories, we recall the setup relevant for hyperbolic equations.

Let $(M,\rg)$ be a globally hyperbolic spacetime, and let $V$ be a Hermitian bundle over $M$. We denote by $A^{\star}$ the formal adjoint of $A\in \Diff(M; V)$.

 One says that $D\in\Diff(M;V)$ is \emph{Green hyperbolic} if $D$  has   retarded/advanced inverses $G_{\rm ret/adv}$ and the same holds true for $D^\star$. This is the case in particular if the principal symbol $\sigma_{\rm pr}(D)$ of $D$ equals $\xi\dual \rg^{-1}(x)\xi\,\one_{V}$, i.e.~is given by the inverse metric, see \cite[Thm.~3.3.1]{BGP}.  

The difference $G\defeq G_{\rm ret}-G_{\rm adv}$ is called interchangeably with its Schwartz kernel the \emph{causal propagator} (or \emph{Pauli-Jordan function}) of $D$.  We recall its fundamental properties below, see \cite[Lem.~3.3 \& Thm.~3.5]{Bar2012} for a proof at this level of generality.

\begin{proposition}\label{prop:basic} Suppose that $V$ is a Hermitian bundle over $M$ and   $D\in\Diff(M;V)$ is a Green hyperbolic operator such that $D^{\star}=D$. Then:
\ben
\item\label{prop:basic2} the induced map 
\[
[G]:\,\frac{\cinf_{\rm c}(M;V)}{\Ran D|_{\cinf_{\rm c}}}\longrightarrow\Ker D|_{\cinf_{\rm sc}}
\]
is well defined and bijective;
\item $(G_{\rm ret/adv})^{\star}=G_{\rm adv/ret}$ and consequently, $G^{\star}=-G$.
\een
\end{proposition}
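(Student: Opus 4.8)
The plan is to prove the two assertions of Proposition~\ref{prop:basic} essentially by the standard functional-analytic manipulations with retarded and advanced inverses, which are available here because $D$ is Green hyperbolic and formally self-adjoint. I would organize the argument around the defining relations $D G_{\rm ret/adv} = G_{\rm ret/adv} D = \one$ on the appropriate spaces, together with the support properties $\supp G_{\rm ret}(\cdot, y) \subset J_+(y)$ and $\supp G_{\rm adv}(\cdot,y)\subset J_-(y)$, which imply in particular that $G = G_{\rm ret}-G_{\rm adv}$ maps $\coinf(M;V)$ into $\cinf_{\rm sc}(M;V)$ (the intersection $J_+(K)\cup J_-(K)$ with a Cauchy surface is compact for $K$ compact).

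For part~\eqref{prop:basic2}, I would first check that $[G]$ is well defined, i.e.\ that $G$ annihilates $\Ran D|_{\coinf}$: this is immediate from $G D = (G_{\rm ret}-G_{\rm adv})D = \one - \one = 0$ on $\coinf(M;V)$, and simultaneously $DG = 0$ shows that the image indeed lands in $\Ker D|_{\cinf_{\rm sc}}$. For injectivity, suppose $Gf = 0$ with $f\in\coinf(M;V)$; then $G_{\rm ret}f = G_{\rm adv}f$, and this common section $u$ has support in $J_+(\supp f)\cap J_-(\supp f)$, hence is compactly supported, and satisfies $Du = f$, so $f\in\Ran D|_{\coinf}$ and $[f]=0$. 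For surjectivity, given $u\in\Ker D|_{\cinf_{\rm sc}}$, I would use a Cauchy surface $\Sigma$ and a partition-of-unity-type cutoff: write $\chi_+ + \chi_- = 1$ with $\supp\chi_\pm$ contained in a past/future of a slightly deformed Cauchy surface, set $f \defeq -D(\chi_+ u) = D(\chi_- u)$, which is compactly supported since the two expressions agree and the ``overlap'' region is spatially compact and temporally bounded, and then verify $G f = u$ using the support properties of $G_{\rm ret/adv}$ and uniqueness of the Cauchy problem (both $u$ and $Gf$ solve $D(\cdot)=0$ with the same data). This is the computational heart of the proof; the main obstacle is bookkeeping the supports carefully enough to see that $f$ is genuinely compactly supported and that $Gf = u$ rather than merely $Gf - u \in \Ker D$ with vanishing data.

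For part~(2), the identity $(G_{\rm ret/adv})^\star = G_{\rm adv/ret}$ is where formal self-adjointness $D^\star = D$ enters. The idea is to test: for $f,g\in\coinf(M;V)$, consider $(G_{\rm ret}f \,|\, g)_{V(M)}$; writing $g = D G_{\rm adv}g$ and integrating by parts (legitimate because the supports $J_+(\supp f)$ and $J_-(\supp g)$ have spatially compact, hence compact, intersection, so no boundary terms survive), one moves $D^\star = D$ onto $G_{\rm ret}f$ to get $(D G_{\rm ret}f \,|\, G_{\rm adv}g)_{V(M)} = (f \,|\, G_{\rm adv}g)_{V(M)}$, which shows $(G_{\rm ret})^\star = G_{\rm adv}$. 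The roles of retarded and advanced are symmetric, giving $(G_{\rm adv})^\star = G_{\rm ret}$ as well. Subtracting, $G^\star = (G_{\rm ret})^\star - (G_{\rm adv})^\star = G_{\rm adv} - G_{\rm ret} = -G$. I would note that the only subtlety here is justifying the integration by parts, i.e.\ that the pairing is well defined and boundary-term-free, which again follows from the causal support properties; everything else is formal. Since this is stated at ``this level of generality'' with a reference to \cite{Bar2012}, I would keep the exposition brief and point to that source for the detailed support-theoretic estimates.
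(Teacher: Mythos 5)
Your argument is correct and is precisely the standard proof of this statement; note that the paper does not prove Proposition~\ref{prop:basic} itself but delegates it to \cite[Lem.~3.3 \& Thm.~3.5]{Bar2012}, and your sketch reproduces that reference's argument (injectivity via compactness of $J_+(\supp f)\cap J_-(\supp f)$, surjectivity via the cutoff $\chi_++\chi_-=1$, and the adjoint identity via $g=DG_{\rm adv}g$ and the compact overlap of supports). The only blemish is a sign in the surjectivity step: with your convention $f=-D(\chi_+u)=D(\chi_-u)$ one finds $G_{\rm ret}f=-\chi_+u$ and $G_{\rm adv}f=\chi_-u$, hence $Gf=-u$; taking $f=D(\chi_+u)$ instead gives $Gf=u$, and either way surjectivity follows.
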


\subsubsection{Green's formula}\label{greeno}
Let us fix a Cauchy surface $\Sigma$ of $(M,\rg)$. It can be shown that under  the assumptions of Proposition \ref{prop:basic}, $D$ has a well-posed Cauchy problem at $\Sigma$, see e.g.~\cite[Subsect.~4.3]{Kh}. 

More precisely,  there exists a  Hilbertian  bundle $V_{\Sigma}$ over $\Sigma$ and a continuous operator $\varrho :\cinfsc(M;V)\to\coinf(\Sigma;V_{\Sigma})$, mapping a smooth section to its Cauchy data on $\Sigma$,   such that 
\beq\label{eq:rhoinv}
\varrho : \Ker D|_{\cinfsc}\to\coinf(\Sigma;V_{\Sigma})
\eeq
is bijective.  

By Prop.~ \ref{prop:basic}, there exists a unique $q :\coinf(\Sigma;V_{\Sigma})\to \coinf(\Sigma;V_{\Sigma})$ such that $q=q^*$ and
\beq\label{eq:Gq}
(\phi_1| G \phi_2)_{V(M)} = \i^{-1}( \varrho u_1  | q  \varrho u_2)_{V_{\Sigma}},
\eeq
for all $\phi_i\in \coinf (M;V)$ and $u_i=G \phi_i \in  \Ker D|_{\cinfsc}$, $i=1,2$. The l.h.s.~is manifestly  independent on the choice of Cauchy surface $\Sigma$, therefore $q$ is a conserved quantity along the Cauchy evolution,   called the \emph{conserved charge} or simply \emph{charge} of $D$. 

Note that the literature often uses the \emph{conserved symplectic form}, which is given by the r.h.s.~of \eqref{eq:Gq}. It defines the complex symplectic space of Cauchy data (or equivalently, the  symplectic space of solutions of $Du=0$) which is fundamental for field quantization.

  A practical way of computing $q$ is provided by the following elementary lemma.

\begin{lemma}\label{lemmatiti.1b} For $D$ as in Proposition \ref{prop:basic}, the  charge is the unique  $q\in \Diff(\Sigma;V_{\Sigma})$ such that for all $u,v\in\coinf(M;V)$:
\beq\label{eq:green}
(u | D v)_{V(J_\pm(\Sigma))}- ( D u |  v )_{V(J_\pm(\Sigma))}=\pm\i^{-1}( \varrho u | q  \varrho v)_{V_{\Sigma}},
\eeq
where we recall that  $J_{\pm}(K)$  are the future/past causal shadows of  $K\subset M$.
\end{lemma}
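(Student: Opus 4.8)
The plan is to identify the operator $q\in\Diff(\Sigma;V_\Sigma)$ characterized by the Green-type identity \eqref{eq:green} and then show it coincides with the charge $q$ defined by \eqref{eq:Gq}. First I would observe that the left-hand side of \eqref{eq:green} is a boundary term: since $D^\star=D$, the bilinear expression $(u\,|\,Dv)_{V(\Omega)}-(Du\,|\,v)_{V(\Omega)}$ over any relatively compact region $\Omega$ with smooth boundary depends only on the Cauchy data of $u$ and $v$ on $\partial\Omega$ (integration by parts / Green's formula for the symmetric operator $D$). Taking $\Omega = J_\pm(\Sigma)\cap\{|t|<R\}$ in a foliation and noting that $u,v$ are compactly supported, the contributions at $|t|=R$ vanish, so only the piece of $\partial\Omega$ along $\Sigma$ survives. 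This produces a sesquilinear pairing of $\varrho u$ and $\varrho v$ given by a differential operator on $\Sigma$; call the operator on the $+$ side $q$. Reversing the orientation shows the $-$ side gives $-q$, which accounts for the $\pm$ sign, and the fact that the total integral over $M=J_+(\Sigma)\cup J_-(\Sigma)$ of $(u\,|\,Dv)-(Du\,|\,v)$ vanishes for $u,v\in\coinf(M;V)$ is consistent with this. Hermiticity $q=q^*$ follows by swapping $u\leftrightarrow v$ and conjugating, using $D^\star=D$ again.

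Next I would show this $q$ is the charge of \eqref{eq:Gq}. Fix $\phi_1,\phi_2\in\coinf(M;V)$ and set $u_i=G\phi_i\in\Ker D|_{\cinfsc}$. The trick is to write $G=G_{\rm ret}-G_{\rm adv}$ and use that $w_i:=G_{\rm ret}\phi_i$ solves $Dw_i=\phi_i$ with past-compact support, while $G\phi_i=w_i - G_{\rm adv}\phi_i$ and $G_{\rm adv}\phi_i$ has future-compact support. Then, for instance,
\[
(\phi_1\,|\,G\phi_2)_{V(M)}=(Dw_1\,|\,G\phi_2)_{V(M)}=(Dw_1\,|\,u_2)_{V(M)}-(w_1\,|\,Du_2)_{V(M)},
\]
since $Du_2=0$. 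Because $w_1$ has past-compact support and $u_2\in\cinfsc$, and since $J_+(\Sigma)\cup J_-(\Sigma)=M$, I split this integral into the $J_+(\Sigma)$ and $J_-(\Sigma)$ parts. On $J_-(\Sigma)$ the retarded solution $w_1$ is compactly supported (its support meets $J_-(\Sigma)$ in a compact set, as $\supp w_1\subset J_+(\supp\phi_1)$), so one may apply \eqref{eq:green} there with $v=u_2$; on $J_+(\Sigma)$ a parallel argument using $G_{\rm adv}\phi_1$ applies. Carefully combining the $\pm$ contributions and using $\varrho w_1 = \varrho(G_{\rm adv}\phi_1 + G\phi_1)$ together with the support properties to drop the non-contributing pieces, one collects exactly $\i^{-1}(\varrho u_1\,|\,q\,\varrho u_2)_{V_\Sigma}$, matching \eqref{eq:Gq}. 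By the uniqueness of $q$ in \eqref{eq:Gq} (already established before the lemma), the two operators agree.

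Finally, uniqueness of $q$ as an element of $\Diff(\Sigma;V_\Sigma)$ satisfying \eqref{eq:green}: if $q'$ also works, then $(\varrho u\,|\,(q-q')\varrho v)_{V_\Sigma}=0$ for all $u,v\in\coinf(M;V)$; but $\varrho$ maps $\coinf(M;V)$ onto a dense subspace of $\coinf(\Sigma;V_\Sigma)$ (indeed onto all of it, by surjectivity of $\varrho$ restricted to $\Ker D|_{\cinfsc}$ in \eqref{eq:rhoinv} combined with the existence of a section extending given Cauchy data), so $q-q'=0$ as a differential operator. I expect the main obstacle to be the bookkeeping of support conditions when splitting integrals over $J_+(\Sigma)$ and $J_-(\Sigma)$ and keeping track of which boundary terms genuinely vanish — in particular making rigorous that $w_1=G_{\rm ret}\phi_1$ restricted to $J_-(\Sigma)$ is compactly supported so that \eqref{eq:green} applies verbatim, and ensuring the orientation-dependent signs from the two half-spaces assemble correctly rather than canceling. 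This is precisely the standard but delicate step where global hyperbolicity (compactness of $J_+(K)\cap J_-(K')$) is used.
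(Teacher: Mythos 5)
Your proposal is correct and follows essentially the same route as the paper: the left-hand side of \eqref{eq:green} is recognized as a boundary term depending only on the traces at $\Sigma$, and the resulting operator is identified with the charge of \eqref{eq:Gq} by applying the Green identity to retarded/advanced solutions, using compactness of $J_{+}(\supp\phi)\cap J_{-}(\Sigma)$ and a cutoff. The only organizational difference is that the paper reduces WLOG to $\supp\phi_{i}$ in the past of $\Sigma$ and invokes $G_{\rm ret}^{\star}=G_{\rm adv}$ at the last step, whereas you split symmetrically over $J_{\pm}(\Sigma)$ using $G_{\rm ret}\phi_{1}$ on one side and $G_{\rm adv}\phi_{1}$ on the other; the support bookkeeping you flag as the delicate point is handled identically in both versions.
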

\proof Observe that the l.h.s.~of \eqref{eq:green} can be rewritten as $(u| [\one_{J_{\pm}(\Sigma)}, D]v)_{V(M)}$, which shows that it depends only on the traces of $u, v$ on $\Sigma$. 
Let $\tilde{q}\in  \Diff(\Sigma; V_{\Sigma})$ be defined by the r.h.s.~of \eqref{eq:green}.  Let also $u_i=G \phi_i \in  \Ker D|_{\cinfsc}$, $i=1,2$.  Without loss of generality we can assume that $\supp \phi_{i} $ are in  the past of $\Sigma$, so that $G\phi_{i}= G_{\rm ret}\phi_{i}$ near $\Sigma$ and $\supp G_{\rm ret}\phi_{i}\cap J_{-}(\Sigma)$ is compact. Let us fix $\chi\in \coinf(M; \rr)$ such that $\chi\equiv 1$ near $\supp G_{\rm ret}\phi_{i}\cap J_{-}(\Sigma)$.  Then, we have:
\[
\bea
&-\i^{-1}(\varrho G \phi_{1}| \tilde{q} \varrho G \phi_{2})_{V_{\Sigma}}=  -\i^{-1}(\varrho \chi G_{\rm ret} \phi_{1}| \tilde{q} \varrho \chi G_{\rm ret} \phi_{2})_{V_{\Sigma}}\\
&=(\chi G_{\rm ret} \phi_{1}| D \chi G_{\rm ret} \phi_{2})_{V(J_{-}(\Sigma))}- ( D \chi G_{\rm ret} \phi_{1}| \chi G_{\rm ret} \phi_{2} )_{V(J_{-}(\Sigma))}\\
&=( G_{\rm ret} \phi_{1}| \phi_{2})_{V(J_{-}(\Sigma))}- (  \phi_{1}| G_{\rm ret} \phi_{2} )_{V(J_{-}(\Sigma))}\\
&=( G_{\rm ret} \phi_{1}| \phi_{2})_{V(M)}- ( \phi_{1}| G_{\rm ret} \phi_{2} )_{V(M)}= - (\phi_{1}| G \phi_{2})_{V(M)}.
\eea
\]
In the second line we use the definition of $\tilde{q}$, in the third we use that $\chi\equiv 1$ near $\supp G_{\rm ret}\phi_{i}\cap J_{-}(\Sigma)$ and that $DG_{\rm ret}= \one$ and in the fourth line that $\supp {\phi_{i}}\subset J_{-}(\Sigma)$ and $G_{\rm ret}^{\star}= G_{\rm adv}$. This shows that $\tilde{q}= {q}$ and completes the proof of \eqref{eq:green}. \qed


\subsection{Linear gauge theories}\label{ss:HS} When discussing linear gauge theories it is useful to introduce an abstract framework that captures their general structure. Here we use a special case of the framework proposed by Hack--Schenkel  \cite{HS}, which includes examples such as electromagnetism and linearized Yang--Mills equations, and which also turns out to be well adapted to  linearized gravity.  We follow the presentation in \cite{GW1}  and refer to~\cite{WZ} for the relationship with the BRST formalism. The more general BV formalism on Lorentzian manifolds is discussed in \cite{FR}, cf.~\cite{BFR} for the special case of linearized gravity.

\begin{assumption}\label{as:subsidiary}Suppose that we are given:
\ben
\item two Hermitian bundles $V_{1},V_{2}$ over $M$;
\item a formally self-adjoint operator $P\in\Diff(M;V_{2})$;
\item an operator $K\in\Diff(M;V_{1},V_{2})$, such that $K\neq0$ and 
\begin{enumerate}
\item[\upshape{a)}] $PK=0$,
\item[\upshape{b)}] $D_1\defeq K^\star K\in\Diff(M;V_1)$ is Green hyperbolic,
\item[\upshape{c)}] $D_2\defeq P+K K^\star\in\Diff(M;V_2)$ is Green hyperbolic,
\end{enumerate}
where $K^\star$ is the formal adjoint of $K$  defined as in \ref{sec0.1.1b} using the Hermitian structures on $V_1,V_2$.
\een
\end{assumption}

The operator $P$ is the  operator of direct physical interest (in our case, obtained by linearization of Einstein equations, see Sect.~\ref{sec:lg}). The operator $K$ defines gauge transformations $u\mapsto u+K w$, and the identity $PK=0$ states that $P$ is invariant under gauge transformations. 

Thanks to the assumptions on $D_1,D_2$, the non-hyperbolic equation $Pu=0$ can  be reduced by gauge transformations to the subspace $K^\star u=0$ of solutions of  the hyperbolic problem $D_2 u=0$. The equation $K^\star u=0$ is sometimes called \emph{subsidiary condition}. The term $KK^\star$ in the definition of $D_2 =P+K K^\star$ is the so-called \emph{gauge-fixing term}.

Hypothesis  \ref{as:subsidiary} immediately implies the identities
\beq
K^{\star}P=0, \quad K^\star D_2=D_1 K^\star, \quad D_2 K=K D_1. 
\eeq
We apply the notations from \ref{ss:hyperbolic} to the hyperbolic operators $D_1$ and $D_2$, with the addition of a subscript $i=1,2$ to distinguish between the two. In particular $G_2$ is the causal propagator of $D_2$.

\begin{proposition}[{\cite[Prop.~2.7]{GW1}}]\label{prop:Gpasses} Assume Hypothesis \ref{as:subsidiary}. Then  the induced maps
\[
\begin{aligned} 
{\rm a)}\quad &[G_2]:\,\frac{\Ker K^\star|_{\cinf_{\rm c}}}{\Ran P|_{\cinf_{\rm c}}}\longrightarrow \frac{\Ker P|_{\cinf_{\rm sc}}}{\Ran K|_{\cinf_{\rm sc}}},\\
{\rm b)}\quad &[G_2]:\,\frac{\Ker K^\star|_{\cinf_{\rm c}}}{\Ran P|_{\cinf_{\rm c}}}\longrightarrow\frac{\Ker D_2|_{\cinf_{\rm sc}}\cap\Ker K^\star|_{\cinf_{\rm sc}}}{\Ran G_2 K |_{\cinf_{\rm c}}},
\end{aligned}
\]
are  well defined and bijective. 
\end{proposition}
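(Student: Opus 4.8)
The strategy is to reduce everything to the already-established bijectivity of $[G_2]$ on the quotient $\cinf_{\rm c}(M;V_2)/\Ran D_2|_{\cinf_{\rm c}}\to\Ker D_2|_{\cinf_{\rm sc}}$ from Proposition~\ref{prop:basic}~\eqref{prop:basic2}, combined with the intertwining identities $K^\star D_2=D_1K^\star$, $D_2K=KD_1$ and $PK=0$, $K^\star P=0$. First I would check that the maps are well defined: if $\phi\in\Ker K^\star|_{\cinf_{\rm c}}$ then we must see that $G_2\phi\in\Ker P|_{\cinf_{\rm sc}}$, i.e.~$PG_2\phi=0$. Writing $P=D_2-KK^\star$, we get $PG_2\phi=D_2G_2\phi-KK^\star G_2\phi=-K(K^\star G_2\phi)$; since $K^\star G_2=G_1K^\star$ (from the intertwining relation, after checking it on the level of propagators using uniqueness of retarded/advanced inverses) and $K^\star\phi=0$, this vanishes. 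Moreover $G_2\phi\in\Ker K^\star|_{\cinf_{\rm sc}}$ by the same computation, which handles the target space of b). Next, if $\phi\in\Ran P|_{\cinf_{\rm c}}$, say $\phi=P\psi$, then $G_2\phi=G_2P\psi=G_2(D_2-KK^\star)\psi=-G_2KK^\star\psi=-KG_1K^\star\psi\in\Ran K|_{\cinf_{\rm sc}}$, so the map a) descends to the quotient; for b) one sees $G_2\phi=-G_2KK^\star\psi\in\Ran G_2K|_{\cinf_{\rm c}}$, which is exactly what is quotiented out there.

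For injectivity of a): suppose $\phi\in\Ker K^\star|_{\cinf_{\rm c}}$ with $G_2\phi=Kw$ for some $w\in\cinf_{\rm sc}(M;V_1)$. Applying $K^\star$ gives $G_1K^\star\phi=0=K^\star Kw=D_1w$, but $w$ is space-compact and $D_1$ is Green hyperbolic, so $w=0$ hence $G_2\phi=0$; by Proposition~\ref{prop:basic}~\eqref{prop:basic2} applied to $D_2$, $\phi=D_2\psi=(P+KK^\star)\psi=P\psi+KK^\star\psi$ for some $\psi\in\cinf_{\rm c}$, and I would then need to arrange $K^\star\psi=0$ by a gauge adjustment: from $K^\star\phi=0$ we get $K^\star P\psi+K^\star KK^\star\psi=D_1K^\star\psi=0$, so $K^\star\psi$ is a compactly supported solution of the Green hyperbolic $D_1$, hence $K^\star\psi=0$ — wait, compactly supported solutions of Green hyperbolic operators vanish — giving $\phi=P\psi$ directly. (If $K^\star\psi$ were only space-compact one would instead write $\psi=\psi'+Kv$ with $v=G_{1,\rm ret}K^\star\psi$ or similar to kill the subsidiary term; I would verify which case actually occurs.) So $\phi\in\Ran P|_{\cinf_{\rm c}}$ and a) is injective. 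For surjectivity of a): given $u\in\Ker P|_{\cinf_{\rm sc}}$, by Proposition~\ref{prop:basic}~\eqref{prop:basic2} for $D_2$ there is no direct hold since $u$ need not solve $D_2u=0$; instead I would first gauge-fix $u$ to satisfy the subsidiary condition — replace $u$ by $u-Kw$ with $w$ chosen so that $K^\star(u-Kw)=0$, i.e.~$D_1w=K^\star u$; since $D_1$ is Green hyperbolic and $K^\star u$ is space-compact with $D_1K^\star u=K^\star D_2u=K^\star(P+KK^\star)u=K^\star KK^\star u=D_1K^\star u$... I need $D_2u$, not $Pu$; note $D_2u=Pu+KK^\star u=KK^\star u$, so $D_1K^\star u=K^\star D_2u=K^\star KK^\star u=D_1(K^\star u)$, a tautology, so simply solve $D_1w=K^\star u$ by $w=G_{1,\rm ret}K^\star u$ (space-compact). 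Then $u':=u-Kw$ satisfies $Pu'=Pu=0$ and $K^\star u'=0$, hence $D_2u'=Pu'+KK^\star u'=0$, so $u'\in\Ker D_2|_{\cinf_{\rm sc}}$ and by Proposition~\ref{prop:basic}~\eqref{prop:basic2} $u'=G_2\phi$ for some $\phi$; finally $K^\star\phi$: from $D_1$-hyperbolicity and $K^\star u'=0$ one shows $K^\star\phi\in\Ran D_1|_{\cinf_{\rm c}}$, and subtracting the corresponding gauge correction from $\phi$ makes it lie in $\Ker K^\star|_{\cinf_{\rm c}}$ without changing $[u']$. Thus $[G_2][\phi]=[u']=[u]$.

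Finally b) follows from a) by showing the natural map $\Ker D_2|_{\cinf_{\rm sc}}\cap\Ker K^\star|_{\cinf_{\rm sc}}\big/\Ran G_2K|_{\cinf_{\rm c}}\to\Ker P|_{\cinf_{\rm sc}}\big/\Ran K|_{\cinf_{\rm sc}}$ is a bijection: it is well defined since a solution of $D_2u=0$ with $K^\star u=0$ solves $Pu=(D_2-KK^\star)u=0$, and $\Ran G_2K\subset\Ran K$; it is surjective by the gauge-fixing argument above (any $u\in\Ker P$ is equivalent mod $\Ran K$ to some $u'\in\Ker D_2\cap\Ker K^\star$); and it is injective because if $u'\in\Ker D_2\cap\Ker K^\star$ lies in $\Ran K|_{\cinf_{\rm sc}}$, say $u'=Kv$, then $0=K^\star u'=K^\star Kv=D_1v$ forces... $v$ space-compact solving $D_1v=0$, but that need \emph{not} vanish, so instead one uses $D_2Kv=KD_1v$ and the known representation to push $v$ into $\Ran G_1(\cdot)$-type form, landing $u'=Kv$ in $\Ran G_2K|_{\cinf_{\rm c}}$. \textbf{Main obstacle.} The delicate point throughout is bookkeeping the gauge freedom: every time one applies Proposition~\ref{prop:basic} one obtains a representative $\phi$ or $\psi$ that satisfies an equation only \emph{modulo} the subsidiary term $KK^\star(\cdot)$, and one must correct it by a $\Ran K$ or $\Ran G_2K$ element while controlling supports (compact vs.~space-compact) — this is where the Green hyperbolicity of \emph{both} $D_1$ and $D_2$, and the vanishing of compactly supported solutions, get used repeatedly. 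Since this is precisely Proposition~2.7 of \cite{GW1}, I would in practice just cite that proof and only reproduce the intertwining-identity verifications.
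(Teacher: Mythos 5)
The paper itself offers no proof of this proposition — it is quoted verbatim from \cite[Prop.~2.7]{GW1} — so your plan of reducing everything to Prop.~\ref{prop:basic}~\eqref{prop:basic2} plus the intertwining relations $K^\star G_{2}=G_{1}K^\star$, $G_{2}K=KG_{1}$ is exactly the intended (and standard) route. Most of your steps are sound: well-definedness, the descent to the quotients, the surjectivity of a) via the gauge-fixing $u\mapsto u-Kw$ with $D_1w=K^\star u$ (though $w=G_{1,\rm ret}K^\star u$ is not literally defined for a merely space-compact source; you must split $K^\star u$ into past- and future-compactly supported pieces and apply $G_{1,\rm ret}$, $G_{1,\rm adv}$ respectively), and the reduction of b) to a).

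There is, however, one step that is genuinely false as written: in the injectivity of a) you conclude from $D_1w=0$, $w\in\cinfsc$, that ``$w$ is space-compact and $D_1$ is Green hyperbolic, so $w=0$''. Space-compact solutions of a Green hyperbolic operator do \emph{not} vanish — they form precisely $\Ran G_1|_{\coinf}$, which is the whole point of Prop.~\ref{prop:basic}~\eqref{prop:basic2}; only \emph{compactly supported} solutions vanish. (You in fact correctly make this very distinction later, in the injectivity of b).) The consequence is that you cannot conclude $G_2\phi=0$ at that point, and the subsequent application of Prop.~\ref{prop:basic} to write $\phi=D_2\psi$ is unjustified. The repair is the same device you use in b): from $D_1w=0$ write $w=G_1v$ with $v\in\coinf(M;V_1)$, so that $G_2\phi=Kw=KG_1v=G_2Kv$, hence $G_2(\phi-Kv)=0$ and $\phi-Kv=D_2\psi=P\psi+KK^\star\psi$ for some $\psi\in\coinf(M;V_2)$. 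Applying $K^\star$ to $\phi=P\psi+K(v+K^\star\psi)$ and using $K^\star P=0$ gives $D_1(v+K^\star\psi)=0$ with $v+K^\star\psi$ \emph{compactly} supported, whence $v+K^\star\psi=0$ and $\phi=P\psi\in\Ran P|_{\coinf}$. With this correction (and the support bookkeeping for $G_{1,\rm ret}$ noted above) your argument goes through.
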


The  isomorphism $\rm a)$ justifies the use of the first quotient space by saying it is  equivalent to the space of solutions of $Pu=0$, modulo gauge transformations. The isomorphism $\rm b)$ is of practical importance as an intermediary step to get an isomorphism with a space of Cauchy data for the Green hyperbolic operator $D_{2}$. 

Note that in both cases $[G_2]$ maps to a quotient space.  The possibility of choosing different representatives of an element of that quotient is called the \emph{residual gauge freedom}.

\begin{definition}
The \emph{physical phase space} is  the Hermitian space $(\cV_{P}, q_{P})$, where:
\[
\cV_{P}= \frac{\Ker K^\star|_{\cinf_{\rm c}}}{\Ran P|_{\cinf_{\rm c}}}, \quad [\bar{u}]\dual q_{P}[v]= \i (u| G_{2}v)_{V_{2}}, \ \ [u], [v]\in  \frac{\Ker K^\star|_{\cinf_{\rm c}}}{\Ran P|_{\cinf_{\rm c}}}.
\]
\end{definition}

By  \cite[Prop.~2.6]{GW1}, $q_{P}$ is a well-defined Hermitian form on $\cV_P$. 

\subsubsection{Phase space of Cauchy data}
Identifying the correct space of Cauchy data requires several more definitions.

For $i=1,2$ we denote by $U_i$ the operator which assigns to a Cauchy datum  the corresponding solution in $\Ker D_i|_{\cinfsc}$, i.e.~$U_i$ is the inverse of $\varrho_i : \Ker D_i|_{\cinfsc}\to\coinf(\Sigma;V_{i,\Sigma})$.  In other terms,
\beq\label{e00.3}
\begin{cases}
D_{i}U_{i}= 0, \\
\varrho_{i}U_{i}= \one.
\end{cases}
\eeq

To the operator $K$ we associate an operator $K_{\Sigma}\in\Diff(\Sigma;V_{1,\Sigma},V_{2,\Sigma})$ acting on Cauchy data by setting
\beq\label{eq:relc}
K_{\Sigma}= \varrho_2 K U_1.
\eeq
We introduce the notation $K^{\dagger}_{\Sig}\in\Diff(\Sigma;V_{2,\Sigma},V_{1,\Sigma})$ for the adjoint w.r.t.~the Hermitian forms $(\cdot|q_{2}\cdot)_{\Sigma}$ and $(\cdot |q_{1} \cdot)_{\Sigma}$, i.e.
\beq\label{eq:defkdag}
q_1 K^{\dagger}_{\Sigma} \defeq K^{\star}_{\Sigma} q_{2}.
\eeq
The notation $^\dagger$ is used to avoid confusion with the formal adjoint $^\star$ w.r.t.~the Hermitian structures on the bundles $V_{1,\Sigma}$, $V_{2,\Sigma}$.  

\begin{lemma}[{\cite[Lem.~2.9]{GW1}}]\label{lem:cauchyrel}Assume Hypothesis \ref{as:subsidiary}. Then:
\ben
\item\label{cauchyrelit1} $K U_1 = U_2 K_\Sigma$ and $K^\star U_2=U_1 K^{\dagger}_{\Sigma}$;
\item\label{cauchyrelit0} $\varrho_2 K =K_{\Sigma}\varrho_1$ on $\Ker D_1|_{\cinf_{\rm sc}}$ and $\varrho_1 K^\star=K^{\dagger}_{\Sigma}\varrho_2$ on $\Ker D_2|_{\cinf_{\rm sc}}$;
\item\label{cauchyrelit2} $\Ker K^{\dagger}_{\Sigma}|_{\cinf_{\rm c}}=\varrho_2 G_2 \Ker K^\star|_{\cinf_{\rm c}}$;
\item\label{cauchyrelit3} $\Ran K_{\Sigma}|_{\cinf_{\rm c}}=\varrho_2 G_2 \Ran K|_{\cinf_{\rm c}}$;
\item\label{cauchyrelit4} $K^{\dagger}_{\Sigma} K_{\Sigma}=0$.
\een
\end{lemma}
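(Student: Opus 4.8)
The plan is to prove each of the five assertions in Lemma~\ref{lem:cauchyrel} by transporting, via the evolution operators $U_i$ and $\varrho_i$, the corresponding identities from the bulk operators $K$, $K^\star$ and from Hypothesis~\ref{as:subsidiary}. The key technical point I would use repeatedly is the intertwining relations $K^\star D_2 = D_1 K^\star$ and $D_2 K = K D_1$ established right after Hypothesis~\ref{as:subsidiary}, together with the defining properties \eqref{e00.3} of $U_i$, the well-posedness \eqref{eq:rhoinv} of the Cauchy problem, and Green's formula from Lemma~\ref{lemmatiti.1b}.

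For \eqref{cauchyrelit1}, I would start from a Cauchy datum $f \in \coinf(\Sigma;V_{1,\Sigma})$ and consider $K U_1 f \in \cinfsc(M;V_2)$. Since $D_2 K U_1 f = K D_1 U_1 f = 0$, the section $K U_1 f$ solves $D_2(\cdot) = 0$, hence by \eqref{eq:rhoinv} it equals $U_2 \varrho_2 K U_1 f = U_2 K_\Sigma f$ by the definition \eqref{eq:relc} of $K_\Sigma$; this gives $KU_1 = U_2 K_\Sigma$. The identity $K^\star U_2 = U_1 K^{\dagger}_\Sigma$ is more delicate because it requires identifying $\varrho_1 K^\star U_2$ with $K^{\dagger}_\Sigma$; the natural route is to pass through the charges. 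One shows that $\varrho_1 K^\star U_2 = K^{\dagger}_\Sigma$ by testing against $q_1$: using Green's formula \eqref{eq:green} for $D_1$ and $D_2$, the relation $D_1 K^\star = K^\star D_2$, and the fact that the bulk pairing $(K^\star u | v)_{V_1} = (u | K v)_{V_2}$ exchanges $K$ and $K^\star$, one obtains $q_1 (\varrho_1 K^\star U_2 \cdot) = K^\star_\Sigma q_2$ on solutions, which is exactly the defining relation \eqref{eq:defkdag} of $K^{\dagger}_\Sigma$. I expect this charge computation to be the main obstacle, since one must carefully track which causal shadow $J_\pm(\Sigma)$ appears and check that the boundary terms assemble correctly; it amounts to verifying that $K_\Sigma$ and $K^{\dagger}_\Sigma$ are genuinely adjoint for the conserved charges, which is the conceptual heart of the lemma.

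Assertions \eqref{cauchyrelit0}--\eqref{cauchyrelit4} then follow quickly. For \eqref{cauchyrelit0}: apply $\varrho_2$ to $K U_1 = U_2 K_\Sigma$ and use $\varrho_2 U_2 = \one$ to get $\varrho_2 K U_1 = K_\Sigma$; composing on the right with $\varrho_1$ and using $U_1 \varrho_1 = \one$ on $\Ker D_1|_{\cinfsc}$ yields $\varrho_2 K = K_\Sigma \varrho_1$ there, and symmetrically for $K^\star$. For \eqref{cauchyrelit2}: if $u \in \Ker K^\star|_{\coinf}$ then $K^{\dagger}_\Sigma \varrho_2 G_2 u = \varrho_1 K^\star G_2 u = \varrho_1 G_1 K^\star u = 0$ using part \eqref{cauchyrelit0} and the intertwining $K^\star G_2 = G_1 K^\star$ (a consequence of $K^\star D_2 = D_1 K^\star$ and uniqueness of retarded/advanced inverses); conversely, given a Cauchy datum in $\Ker K^{\dagger}_\Sigma$, surjectivity of $[G_2]$ onto $\Ker D_2|_{\cinfsc}\cap\Ker K^\star|_{\cinfsc}$ modulo $\Ran G_2 K$ from Proposition~\ref{prop:Gpasses}~b), combined with $\varrho_2 G_2 K = K_\Sigma \varrho_1 G_1$ which lands in $\Ran K_\Sigma$, identifies the preimage. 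Assertion \eqref{cauchyrelit3} is analogous, using $\Ran K|_{\coinf}$ and $G_2 K = K G_1$. Finally \eqref{cauchyrelit4}: from part \eqref{cauchyrelit1}, $K^{\dagger}_\Sigma K_\Sigma$ applied to any Cauchy datum $f$ satisfies $U_1 K^{\dagger}_\Sigma K_\Sigma f = K^\star U_2 K_\Sigma f = K^\star K U_1 f = D_1 U_1 f = 0$, and since $U_1$ is injective we conclude $K^{\dagger}_\Sigma K_\Sigma = 0$. Throughout I would cite \cite[Lem.~2.9]{GW1} as the source and present the argument as a streamlined verification adapted to the present notation.
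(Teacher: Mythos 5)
The paper itself gives no proof of this lemma; it is quoted verbatim from \cite[Lem.~2.9]{GW1}, so your reconstruction can only be judged on its own merits. Your treatment of items \eqref{cauchyrelit1}, \eqref{cauchyrelit0}, \eqref{cauchyrelit3} and \eqref{cauchyrelit4} is sound: the identity $KU_1=U_2K_\Sigma$ via well-posedness, the charge computation identifying $\varrho_1 K^\star U_2$ with $K^{\dagger}_{\Sigma}$ (which indeed reduces, via \eqref{eq:Gq} and the intertwining $K^\star G_2=G_1K^\star$, $KG_1=G_2K$, to the bulk adjointness of $K$ and $K^\star$), and the derivations of \eqref{cauchyrelit0}, \eqref{cauchyrelit3}, \eqref{cauchyrelit4} from \eqref{cauchyrelit1} are all correct.

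The gap is in the converse inclusion of item \eqref{cauchyrelit2}. Proposition \ref{prop:Gpasses}~b) is a surjectivity statement \emph{modulo} $\Ran G_2K|_{\coinf}$: given $f\in\Ker K^{\dagger}_{\Sigma}|_{\coinf}$ and $v=U_2f\in\Ker D_2|_{\cinfsc}\cap\Ker K^\star|_{\cinfsc}$, it only produces $u\in\Ker K^\star|_{\coinf}$ and $w\in\coinf(M;V_1)$ with $v=G_2u+G_2Kw$, hence $f=\varrho_2G_2u+K_\Sigma\varrho_1G_1w$. Your phrase ``identifies the preimage'' silently discards the second term; what you have actually shown is $\Ker K^{\dagger}_{\Sigma}|_{\coinf}\subset\varrho_2G_2\Ker K^\star|_{\coinf}+\Ran K_\Sigma|_{\coinf}$. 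To close the argument you still need $\Ran K_\Sigma|_{\coinf}\subset\varrho_2G_2\Ker K^\star|_{\coinf}$, and unwinding that (using $\Ker G_2|_{\coinf}=\Ran D_2|_{\coinf}$ and injectivity of $D_1$ on $\coinf$) shows it is equivalent to requiring that the relevant compactly supported section of $V_1$ lie in $\Ran K^\star|_{\coinf}$ --- a surjectivity-type property of $K^\star$ on test sections which is \emph{not} among the quoted facts and does not follow from Hypothesis \ref{as:subsidiary}. (For instance, for $K^\star=\delta$ the divergence theorem forces $\Ran K^\star|_{\coinf}$ to annihilate constants or Killing one-forms, so this step is genuinely delicate.) You must either supply this missing inclusion or reproduce the argument actually used in \cite{GW1}; as written, item \eqref{cauchyrelit2} is not proved.
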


\begin{proposition}[{\cite[Prop.~2.10]{GW1}}]\label{prop:rhopasses} The induced map
\[
[\varrho_2]: \ \frac{\Ker D_2|_{\cinf_{\rm sc}}\cap\Ker K^\star|_{\cinf_{\rm sc}}}{\Ran G_2 K |_{\cinf_{\rm c}}}\longrightarrow\frac{\Ker K^{\dagger}_{\Sig}|_{\cinf_{\rm c}}}{\Ran K_\Sigma|_{\cinf_{\rm c}}}
\]
is well defined and bijective.
\end{proposition}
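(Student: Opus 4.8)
The plan is to combine the two isomorphisms from Proposition~\ref{prop:Gpasses}~b) and from the present Proposition~\ref{prop:rhopasses} by showing that $[\varrho_2]$ is well defined on quotients and is a bijection; the bijectivity of $\varrho_2$ on $\Ker D_2|_{\cinfsc}$ (equation \eqref{eq:rhoinv}) is the underlying input, and everything reduces to checking that it descends correctly to the quotients appearing here. First I would verify that $\varrho_2$ maps $\Ker D_2|_{\cinfsc}\cap\Ker K^\star|_{\cinfsc}$ into $\Ker K^{\dagger}_{\Sigma}|_{\coinf}$: given such a $u$, Lemma~\ref{lem:cauchyrel}~\eqref{cauchyrelit0} gives $\varrho_1 K^\star u = K^{\dagger}_{\Sigma}\varrho_2 u$, and the left side vanishes since $K^\star u = 0$, so $\varrho_2 u\in\Ker K^{\dagger}_{\Sigma}$; one also needs compact support of $\varrho_2 u$ on $\Sigma$, which is immediate from the space-compact support of $u$.

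Next I would check that $[\varrho_2]$ is well defined on the quotient, i.e.\ that $\varrho_2$ sends $\Ran G_2 K|_{\coinf}$ into $\Ran K_\Sigma|_{\coinf}$: for $\phi\in\coinf(M;V_1)$, the solution $G_2 K\phi$ lies in $\Ker D_2|_{\cinfsc}$ (since $D_2 G_2 K = K D_1 G_1$ acts appropriately, or directly $D_2 G_2 = 0$ on the relevant space), and by Lemma~\ref{lem:cauchyrel}~\eqref{cauchyrelit3}, $\varrho_2 G_2 \Ran K|_{\coinf} = \Ran K_\Sigma|_{\coinf}$, so the image of $\varrho_2$ on $\Ran G_2 K|_{\coinf}$ is contained in (in fact equals) $\Ran K_\Sigma|_{\coinf}$. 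This establishes that the induced map on quotients makes sense.

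For \emph{surjectivity} of $[\varrho_2]$: let $f\in\Ker K^{\dagger}_{\Sigma}|_{\coinf}$. Using \eqref{eq:rhoinv} choose the unique $u\in\Ker D_2|_{\cinfsc}$ with $\varrho_2 u = f$. Then $\varrho_1 K^\star u = K^{\dagger}_{\Sigma} f = 0$ by Lemma~\ref{lem:cauchyrel}~\eqref{cauchyrelit0}, and since $K^\star u\in\Ker D_1|_{\cinfsc}$ (because $K^\star D_2 = D_1 K^\star$), the injectivity of $\varrho_1$ on $\Ker D_1|_{\cinfsc}$ forces $K^\star u = 0$. Hence $u$ represents a class in the source quotient mapping to $[f]$. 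For \emph{injectivity}: if $u\in\Ker D_2|_{\cinfsc}\cap\Ker K^\star|_{\cinfsc}$ has $\varrho_2 u\in\Ran K_\Sigma|_{\coinf}$, write $\varrho_2 u = K_\Sigma g$ with $g\in\coinf(\Sigma;V_{1,\Sigma})$; set $w = U_1 g\in\Ker D_1|_{\cinfsc}$, so by Lemma~\ref{lem:cauchyrel}~\eqref{cauchyrelit1}, $\varrho_2(Kw) = \varrho_2 K U_1 g = K_\Sigma g = \varrho_2 u$. Using $K U_1 = U_2 K_\Sigma$ and $D_2 K w = K D_1 w = 0$, both $u$ and $Kw$ lie in $\Ker D_2|_{\cinfsc}$ with the same Cauchy data, hence $u = Kw$ by \eqref{eq:rhoinv}. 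But $w$ has space-compact, not compact, support a priori; one fixes this with the standard argument using that $K^\star u = 0$: then $D_1 w = K^\star K w = K^\star u = 0$, so $w\in\Ker D_1|_{\cinfsc}$, and composing with a cutoff $\chi$ that is $1$ in the future and $0$ in the past of $\Sigma$, $K\chi w - Kw = K(\chi-1)w$ has past-compact support while also future-compact support (since $(\chi-1)w$ is supported in the past), hence compact support, and $u = Kw = K\chi w + K(\chi - 1)w$ with $\chi w\in\coinf$... more carefully, one writes $u = G_2 (D_2 \chi w)$ using that $u$ agrees with $\chi K w$ near and to the future of $\Sigma$; this shows $u\in\Ran G_2 K|_{\coinf}$, so $[u] = 0$ in the source.

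\textbf{Main obstacle.} The one genuinely delicate point is this last support argument in the injectivity step: passing from a space-compact gauge parameter $w$ to a \emph{compactly supported} one so as to land in $\Ran G_2 K|_{\coinf}$ rather than merely $\Ran K|_{\cinfsc}$. This is exactly where the Green hyperbolicity of $D_1$ and the subsidiary condition $K^\star u = 0$ are used together, via a cutoff in time and the identity $u = G_{2,\mathrm{ret}}(D_2\chi w)$ on $J_+(\Sigma)$ combined with the analogous statement on $J_-(\Sigma)$; the routine but careful bookkeeping of supports is the substance of the proof, the rest being formal manipulation of the identities in Lemma~\ref{lem:cauchyrel}. (This is, of course, Proposition~2.10 of \cite{GW1}, so the argument can alternatively be quoted verbatim.)
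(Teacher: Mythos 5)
The paper does not prove this statement itself --- it imports it verbatim from \cite[Prop.~2.10]{GW1} --- so there is no in-text argument to compare against; judged on its own terms, your proposal has the right structure and uses exactly the intended ingredients (well-posedness \eqref{eq:rhoinv}, the intertwining relations of \Lemma\ref{lem:cauchyrel}, and a temporal cutoff to compactify the gauge parameter). Two remarks. First, the injectivity step can be shortcut: \Lemma\ref{lem:cauchyrel}~\eqref{cauchyrelit3} is an \emph{equality} of sets, $\Ran K_{\Sigma}|_{\coinf}=\varrho_{2}G_{2}\Ran K|_{\coinf}$, so if $\varrho_{2}u=K_{\Sigma}g$ then $\varrho_{2}u=\varrho_{2}G_{2}K\phi$ for some $\phi\in\coinf(M;V_{1})$, and since $u$ and $G_{2}K\phi$ both lie in $\Ker D_{2}|_{\cinfsc}$ with the same Cauchy data, $u=G_{2}K\phi$ by \eqref{eq:rhoinv}; no separate support argument is needed (and note the subsidiary condition $K^{\star}u=0$ plays no role here, contrary to what you suggest). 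Second, your hand-made version of that support argument is garbled at the end: $K(\chi-1)w$ is \emph{not} compactly supported (it extends to the infinite past), and ``$u=G_{2}(D_{2}\chi w)$'' does not typecheck since $w$ is a section of $V_{1}$. The correct repair is to cut off $w$ rather than $Kw$: for a step-like $\chi$ one has $w=G_{1}\bigl([D_{1},\chi]w\bigr)$ with $[D_{1},\chi]w\in\coinf(M;V_{1})$, and then $u=Kw=KG_{1}\bigl([D_{1},\chi]w\bigr)=G_{2}K\bigl([D_{1},\chi]w\bigr)$ using the intertwining $KG_{1}=G_{2}K$ (a consequence of $D_{2}K=KD_{1}$ and uniqueness of retarded/advanced inverses). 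With either of these fixes the proof is complete; the surjectivity and well-definedness steps are correct as written.
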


By  \ref{ss:hyperbolic} applied to $D_2$, we obtain   the  Hermitian form $(\cdot| q_2 \cdot)_{V_{2\Sigma}}$ which  is well defined on  the quotient space $\Ker K^{\dagger}_{\Sig}|_{\cinf_{\rm c}}/ \Ran K_\Sig|_{\cinf_{\rm c}}$. This is an easy consequence of  \eqref{eq:defkdag}: indeed, if $u,v\in \Ker K^{\dagger}_{\Sig}|_{\cinf_{\rm c}}$ then for all $w\in \coinf(\Sigma;V_{1,\Sig})$,
\beq\label{eq:whygi}
(u+ K_{\Sigma}w  | q_2  v)_{\Sigma} =(u   |  q_2  v )_{\Sigma} + (w | q_2 K_\Sig^\dag v)_{\Sigma}= (u   |  q_2  v )_{\Sigma},
\eeq
and similarly for gauge transformations in the second argument. Thus, $(u   |  q_2  v )_{\Sigma}$ depends only on the equivalence classes of $u$ and $v$. 

Summarizing we have:
\begin{proposition}\label{prop-added}
 The map 
 \[
 [\varrho_{2}G_{2}]: \bigg(\frac{\Ker K^\star|_{\cinf_{\rm c}}}{\Ran P|_{\cinf_{\rm c}}}, \i (\cdot | G_{2}\cdot )_{V_{2}}\bigg)\longrightarrow\bigg(\frac{\Ker K^{\dagger}_{\Sig}|_{\cinf_{\rm c}}}{\Ran K_\Sigma|_{\cinf_{\rm c}}},(\cdot| q_{2}\cdot)_{V_{2\Sigma}}\bigg)
 \]
 is pseudo-unitary.
\end{proposition}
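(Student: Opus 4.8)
The plan is to assemble the claimed pseudo-unitarity as a composition of the three bijections already established, namely $[G_2]$ from Proposition~\ref{prop:Gpasses}b), $[\varrho_2]$ from Proposition~\ref{prop:rhopasses}, and then to check that the composite $[\varrho_2 G_2] = [\varrho_2]\circ[G_2]$ intertwines the two Hermitian forms. Bijectivity is then immediate; the only real content is the identity of Hermitian forms. So first I would record that the map in the statement is well defined and bijective purely by composing the two quoted propositions, leaving the pairing computation as the main task.

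For the pairing, let $[u],[v]\in \Ker K^\star|_{\coinf}/\Ran P|_{\coinf}$ be represented by $u,v\in\coinf(M;V_2)$ with $K^\star u = K^\star v = 0$, and set $\phi_i = $ the corresponding source so that $G_2\phi_1$, $G_2\phi_2$ are the space-compact solutions; more directly, set $w_1 = G_2 u$, $w_2 = G_2 v \in \Ker D_2|_{\cinfsc}\cap \Ker K^\star|_{\cinfsc}$ (the second membership uses $K^\star D_2 = D_1 K^\star$ together with $K^\star u=0$, cf.\ Lemma~\ref{lem:cauchyrel}). The left-hand Hermitian form evaluated on $([u],[v])$ is $\i(u\,|\,G_2 v)_{V_2(M)}$ by definition of $q_P$. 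The right-hand side is $(\varrho_2 G_2 u\,|\,q_2\,\varrho_2 G_2 v)_{V_{2\Sigma}}$. But these two expressions are literally matched by \eqref{eq:Gq} applied to the Green hyperbolic operator $D_2$: with $\phi_1 = u$, $\phi_2 = v$ one has $(u\,|\,G_2 v)_{V_2(M)} = \i^{-1}(\varrho_2 G_2 u\,|\,q_2\,\varrho_2 G_2 v)_{V_{2\Sigma}}$, so multiplying by $\i$ gives exactly the identity of the two forms. Hence $[\varrho_2 G_2]$ is pseudo-unitary.

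The one point requiring a little care—and the step I expect to be the main (minor) obstacle—is the bookkeeping of \emph{well-definedness on the quotients}: one must check that the identity above does not depend on the choice of representatives $u,v$. For the left form this is \cite[Prop.~2.6]{GW1}; for the right form it is the computation \eqref{eq:whygi}, which shows $(\cdot\,|\,q_2\,\cdot)_{V_{2\Sigma}}$ descends to $\Ker K^\dagger_\Sigma|_{\coinf}/\Ran K_\Sigma|_{\coinf}$, together with the observations that $\varrho_2 G_2$ maps $\Ker K^\star|_{\coinf}$ into $\Ker K^\dagger_\Sigma|_{\coinf}$ (Lemma~\ref{lem:cauchyrel}\eqref{cauchyrelit2}) and $\Ran P|_{\coinf}$ into $\Ran K_\Sigma|_{\coinf}$. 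This last inclusion follows since $P = D_2 - KK^\star$ and $D_2 G_2 = 0$ on the relevant space force $\varrho_2 G_2 P = -\varrho_2 G_2 K K^\star$, which lands in $\varrho_2 G_2 \Ran K|_{\coinf} = \Ran K_\Sigma|_{\coinf}$ by Lemma~\ref{lem:cauchyrel}\eqref{cauchyrelit3}. Once these compatibilities are in place, the chain of identifications is forced and the proof concludes by invoking \eqref{eq:Gq} as above; no estimates or analytic input beyond what Propositions~\ref{prop:Gpasses} and~\ref{prop:rhopasses} already provide are needed.
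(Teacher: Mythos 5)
Your proposal is correct and matches the paper's (implicit) argument: the proposition is stated as a summary of the preceding discussion, i.e.\ bijectivity by composing Prop.~\ref{prop:Gpasses}~b) with Prop.~\ref{prop:rhopasses}, and the identification of the Hermitian forms directly from the defining relation \eqref{eq:Gq} for $q_2$, with well-definedness on the quotients handled by \eqref{eq:whygi} and Lemma~\ref{lem:cauchyrel}. Your extra check that $\varrho_2 G_2$ maps $\Ran P|_{\coinf}$ into $\Ran K_\Sigma|_{\coinf}$ is sound but already subsumed in the two quoted bijections.
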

%

\subsection{Quantization} The algebraic quantization of linear gauge theories is discussed in detail in \cite[Sect.~3]{GW1}. The algebraic framework reduces the quantization problem to showing the existence of physically relevant quantum states on the  CCR $*$-algebra $\CCR(\cV_{P}, q_{P})$ associated to the Hermitian space $(\cV_{P}, q_{P})$ defined in Sect.~\ref{ss:HS}.   The notions of quasi-free states and covariances (or two-point functions) are explained in  \cite[Sect.~3]{GW1} and references therein.

\subsubsection{Two-point functions}
A quasi-free state on $\CCR(\cV_{P}, q_{P})$ is determined by a pair $\Lambda^{\pm}$ of {\em covariances}, i.e.~of Hermitian forms on $\cV_{P}$  such that \[
\begin{cases}
\Lambda^{\pm}\geq 0,\\
\Lambda^{+}- \Lambda^{-}= q_{P}.
\end{cases}
\]
We will consider quasi-free states $ \omega$ on  $\CCR(\cV_{P}, q_{P})$ with covariances obtained  from a  pair of maps $\lambda_{2}^{\pm}:  \coinf(M; V_{2})\to \cinf(M; V_{2})$ (called  the {\em pseudo-covariances} of $\omega$) by:
 \beq\label{e00.1}
 [\bar{u}]\dual \Lambda^{\pm}[v]= (u| \lambda_{2}^{\pm}v)_{V_{2}}, \ \ [u], [v]\in \frac{\Ker K^\star|_{\cinf_{\rm c}}}{\Ran P|_{\cinf_{\rm c}}}.
 \eeq
The following lemma \cite[Lem.~3.16]{GW1} is straightforward.
\begin{lemma}\label{lemma.had}
 Suppose $\lambda_{2}^{\pm}:  \coinf(M; V_{2})\to \cinf(M; V_{2})$ are such that:
  \beq\label{defodefo}
\begin{aligned}
i)&\quad  D_{2}\lambda^\pm_{2}=\lambda^\pm_{2} D_{2} =0  \, \mbox{  and  } \,  \lambda_2^+-\lambda^-_2 = \i^{-1} G_2,\\
ii) & \quad(\lambda^\pm_{2})^{\star}=\lambda^\pm_{2} \hbox{ for }(\cdot| \cdot)_{V_{2}} \, \mbox{  and  } \, \lambda^\pm_{2}: \ \Ran K|_{\coinf}\to\Ran K|_{\cD'},\\
iii) & \quad \lambda^\pm_{2}  \geq 0  \hbox{ for }(\cdot| \cdot)_{V_{2}}\mbox{ on } \Ker K^{\star}|_{\coinf}.
\end{aligned}
\eeq
Then $\lambda_{2}^{\pm}$ are the pseudo-covariances  of a quasi-free state on $\CCR(\cV_{P}, q_{P})$.\end{lemma}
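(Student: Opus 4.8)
The plan is to verify that the hypotheses $i)$--$iii)$ on $\lambda_2^\pm$ translate, under the identifications of Section \ref{ss:HS}, exactly into the two defining conditions $\Lambda^\pm\geq 0$ and $\Lambda^+-\Lambda^-=q_P$ for a pair of covariances, and that the formulas \eqref{e00.1} produce \emph{well-defined} Hermitian forms on the quotient $\cV_P=\Ker K^\star|_{\coinf}/\Ran P|_{\coinf}$. So the first step is well-definedness. If $[u]=[u']$ and $[v]=[v']$ in $\cV_P$, then $u-u'=Pw_1$ and $v-v'=Pw_2$ for some $w_i\in\coinf(M;V_2)$; using $ii)$ in the form $(\lambda_2^\pm)^\star=\lambda_2^\pm$ together with $D_2\lambda_2^\pm=\lambda_2^\pm D_2=0$ and $P=D_2-KK^\star$, one gets $(u\,|\,\lambda_2^\pm v)_{V_2}=(u'\,|\,\lambda_2^\pm v')_{V_2}$ provided the terms involving $K$ drop out. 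This is where the second half of $ii)$, namely $\lambda_2^\pm:\Ran K|_{\coinf}\to\Ran K|_{\cD'}$, is used: for $u\in\Ker K^\star|_{\coinf}$ and $\phi\in\coinf(M;V_1)$ one has $(u\,|\,\lambda_2^\pm K\phi)_{V_2}=0$ because $\lambda_2^\pm K\phi\in\Ran K|_{\cD'}$ and $\bar u$ pairs to zero against $\Ran K$ (as $K^\star u=0$, integrating by parts against a compactly supported representative, with the usual care that one argument is a distribution and the other is compactly supported smooth). Combining these, the pairing $(u\,|\,\lambda_2^\pm v)_{V_2}$ depends only on $[u],[v]$, so $\Lambda^\pm$ in \eqref{e00.1} is well defined on $\cV_P$; Hermitianity of $\Lambda^\pm$ is immediate from $(\lambda_2^\pm)^\star=\lambda_2^\pm$.

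Next I would check $\Lambda^+-\Lambda^-=q_P$. By $i)$, $\lambda_2^+-\lambda_2^-=\i^{-1}G_2$, hence for $[u],[v]\in\cV_P$,
\[
[\bar u]\dual(\Lambda^+-\Lambda^-)[v]=(u\,|\,(\lambda_2^+-\lambda_2^-)v)_{V_2}=\i^{-1}(u\,|\,G_2 v)_{V_2}=\i^{-1}\overline{(v\,|\,G_2 u)}_{V_2},
\]
and since $G_2^\star=-G_2$ (Proposition \ref{prop:basic}) this equals $\i(u\,|\,G_2 v)_{V_2}=[\bar u]\dual q_P[v]$ by the definition of $q_P$. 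Thus $\Lambda^+-\Lambda^-=q_P$ as Hermitian forms on $\cV_P$.

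It remains to prove positivity $\Lambda^\pm\geq 0$. Given $[u]\in\cV_P$ with representative $u\in\Ker K^\star|_{\coinf}$, condition $iii)$ states precisely that $(u\,|\,\lambda_2^\pm u)_{V_2}\geq 0$, i.e.\ $[\bar u]\dual\Lambda^\pm[u]\geq 0$; combined with well-definedness this gives $\Lambda^\pm\geq 0$ on $\cV_P$. Having established that $\Lambda^\pm$ are covariances, the abstract quantization machinery recalled in Subsection \ref{ss:HS} (cf.\ \cite[Sect.~3]{GW1}) guarantees that any such pair determines a quasi-free state $\omega$ on $\CCR(\cV_P,q_P)$, with $\lambda_2^\pm$ as pseudo-covariances by construction. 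I do not expect any serious obstacle here: the statement is essentially a bookkeeping lemma, and the only point requiring a little care is the well-definedness argument, where one must correctly handle the pairing between a compactly supported smooth section and a distributional section (choosing compactly supported representatives and integrating by parts, legitimate because $P$, $K$, $K^\star$ are differential operators and $\supp$ is controlled). Everything else is a direct substitution of $i)$--$iii)$ into the definitions.
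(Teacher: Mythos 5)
Your overall strategy is exactly the intended one (the paper does not spell out a proof — it cites \cite[Lem.~3.16]{GW1} and only remarks that {\it i)}, {\it ii)} give well-definedness on the quotient), and your well-definedness and positivity steps are correct: writing $P=D_2-KK^\star$, using $\lambda_2^\pm D_2=0$ and {\it ii)} to land in $\Ran K|_{\cD'}$, and pairing against $\Ker K^\star|_{\coinf}$ is precisely how one checks that $\Lambda^\pm$ descend to $\cV_P$.

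However, your verification of $\Lambda^+-\Lambda^-=q_P$ contains a false chain of identities. You write $\i^{-1}(u|G_2v)_{V_2}=\i^{-1}\overline{(v|G_2u)}_{V_2}=\i(u|G_2v)_{V_2}$; the first equality already requires $G_2^\star=G_2$, whereas Prop.~\ref{prop:basic} gives $G_2^\star=-G_2$ (the correct identity is $(u|G_2v)_{V_2}=-\overline{(v|G_2u)}_{V_2}$), and the net claim $\i^{-1}x=\i x$ forces $x=0$. Carried out correctly with the conventions as literally stated, condition {\it i)} yields $[\bar u]\dual(\Lambda^+-\Lambda^-)[v]=\i^{-1}(u|G_2v)_{V_2}=-\,[\bar u]\dual q_P[v]$, i.e.\ the \emph{opposite} sign. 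This is an overall sign-convention clash internal to the paper: the construction in Prop.~\ref{prop:states1} via \eqref{e00.2b}, \eqref{e00.4} and \eqref{eq:Gq} produces $\lambda_2^+-\lambda_2^-=(\varrho_2G_2)^\star q_2(\varrho_2G_2)=\i\,G_2$, which is consistent with $\Lambda^+-\Lambda^-=q_P$, whereas \eqref{defodefo} {\it i)} demands $\i^{-1}G_2$. The honest fix is to flag this and adopt one consistent convention (either replace $\i^{-1}G_2$ by $\i G_2$ in {\it i)}, or flip the sign in the definition of $q_P$), not to derive the needed sign through an invalid conjugation step. Everything else in your argument stands.
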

In fact it is easy to check that conditions ${\it i)}$ and ${\it ii)}$ above imply that $\Lambda^{\pm}$ are well defined on the quotient in \eqref{e00.1}. The name `pseudo-covariance' comes from the fact that $\lambda_{2}^{\pm}$ are not required to be positive for $(\cdot| \cdot)_{V_{2}}$ on $\coinf(M; V_{2})$, but only on  the subspace $\Ker K^{\star}|_{\coinf}$.
 
\subsubsection{Hadamard condition} \label{sss:had}
We use  the following definition of Hadamard states \cite{SV},  cf.~\cite[Subsect.~3.4]{GW1} and references therein. The general consensus is that only states satisfying the  \emph{Hadamard condition} (the \emph{Hadamard states}) are  physical. 
We recall that
 $$
 \cN=\{ (x,\xi)\in T^*M\setminus\zero \st \xi\cdot \rg^{-1}(x)\xi = 0 \}
 $$
is  the  characteristic set of the wave operator on $(M,\rg)$, and 
$$
\cN^{\pm}=\cN\cap \{(x, \xi)\in T^*M\setminus\zero  \st \pm v\dual \xi>0 \,\ \forall v\in T_{x}M\hbox{ future-directed time-like}\}
$$
are its  two connected components.

\begin{definition}\label{defhadama}
  A quasi-free state $\omega$ on $\CCR(\cV_{P}, q_{P})$ given by  pseudo-covariances $\lambda_{2}^{\pm}$ as in \Lemma \ref{lemma.had} is {\em Hadamard} if in addition to \eqref{defodefo} it satisfies:
\beq\label{defidefi}
 \WF(\lambda_{2}^{\pm})'\subset \cN^{\pm}\times \cN^{\pm}.
 \eeq
\end{definition}

\subsubsection{Hadamard condition on a Cauchy surface}
One can equivalently consider pseudo-covariances $\lambda^{\pm}_{2\Sigma}$ for $D_{2}$ acting on Cauchy data on a Cauchy surface $\Sigma$, see e.g.~\cite[Subsect.~3.3]{GW1}. Namely  we can set
 \begin{equation}
 \label{e00.2}
 \lambda_{2\Sigma}^{\pm}\defeq \varrho_{2}^{\star}q_{2} \lambda_{2}^{\pm}\varrho_{2}^{\star}q_{2},
 \end{equation}
 and then one also has
 \beq\label{e00.2b}
 \lambda_{2}^{\pm}= (\varrho_{2}G_{2})^{\star}\lambda_{2\Sigma}^{\pm}(\varrho_{2}G_{2}).
 \eeq
 The maps $\lambda^{\pm}_{2\Sigma}$ are correspondingly called {\em Cauchy surface pseudo-covariances}. 
Since $q_{2}$ is non-degenerate, we can set
\beq\label{e00.4}
\lambda^{\pm}_{2\Sigma}\eqdef \pm q_{2} c_{2}^{\pm},
\eeq
and  we now formulate conditions on  $c_2^\pm$ which imply that $\lambda^{\pm}_{2}$ satisfy the conditions in \Lemma \ref{lemma.had} and Def. \ref{defhadama}.
\begin{proposition}\label{prop:states1} Suppose $c_2^\pm: \coinf(\Sigma;V_{2,\Sig})\to \cinf(\Sigma;V_{2,\Sig})$ is a pair of operators such that:
\ben
\item\label{it:con} $c_2^++c_2^-=\one$;
\item \label{it:selfadj}$q_{2}c_{2}^{\pm}= (q_{2}c_{2}^{\pm})^{\star}$ for the scalar product $(\cdot| \cdot)_{V_{2\Sigma}}$;
\item\label{it:pos} $\pm (f|q_2 c^\pm_2 f)_{\Sigma}\geq 0$  for all $f \in \Ker K_\Sig^\dagger|_{\coinf}$;
\item\label{it:gi} $c_2^\pm K_\Sigma= K_{\Sigma}c_1^\pm$ for some $c_1^\pm : \coinf(\Sigma;V_{1,\Sig})\to \cinf(\Sigma;V_{1,\Sig})$.
\een\smallskip
Then $\lambda_{2}^{\pm}$ given by \eqref{e00.2b} and \eqref{e00.4} are the pseudo-covariances of a quasi-free state on $\CCR(\cV_{P}, q_{P})$. Furthermore, suppose that the principal symbol of  $D_2$ is $\xi\dual \rg^{-1}(x)\xi\,\one_{V_2}$. Then, if for some neighborhood $\cU$ of $\Sigma$ in $M$ we have: \smallskip
\ben\setcounter{smallarabics}{4}
\item \label{it:had} $\WF(U_{2}\circ c_{2}^{\pm})'\subset (\cN^{\pm}\cup\cF)\times T^{*}\Sigma$ over $\cU\times \Sigma$,
where $\cF\subset T^{*}M$ is a conic set with $\cF\cap \cN= \emptyset$, 
\een
then the associated state  is Hadamard. 
\end{proposition}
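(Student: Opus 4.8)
The plan is to split the statement into two parts: (A) conditions \eqref{it:con}--\eqref{it:gi} imply that $\lambda_2^\pm$ from \eqref{e00.2b}--\eqref{e00.4} satisfy \eqref{defodefo}, i.e.\ are pseudo-covariances; and (B) the additional wavefront estimate \eqref{it:had} upgrades this to the Hadamard condition \eqref{defidefi}. For part (A), I would first translate each of \eqref{it:con}--\eqref{it:gi} into the corresponding statement about $\lambda_{2\Sigma}^\pm = \pm q_2 c_2^\pm$ using \eqref{e00.4}, and then use \eqref{e00.2b} to pull these back to $\lambda_2^\pm$. Concretely: the normalization \eqref{it:con} gives $\lambda_{2\Sigma}^+ - \lambda_{2\Sigma}^- = q_2(c_2^+ + c_2^-) = q_2$, and combined with the Green's-formula computation of Sect.~\ref{greeno} (relating $q_2$ and $G_2$ via $\varrho_2$) this yields $\lambda_2^+ - \lambda_2^- = \i^{-1} G_2$; the identities $D_2 U_2 = 0$, $\varrho_2 U_2 = \one$ from \eqref{e00.3} together with $\lambda_2^\pm = U_2 \lambda_{2\Sigma}^\pm q_2^{-1} \varrho_2 \cdots$ (unwinding \eqref{e00.2b}, using $(\varrho_2 G_2)^\star = \pm$ something involving $U_2$ — here I would invoke the standard duality $\varrho_2 G_2$ vs.\ $U_2$) give $D_2 \lambda_2^\pm = \lambda_2^\pm D_2 = 0$. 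The self-adjointness $(\lambda_2^\pm)^\star = \lambda_2^\pm$ for $(\cdot|\cdot)_{V_2}$ follows from \eqref{it:selfadj} since $\lambda_2^\pm$ is sandwiched symmetrically between $(\varrho_2 G_2)^\star$ and $\varrho_2 G_2$. The gauge condition $\lambda_2^\pm : \Ran K|_{\coinf} \to \Ran K|_{\cD'}$ follows from \eqref{it:gi}: $c_2^\pm K_\Sigma = K_\Sigma c_1^\pm$ combined with $K U_1 = U_2 K_\Sigma$ and $\varrho_2 K = K_\Sigma \varrho_1$ (Lem.~\ref{lem:cauchyrel}\eqref{cauchyrelit1}--\eqref{cauchyrelit0}) intertwines everything, so applying $\lambda_2^\pm$ to $Kw$ produces something in the range of $K$. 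Finally positivity: by Prop.~\ref{prop-added} and \eqref{eq:whygi}, $(u|\lambda_2^\pm u)_{V_2} = \pm(\varrho_2 G_2 u | q_2 c_2^\pm \varrho_2 G_2 u)_\Sigma$ for $u \in \Ker K^\star|_{\coinf}$, and by Lem.~\ref{lem:cauchyrel}\eqref{cauchyrelit2}, $\varrho_2 G_2 u \in \Ker K_\Sigma^\dagger|_{\coinf}$, so \eqref{it:pos} applies and gives $\geq 0$. Thus Lem.~\ref{lemma.had} furnishes the quasi-free state.

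For part (B), assume the principal symbol of $D_2$ is the metric one, so $\cN$ is the characteristic set and propagation of singularities for $D_2$ holds. Starting from \eqref{e00.2b}, write $\lambda_2^\pm = (\varrho_2 G_2)^\star \lambda_{2\Sigma}^\pm (\varrho_2 G_2) = (\varrho_2 G_2)^\star (\pm q_2 c_2^\pm)(\varrho_2 G_2)$, and observe $G_2(\cdot) = U_2 \varrho_2 G_2(\cdot)$ acting on test sections, so morally $\lambda_2^\pm = (U_2 \circ c_2^\pm \circ q_2 \cdots)^{\text{adjoint-sandwich}}$; the point is that $\lambda_2^\pm$ is built by composing the solution operator $U_2 \circ c_2^\pm$ on the left with its (adjoint) counterpart on the right, with a pseudodifferential $q_2$ in between. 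The hypothesis \eqref{it:had} says $\WF(U_2 \circ c_2^\pm)' \subset (\cN^\pm \cup \cF) \times T^*\Sigma$ over $\cU \times \Sigma$ with $\cF \cap \cN = \emptyset$. One then argues: any singularity of $\lambda_2^\pm$ in the first slot must, because $D_2 \lambda_2^\pm = 0$, lie in $\cN$; by \eqref{it:had} applied near $\Sigma$ it lies in $\cN^\pm$ there; and then propagation of singularities along the (complete, by global hyperbolicity) bicharacteristics of $D_2$ spreads it through all of $M$ while staying in the connected component $\cN^\pm$, since $\cF$ contributes nothing after intersecting with $\cN$. The same for the second slot using $\lambda_2^\pm D_2 = 0$ and $(\lambda_2^\pm)^\star = \lambda_2^\pm$. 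This is exactly the standard argument (as in \cite{GW1,GOW,SV}) that a bi-solution whose Cauchy-data covariance has the right microlocal splitting near $\Sigma$ is globally Hadamard, and I would cite it rather than reproduce the propagation estimate in full.

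I expect the main obstacle — or at least the place requiring the most care — to be part (A)'s bookkeeping: correctly unwinding the chain of adjoints in \eqref{e00.2}--\eqref{e00.2b} (which $q_2$, which $\varrho_2^\star$, which $U_2$, and on which side) so that the intertwining relations of Lem.~\ref{lem:cauchyrel} apply cleanly, and in particular verifying the gauge condition $\lambda_2^\pm : \Ran K \to \Ran K$, where one must check that the residual gauge freedom in Prop.~\ref{prop:Gpasses}\,b) does not interfere — i.e.\ that $c_2^\pm K_\Sigma = K_\Sigma c_1^\pm$ really does survive conjugation by $\varrho_2 G_2$. In part (B) the only subtlety is making sure the ``exceptional'' set $\cF$ is genuinely harmless: since it meets $\cN$ trivially and $\WF(\lambda_2^\pm)' \subset \cN \times \cN$ automatically (from $D_2 \lambda_2^\pm = \lambda_2^\pm D_2 = 0$), intersecting the estimate \eqref{it:had} with the characteristic set eliminates $\cF$, so there is no real difficulty there once that observation is made explicit.
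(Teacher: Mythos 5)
Your overall strategy coincides with the paper's: part (A) is exactly the content of Lem.~\ref{lemma.had} (the paper dismisses it as straightforward, and your unwinding via Lem.~\ref{lem:cauchyrel}, \eqref{eq:whygi} and Prop.~\ref{prop-added} is the intended bookkeeping), and part (B) proceeds, as in the paper, by combining hypothesis \eqref{it:had} with the symmetry coming from \eqref{it:selfadj} and a final propagation-of-singularities step to pass from a neighborhood of $\Sigma$ to all of $M\times M$.

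There is, however, one genuine gap in part (B): your closing claim that $\WF(\lambda_2^\pm)'\subset \cN\times\cN$ holds ``automatically'' from $D_2\lambda_2^\pm=\lambda_2^\pm D_2=0$. This is false. Elliptic regularity in each slot only constrains the \emph{nonzero} covector components, so a bi-solution may well have primed wavefront set over $(\cN\times\zero)\cup(\zero\times\cN)$ (think of $u\otimes\bar{v}$ with $u$ a smooth solution and $v$ a singular one). Concretely, composing $U_2\circ c_2^{\pm}$ with $\varrho_2 G_2$ via the wavefront calculus only yields $\WF(\lambda_2^{\pm})'\subset\big((\cN^{\pm}\cup\cF)\times\cN\big)\cup\big((\cN^{\pm}\cup\cF)\times\zero\big)$ over $\cU\times M$, and after symmetrizing with \eqref{it:selfadj} and discarding $\cF$ the terms $(\cN^{\pm}\times\zero)\cup(\zero\times\cN^{\pm})$ survive; the Hadamard condition \eqref{defidefi} excludes precisely these. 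The missing ingredient is the CCR relation from \eqref{it:con}: since $\lambda_2^{+}-\lambda_2^{-}=\i^{-1}G_2$ and $\WF(G_2)'\subset\cN\times\cN$ contains no zero-section points, while $\WF(\lambda_2^{\mp})'$ cannot meet $\cN^{\pm}\times\zero$ or $\zero\times\cN^{\pm}$ (the two energy shells being disjoint), any such point in $\WF(\lambda_2^{\pm})'$ would have to lie in $\WF(G_2)'$, a contradiction. With that step restored, your argument agrees with the paper's proof.
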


Let us briefly outline the physical significance of conditions \eqref{it:con}--\eqref{it:gi}.   Condition \eqref{it:gi} is interpreted as \emph{gauge-invariance} of the state. It ensures that $\pm(  \cdot | q_2  c_2^\pm \cdot  )_{\Sigma}$ is well-defined on the quotient space $\Ker K^{\dagger}_{\Sig}|_{\cinf_{\rm c}}/ \Ran K_\Sig|_{\cinf_{\rm c}}$ by an argument analogous to \eqref{eq:whygi}. 
Condition \eqref{it:con} expresses the  \emph{canonical commutation relations}, while \eqref{it:selfadj} and \eqref{it:pos} are responsible for  the  \emph{positivity} of the state. We point out that positivity is  only required to hold   on the physical space $\Ker K_\Sig^\dagger|_{\coinf}$.

 Our objective in the next chapters will be to show that conditions \eqref{it:con}--\eqref{it:had} can be satisfied simultaneously in the case of linearized gravity. In view of Prop.~\ref{prop:states1}  this will imply the existence of a Hadamard state.  

\noindent {\bf Proof of Prop.~\ref{prop:states1}}.
The proof that  \eqref{it:con}--\eqref{it:gi} imply \eqref{defodefo} is straightforward. Let us  prove  that \eqref{it:had} implies \eqref{defidefi}, removing the subscripts for ease of notation.  We know that the  operator $U$ defined in \eqref{e00.3} can be expressed as:
$U= \i^{-1}(\varrho G)^{\star}q$,
hence $\lambda^{\pm}= \pm \i^{-1}U c^{\pm}\circ (\varrho G)$. Note that we are allowed to compose the kernels $Uc^{\pm}$ and $\varrho G$ since $\varrho G: \coinf(M; V)\to \coinf(\Sigma; V_{\Sigma})$. It is well known  that  $\WF(G)'\subset \cN\times \cN$. Using also   \eqref{it:had} and \cite[Thm.~8.2.14]{H} we obtain
  \beq\label{e9.00b}
 \WF(\lambda^{\pm})'\subset\big( (\cN^{\pm}\cup \cF)\times \cN\big) \cup\big( (\cN^{\pm}\cup \cF)\times \zero\big) \hbox{ over }U\times M,
 \eeq
 where we recall that $\zero\subset T^{*}M$ is the zero section.  Condition \eqref{it:selfadj} implies that 
 $\lambda^{\pm}= \lambda^{\pm\star}$  for $(\cdot| \cdot)_{V}$  and hence we obtain   that $(X, X')\in \WF(\lambda^{\pm})'$ if and only if $(X', X)\in \WF(\lambda^{\pm})'$. Using that $ \cF\cap \cN= \emptyset$, we then deduce from \eqref{e9.00b} that
\[
  \WF(\lambda^{\pm})'\subset (\cN^{\pm}\times \cN^{\pm} )\cup ( \cN^{\pm}\times \zero) \cup ( \zero\times \cN^{\pm})\hbox{ over }\mathcal{U}\times M.
 \]
 Since $\lambda^{+}- \lambda^{-}= \i G$,   using once more that $\WF(G)'\subset\cN\times \cN$ and the hypothesis $\cF\cap \cN= \emptyset$, this implies that 
 \[
 \WF(\lambda^{\pm})'\cap\big(  (\cN^{\pm}\times \zero) \cup ( \zero\times \cN^{\pm})\big)= \emptyset,
 \]
 which proves   \eqref{defidefi} over $\mathcal{U}\times M$. To extend \eqref{defidefi}  to $M\times M$ we use that  $D\Lambda^{\pm}=0$ and argue by propagation of singularities as in \cite[Lem.~6.5.5]{DH}.  \qed
\section{Analytic spacetimes of bounded geometry}\label{an-bg}\init

 The goal of this section is to formulate the hypotheses on the background spacetime $(M,\rg)$ that we  use to prove the existence of Hadamard states for linearized gravity. 

Roughly speaking the metric $\rg$ should be analytic in a time coordinate, the most natural one being the \emph{Gaussian time} associated to the normal geodesic flow to a Cauchy surface $\Sigma$. If the Cauchy surface $\Sigma$ is not compact, we also need global estimates  on the metric $\rg$ in the `space variables' on $\Sigma$. A convenient way to formulate these estimates is using the language of bounded geometry, recalled in Subsect.~\ref{an-bg.1}.

In practice the analyticity in time is easier to deduce from an analyticity condition in all variables, which leads to the notion of spacetimes of bounded analytic geometry, defined in Subsects.~\ref{an-bg.2}, \ref{an-bg.2.5}. 

Existence of many Einstein manifolds of bounded analytic geometry is proved in Subsect.~\ref{an-bg2.6}. Finally in Subsect.~\ref{an-bg2.7} we give concrete examples, such as the Kerr--Kruskal and maximal Schwarzschild--de Sitter spacetimes.

\subsection{Spacetimes of bounded geometry}\label{an-bg.1}
 \subsubsection{Riemannian manifolds of bounded geometry}\label{an-bg.1.1}

 We recall that an $n$-dimensional  Riemannian manifold $(M, \hat{g})$ is of {\em bounded geometry} if its injectivity radius $r_{\hat{g}}$ is strictly positive and $\nabla^{k}\rR$ are bounded tensors, where $\rR$ is the Riemann curvature tensor and $\nabla$ the covariant derivative associated to $\hat{g}$, see  e.g.~\cite[App.~1]{Sh}.
  
Although we are mainly interested in Lorentzian manifolds $(M, \rg)$,   we will need an  auxiliary  Riemannian metric $\hat{g}$ of bounded geometry to define various function spaces, like spaces of bounded tensors, Sobolev spaces, etc.

  An equivalent characterization (see e.g.~\cite[Thm.~2.2]{GOW})is given in Prop. \ref{prop1.1} below: let us denote by $B_{n}(0, 1)$ the unit ball in $\rr^{n}$, by $\delta$ the flat metric on $\rr^{n}$  and by $\BT^{p}_{q}(B_{n}(0, 1),\delta)$ the space of $(q,p)$-tensors on $B_{n}(0, 1)$ which are bounded on $B_{n}(0, 1)$ together with all their derivatives. 
  
  For $p=q=0$, ie for functions, we use often the notation $\cinfb(B_{n}(0, 1))$ instead of  $\BT^{0}_{0}(B_{n}(0, 1),\delta)$.
  
  Then $\BT^{p}_{q}(B_{n}(0, 1),\delta)$ is a Fr\'echet space and one can hence define  a bounded family of $(q, p)$-tensors in $\BT^{p}_{q}(B_{n}(0, 1),\delta)$. For example, a family $(f_{i})_{i\in I}$ of functions on $B_{n}(0, 1)$ is bounded if $\sup_{i\in I, x\in B_{n}(0, 1)}| \p_{x}^{\alpha}f_{i}(x)|<\infty$ for all $\alpha\in \nn^{n}$.

\begin{proposition}\label{prop1.1}
 $(M, \hat{g})$ is of bounded geometry if and only if for each $x\in M$ there exists $\mathcal{U}_{x}\subset M$ open neighborhood of $x$ and
\[
\psi_{x}: \mathcal{U}_{x} \xrightarrow{\sim} B_{n}(0,1)
\]
a  diffeomorphism with $\psi_{x}(x)=0$ such that if $\hat{g}_{x}\defeq   (\psi_{x}^{-1})^{*}\hat{g}$ then:

\ben
\item  the family $\{\hat{g}_{x}\}_{x\in M}$ is  bounded in $\BT^{0}_{2}(B_{n}(0,1), \delta)$,
\item there exists $c>0$ such that:
\[
c^{-1}\delta\leq \hat{g}_{x}\leq c \delta, \  \ x\in M.
\]
\een
\end{proposition}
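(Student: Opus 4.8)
The plan is to prove the two implications of the equivalence separately, the genuinely non-trivial input being the passage between the intrinsic data (positive injectivity radius and uniformly bounded $\nabla^k\rR$) and a uniformly controlled atlas, for which the natural charts to use are geodesic normal coordinates. For the direction ``$\Rightarrow$'' one would argue as follows. Assume $r_{\hat g}>0$ and $\nabla^k\rR$ bounded for all $k$; fix $r\in\open{0,r_{\hat g}}$, pick for each $x$ a $\hat g$-orthonormal basis identifying $T_xM\simeq\rr^n$, and set $\psi_x^{-1}:\,v\mapsto\exp_x(rv)$, which is a diffeomorphism of $B_n(0,1)$ onto a neighbourhood $\mathcal U_x$ of $x$ because $r<r_{\hat g}$. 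Along radial geodesics the components of $\hat g_x$ solve the Jacobi equation with coefficients made from the curvature tensor carried to $T_xM$; the Rauch comparison theorem, fed with the uniform two-sided sectional curvature bound coming from boundedness of $\rR$, yields $c>0$ independent of $x$ with $c^{-1}\delta\le\hat g_x\le c\,\delta$, which is (2). For (1) one differentiates the Jacobi and structure equations and argues by induction on $|\alpha|$: $\partial_x^\alpha(\hat g_x)_{ij}$ solves on $B_n(0,1)$ a linear radial ODE system whose coefficients and inhomogeneities involve only lower-order derivatives of $\hat g_x$ and the transported tensors $\nabla^k\rR$ with $|k|\le|\alpha|$, all uniformly bounded, so Gronwall's lemma gives a bound uniform in $x$ and $\{\hat g_x\}_{x\in M}$ is bounded in $\BT^0_2(B_n(0,1),\delta)$.

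For the direction ``$\Leftarrow$'', suppose charts $\psi_x$ satisfying (1) and (2) are given. By (2) the inverse metrics $\hat g_x^{-1}$ are bounded with all derivatives, hence so is the family of Christoffel symbols $\{\Gamma_x\}$ --- these being universal polynomials in $(\hat g_x)_{ij}$, $(\hat g_x^{-1})^{ij}$ and $\partial(\hat g_x)_{ij}$ --- and, iterating, the families $\{\nabla^k\rR_x\}$ are bounded in the corresponding tensor spaces over $B_n(0,1)$; restricting to the fixed ball $B_n(0,1/2)$ this shows that $\nabla^k\rR$ are bounded tensors on $M$ for every $k$. There remains the lower bound on $r_{\hat g}$: the curvature bound, via the index form, gives a fixed $\varrho_0>0$ below which no radial geodesic has conjugate points, uniformly in $x$; and the uniform bound on the $\Gamma_x$ in charts of a size independent of $x$ keeps geodesics $C^2$-close to Euclidean straight segments on a fixed parameter interval, so a geodesic loop at $x$ of $\hat g$-length less than some fixed $\varepsilon_0>0$ would be sent by $\psi_x$ to an almost straight segment joining $0$ to itself, which is impossible for $\varepsilon_0$ small; hence $r_{\hat g}\ge\min(\varrho_0,\varepsilon_0)>0$.

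I expect the main obstacle to be exactly this last step of ``$\Leftarrow$'': manufacturing a \emph{uniform} lower bound on the injectivity radius. Curvature bounds by themselves control only conjugate points, whereas ruling out short geodesic loops requires both that the charts have a size independent of $x$ and that $\hat g_x$ be uniformly comparable to the Euclidean metric on them --- which is precisely why the proposition is formulated with charts of fixed radius rather than with mere pointwise bounds on $\hat g$ and its derivatives. Everything else reduces to routine ODE comparison estimates together with the coordinate formulas for the Levi-Civita connection and curvature; the details may be found in \cite[App.~1]{Sh} or \cite[Thm.~2.2]{GOW}.
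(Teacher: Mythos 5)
Your proposal is correct, and it follows the standard route: the paper itself gives no proof of this proposition but refers to \cite[Thm.~2.2]{GOW} (see also \cite[App.~1]{Sh}), where the argument is exactly the one you sketch — rescaled geodesic normal coordinates with Rauch comparison and an induction on derivatives via the Jacobi/structure equations for the forward direction, and for the converse the combination of a uniform conjugate-radius bound with the exclusion of short geodesic loops via the uniform comparability of the charts (Klingenberg's decomposition of the injectivity radius). Your closing remark correctly identifies the only genuinely delicate point, namely the uniform lower bound on $r_{\hat g}$ in the ``$\Leftarrow$'' direction; the rest is routine ODE comparison, so nothing further is needed here.
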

It is known that one can find  a sequence $(x_{i})_{i\in \nn}$ of points in $M$ such that setting $(\mathcal{U}_{i}, \psi_{i})\defeq (\mathcal{U}_{x_{i}}, \psi_{x_{i}})$,  $(\mathcal{U}_{i}, \psi_{i})_{i\in \nn}$ is an atlas of $M$ with the additional property that there exists $N\in \nn$ such that $\bigcap_{i\in J}\mathcal{U}_{i}= \emptyset$ if $\sharp J>N$. Such atlases are called {\em bounded atlases} of $M$. 

The standard choice for  $(\mathcal{U}_{i}, \psi_{i})$ is $(B^{\hat{g}}(x_{i}, r), \exp^{\hat{g}}_{x_{i}})$ for $0<r<r_{\hat{g}}$, i.e.~the geodesic ball and exponential map at $x_{i}$. One can associate to a bounded atlas  $(\mathcal{U}_{i}, \psi_{i})_{i\in \nn}$ a partition of unity
\[
1= \sum_{i\in \nn}\chi_{i}^{2}, \ \ \chi_{i}\in \coinf(\mathcal{U}_{i}),
\]
such that $\{(\psi_{i}^{-1})^{*}\chi_{i}\}_{i\in \nn}$ is a bounded sequence in $\cinf_{\rm b}(B_{n}(0,1))$. Such a partition of unity is  called a {\em bounded partition of unity}.

\subsubsection{Bounded tensors, bounded differential operators, Sobolev spaces}\label{an-bg.1.2}
We recall now several notions  due to Shubin \cite{Sh}, see also \cite[Subsect.~2.3]{GOW}.

\begin{notations}
\item  If $(M, \hat{g})$ is a Riemannian manifold of bounded geometry,  we denote by  $\BT^{p}_{q}(M, \hat{g})$ the space of {\em bounded $(q, p)$-tensors} on $M$.  Concretely $T\in \BT^{p}_{q}(M, \hat{g})$ if, given a bounded atlas $\{(\mathcal{U}_{i}, \psi_{i})\}_{i\in \nn}$, the seminorms of the push-forwards of $T$ to $\mathcal{U}_{i}$ by $\psi_{i}$ in $\BT^{p}_{q}(B_{n}(0, 1), \delta)$ are bounded uniformly in $i\in \nn$. The definition is independent on the choice of the bounded atlas.
\item As above we will sometimes use the notation $\cinfb(M)$ for $\BT^{0}_{0}(M, \hat{g})$.
\item If $\mathcal{U}\subset M$ is an open set, we define similarly the space $\BT^{p}_{q}(\mathcal{U}, \hat{g})$ by requiring the above uniform bound only over the $i$ such that $\mathcal{U}\cap \mathcal{U}_{i}\neq \emptyset$.
\item If $I\subset \rr$ is an interval   we use the notation introduced in \ref{sec01.1.3} to define the spaces $\cinfb(I, \BT^{p}_{q}(M, \hat{g}))$.
\item The space  ${\rm Diff}(M, \hat{g})$ is the space of \emph{bounded differential operators} on $M$, i.e.~differential operators which form a bounded family of differential operators on $B_{n}(0, 1)$ when expressed in a bounded atlas of $M$. 
\item Finally,  we denote by  $H^{s}(M, \hat{g})$ the Sobolev space of order $s\in \rr$.
\end{notations}

\subsubsection{Vector bundles of bounded geometry}\label{an-bg.1.3}
Let $(M, g)$ be a Riemannian manifold of bounded geometry.    We recall  the definition of vector bundles of bounded geometry, see \cite{Sh}.  

A vector bundle $E\xrightarrow{\pi}M$ of rank $N$  is of {\em bounded geometry} if there exists a bounded covering $(\mathcal{U}_{i})_{i\in \nn}$ of $M$  which forms a bundle atlas of $E$ such that the transition maps $t_{ij}: \mathcal{U}_{ij}\to M_{N}(\cc)$ are a bounded family of matrices.

Clearly if $(M, \hat{g})$ is of bounded geometry, then all tensor bundles over $M$ (like  the bundles $\otimes_{\rm s}^{k}T^{*}M$ that will be used in Sect.~\ref{sec:lg}) are also of bounded geometry.

The space of bounded sections of $E$ (analogous to the spaces of bounded tensors) is denoted by $\cinfb(M; E)$.  More precisely, if $u$ is a section of $E$, we denote by $u_{i}: \mathcal{U}_{i}\to \cc^{N}$ its local trivializations over $\mathcal{U}_{i}$, and then $u\in \cinfb(M; E)$ if and only if the family $(u_{i})_{i\in \nn}$ is bounded in $\cinfb(\mathcal{U}_{i}; \cc^{N})$.

The notion of bounded differential operators acting on smooth sections of  $E$ can  now defined similarly as before, and so does the notion of bounded Hermitian forms on (the fibers of) $E$.  
\subsubsection{Bounded Hilbert space structures}
Using a partition of unity one can equip a vector bundle $E\xrightarrow{\pi}M$  of bounded geometry with a positive definite bounded Hermitian form $\beta$, (i.e. a Hilbert space structure on the fibers of $E$).  This means that  if $\beta_{i}: U_{i}\to L_{\rm h}(\cc^{N}, \cc^{N*})$ are its local trivializations, then $\beta_{i}>0$ and the families $(\beta_{i})_{i\in \nn}$ and $(\beta_{i}^{-1})_{i\in \nn}$ are bounded in $\cinfb(U_{i}; L_{\rm h}(\cc^{N}, \cc^{N*}))$ and $\cinfb(U_{i}; L_{\rm h}(\cc^{N*}, \cc^{N}))$.

Any two of these bounded Hilbert space structures are equivalent, in the sense of the bounded geometry.

\subsubsection{Sobolev spaces}\label{an-bg.1.4} For $s\in\rr$,
one defines the Sobolev space $H^{s}(M; E)$ in the natural way, for example using the norm
\[
\| u\|_{s}^{2}= \sum_{i\in \nn}\| T_{i}\circ \psi_{i}\chi_{i}u\|^{2}_{s},
\]
where $(\mathcal{U}_{i}, \psi_{i})_{i\in \nn}$ is a bounded atlas, $T_{i}: \pi^{-1}(\mathcal{U}_{i}) \to \mathcal{U}_{i}\times \cc^{N}$ are local trivializations, $1= \sum \chi_{i}^{2}$ is a bounded partition of unity subordinate to $(\mathcal{U}_{i})_{i\in \nn}$ and $\norm{\cdot}_{s}$ is the usual Sobolev norm on $H^{s}(\rr^{n}; \cc^{N})$.

The topology of $H^{s}(M; E)$ is independent of the above choices.

 \subsubsection{Spacetimes of bounded geometry}\label{an-bg.1.5}

 \begin{definition}\label{defp3.1}
 Let $(M, \rg)$ be a globally hyperbolic spacetime and $\Sigma\subset M$ a smooth  spacelike Cauchy surface.  We say that $(M, \rg)$ is of \emph{bounded geometry near $\Sigma$} if the following conditions hold:
 \ben
 
 \item there exists a reference Riemannian metric $\hat{g}$ such that $(M, \hat{g})$ is of bounded geometry;
 \item there exists an open neighborhood $\mathcal{U}$ of $\Sigma$ in $M$ such that $\rg\in {\BT}^{0}_{2}(\mathcal{U}, \hat{g})$ and $\rg^{-1}\in {\BT}^{2}_{0}(\mathcal{U}, \hat{g})$;
 \item  the embedding  $ \Sigma\hookrightarrow \mathcal{U}$ is of bounded geometry for $\hat{g}$, i.e.~there exists a bounded atlas $\{(\mathcal{U}_{i}, \psi_{i})\}_{i\in \nn}$ for $M$ such that if $\Sigma_{i}= \psi_{i}(\Sigma\cap \mathcal{U}_{i})$, we have 
 \[
 \Sigma_{i}= \{(v', v_{n})\in B_{n}(0, 1) \st v_{n}= F_{i}(v')\},
 \]
 where the seminorms of $F_{i}$ in $\BT^{0}_{0}(B_{n}(0, 1), \delta)$ are uniformly bounded for $i\in \nn$ such that $\mathcal{U}_{i}\cap \mathcal{U}\neq \emptyset$;
 \item if $n(y)$ for  $y\in \Sigma$ is the future directed unit normal  for $ \rg$ to $\Sigma$, one has:
 \[
\sup_{y\in \Sigma}\,n(y)\cdot \hat{g}(y)n(y)<\infty.
\]
\een
\end{definition}
 If $(M, \rg)$ is of bounded geometry near $\Sigma$,  then the Gaussian normal coordinates to $\Sigma$ are well adapted to the bounded geometry framework. We recall a  result in this direction, see \cite[Thm.~3.5]{GOW}.
\begin{theorem}\label{th-omar}
 Let $(M,  \rg)$ be a Lorentzian manifold of bounded geometry near a Cauchy surface $\Sigma$. Then the following holds:
\ben
\item there exists $\delta>0$ such that the normal geodesic flow to $\Sigma$:
\[
\chi:\ \begin{array}{l}
\open{-\delta, \delta}\times \Sigma\to M\\[2mm]
(t, \rx)\mapsto \exp_{\rx}^{ \rg}(tn(\rx))
\end{array}
\] 
is well defined and is a smooth diffeomorphism on its range;
\item $\chi^{*} \rg= - dt^{2}+\rh_{t}$, where $\{ \rh_{t}\}_{t\in\open{-\delta, \delta}}$ is a smooth family of Riemannian metrics on $\Sigma$ such that:
\[
\begin{array}{rl}
i)&(\Sigma,  \rh_{0})\hbox{ is of bounded geometry},\\[2mm]
ii)& t\mapsto  \rh_{t}\in \cinf_{\rm b}(\open{-\delta, \delta}, \BT^{0}_{2}(\Sigma,  \rh_{0})), \\[2mm]
iii)& t\mapsto  \rh^{-1}_{t}\in \cinf_{\rm b}(\open{-\delta, \delta}, \BT^{2}_{0}(\Sigma,  \rh_{0})).
\end{array}
\]
 \een
\end{theorem}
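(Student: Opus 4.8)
The plan is to reduce everything to uniform estimates for the geodesic ODE in a bounded atlas and then to read off the normal form of $\rg$ from the Gauss lemma. First I would fix a bounded atlas $\{(\mathcal{U}_i,\psi_i)\}_{i\in\nn}$ as in Definition \ref{defp3.1}, so that each $\Sigma\cap\mathcal{U}_i$ is a graph $v_n=F_i(v')$ with $F_i$ uniformly bounded in $\BT^0_0(B_n(0,1),\delta)$. Conditions (2) and (4) of Definition \ref{defp3.1} ensure that $\rg,\rg^{-1}$ are uniformly bounded families of tensors on the charts and that the future-directed $\rg$-unit normal $n$ to $\Sigma$, computed from $\rg$ and the $F_i$, is a uniformly bounded vector field along $\Sigma$, i.e.\ $n\in\cinf_{\rm b}(\Sigma;TM)$. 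In each chart the geodesic equation $\ddot x^\mu+\Gamma^\mu_{\nu\rho}(x)\dot x^\nu\dot x^\rho=0$ has Christoffel symbols forming a bounded family (as $\rg,\rg^{-1}\in\BT$), with initial data $(\psi_i(\rx),d\psi_i\,n(\rx))$ again forming a bounded family; the standard theory of ODEs with uniformly bounded coefficients and parameters then gives a $\delta_0>0$ independent of $i$ and of $\rx$ on which the solution exists, is jointly smooth in $(t,\rx)$, and has all derivatives bounded uniformly in the chart. This shows $\chi$ is well defined on $\open{-\delta_0,\delta_0}\times\Sigma$ and lies, slice by slice, in $\cinf_{\rm b}$ as a map of manifolds of bounded geometry.

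Next I would check that $\chi$ is a diffeomorphism onto its range. At $t=0$, $d\chi$ sends $\partial_t$ to $n$ and is the identity on $T\Sigma$; since $n$ is transverse to $T\Sigma$ with $\hat g$-bounds from condition (4), $d\chi|_{t=0}$ is invertible with uniformly bounded inverse, and the quantitative inverse function theorem together with the $\cinf_{\rm b}$ bounds above yields a $\delta_1\in(0,\delta_0]$ on which $\chi$ is a local diffeomorphism with uniform bounds over each chart. For global injectivity I would invoke the bounded tubular neighbourhood of $\Sigma$: since $\Sigma\hookrightarrow M$ is of bounded geometry, there is a uniform $\varepsilon>0$ so that every point within $\hat g$-distance $\varepsilon$ of $\Sigma$ has a unique nearest point on $\Sigma$; because $\chi(t,\rx)$ stays within distance $O(|t|)$ of $\Sigma$ and the normal geodesic through a given point is unique in the tube, shrinking to some $\delta\in(0,\delta_1]$ forces $\chi(t_1,\rx_1)=\chi(t_2,\rx_2)$ to give $t_1=t_2$, and then $\rx_1=\rx_2$ by local injectivity. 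This establishes item (1).

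Finally I would compute $\chi^*\rg$. The Gauss lemma (a pointwise computation, using that the normal geodesics are $\rg$-timelike of unit speed) gives $\chi^*\rg=-dt^2+\rh_t$ with $\rh_t$ the pullback to $\Sigma$ of the induced Riemannian metric on $\chi(\{t\}\times\Sigma)$. Property (i) is then just that $\rh_0$ is the first fundamental form of $\Sigma\subset M$: condition (3) of Definition \ref{defp3.1}, together with (2) and (4), makes it a bounded family of Riemannian metrics, bounded below as in Proposition \ref{prop1.1}, hence of bounded geometry. For (ii), the pullback of a bounded tensor under the $\cinf_{\rm b}$-diffeomorphism $\chi$ is again bounded, with $t$-dependence controlled by the $t$-derivatives of $\chi$, so $\chi^*\rg\in\cinf_{\rm b}(\open{-\delta,\delta},\BT^0_2(\Sigma,\rh_0))$ and $\rh_t$ is its spatial block. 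For (iii), $\rh_t$ solves $\partial_t\rh_t=-2k_t$ with $k_t$ the (uniformly bounded) second fundamental form of the slice, so $\|\rh_t-\rh_0\|\le C|t|$ uniformly; after one more shrinking of $\delta$ one gets $\rh_t\ge\frac{1}{2}\rh_0$, hence $\rh_t^{-1}$ is uniformly bounded, and differentiating $\rh_t\rh_t^{-1}=\one$ propagates this to all derivatives, giving $t\mapsto\rh_t^{-1}\in\cinf_{\rm b}(\open{-\delta,\delta},\BT^2_0(\Sigma,\rh_0))$.

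The hard part will be the \emph{uniformity}: squeezing a single $\delta>0$ out of all the local constructions, which amounts to a bounded tubular neighbourhood theorem for the spacelike Cauchy surface $\Sigma\hookrightarrow M$, and to the global (rather than merely local) injectivity of $\chi$ on $\open{-\delta,\delta}\times\Sigma$. Conditions (3) and (4) of Definition \ref{defp3.1} are precisely what make this argument run uniformly; by contrast, the ODE estimates and the Gauss lemma are routine.
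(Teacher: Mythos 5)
First, a point of comparison: the paper does not actually prove Theorem \ref{th-omar} --- it is quoted from \cite[Thm.~3.5]{GOW} --- so the only in-house argument to measure yours against is the proof of the analytic analogue, Theorem \ref{th-omar-analytic}, in Subsect.~\ref{app.1}. Your strategy coincides with that one (and with the cited proof): transport everything to a bounded atlas, solve the geodesic ODE with uniformly bounded coefficients and uniformly bounded initial data to get a chart-independent existence time and uniform $\cinfb$ bounds on $\chi$, then read off the normal form $-dt^2+\rh_t$ from the Gauss lemma; the appendix replaces your uniform ODE estimates by the Cauchy--Kowalevski theorem because it needs analytic dependence, but is otherwise the same computation. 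Items (i)--(iii) are handled correctly: (i) follows from conditions (2)--(4) of Definition \ref{defp3.1} via the characterization in Proposition \ref{prop1.1}, and your coercivity argument for (iii) ($\|\rh_t-\rh_0\|\le C|t|$, hence $\rh_t\ge \frac{1}{2}\rh_0$ after shrinking $\delta$, then differentiate $\rh_t\rh_t^{-1}=\one$) is the standard one.

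The one step I would not accept as written is global injectivity. You appeal to a $\hat g$-tubular neighbourhood (``unique nearest point on $\Sigma$'') and then assert that ``the normal geodesic through a given point is unique in the tube''; but nearest-point uniqueness is a statement about $\hat g$-orthogonality, whereas $\chi$ is built from $\rg$-normal geodesics, and uniqueness of the $\rg$-normal geodesic through a point of the tube is essentially equivalent to the injectivity you are trying to prove --- as stated the argument is circular. The repair uses only ingredients you already have: suppose $\chi(t_1,\rx_1)=\chi(t_2,\rx_2)=p$. Since each normal geodesic has uniformly bounded $\hat g$-speed, $d_{\hat g}(\rx_1,\rx_2)\le 2C\delta$. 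Because condition (3) of Definition \ref{defp3.1} makes $\Sigma\cap\mathcal{U}_i$ a single uniformly bounded graph in each chart, extrinsically close points of $\Sigma$ are intrinsically close; hence for $\delta$ small enough $(t_1,\rx_1)$ and $(t_2,\rx_2)$ lie in a common neighbourhood on which your quantitative inverse function theorem already gives injectivity, and they coincide. With that substitution the proof is complete.
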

We recall that the spaces $\cinfb(I, \BT^{p}_{q}(M, \hat{g}))$ are defined in \ref{an-bg.1.2}.
\subsection{Analyticity in  Gaussian time}\label{an-bg.0}
Let $(M, \rg)$ be a Lorentzian spacetime of bounded geometry near a Cauchy surface $\Sigma$ with respect to a reference Riemannian metric $\hat{g}$ on $M$.  Then by Thm.~\ref{th-omar}, we can assume that $M= I\times \Sigma$ for $I$ an open interval and $\rg= - dt^{2}+\rh_{t}$, the function $t: M\to \rr$ being called the {\em Gaussian time} associated to $\Sigma$.

In later sections, we will perform the {\em Wick rotation} in $t$, corresponding to replacing $t$ by $\i s$, which requires that the metric $\rg$ is real analytic in $t$. We will need the property that $\rg$ extends holomorphically  in $t$ in a  {\em strip} in the complex plane, with estimates in the  space variables $\rx$  adapted to the bounded geometry of $(\Sigma, \rh_{0})$.

To formulate precisely our hypotheses we first introduce the relevant Banach spaces of bounded analytic functions.

\subsubsection{Bounded analytic functions}\label{an-bg.0.1}
Let $D_{1}(0, r)\subset \cc$ be the open disk of radius $r$ and $C_{1}(0, r)= \open{-r, r}$ its intersection with the real line.  We denote by $\mathcal{A}(C_{1}(0, r))$ the Banach space of functions $u\in \cinf(C_{1}(0, r))$ such that
 \[
\|u\|_{r}\defeq\sup_{\alpha\in \nn^{n}} |r^{|\alpha|}(\alpha!)^{-1}\p_{x}^{\alpha}u(0)|<\infty.
\]
The space $\mathcal{A}(C_{1}(0, r))$ coincides with the Banach space of analytic functions on $C_{1}(0, r)$ which extend holomorphically to $D_{1}(0, r)$ with
\[
\|u\|_{r,1}\defeq\sup_{z\in D_{1}(0, r)}|(1- r^{-1}|z|)u(z)|<\infty,\ \ m_{r}(z)= (1- r^{-1}|z|).
\]
The norms $\norm{\cdot}_{r}$ and $\norm{\cdot}_{r, 1}$ are equivalent, as follows easily from the Cauchy integral formula.

As in \ref{an-bg.1.2} it is straightforward to extend this definition to functions with values in a Fréchet space. Namely, if $\cF$ is a Fréchet space whose topology is defined by a family of seminorms $\norm{ \cdot }_{n}$, $n\in \nn$, we denote by $\mathcal{A}(C_{1}(0, r), \cF)$ the space of maps $f: C_{1}(0, r)\to \cF$ such that 
\[
\sup_{\alpha\in \nn^{n}} \|r^{|\alpha|}(\alpha!)^{-1}\p_{x}^{\alpha}f(0)\|_{n}<\infty, 
\]
for all $n\in \nn$.  Equipped with the obvious seminorms it is again  a Fréchet space.

In particular, we use this convention to define the spaces $\mathcal{A}(C_{1}(0, r); \BT^{p}_{q}(M, \hat{g}))$.
 
\subsection{Main hypotheses}\label{hippopotamus}  We can now state precisely our hypotheses on  the Einstein metric $\rg$.

\begin{assumption} We assume that:
\ben
\item[{\upshape I.}]  $(M, \rg)$ is a Lorentzian manifold of bounded geometry near a Cauchy surface $\Sigma$.

\item[{\upshape II.}] The map $\open{-\delta , \delta}\ni t \mapsto  \rh_{t}$ in Thm.~\ref{th-omar} belongs to $\mathcal{A}(C_{1}(0, r); \BT^{0}_{2}(\Sigma,  \rh_{0}))$ for some $0<\delta\ll 1$.
\een
\end{assumption}

\begin{remark}
While it is rather easy to check I  in examples,   it is more difficult to check II. In the next subsection, we will show that it follows from a stronger condition of `bounded analyticity' in all variables.
\end{remark}
\subsection{Manifolds of bounded analytic geometry}\label{an-bg.2}
We now define the analog of the notions in Subsect.~\ref{an-bg.1} in the analytic category.

\subsubsection{Banach spaces of analytic tensors}\label{an-bg.2.1}
We start by defining the analog of the spaces  $BT^{p}_{q}(B_{n}(0, 1), \delta)$ in the analytic case, by extending  the definitions in \ref{an-bg.0.1} to several variables.

For well-known reasons, it is convenient to work with polydisks instead of balls. Accordingly, we denote by $D_{n}(0, r)\subset \cc^{n}$ the polydisk $D_{n}(0, r)= \{z\in \cc^{n} \st |z_{i}|<r\}$ and  by  $C_{n}(0, r)$ the cube $D_{n}(0, r)\cap \rr^{n}$.  We set $m_{r}(z)= \prod_{i=1}^{n}(1- r^{-1}|z_{i}|)$  for $z\in D_{n}(0, r)$.
 
We denote by  $\mathcal{A}(C_{n}(0, r))$  the Banach space   of functions $u\in \cinf(C_{n}(0,r))$ such that
 \[
\|u\|_{r}\defeq\sup_{\alpha\in \nn^{n}} |r^{|\alpha|}(\alpha!)^{-1}\p_{x}^{\alpha}u(0)|<\infty.
\]
 $\mathcal{A}(C_{n}(0, r))$ coincides with the Banach space of analytic functions on $C_{n}(0, r)$ which extend holomorphically to $D_{n}(0, r)$ with
\[
\|u\|_{r,1}\defeq\sup_{z\in D_{n}(0, r)}|m_{r}(z)u(z)|<\infty,
\]
and the norms $\norm{\cdot}_{r}$ and $\norm{\cdot}_{r, 1}$ are again equivalent.

 We denote by $\AT^{p}_{q}(C_{n}(0, r), \delta)$ the Banach space of analytic $(q, p)$-tensors on $C_{n}(0, r)$, equipped with the canonical norm obtained from the metric $\delta$ and the norm $\norm{\cdot}_{r}$ on $\mathcal{A}(C_{n}(0, r))$. 

By Cauchy estimates $\AT^{p}_{q}(C_{n}(0, r), \delta)$ injects continuously into $\BT^{p}_{q}(\mathcal{U}, \delta)$ if $\mathcal{U}\Subset C_{n}(0, r)$.

\subsubsection{Riemannian manifolds of bounded analytic geometry}\label{an-bg.2.2}
 Let $(M, \hat{g})$ be a Riemannian manifold.  Assume that $M$ has been given the structure of a real analytic manifold, compatible with its $C^{\infty}$ structure, and that $\hat{g}$ is an analytic metric.
 
 \begin{definition}\label{def1.1}
 $(M,\hat{g})$ is of {\em bounded analytic geometry} if for each $x\in M$ there exists an open neighborhood  $\mathcal{U}_{x}$  of $x$ and
 \[
\psi_{x}: \mathcal{U}_{x}\bij C_{n}(0,1)
\]
an analytic diffeomorphism with $\psi_{x}(x)=0$ such that if $\hat{g}_{x}\defeq (\psi_{x}^{-1})^{*}\hat{g}$ then:
\ben
\item the family $\{\hat{g}_{x}\}_{x\in M}$ is  bounded in $\AT^{0}_{2}(C_{n}(0, 1), \delta)$,
\item  there exists $c>0$ such that
\[
c^{-1}\delta\leq \hat{g}_{x}\leq c \delta, \ x\in M.
\]

\een
A family $\{(\mathcal{U}_{x}, \psi_{x})\}_{x\in M}$ as above will be called a {\em bounded analytic atlas} of $M$.
\end{definition}

\subsubsection{Bounded analytic tensors}\label{an-bg.2.3}
  If $(M, \hat{g})$ is a Riemannian manifold of bounded analytic geometry, one can naturally define spaces of bounded analytic tensors.
  
\begin{definition}\label{defp0.2}
 Let $(M,\hat{g})$  be of  bounded analytic geometry. We denote by $\AT^{p}_{q}(M,\hat{g})$ the spaces of  smooth $(q,p)$-tensors $T$ on $M$ such that for  a bounded analytic atlas $\{(\mathcal{U}_{x}, \psi_{x})\}_{x\in M}$, the family  $\{(\psi_{x}^{-1})^{*}T\}_{x\in M}$  is  bounded in $\AT^{q}_{p}(C_{n}(0, \epsilon), \delta)$ for some $0<\epsilon\leq 1$. We equip $\AT^{p}_{q}(M, \hat{g})$ with the Banach space topology given by $\|T\|= \sup_{x\in M}\| T_{x}\|$, where $T_{x}= (\psi_{x}^{-1})^{*}T$ and $\|T_{x}\|$ is the norm of $T_{x}$ in $\AT^{p}_{q}(C_{n}(0, \epsilon), \delta)$.  
 \end{definition}
One can show that the spaces $\AT^{p}_{q}(M,\hat{g})$ are independent on the choice of the bounded analytic atlas $\{(\mathcal{U}_{x}, \psi_{x})\}_{x\in M}$, see Subsect.~\ref{app.1}.

If $\mathcal{U}\subset M$ is an open set, the spaces $\AT^{p}_{q}(\mathcal{U}, \hat{g})$ are defined as in \ref{an-bg.1.2}.

Noting that $C_{1}(0, \alpha)= \open{-\alpha, \alpha}$, the spaces $\mathcal{A}(\open{-\alpha, \alpha}; \AT^{q}_{p}(M, \hat{g}))$ are defined as in \ref{an-bg.0.1}.


 \subsection{Spacetimes of bounded analytic geometry}\label{an-bg.2.5}
 \begin{definition}\label{defp3.1anal}
 Let  $(M, \rg)$  be a globally hyperbolic analytic spacetime and $\Sigma\subset M$ an analytic  spacelike Cauchy surface. 
  We say that $(M, \rg)$  is of {\em bounded analytic geometry near} $\Sigma$ if  the following conditions hold:
 \ben
 \item there exists a reference analytic Riemannian metric $\hat{g}$ such that $(M, \hat{g})$ is of bounded geometry;
 \item there exists an open neighborhood $\mathcal{U}$ of $\Sigma$ in $M$ such that  $\rg\in \AT^{0}_{2}(\mathcal{U}, \hat{g})$ and $\rg^{-1}\in \AT^{2}_{0}(\mathcal{U}, \hat{g})$;
  \item  the embedding $\Sigma\hookrightarrow \mathcal{U}$ is analytic of bounded geometry for $\hat{g}$, i.e.~there exists a bounded analytic atlas 
$\{(\mathcal{U}_{i}, \psi_{i})\}_{i\in \nn}$ for $M$ such that if $\Sigma_{i}= \psi_{i}(\Sigma\cap \mathcal{U}_{i})$, we have 
 \[
 \Sigma_{i}= \{(v', v_{n})\in B_{n}(0, 1)  \st v_{n}= F_{i}(v')\},
 \]
 where the seminorms of $F_{i}$ in $\AT^{0}_{0}(B_{n}(0, 1), \delta)$ are uniformly bounded for $i\in \nn$ with $\mathcal{U}_{i}\cap \mathcal{U}\neq \emptyset$;
 \item if $n(y)$ for  $y\in \Sigma$ is the future directed unit normal  for $ \rg$ to $\Sigma$ one has:
 \[
\sup_{y\in \Sigma}\,n(y)\cdot \hat{g}(y)n(y)<\infty.
\]
 \een
\end{definition}
 \begin{theorem}\label{th-omar-analytic}
 Let $(M, \rg)$ be a Lorentzian manifold of bounded analytic geometry near a smooth spacelike  Cauchy surface $\Sigma$. Then the following holds:
\ben
\item there exists $\delta>0$ such that the normal geodesic flow to $\Sigma$:
\[
\chi:\ \begin{array}{l}
\open{-\delta, \delta}\times \Sigma\to M\\
(t, \rx)\mapsto \exp_{\rx}^{\rg}(tn(\rx))
\end{array}
\] 
is well defined and is a smooth diffeomorphism on its image;
\item  there exists $\epsilon_{1}>0$ such that $\chi^{*}\rg= - dt^{2}+\rh_{t}$, where $\{\rh_{t}\}_{t\in\open{-\epsilon_{1}, \epsilon_{1}}}$ is a smooth family of Riemannian metrics on $\Sigma$ such that:
\[
\begin{array}{rl}
i)&(\Sigma, \rh_{0})\hbox{ is of bounded analytic geometry},\\[2mm]
ii)&t\mapsto \rh_{t}\in \mathcal{A}(\open{-\epsilon_{1}, \epsilon_{1}}; \AT^{0}_{2}(\Sigma, \rh_{0})), \\[2mm]
iii)& t\mapsto \rh^{-1}_{t}\in \mathcal{A}(\open{-\epsilon_{1}, \epsilon_{1}}; \AT^{2}_{0}(\Sigma, \rh_{0})).
\end{array}
\]
 \een
\end{theorem}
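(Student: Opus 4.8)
\textbf{Proof strategy for Theorem~\ref{th-omar-analytic}.}
The plan is to run the proof of Theorem~\ref{th-omar} (i.e.~\cite[Thm.~3.5]{GOW}), which already gives the smooth statement, but keep track of analyticity at each stage. Part (1) is exactly the smooth statement from Theorem~\ref{th-omar}, so there is nothing new there; the content is entirely in part (2), and within it the assertions $ii)$ and $iii)$, since $i)$ is already part of the definition of bounded analytic geometry near $\Sigma$ together with the observation that restricting $\rg$ to $\{t=0\}$ produces an analytic metric $\rh_0$ on $\Sigma$ of bounded analytic geometry. The key point is that the normal geodesic flow $\chi$ is built by solving the geodesic ODE, and solutions of ODEs with analytic coefficients depend analytically on time and on initial data; so $\chi^*\rg$ inherits analyticity in $t$, uniformly in the space variables because the coefficients of the geodesic equation are \emph{bounded} analytic.

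Concretely, I would proceed as follows. First, fix a bounded analytic atlas $\{(\cU_i,\psi_i)\}_{i\in\nn}$ for $M$ as in Definition~\ref{defp3.1anal}, adapted to $\Sigma$, so that in each chart the metric $\rg$ and its inverse lie in a bounded family in $\AT^0_2(C_n(0,1),\delta)$, resp.~$\AT^2_0$. In such a chart, the normal exponential map $(t,\rx)\mapsto \exp^{\rg}_{\rx}(t\,n(\rx))$ is obtained by solving the geodesic system $\ddot\gamma^a = -\Gamma^a_{bc}(\gamma)\dot\gamma^b\dot\gamma^c$ with initial position on $\Sigma_i$ and initial velocity $n(\rx)$; the Christoffel symbols $\Gamma^a_{bc}$ are analytic functions of the metric coefficients, hence form a bounded family in $\AT$-norm, and $\rx\mapsto n(\rx)$ is analytic and bounded by hypothesis (item (4) of Definition~\ref{defp3.1anal} plus analyticity of the embedding). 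Then I would invoke a quantitative Cauchy--Kovalevskaya / analytic ODE estimate: for an ODE $\dot y = F(t,y)$ with $F$ analytic and bounded on a fixed polydisk, with a uniform bound on $F$, the flow exists for $|t|<\epsilon_1$ with $\epsilon_1>0$ depending only on that uniform bound, and extends holomorphically in $t$ to $D_1(0,\epsilon_1)$ with estimates controlled uniformly over the parameters (here, the chart index $i$ and the point $\rx\in\Sigma_i$). This yields $\chi\in\mathcal{A}(\open{-\epsilon_1,\epsilon_1};\AT(\cdots))$ as a map, with $\epsilon_1$ independent of $i$. Pulling back, $\chi^*\rg = -dt^2 + \rh_t$ where $\rh_t$ is a composition of $\rg$ (bounded analytic in its arguments) with $\chi$ (bounded analytic in $t$, with values in a bounded analytic family of maps), and composition of bounded analytic maps is bounded analytic; hence $t\mapsto\rh_t\in\mathcal{A}(\open{-\epsilon_1,\epsilon_1};\AT^0_2(\Sigma,\rh_0))$, which is $ii)$. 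For $iii)$, one uses that $\rh_t$ stays uniformly invertible for $|t|$ small (by continuity from $\rh_0$ and the uniform bounds), and that inversion of matrices is an analytic operation on the open set of invertible matrices, with bounds controlled by the uniform lower bound on $\rh_t$ — so $t\mapsto\rh_t^{-1}$ is again bounded analytic, giving $iii)$. That the Gaussian normal form is genuinely $-dt^2+\rh_t$ (no $dt\,dx$ cross terms, coefficient $-1$ in front of $dt^2$) is the classical Gauss lemma and needs no analyticity; it already holds by Theorem~\ref{th-omar}.

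The main obstacle I anticipate is \emph{uniformity in the space variables} of the analytic estimates, i.e.~making sure the radius of holomorphy $\epsilon_1$ in $t$ and the $\AT$-seminorm bounds do not degenerate as $i\to\infty$ or as $\rx$ ranges over a chart. This is where the bounded analytic geometry hypothesis is essential: it provides a \emph{single} polydisk $C_n(0,1)$ and a \emph{uniform} bound on the $\AT^0_2$-norms of the $\rg_i$, so that the analytic ODE existence theorem can be applied with constants independent of $i$. The technical heart is therefore a uniform analytic ODE lemma — existence, holomorphic extension in the time variable, and seminorm bounds depending only on the size of the (bounded analytic) right-hand side — which can either be cited from the literature on the Cauchy--Kovalevskaya theorem with parameters or proven by the standard majorant-series argument adapted to the $\norm{\cdot}_{r}$-norms of Subsect.~\ref{an-bg.2.1}. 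A secondary, minor point is checking that the various operations performed (composition, restriction to $\{t=0\}$, matrix inversion, pullback of tensors) all preserve membership in the $\mathcal{A}(\cdots;\AT^p_q(\cdots))$ spaces with uniform bounds; each of these is routine given the Banach-algebra-type properties of the $\norm{\cdot}_r$ norms, but they should be stated as a short lemma and used systematically. Finally one should note that the conclusion in $ii)$--$iii)$ is exactly the analytic upgrade of the conclusion of Theorem~\ref{th-omar}, so that Theorem~\ref{th-omar-analytic} indeed supplies Hypothesis~II of Subsect.~\ref{hippopotamus} from the more easily checkable condition of bounded analytic geometry — which is the reason the theorem is stated.
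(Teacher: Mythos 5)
Your proposal is correct and follows essentially the same route as the paper's proof in Appendix A: reduce to the smooth statement of Thm.~\ref{th-omar}, work in a bounded analytic atlas adapted to $\Sigma$, solve the geodesic ODE via the Cauchy--Kowalevski theorem with bounds uniform over the atlas, and conclude for $\rh_t$ and $\rh_t^{-1}$ by pullback and uniform invertibility. The "uniform analytic ODE lemma" you isolate is precisely the step the paper handles by citing Cauchy--Kowalevski together with the uniform $\AT$-bounds on the Christoffel symbols, so there is no substantive difference.
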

The proof will be given in Appendix \ref{app.1}.

From Thm.~\ref{th-omar-analytic} we immediately obtain the following corollary.
\begin{corollary}\label{corollaro}
 Assume that $(M, \rg)$ is of bounded analytic geometry near a Cauchy surface $\Sigma$.  Then $(M, \rg)$ satisfies the hypotheses in Subsect.~\ref{hippopotamus}.
\end{corollary}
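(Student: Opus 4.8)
The plan is to read the corollary off from Theorem~\ref{th-omar-analytic}, verifying the two hypotheses of Subsect.~\ref{hippopotamus} one at a time.

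For hypothesis~I I would check that bounded analytic geometry near $\Sigma$ in the sense of Def.~\ref{defp3.1anal} implies bounded geometry near $\Sigma$ in the sense of Def.~\ref{defp3.1}. The four conditions of Def.~\ref{defp3.1} are the verbatim analogues of those of Def.~\ref{defp3.1anal} with every analytic object replaced by a merely smooth one carrying weaker bounds, so the argument is just a degradation of estimates. Starting from a bounded analytic atlas $\{(\mathcal{U}_x,\psi_x)\}$ with $\psi_x:\mathcal{U}_x\bij C_n(0,1)$, I would compose with a fixed diffeomorphism of a smaller cube onto $B_n(0,1)$ to obtain a smooth bounded atlas (the finite-overlap property is inherited, using the atlas-independence of the spaces $\AT^p_q$ recalled after Def.~\ref{defp0.2}); by the Cauchy estimates stated there, namely $\AT^p_q(C_n(0,r),\delta)\hookrightarrow\BT^p_q(\mathcal{U},\delta)$ for $\mathcal{U}\Subset C_n(0,r)$, the $\AT$-bounds on $\hat g_x$, $\rg$, $\rg^{-1}$ and on the graph functions $F_i$ of $\Sigma$ all downgrade to the required $\BT$-bounds, while the two-sided ellipticity bound and the bound on $n\cdot\hat g\, n$ are literally the same statements in both definitions. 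This gives hypothesis~I.

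For hypothesis~II I would apply Theorem~\ref{th-omar-analytic}\,(2): it produces $\epsilon_1>0$ and the Gaussian splitting $\chi^*\rg=-dt^2+\rh_t$ with $(\Sigma,\rh_0)$ of bounded analytic geometry and $t\mapsto\rh_t\in\mathcal{A}(\open{-\epsilon_1,\epsilon_1};\AT^0_2(\Sigma,\rh_0))$. Since $(\Sigma,\rh_0)$ is then also of bounded geometry (first paragraph, applied to $\Sigma$), the continuous injection $\AT^0_2(\Sigma,\rh_0)\hookrightarrow\BT^0_2(\Sigma,\rh_0)$ yields $t\mapsto\rh_t\in\mathcal{A}(\open{-\epsilon_1,\epsilon_1};\BT^0_2(\Sigma,\rh_0))$, and taking $r\le\epsilon_1$ gives precisely the membership $t\mapsto\rh_t\in\mathcal{A}(C_1(0,r);\BT^0_2(\Sigma,\rh_0))$ required in hypothesis~II. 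One also remarks that the splitting of Theorem~\ref{th-omar-analytic} agrees on the common domain with that of Theorem~\ref{th-omar}, both being induced by the normal geodesic flow, so the family $\rh_t$ referred to in hypothesis~II is unambiguous.

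The corollary therefore reduces almost entirely to Theorem~\ref{th-omar-analytic} together with Cauchy estimates; the only genuinely bookkeeping-heavy point, which I expect to be the main (and rather minor) obstacle, is the conversion of a bounded analytic atlas into a bounded smooth atlas preserving all the structural bounds and the finite-overlap property, i.e.\ making the chart systems of Def.~\ref{defp3.1} and Def.~\ref{defp3.1anal} compatible; this rests on the atlas-independence of the spaces $\AT^p_q$ proved in Subsect.~\ref{app.1}.
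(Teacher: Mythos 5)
Your proposal is correct and follows exactly the route the paper takes: the paper simply states that the corollary is immediate from Thm.~\ref{th-omar-analytic}, and your two paragraphs (degrading $\AT$-bounds to $\BT$-bounds via the Cauchy estimates for hypothesis~I, and reading hypothesis~II off from Thm.~\ref{th-omar-analytic}\,(2) together with the continuous injection $\AT^0_2(\Sigma,\rh_0)\hookrightarrow\BT^0_2(\Sigma,\rh_0)$) just make explicit what the paper leaves implicit. The remark that the Gaussian splittings of Thm.~\ref{th-omar} and Thm.~\ref{th-omar-analytic} coincide, so that the family $\rh_t$ in hypothesis~II is unambiguous, is a sensible extra check that the paper omits.
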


\subsection{Einstein manifolds of bounded analytic geometry}\label{an-bg2.6}

Let $\Sigma$ be a smooth $3$-dimensional manifold and let $\Lambda\in \rr$.   The initial data on $\Sigma$ for the non-linear Einstein equations 
\[
\Ric_{ab}- \Lambda g_{ab}=0
\]
 are the  Riemannian metric  $\rh$  induced by  $\rg$ on $\Sigma$,   and  the second fundamental form $\rk$,  i.e.~the symmetric $(0, 2)$-tensor on $\Sigma$ defined by
$$
u\dual \rk v= \nabla_{u}n\dual \rg v,  \ \ u, v\in T\Sigma, 
$$
{with} $n$  the  forward unit normal to $\Sigma$. Above, $\Ric$ is the Ricci tensor of $g$ and $\Lambda$ the cosmological constant.

They have to satisfy the constraint equations, see e.g.~\cite[Prop.~13.3]{R}:
\beq\label{e.constr}
\begin{cases}
{\rm Scal}(\rh)-{\rm tr} ((\rk\rh^{-1})^{2})+{\rm tr}(\rk\rh^{-1})^{2}=2\Lambda,\\[2mm]
\div (\rk - {\rm tr}(\rk\rh^{-1})\rh)=0.
\end{cases}
\eeq
We give below conditions on $(\Sigma, \rh, \rk)$
that imply that $\rg$ is of bounded analytic geometry near the initial surface $\Sigma$.

\begin{theorem}\label{cauchykow}
Suppose that $(\Sigma, \hat{h}_{0})$  is a Riemannian manifold of analytic bounded geometry, and that
\[
\rh, \rk\in \AT^{0}_{2}(\Sigma, \hat{h}_{0}),  \ \ \rh^{-1}\in \AT^{2}_{0}(\Sigma, \hat{h}_{0}),
\]
satisfy the constraint equations \eqref{e.constr}. Then there exists $\delta>0$   and a solution $\rg$ of the Einstein equations
\beq\label{eq:einsteen}
\Ric - \Lambda \rg=0
\eeq
on  $M=  \open{-\delta, \delta}\times \Sigma$ with Cauchy data $(\rh, \rk)$,  such that $(M, \rg)$ is of bounded analytic geometry near $\Sigma$.
\end{theorem}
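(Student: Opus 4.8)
The plan is to reduce the Einstein equations in Gaussian normal coordinates to a symmetric-hyperbolic (or second-order quasilinear) Cauchy problem and solve it with the Cauchy--Kowalevski theorem in the analytic category, while keeping track of the uniform-in-$\rx$ Banach space estimates needed for bounded analytic geometry. First I would set up the gauge: in Gaussian normal coordinates $\rg = -dt^2 + \rh_t$, the Einstein equations $\Ric - \Lambda\rg = 0$ become an evolution system for the spatial metric $\rh_t$ and its time derivative $\rk_t = -\tfrac12\p_t\rh_t$ (the second fundamental form along the flow), namely the usual ADM-type equations $\p_t \rh_t = -2\rk_t$ and $\p_t \rk_t = F(\rh_t,\rk_t,\p_x\rh_t,\p_x^2\rh_t)$, where $F$ is a rational-analytic function of its arguments (the spatial Ricci tensor $\mathrm{Scal}$-terms, quadratic terms in $\rk$, and a $\Lambda$-term), together with the constraint equations \eqref{e.constr} which are preserved under the flow by the contracted Bianchi identity. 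The data $(\rh,\rk)$ satisfying \eqref{e.constr} are exactly the prescribed Cauchy data.

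The key analytic input is a Cauchy--Kowalevski theorem with uniform estimates on balls. Concretely, I would work in the local charts $\psi_i$ of a bounded analytic atlas of $(\Sigma,\hat h_0)$: in each chart the system becomes a quasilinear analytic PDE on $C_n(0,1)$ (here $n=3$) with coefficients and data lying in $\AT$-type Banach spaces, \emph{uniformly bounded in $i$} by the hypothesis $\rh,\rk\in\AT^0_2(\Sigma,\hat h_0)$, $\rh^{-1}\in\AT^2_0(\Sigma,\hat h_0)$ and $\hat h_0$ being of bounded analytic geometry (so that the nonlinearity $F$, being analytic in a neighborhood of the compact range of the data, has uniformly controlled Taylor coefficients). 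The abstract Cauchy--Kowalevski theorem in the Ovsyannikov--Nirenberg--Nishida scale-of-Banach-spaces form then yields, for a time radius $\delta>0$ \emph{depending only on the uniform bounds and on $c$, not on $i$}, a unique analytic solution $t\mapsto \rh_t$ on a complex time-disk $D_1(0,\delta)$ with values in $\AT^0_2(C_3(0,\epsilon),\delta)$, $\rh_t^{-1}\in\AT^2_0$, again with bounds uniform in $i$. Patching these local solutions (uniqueness guarantees they agree on overlaps) gives $t\mapsto \rh_t\in\mathcal A(C_1(0,\delta);\AT^0_2(\Sigma,\hat h_0))$, i.e. condition II of Subsect.~\ref{hippopotamus} in its stronger analytic form, with $(\Sigma,\rh_0) = (\Sigma,\rh)$ of bounded analytic geometry by assumption.

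It then remains to check that the resulting $(M,\rg)$ with $M = \open{-\delta,\delta}\times\Sigma$ and $\rg = -dt^2+\rh_t$ is genuinely of bounded analytic geometry near $\Sigma$ in the sense of Definition~\ref{defp3.1anal}. For the reference metric one takes $\hat g = dt^2 + \hat h_0$ (or a uniformly equivalent analytic metric), which is of bounded analytic geometry because $\hat h_0$ is and the product structure is trivially analytic; the charts $\open{-\delta,\delta}\times\mathcal U_i$ composed with $\mathrm{id}\times\psi_i$ and a rescaling in $t$ give a bounded analytic atlas for $M$, in which $\rg$ and $\rg^{-1}$ have uniformly bounded $\AT$-seminorms precisely because $t\mapsto\rh_t,\rh_t^{-1}$ do; the embedding $\Sigma = \{t=0\}\hookrightarrow M$ is the graph of $F_i\equiv 0$, and the unit normal is $\p_t$ with $n\cdot\hat g\,n = 1$. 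Finally global hyperbolicity of $(M,\rg)$ on the slab $\open{-\delta,\delta}\times\Sigma$ with $\Sigma$ Cauchy follows from the form $\rg = -dt^2+\rh_t$ with uniformly bounded $\rh_t$ (a standard argument, e.g. the one behind Thm.~\ref{th-omar}).

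The main obstacle I expect is the \emph{uniformity} of the existence time $\delta$ over the atlas and the bookkeeping of the Banach-scale estimates: one must verify that the Cauchy--Kowalevski iteration for the quasilinear Einstein system closes in the $\AT$-scale with constants depending only on the uniform data bounds and the ellipticity constant $c$ (in particular that $\rh_t$ stays uniformly positive-definite and uniformly bounded away from degeneracy for $|t|<\delta$, so that $\rh_t^{-1}$ and hence $F$ remain in the domain of analyticity with uniform bounds), and that the local solutions from overlapping charts agree. Dealing with the second-order spatial derivatives in $F$ (as opposed to a first-order symmetric-hyperbolic reduction) requires either using a version of Cauchy--Kowalevski adapted to second-order systems or first rewriting the evolution as a first-order system in the spatial derivatives as well; either way the scale-of-Banach-spaces machinery applies, but the verification of the abstract hypotheses with uniform constants is the technical heart of the proof. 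The preservation of the constraints under the evolution — needed to conclude that the constructed $\rg$ actually solves $\Ric-\Lambda\rg=0$ and not merely the reduced evolution equations — is standard via the second Bianchi identity but should be stated carefully in the analytic, uniform setting.
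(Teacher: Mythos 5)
Your proposal is correct in outline, but it takes a genuinely different route from the paper in the choice of gauge. You evolve the pair $(\rh_t,\rk_t)$ directly in the Gaussian (synchronous) gauge via the ADM-type system and propagate the constraints with the contracted Bianchi identity; the paper instead follows Ringstr\"om's harmonic-gauge reduction: it solves the reduced quasilinear wave system $\Ric_{ab}+\nabla_{(a}D_{b)}-\Lambda\rg_{ab}=0$ with initial data chosen so that $D_a|_{t=0}=0$, and uses the homogeneous wave equation \eqref{e.er1} satisfied by $D_a$ to conclude $D_a\equiv0$, hence that the reduced solution solves \eqref{eq:einsteen}. What each buys: your Gaussian-gauge route produces $\rg=-dt^2+\rh_t$ directly, so hypothesis II of Subsect.~\ref{hippopotamus} is immediate without a posteriori passing to Gaussian coordinates, but it forces you to handle the non-Kowalevskian appearance of the first-order-in-time ADM system (you correctly flag that one must pass to the second-order-in-time form $\p_t^2\rh=F(\rh,\p_x\rh,\p_x^2\rh,\p_t\rh)$, or enlarge the system, before Cauchy--Kowalevski applies --- the raw form $\p_t\rk=\Ric[\rh]+\dots$ is of heat-equation type and Cauchy--Kowalevski does not apply to it as written). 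The paper's harmonic-gauge system is manifestly Kowalevskian and the gauge-propagation argument is linear and clean, at the price of not landing directly in Gaussian form. The technical core is identical in both approaches: the chart-by-chart Cauchy--Kowalevski estimate with existence time and solution norms depending only on the uniform $\AT$-norms of the transported data, which is exactly what the paper records ("the value of $\epsilon$ and the norm of $\rg_{\mu\nu}$ depend only on the norm of $\rh_{ij},\rk_{ij}$"). Your additional explicit verification of the remaining items of Definition~\ref{defp3.1anal} (reference metric $dt^2+\hat h_0$, trivial graph embedding, unit normal) is correct and slightly more complete than what the paper writes out.
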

\proof  The result follows easily from  the local existence for the Einstein equations  given in \cite[Chap.~14]{R} and the Cauchy--Kowalevski theorem. Let us first recall the  arguments in \cite[Chap.~14]{R}.  Let us recall that ${\bG}_{ab}= \Ric_{ab}- \12 \rR g_{ab}$ is the Einstein tensor and that \eqref{eq:einsteen} is equivalent to $\bG + \Lambda g=0$.

 One equips $\rr\times \Sigma$ with the Lorentzian metric $\tilde{g}= - dt^{2}+ \rh$ and considers the  equation
\begin{equation}
\label{e.er0}
\Ric_{ab}+ \nabla_{(a}D_{b)}- \Lambda\rg=0,
\end{equation}
where the auxiliary $(0, 1)$-tensor $D$ is defined by
\[
D_{\nu}= - \rg_{\mu\nu}\rg^{\alpha\beta}(\Gamma^{\mu}_{\alpha\beta}- \tilde{\Gamma}^{\mu}_{\alpha\beta})
\]
and $\Gamma^{\mu}_{\alpha\beta}, \tilde{\Gamma}^{\mu}_{\alpha\beta}$ are the Christoffel symbols for $\rg$ and $\tilde{g}$.   The equation \eqref{e.er0} with unknown $\rg$ is a quasilinear hyperbolic system.  If  $\rg$ solves \eqref{e.er0} in some neighborhood $\Omega$ of $\Sigma$ in $\rr\times \Sigma$ then
\begin{equation}
\label{e.er-1}
\bG_{ab}= - \nabla_{(a}D_{b)}+ \12 \nabla^{c}D_{c}\rg_{ab} \hbox{ in }\Omega,
\end{equation}
hence since  $\nabla^{a}\bG_{ab}=0$
\begin{equation}
\label{e.er1}
- \square D_{a}+ \Ric\indices{_{a}^{b}}D_{b}=0 \hbox{ in }\Omega,
\end{equation}
 see \cite[(14.8)]{R}.  
 
 One then  fixes initial data on $t=0$ for \eqref{e.er0} by specifying $\rg_{|t=0}$ and $\p_{t}\rg_{|t=0}$. Concretely, one can take  $\rg_{|t=0}= - dt^{2}+ \rh$ and  $ (\p_{t}\rg_{\Sig\Sig})_{|t=0}= 2\rk$, where we use the notation introduced in Subsect.~\ref{sec:lg.3.2}. The remaining components $(\p_{t}\rg_{tt})_{|t=0}$ and $(\p_{t}\rg_{t\Sig})_{|t=0}$ are then completely fixed by requiring that $(D_{a})_{|t=0}=0$.

 The constraints \eqref{e.constr} are equivalent to $(\bG+ \Lambda \rg)_{t\Sigma}= 0$ on $t=0$, i.e.~to $\bG_{t\Sigma}=0$ on $t=0$. Using \eqref{e.er-1} and the fact that  $D_{a}=0$ on $t=0$ we obtain that $\p_{t}D_{a}=0$ on $t=0$, hence $D_{a}=0$ in $\Omega$ if $\rg$ solves \eqref{e.er0} in $\Omega$. It follows that a solution of \eqref{e.er0} in $\Omega$ with initial data fixed as above (assuming that $\rh, \rk$ satisfy the constraints \eqref{e.constr}) is a solution of the Einstein equations \eqref{eq:einsteen}	 in $\Omega$.												

Let us now fix  a bounded analytic atlas $\{(\mathcal{U}_{i}, \psi_{i})\}_{i\in \nn}$ of $\Sigma$ and transport $\rh, \rk$ to $C_{d}(0, 1)$ by $\psi_{i}$, obtaining tensors $\rh_{i}, \rk_{i}$. The tensors $\rh_{i}, \rk_{i}$, resp.~$\rh_{i}^{-1}$ belong to $\AT^{0}_{2}(C_{d}(0, 1), \delta)$, resp.~$\AT^{2}_{0}(C_{d}(0, 1), \delta)$, with norms uniformly bounded in $i$.

Let us drop  the index $i$ for the moment.  We work on $I\times C_{d}(0, 1)$ for some time interval $I$.
 By the Cauchy--Kowalevski theorem, the solution $\rg_{\mu\nu}$  of \eqref{e.er0} with initial data specified above is analytic in $\open{-\epsilon, \epsilon}\times C_{d}(0, \12)$ for some $\epsilon>0$.  The value of $\epsilon$ and the norm of $\rg_{\mu\nu}$  in  $\AT(\open{-\epsilon, \epsilon}\times C_{d}(0, \12))$ depend only on the norm of $\rh_{ij}, \rk_{ij}$ in $\AT(C_{d-1}(0, 1))$. Therefore putting back the index $i$ which labels the atlas  $\{(\mathcal{U}_{i}, \psi_{i})\}_{i\in \nn}$, we obtain that  the solution $\rg$ of \eqref{e.er0} constructed above belongs to $\AT^{0}_{2}(I\times \Sigma; dt^{2}+ \hat{h}_{0})$. \qed

\subsection{Examples}\label{an-bg2.7}
We now give several examples of spacetimes satisfying the hypotheses in Subsect.~\ref{hippopotamus}.
\subsubsection{Warped products}
Let $(\Sigma, \rh_{0})$ be a Riemannian manifold of bounded analytic geometry, $0\in I\subset \rr$ an open interval  and $f: I\to \rr$ an analytic function with $f(t)>0$ for $t\in I$. Take $M= I\times \Sigma$ and $\rg= -dt^{2}+ f(t)\rh$. Then $(M, \rg)$ is of bounded analytic geometry near $\{0\}\times \Sigma$.

This applies for example to  the \emph{de Sitter spacetime} 
\[
M= \rr_{t}\times \bS^{3}, \quad \rg=- dt^{2}+ 3\Lambda^{-1}\cosh((\Lambda/3)^{\12}t)d^{2}\omega.
\]
\subsubsection{The  Kerr--Kruskal spacetime}
Let us prove that the 
 maximal globally hyperbolic extension of the slowly rotating  exterior Kerr spacetime (i.e.~with parameters $a, M$ such that $0<|a|<M$), which we call for the sake of brevity the {\em Kerr--Kruskal spacetime}, satisfies the hypotheses in  Subsect.~\ref{hippopotamus}.   We will do this by applying Thm.~\ref{cauchykow}.
 
 The Kerr metric is given  on $\rr_{t}\times \rr_{r}\times\bS^{2}_{\theta, \varphi}$ in Boyer--Lindquist coordinates  $(t, r, \theta,\varphi)$   by 
 \begin{equation}
\label{kerr-expression}
\rg=-\left(1-\frac{2mr}{\rho^2}\right)dt^2-\frac{4aMr\sin^2\theta}{\rho^2}d td\varphi +\frac{\rho^2}{\Delta}d r^2+\rho^2d\theta^2+\frac{\sigma^2}{\rho^2}\sin^2\theta d\varphi^2,
\end{equation}
for 
\[
 \begin{array}{l}
 \Delta = r^{2}- 2Mr+a^2,\ 
\rho^{2} = r^{2}+ a^{2}\cos^{2}\theta,\\[2mm]
 \sigma^{2} = (r^{2}+a^{2})\rho^{2}+ 2 a^{2}Mr\sin^{2}\theta.
 \end{array}
\]

If $r_{-}<r_{+}$ are the two roots of $\Delta$, the regions ${\rm I}\defeq \{r_{+}<r\}$, ${\rm II}\defeq \{r_{-}<r<r_{+}\}$ are the first two Boyer--Lindquist blocks. The apparent singularity of $\rg$ at $r= r_{+}$ can be removed using Kruskal--Boyer--Lindquist coordinates $(U, V, \theta, \varphi^{\sharp})$, see \cite[Sect.~3.5]{N2}.
One obtains in this way the {\em Kerr--Kruskal spacetime} ${\rm KK}$, equal to $\rr_{U}\times \rr_{V}\times \bS^{2}_{\theta, \varphi^{\sharp}}$. 
The expression of the metric $\rg$ on ${\rm KK}$ is given in \cite[Prop.~3.5.3]{N2} and is manifestly analytic.  ${\rm KK}$ contains isometrically  the four Boyer-Lindquist blocks: ${\rm I}$, the blackhole exterior, ${\rm II}$, the black hole interior, and ${\rm I}'$, ${\rm II}'$, where $M'$ equals $M$ with the reversed time orientation. 

It is shown in \cite[Prop.~C.12]{GHW} that $({\rm KK}, \rg)$ is globally hyperbolic, with $\Sigma= \{U=V\}$ as an analytic  spacelike Cauchy surface. 

  \begin{figure}[ht]
  \def\svgwidth{12cm}
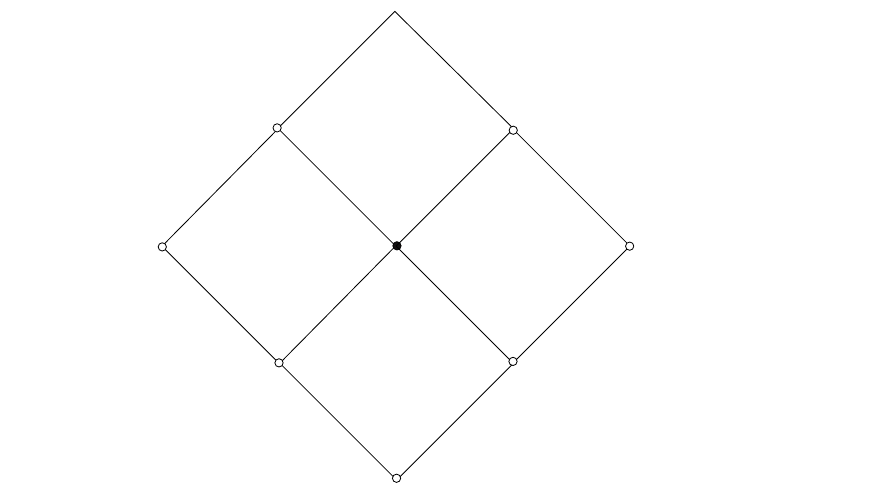
\caption{The Kerr--Kruskal spacetime. The shaded regions represent the neighborhoods $\cU$ and $\cV$.}
\end{figure}

 To check the hypotheses of Thm.~\ref{cauchykow} we  can remove  an arbitrary compact neighborhood $\mathcal {V}$ of the bifurcation sphere $S(r_{+})= \{U=V=0\}$.  
  
 If $\mathcal{U}$ is a neighborhood of $\Sigma$, then $\mathcal{U}\setminus\mathcal{V}$ is included in ${\rm I}\cup {\rm I}'$ (see Fig.~1),  so it suffices to consider the  Kerr metric in block ${\rm I}$, where the
 Cauchy surface $\Sigma$ equals $\{t= 0\}$, and we can use the original Boyer--Lindquist coordinates.

 As a reference Riemannian metric on $\Sigma$ we choose $\hat{h}_{0}= dr^{2}+ r^{2}d^{2}\omega$, where $d^{2}\omega$ is the round metric on $\bS^{2}$, which is clearly analytic of bounded geometry if we identify $\open{r_{0}, +\infty}_{r}\times \bS^{2}_{\theta, \varphi}$ with $\rr^{3}\setminus B(0, r_{0})$ by spherical coordinates. 
 
 It follows that to check that a $(0,2)$-tensor $f$ on $\Sigma$ belongs to $\AT^{0}_{2}(\Sigma, \hat{h}_{0})$, it suffices to check that the functions
 \[
 f_{rr},\  r^{-1}f_{r\theta},\  r^{-1}f_{r\varphi}, \ r^{-2}f_{\varphi\varphi},\  r^{-2}f_{\theta\varphi},\  r^{-2}f_{\theta\theta}
 \]
 extend as bounded holomorphic functions in $(r, \theta, \varphi)$ in some  strip $\{\module{\Im r}+ \module{ \Im \theta} + \module{ \Im \varphi}<\delta\}$.
 
 From this observation we immediately obtain that  the first fundamental form $\rh$ belongs to  $\AT^{0}_{2}(\Sigma, \hat{h}_{0})$.
 Let us now consider the second fundamental form $\rk$.
 
We note that if $X, Y\in T\Sigma$ then $X\dual \rk Y= (-dt\dual \rg^{-1}dt)^{-\12} \nabla_{X}dt\dual Y$, since  the future directed unit normal $n$ equals $(-dt\dual \rg^{-1}dt)^{-\12} \rg^{-1}dt$. Using the coordinates $r, \theta, \varphi$ on $\Sigma$, we obtain that
 \[
 \begin{array}{l}
 \rk_{r\varphi}= \12 (\rg^{tt})^{-\12}(\rg^{tt} \p_{r}\rg_{t\varphi}+ \rg^{t\varphi}\p_{r}\rg_{\varphi\varphi}),\\[2mm]
 \rk_{\theta\varphi}=\12 (\rg^{tt})^{-\12}(\rg^{tt} \p_{\theta}\rg_{t\varphi}+ \rg^{t\varphi}\p_{\theta}\rg_{\varphi\varphi}),
 \end{array}
 \]
all other components being $0$. We use that $\rg^{tt}= - \frac{\sigma^{2}}{\Delta\rho^{2}}$ and obtain similarly that $\rk\in \AT^{0}_{2}(\Sigma, \hat{h}_{0})$. \qed

 \subsubsection{The Schwarzschild--de Sitter spacetime}
The Schwarzschild--de Sitter metric is given on $\rr_{t}\times \rr_{\open{0, +\infty}}\times \bS^{2}_{\theta, \varphi}$ by
\[
\rg= - F(r)dt^{2}+ F^{-1}(r)dr^{2}+ r^{2}d^{2}\omega \, \hbox{ on } \, \rr_{t}\times \open{0, +\infty}_{r}\!\times \bS^{2},
\]
where $d^{2}\omega$ is the round metric on $\bS^{2}$ and $F(r)= 1-\frac{2m}{r}- \frac{\Lambda}{3}r^{2}$. For $9\Lambda^{2}m^{2}<1$, $F$ has two positive   $0<r_{+}<r_{++}$, with $F'(r_{+})>0$, $F'(r_{++})<0$ and $F(r)>0$ on $\open{r_{+}, r_{++}}$.  

We now have three Boyer--Lindquist blocks ${\rm I}\defeq \{r_{+}<r<r_{++}\}$, ${\rm II}_{H}\defeq\{0<r<r_{+}\}$ and ${\rm II}_{C}\defeq \{r_{+}<r<+\infty\}$. Block ${\rm I}$ is called the {\em static region} of S-dS,  $\mathcal{H}_{H/C}= \{r= r_{+/++}\}$ being the black hole, resp.~cosmological horizon.

The apparent singularities at $t= r_{+}, r_{++}$ are removed as before by introducing Kruskal-type coordinates near $\cH_{H}$, resp.~$\cH_{C}$ see e.g. ~\cite[Appendix]{GN}.  One first defines the Regge--Wheeler coordinate $r^{*}$ by $\frac{dr^{*}}{dr}= F^{-1}(r)$ (the integration constant being irrelevant). Near $\cH_{H}$ one sets $
U_{H}= - \e^{- \alpha(t-r^{*})}, \ V_{H}= \e^{\alpha(t+ r^{*})}$, so that $r^{*}= (2\alpha)^{-1} \ln(- U_{H}V_{H})$, $t= - (2\alpha)^{-1}\ln ( - U_{H}V_{H}^{-1})$. If $\alpha= \12 F'(r_{+})$, one obtains that $U_{H}V_{H}= (r- r_{+})f_{1}(r)$, where $f$ is analytic near $r= r_{+}$ and $f_{1}(r_{+})\neq 0$. 

Expressed in the coordinates $(U_{H}, V_{H}, \omega)$, the metric becomes
\[
\rg= - f_{2}(r) dU_{H}dV_{H}+ r^{2}d^{2}\omega,
\]
where again $f_{2}$ is analytic near $r= r_{+}$ and $f_{2}(r_{+})> 0$.  The coordinates $(U_{H}, V_{H})$ allow to glue the blocks ${\rm I}$, ${\rm II}_{H}$, ${\rm I}'$ and ${\rm II}_{H}'$ along $r= r_{+}$.

One can perform a similar change of coordinates near $r= r_{C}$ setting now
$U_{C}= \e^{\alpha(t- r^{*})}, V_{C}= - \e^{- \alpha(t+ r^{*})}, \ \alpha= - \12 F'(r_{++})$,
with similar conclusions with $r_{+}$ replaced by $r_{++}$. The coordinates $(U_{C}, V_{C})$ allow to glue the blocks ${\rm I}$, ${\rm II}_{C}$, ${\rm I}'$ and ${\rm II}_{C}'$ along $r= r_{++}$.
\def\rcup{\begin{turn}{45}$r= r_{++}$\end{turn}}
 \def\rcdown{\begin{turn}{-45}$r= r_{++}$\end{turn}}
 \def\rhup{\begin{turn}{45}$r= r_{+}$\end{turn}}
 \def\rhdown{\begin{turn}{-45}$r= r_{+}$\end{turn}}
  \begin{figure}[ht]
  \centering
    \def\svgwidth{14cm}
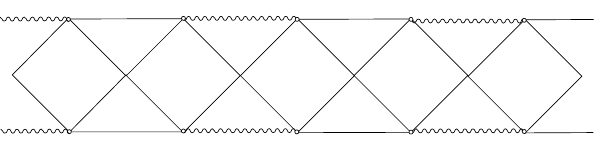
\caption{The extended Schwarzschild--de Sitter spacetime.}
\label{fig2}
\end{figure}

The {\em extended} {\rm S--dS} {\em  spacetime} is obtained by a bi-infinite sequence of these two gluing procedures, see Fig.~\ref{fig2}.

Let us now consider as initial surface $\Sigma= \{t=0\}$ in block ${\rm I}$. In the Kruskal coordinates (near $r= r_{+}$ or $r= r_{++}$), $\Sigma= \{U_{H/C}+ V_{H/C}=0\}$ and we can extend $\Sigma$ to the extended S--dS spacetime.

The metric $\rg$ is clearly analytic and bounded in each of the three shaded open sets  in Fig.~\ref{fig2}.  From this we obtain that 
 if $\rh, \rk$ are the Cauchy data of $\rg$ on $\Sigma$,  $(\Sigma, \rh)$ is of bounded analytic geometry and $\rk\in \AT^{0}_{2}(\Sigma, \rh)$. 
\begin{remark}
We expect that the Kerr--de Sitter metric can be handled similarly. We refer the reader to \cite{borthwick} for a recent investigation on maximal extensions of the Kerr--de Sitter spacetime.
\end{remark}

\section{Linearized gravity}\label{sec:lg}\init
  In this section we  formulate linearized gravity as a classical gauge theory, following \cite{HS}. We also rewrite the various operators occurring in linearized gravity after using  Gaussian normal coordinates to a Cauchy surface and parallel transport. This reduced setting will be used in later sections.
\subsection{Notation} \label{sec:lg.1}
 We start by fixing notation. Let $(M,\rg)$ be a $4$-dimensional  Lorentzian manifold.   
\subsubsection{Convention for the Riemann tensor}\label{sec:lg.1.1} 
We use the same convention as in e.g.~\cite{R,FH,BDM} for the sign of the Riemann tensor $\rR\indices{_{abcd}}=\rR\indices{_{abc}^{e}}\rg_{ed}$, i.e.
\[
(\nab_{a}\nab_{b}- \nab_{b}\nab_{a}) u_{c}= \rR\indices{_{abc}^{d}}u_{d}
\]
on $(0,1)$-tensors in terms of the Levi-Civita connection $\nabla_a$ on $(M,\rg)$. Let us recall that the Ricci tensor  is the symmetric tensor
\[
\Ric_{ab}=\rR\indices{_{acb}^{c}}= \rR\indices{^{c}_{acb}},
\]
and the scalar curvature is $\scal= \rg^{ab}\Ric_{ab}$. If $\dim M=4$ then the non-linear Einstein equations with cosmological constant $\Lambda\geq 0$, i.e.
$
\Ric -\12\rg\scal +\Lambda \rg =0
$,
are equivalent to 
\beq\label{eq:einstein}
\Ric =\Lambda \rg,
\eeq 
and as a consequence of \eqref{eq:einstein} one gets $\scal=4\Lambda$.

\subsubsection{Hermitian forms on tensors}\label{sec:lg.1.2}
 We denote by 
\[
V_{k}\defeq\cc\otimes^{k}_{\rm s}T^{*}M
\]
 the complex bundle of symmetric $(0,k)$-tensors.  We will  only need  the cases $k=1,2$.
$V_{k}$ is  equipped with the non-degenerate Hermitian form
\begin{equation}
\label{eq:uvk}
(u| u)_{V_{k}}\defeq\bar{u}\dual  k! (\rg^{\otimes k})^{-1}u.
\end{equation}
 In abstract index notation,
\[
 (u|u)_{V_{k}}= k! \,\rg^{a_{1}b_{1}}\cdots\rg^{a_{k}b_{k}}\bar{u}_{a_{1}\dots a_{k}} u_{b_{1}\dots b_{k}}.
\]
 For example for $k=2$ we have
\beq\label{e0.00}
(u|u)_{V_{2}}= 2\trace(u^{*}\rg^{-1}u\rg^{-1}).
\eeq
The $k!$ normalization differs from the most common convention, it has however the advantage that various expressions involving adjoints look more symmetric.

Sometimes it will be convenient to indicate explicitly  the metric used to define the Hermitian form on $V_{k}$, for $k=1,2$. Accordingly, we will set
\beq\label{e0.00b}
\begin{array}{l}
(u| u)_{\rg}= \bar{u}\dual \rg^{-1}u= (u| u)_{V_{1}}, \\[2mm]
(u| u)_{g^{\otimes 2}}=  \bar{u}\dual (g^{\otimes 2})^{-1}u= \12 (u| u)_{V_{2}}.
\end{array}
\eeq

As in Sect.~\ref{sec:gauge}, for $U\subset M$ open,  the Hermitian form \eqref{eq:uvk} on fibers induces a Hermitian form
\begin{equation}
\label{uvkU}
(u|v)_{V_{k}(U)}=  \int_{U}(u(x)| v(x))_{V_{k}}\dvol_{\rg}, \quad u, v\in \coinf(U;V_{k}).
\end{equation} 
The adjoint of  $A: \cinf(M; V_{k})\to \cinf(M; V_{l})$ for those Hermitian forms will be denoted by $A^{*}$.
\subsubsection{The differential and its adjoint}\label{sec:lg.1.3}
Let  
\[
d: \begin{array}{l}
\cinf(M; V_{k})\to \cinf(M; V_{k+1})\\[1mm]
(d u)_{a_{1} \dots, a_{k+1}}= \nab_{(a_{1}}u_{a_{2}\dots, a_{k+1})},
\end{array}
\]
where $u_{(a_{1} \dots a_{k})}$ is the symmetrization of $u_{a_{1}\dots a_{k}}$,
and
\[
\delta: \begin{array}{l}
\cinf(M; V_{k})\to\cinf(M; V_{k-1})\\[1mm]
(\delta u)_{a_{1}, \dots, a_{k-1}}= -k\nab^{a}u_{aa_{1}\dots a_{k-1}}.
\end{array}
\]
With these conventions, we have $d^*=\delta$ w.r.t.~the Hermitian form \eqref{uvkU}.

\subsubsection{Operators on tensors}\label{sec:lg.1.4}
The operator of \emph{trace reversal} $I$
 is given by
 \[
I\defeq \one - \frac{1}{4}|\rg)(\rg|,
\]
i.e.~$I$ is the orthogonal symmetry w.r.t.~the line $\cc \rg$. Equivalently
\[
(Iu)_{ab}= u_{ab}- \frac{1}{2} \trace_{\rg}(u) \rg_{ab}, \quad \trace_{\rg}(u)\defeq \rg^{ab}u_{ab}= \12 (\rg| u)_{V_{2}}. 
\]
It satisfies 
\beq\label{idiot}
I^{2}= \one, \quad I= I^{*}\hbox{ on }\cinf(M; V_{2}).
\eeq
The \emph{d'Alembertian}  is 
\[
\Box= \nabla^{c}\nabla_{c}, \hbox{ acting on }(p, q)\hbox{-tensors}.
\]
and the \emph{Ricci operator} is
\[
\Riem(u)_{ab}\defeq \rR\indices{_{a}^{cd}_{b}}u_{cd}= \rR\indices{^{c}_{ab}^{d}}u_{cd}, \ \ u\in\cinf(M; V_{2}).
\]
The fact that $\Riem$ preserves symmetric $(0,2)$-tensors follows from the symmetries of the Riemann tensor.

The proofs of the next two lemmas are easy and left to the reader.
\begin{lemma}\label{lemma2.11} The d'Alembertian satisfies:
 \beq\label{e2.4}
\begin{array}{rl}
i)& \square\circ I= I \circ \square, \\[2mm]
 ii)& \square= \square^{*}.
 \end{array}
\eeq
\end{lemma}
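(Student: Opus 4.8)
The plan is to deduce both identities from a single structural fact: the Levi-Civita connection $\nabla$ is metric-compatible and torsion-free, so that $\nabla_{e}\rg_{ab}=0$, $\nabla_{e}\rg^{ab}=0$, hence $\nabla_{e}(\rg^{\otimes k})^{-1}=0$, and moreover the Riemannian density $\dvol_{\rg}$ is parallel. Everything reduces to pushing $\square=\nabla^{c}\nabla_{c}$ through metric contractions and to the divergence theorem.

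For \emph{i)} the key observation is that $\square$ commutes with any tensorial operation built purely out of $\rg$. Writing $(Iu)_{ab}=u_{ab}-\frac12(\rg^{cd}u_{cd})\rg_{ab}$ and pulling $\square$ through the metric contractions (legitimate since $\nabla\rg=0$ and $\nabla\rg^{-1}=0$) gives
\[
\square\big((\rg^{cd}u_{cd})\rg_{ab}\big)=\rg_{ab}\,\rg^{cd}\,\square u_{cd}=(\trace_{\rg}\square u)\,\rg_{ab},
\]
so that $\square(Iu)_{ab}=\square u_{ab}-\tfrac12(\trace_{\rg}\square u)\rg_{ab}=I(\square u)_{ab}$; equivalently, $\square$ commutes with $|\rg)(\rg|$ and hence with $I=\one-\frac14|\rg)(\rg|$. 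The argument is insensitive to the tensor type, so it also covers the action on $(p,q)$-tensors.

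For \emph{ii)} the plan is to integrate by parts twice against the pairing \eqref{uvkU}. For $u,v\in\coinf(M;V_{k})$, apply the divergence theorem to the compactly supported vector field $W^{c}\defeq k!\,\bar u\dual(\rg^{\otimes k})^{-1}\nabla^{c}v$; since $\nabla(\rg^{\otimes k})^{-1}=0$ and $\nabla\dvol_{\rg}=0$ this yields $(u|\square v)_{V_{k}(M)}=-\,k!\int_{M}(\nabla_{c}\bar u)\dual(\rg^{\otimes k})^{-1}(\nabla^{c}v)\,\dvol_{\rg}$, and a second application (with the roles of $\bar u$ and $v$ interchanged), together with the reality of $\nabla$ giving $\nabla_{c}\nabla^{c}\bar u=\overline{\square u}$, produces $(u|\square v)_{V_{k}(M)}=(\square u|v)_{V_{k}(M)}$. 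Hence $\square^{*}=\square$.

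I do not anticipate a real obstacle; the statement is elementary, and the only points deserving attention are the covariant constancy of $\dvol_{\rg}$ and the bookkeeping of complex conjugation in \emph{ii)}, both completely standard. (One could instead derive \emph{ii)} from self-adjointness of the scalar d'Alembertian via a Weitzenböck identity, but the direct integration by parts is shorter.)
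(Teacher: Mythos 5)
Your proof is correct: part \emph{i)} follows exactly as you say from $\nabla\rg=\nabla\rg^{-1}=0$, and part \emph{ii)} from the double integration by parts (the non-positivity of the Lorentzian pairing is irrelevant for the formal adjoint). The paper explicitly leaves this lemma to the reader as "easy," and your argument is the standard one the authors intend.
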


\begin{lemma}\label{lemma2.12} The Ricci operator satisfies:
 \beq\label{e2.5}
\begin{array}{rl}
i)&\Riem \rg=-\Ric,\\[2mm]
ii)&\Riem \circ I=  I\circ \Riem,\hbox{ if }g\hbox{ is Einstein},\\[2mm]
iii)&\Riem= \Riem^{*}.
\end{array}
\eeq
\end{lemma}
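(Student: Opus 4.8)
My plan is to deduce all three assertions directly from the definition of $\Riem$ in \ref{sec:lg.1.4} and the algebraic symmetries of the Riemann tensor, with the sign conventions of \ref{sec:lg.1.1}; I would establish iii) and i) first, since they are logically independent, and then obtain ii) from them. For iii): as $\Riem$ is a zeroth-order operator, formal self-adjointness for $(\cdot|\cdot)_{V_{2}}$ amounts to the fiberwise identity $(\Riem u|v)_{V_{2}}=(u|\Riem v)_{V_{2}}$ for all $u,v\in\cinf(M;V_{2})$; expanding both sides with \eqref{e0.00} and using that $\Riem$ has real coefficients, this reduces to the symmetry of the curvature endomorphism, namely $\rR_{abcd}=\rR_{badc}$, which is immediate from antisymmetry of $\rR_{abcd}$ in each of the pairs $(a,b)$ and $(c,d)$ --- neither the pair-exchange symmetry of $\rR$ nor the symmetry of $u,v$ enters. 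For i): I would contract directly,
\[
\Riem(\rg)_{ab}=\rR\indices{_{a}^{cd}_{b}}\,\rg_{cd}=\rg^{cd}\rR_{acdb}=-\rg^{cd}\rR_{acbd}=-\Ric_{ab},
\]
the third step being second-pair antisymmetry and the last the definition $\Ric_{ab}=\rR\indices{_{acb}^{c}}$; hence $\Riem(\rg)=-\Ric$.

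For ii) I would use the operator form $I=\one-\tfrac14|\rg)(\rg|$. Then, by i),
\[
\Riem\circ I=\Riem-\tfrac14\,|\Riem\rg)(\rg|=\Riem+\tfrac14\,|\Ric)(\rg|,
\]
whereas, using $|\rg)(\rg|\circ\Riem=|\rg)(\Riem^{*}\rg|$ together with iii) and then i),
\[
I\circ\Riem=\Riem-\tfrac14\,|\rg)(\rg|\circ\Riem=\Riem+\tfrac14\,|\rg)(\Ric|.
\]
Hence $[\Riem,I]=\tfrac14\bigl(|\Ric)(\rg|-|\rg)(\Ric|\bigr)$, and if $\rg$ is Einstein then $\Ric=\Lambda\rg$ with $\Lambda\in\rr$ by \eqref{eq:einstein}, so $|\Ric)(\rg|=\Lambda\,|\rg)(\rg|=|\rg)(\Ric|$ and the commutator vanishes, which is ii). (Alternatively ii) may be verified purely in index notation, using in addition $\rg^{ab}\rR\indices{_{a}^{cd}_{b}}=-\Ric^{cd}$ --- again a consequence of the Riemann symmetries --- so that $\trace_{\rg}(\Riem u)=-\Lambda\,\trace_{\rg}(u)$ in the Einstein case.)

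There is no obstacle of substance here: the lemma is bookkeeping with curvature symmetries, which is presumably why the authors leave it to the reader. The one place where genuine care is required is the sign in i), which must come out as $-\Ric$ rather than $+\Ric$, consistently with the conventions for $\rR$ and $\Ric$ in \ref{sec:lg.1.1} --- both the statement of i) and the cancellation in ii) (and thus the role of the Einstein hypothesis) hinge on it.
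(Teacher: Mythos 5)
The paper explicitly leaves this lemma to the reader, so there is no proof to compare against; your argument is a correct and complete filling-in, reducing i) and iii) to the pair antisymmetries of the Riemann tensor and deducing ii) from the operator form $I=\one-\tfrac{1}{4}|\rg)(\rg|$ together with i), iii) and $\Ric=\Lambda\rg$, which is surely the intended route. Your sign in i) is indeed the one forced by the paper's curvature conventions.
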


\begin{lemma}\label{lemma2.1}
If $(M, \rg)$ is Einstein then:
 \begin{equation}
\label{e2.8}
\begin{array}{rl}
i)&(\square - 2\Riem)\circ d= d\circ (\square+\Lambda)\hbox{ on }\cinf(M; V_{1}),\\[2mm]
ii)& \delta \circ (\square - 2 \Riem)= (\square+ \Lambda) \circ\delta\hbox{ on }\cinf(M; V_{2}),\\[2mm]
iii)&\delta\circ I\circ d=  \square+ \Lambda\hbox{ on }\cinf(M; V_{1}),\\[2mm]
iv)&(\square- 2 \Riem)\circ I= I\circ (\square- 2 \Riem)\hbox{ on }\cinf(M; V_{2}),\\[2mm]
v)&\trace_{\rg}\circ(\square- 2\Riem)= (\square- 2\Lambda) \circ\trace_{\rg}\hbox{ on }\cinf(M; V_{2}).
\end{array}
\end{equation}
\end{lemma}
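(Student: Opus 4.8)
The plan is to extract (iv) and (ii) from the preceding lemmas, to verify (iii) and (v) by short direct computations in abstract index notation, and to concentrate the actual work on (i). First, identity (iv) is immediate: by Lemma~\ref{lemma2.11}\,i) one has $\square\circ I = I\circ\square$, and by Lemma~\ref{lemma2.12}\,ii) (whose proof uses the Einstein condition) one has $\Riem\circ I = I\circ\Riem$; subtracting twice the second relation from the first gives $(\square - 2\Riem)\circ I = I\circ(\square - 2\Riem)$. Second, identity (ii) is nothing but the formal adjoint of (i) with respect to the Hermitian forms~\eqref{uvkU}: since $d^{*}=\delta$ by construction, $\square^{*}=\square$ by Lemma~\ref{lemma2.11}\,ii), $\Riem^{*}=\Riem$ by Lemma~\ref{lemma2.12}\,iii) and $\Lambda\in\rr$, applying $(\cdot)^{*}$ to the identity $(\square-2\Riem)\circ d = d\circ(\square+\Lambda)$ produces exactly $\delta\circ(\square-2\Riem) = (\square+\Lambda)\circ\delta$. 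It therefore suffices to establish (i), (iii) and (v).

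Identities (v) and (iii) are then short computations. For (v): since $\nabla\rg = 0$, the operator $\square = \nabla^{c}\nabla_{c}$ commutes with contraction by $\rg$, so $\trg(\square u) = \square(\trg u)$ for $u\in\cinf(M;V_{2})$; moreover $\trg(\Riem u) = \rg^{ab}\rR\indices{_{a}^{cd}_{b}}u_{cd}$ collapses, after one use of the pair symmetry and the antisymmetries of the Riemann tensor, to a pure Ricci contraction proportional to $\Ric^{cd}u_{cd}$, hence to a multiple of $\trg u$ by~\eqref{eq:einstein}; adding the two contributions gives (v). For (iii): write $(I\,du)_{ab} = \nabla_{(a}u_{b)} - \frac{1}{2}(\nabla^{c}u_{c})\rg_{ab}$ for a $1$-form $u$, apply $\delta$ (that is, $(\delta T)_{a} = -2\nabla^{b}T_{ba}$), and commute the two covariant derivatives that appear; the commutator is a single contraction of the Riemann tensor against $u$, hence a Ricci term, which~\eqref{eq:einstein} converts into a multiple of $u$, while the two $d\delta u$-type contributions — one from $\delta(\nabla_{(a}u_{b)})$ and one from the trace term — cancel, leaving (iii).

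The substantial identity is (i). Fix $u\in\cinf(M;V_{1})$ and expand $\square(du)_{ab} = \nabla^{c}\nabla_{c}\nabla_{(a}u_{b)}$, pushing $\nabla^{c}\nabla_{c}$ past the symmetrized derivative $\nabla_{(a}$. Two applications of the Ricci identity $(\nabla_{\alpha}\nabla_{\beta}-\nabla_{\beta}\nabla_{\alpha})T = \rR\cdot T$ produce three kinds of terms: (a) the leading term $\nabla_{(a}\nabla^{c}\nabla_{c}u_{b)} = d(\square u)_{ab}$; (b) terms of the form $\nabla^{c}(\rR\cdot u)$, that is, first derivatives of the curvature contracted with $u$; and (c) undifferentiated curvature terms contracted with $\nabla u$. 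The terms in (b) are removed by the contracted second Bianchi identity: on an Einstein manifold every contraction of $\nabla\rR$ that occurs reduces to $\nabla\Ric = \Lambda\,\nabla\rg = 0$ or to $\nabla\scal = 0$. The terms in (c) are rearranged, using the first Bianchi identity $\rR_{a[bcd]}=0$ together with the symmetries of $\rR$, into a single multiple of $\rR\indices{_{a}^{cd}_{b}}\nabla_{(c}u_{d)} = \Riem(du)_{ab}$ plus one further Ricci term that~\eqref{eq:einstein} turns into a multiple of $(du)_{ab}$. Matching the numerical coefficients yields the Weitzenb\"ock-type identity $\square(du) = d(\square u) + 2\,\Riem(du) + \Lambda\,du$, equivalently $(\square - 2\Riem)(du) = d\bigl((\square+\Lambda)u\bigr)$, which is (i).

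The one genuinely delicate point is step (c): several undifferentiated Riemann contributions appear, and one must check — with the correct sign and with total coefficient exactly $2$ — that they reinforce rather than partially cancel. This is precisely the place where the first Bianchi identity and the index symmetries of the Riemann tensor are used in an essential way; everything else is routine index bookkeeping, provided the sign conventions of Subsect.~\ref{sec:lg.1.1} are tracked consistently throughout.
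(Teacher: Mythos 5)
Your proposal is correct in substance, but for the central pair (i)--(ii) it runs in the opposite direction from the paper. The paper takes (ii) as the external input --- it is quoted from Fewster--Hunt \cite[Lem.~2.6]{FH} --- and then obtains (i) by taking formal adjoints for $(\cdot|\cdot)_{V_k(M)}$, exactly the adjoint argument you use but reversed. You instead prove (i) from scratch by the Weitzenb\"ock-type computation and deduce (ii) by adjoints. Your sketch of (i) is sound: commuting $\nabla^c\nabla_c$ past $\nabla_a$ gives $\square\nabla_a u_b=\nabla_a\square u_b+\Lambda\nabla_a u_b+2\,\rR\indices{^c_{ab}^d}\nabla_c u_d$ plus a $\nabla^c\rR\indices{_{cab}^d}$ term that the contracted second Bianchi identity reduces to derivatives of $\Ric$, hence zero on an Einstein manifold; symmetrizing in $(a,b)$ and using the pair symmetry of $\rR$ to replace $\nabla_c u_d$ by $(du)_{cd}$ yields exactly $(\square-2\Riem)\circ d=d\circ(\square+\Lambda)$, with the coefficient $2$ you were worried about coming out correctly. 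Two small remarks: the first Bianchi identity is not actually needed for step (c) --- the pair symmetry of the Riemann tensor together with the symmetrization over $(a,b)$ already collects the two undifferentiated curvature terms, one of which is the Ricci contraction producing the $\Lambda\,du$ term; and your treatments of (iii), (iv), (v) coincide with the paper's (the paper's stated justification of (iv) via \eqref{e2.4}~\emph{ii)} and \eqref{e2.5}~\emph{iii)} is evidently a typo for \eqref{e2.4}~\emph{i)} and \eqref{e2.5}~\emph{ii)}, which is what you invoke). The trade-off is clear: your route is self-contained at the price of an index computation that you only outline, while the paper's is shorter but imports the hard identity from the literature.
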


\proof {\it ii)}  is \cite[Lem.~2.6]{FH} and it implies {\it i)}  by taking adjoints w.r.t.~$(\cdot | \cdot)_{V_{2}(M)}$.
 Next,
\[
\bea
(I\circ dw)_{ab}&= \12\left(\nab_{a}w_{b}+ \nab_{b}w_{a}- \12\rg^{cd}(\nab_{c}w_{d}+ \nab_{d}w_{c})\rg_{ab}\right)\\
&=\12\left(\nab_{a}w_{b}+ \nab_{b}w_{a}-(\nab^{d}w_{d})\rg_{ab}\right),\\[2mm]
(\delta \circ I\circ  dw)_{b}
&=\-\left(\nab^{a}\nab_{a}w_{b}+ \nab^{a}\nab_{b}w_{a}- \nab_{b}\nab^{a}w_{a}\right)\\
&= -(\square w)_{b}- \rR\indices{^{a}_{ba}^{d}}w_{d}= -(\square w)_{b}- \Lambda w_{b}.
\eea
\]
Furthermore, {\it iv)}  follows from \eqref{e2.4} {\it ii)} and \eqref{e2.5} {\it iii)}. Finally,
\[
\trace_{\rg}(\Riem(u))= \12(\rg| \Riem(u))= \12(\Riem(\rg)| u)_{V_{2}}=\frac{\Lambda}{2}(\rg| u)_{V_{2}}=\Lambda\,\trace_{\rg}u,
\]
since $\Riem= \Riem^{*}$ and $\Riem(\rg)=\Lambda\rg$, which implies {\it v)}.\qed

\subsection{Linearized gravity as a gauge theory} \label{sec:lg.2}
Let us now explain how linearized gravity fits in the framework introduced in Sect.~\ref{ss:HS}.

Let $(M,\rg)$ be a globally hyperbolic spacetime of dimension $4$. Let us introduce the differential operators
\beq\label{e2.4bb}
\bea
P&\defeq - \square-  I \circ d \circ \delta + 2\,\Riem \in \Diff^2(M;V_2), \\
K&\defeq  I\circ d\in\Diff^1(M;V_1,V_2).
\eea
\eeq
From now on we assume that $(M,\rg)$ is Einstein. Then,  $Pu=0$ is  the \emph{linearized Einstein equation}.  The condition $K^\star u=0$, where $K^{\star}$ is defined below,  is the linearized \emph{de Donder} or \emph{harmonic gauge}.
\begin{remark}
  We use the same formalism as in \cite[Example 3.8]{HS}. In \cite{FH},  $P$ is replaced by  $L=\12 P\circ I$ and $K$ by $d= I\circ K$, which corresponds to replacing  $u$ by $Iu$.
\end{remark}

We consider $V_{1}$ resp.~$V_{2}$ as Hermitian bundles, where the Hermitian forms on fibers is now
\beq\label{def-physical-scalar-product}
(u| u)_{I,V_{1}} \defeq (u|u)_{V_{1}}, \mbox{ for } u\in  V_1, \mbox{ resp.~} (u| u)_{I,V_{2}} \defeq (u| I u)_{V_{2}}, \mbox{ for } u\in  V_2. 
\eeq
The corresponding Hermitian form on smooth sections of $V_i$, $i=1,2$, is 
\begin{equation}\label{eq:hform}
(u|u)_{I,V_{k}(U)}=  \int_{U}(u(x)| u(x))_{I,V_{k}}\dvol_\rg, \quad u, v\in \coinf(U;V_k).
\end{equation} 
We denote by $A^{\star}$ the corresponding formal adjoint of $A$ for $(\cdot| \cdot)_{I, V_{k}(M)}$ to distinguish it from the formal adjoint $A^*$ for   $(\cdot|\cdot)_{V_{k}(M)}$.  The two are related as follows:
\beq\label{e2.7}
\begin{array}{rl}
A^{\star}= I A^{*}I \hbox{ if }A: \cinf(M;V_{2})\to \cinf(M;V_{2}),\\[2mm]
A^{\star}= A^{*}I\hbox{ if }A: \cinf(M;V_{1})\to \cinf(M;V_{2}),\\[2mm]
A^{\star}= IA^{*}\hbox{ if }A: \cinf(M;V_{2})\to \cinf(M;V_{1}),\\[2mm]
A^{\star}= A^{*}\hbox{ if }A: \cinf(M;V_{1})\to \cinf(M;V_{1}).
\end{array}
\eeq 
In particular,
\beq\label{eq:Kstar}
K^{\star}= K^{*}I=\delta \circ I \circ I=\delta.
\eeq 
\begin{proposition}\label{lemma2.2} Let 
\[
\begin{array}{l}
D_2=P+KK^\star: \cinf(M; V_{2})\to \cinf(M; V_{2}),\\[2mm]
D_1=K^\star K: \cinf(M; V_{1})\to \cinf(M; V_{1}).
\end{array}
\] Then
\beq\label{eq:D1D2}
\begin{array}{l}
D_2=-\square +2\Riem,\\[2mm]
D_1=-\square-\Lambda.
\end{array}
\eeq
Furthermore, Hypothesis  \ref{as:subsidiary} is satisfied, i.e.~$P^\star=P$, $PK=0$ and $D_1,D_2$ are Green hyperbolic.
\end{proposition}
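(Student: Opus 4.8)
The plan is to verify the four assertions in turn: the explicit formulas \eqref{eq:D1D2}, the symmetry $P^\star=P$, the gauge identity $PK=0$, and the Green hyperbolicity of $D_1,D_2$ (the nondegeneracy $K\neq0$ is obvious). The first two and the last are essentially algebraic substitutions using \eqref{eq:Kstar} and the commutation lemmas of Subsect.~\ref{sec:lg.1.4}; only $PK=0$ requires chaining several of those identities.

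First I would compute $D_1$ and $D_2$. Since $K=I\circ d$ and $K^\star=\delta$ by \eqref{eq:Kstar}, the gauge-fixing term is $KK^\star=I\circ d\circ\delta$, which cancels exactly the summand $-\,I\circ d\circ\delta$ of $P$; hence $D_2=P+KK^\star=-\square+2\Riem$. Likewise $D_1=K^\star K=\delta\circ I\circ d$, and Lemma~\ref{lemma2.1} gives $\delta\circ I\circ d=-\square-\Lambda$ on $\cinf(M;V_1)$, so $D_1=-\square-\Lambda$. From these formulas the principal symbols are transparent: $\square=\nabla^c\nabla_c$ has scalar principal symbol $-\,\xi\cdot\rg^{-1}(x)\xi$ and the zeroth-order terms $2\Riem$ and $-\Lambda$ drop out, so $\sigma_{\rm pr}(D_i)=\xi\cdot\rg^{-1}(x)\xi\,\one_{V_i}$ for $i=1,2$. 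The criterion recalled just before Proposition~\ref{prop:basic} (namely \cite[Thm.~3.3.1]{BGP}) then yields that $D_1$ and $D_2$ are Green hyperbolic; the requirement on $D_i^\star$ is automatic, either because $\sigma_{\rm pr}(D_i^\star)$ is the same real scalar symbol or because, as shown next, $D_i^\star=D_i$.

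For $P^\star=P$ I would rewrite $P=D_2-KK^\star$ and check that both summands are $\star$-self-adjoint. The term $KK^\star$ is, because $\star$ is an involution, so that $(KK^\star)^\star=K^{\star\star}K^\star=KK^\star$; here $K^{\star\star}=K$ follows from $K^\star=\delta$ and the third line of \eqref{e2.7} (or from \eqref{e2.7} and $I^2=\one$ directly). For $D_2=-\square+2\Riem$ one uses the first line of \eqref{e2.7} together with $\square I=I\square$, $\square^*=\square$ (Lemma~\ref{lemma2.11}) and $\Riem I=I\Riem$, $\Riem^*=\Riem$ (Lemma~\ref{lemma2.12}, which needs $\rg$ Einstein) to get $\square^\star=I\square^*I=\square$ and $\Riem^\star=I\Riem^*I=\Riem$, hence $D_2^\star=D_2$ and $P^\star=P$. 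Alternatively one differentiates $P$ term by term, the only non-immediate point being $(I\circ d\circ\delta)^\star=I\,(I\,d\,\delta)^*\,I=I\circ d\circ\delta$. Finally, for $PK=0$, writing again $P=D_2-KK^\star$ and $D_1=K^\star K$, the claim reduces to the intertwining relation $D_2K=KD_1$, which I would obtain by chaining Lemma~\ref{lemma2.1}: by iv), $D_2K=(-\square+2\Riem)\circ I\circ d=I\circ(-\square+2\Riem)\circ d$; by i), $-(\square-2\Riem)\circ d=-\,d\circ(\square+\Lambda)$; hence $D_2K=I\circ d\circ(-\square-\Lambda)=K\circ D_1$, and so $PK=D_2K-K(K^\star K)=D_2K-KD_1=0$.

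I do not expect a genuine obstacle here: the whole argument rests on identities already proved in Subsect.~\ref{sec:lg.1.4}, and the Einstein hypothesis enters exactly where those identities need it (Lemmas~\ref{lemma2.12} and \ref{lemma2.1}). The one genuinely error-prone step is the bookkeeping of the two formal adjoints $^*$ and $^\star$ through \eqref{e2.7}, which must be handled with care in the proof of $P^\star=P$ and in identifying $K^{\star\star}$; everything else is a direct substitution.
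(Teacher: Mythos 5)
Your proposal is correct and follows essentially the same route as the paper: the formulas for $D_1,D_2$ via $K^\star=\delta$ and Lemma~\ref{lemma2.1}~iii), the adjoint bookkeeping through \eqref{e2.7} together with Lemmas~\ref{lemma2.11}--\ref{lemma2.12} for $P^\star=P$, the chain of identities i), iii), iv) of Lemma~\ref{lemma2.1} for $PK=0$ (equivalently $D_2K=KD_1$), and the scalar principal symbol criterion of \cite[Thm.~3.3.1]{BGP} for Green hyperbolicity. The only cosmetic difference is that you organize $P^\star=P$ and $PK=0$ through the decomposition $P=D_2-KK^\star$, whereas the paper computes $P^*$ and the product $PK$ directly; the underlying identities are identical.
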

\proof  The first identity in \eqref{eq:D1D2} follows from $KK^{\star}=I \circ d \circ \delta$. Using  \eqref{eq:Kstar} and \Lemma \ref{lemma2.1} we get $K^{\star}K= \delta \circ  I  \circ d= -\square-\Lambda$, hence the second identity.

Next, by  \Lemmas  \ref{lemma2.11} and  \ref{lemma2.12} we have:
\[
P^{*}= -\square-d\circ \delta\circ  I+ 2\Riem,\quad
P^{\star}= I P^{*}I= P.
\]
The identity $PK= 0$ follows from
\[
\begin{array}{l}
(-\square -I\circ  d\circ  \delta + 2 \Riem)I\circ  d\\[2mm]
= I\circ (-\square + 2\Riem)\circ d- I\circ  d \circ (\delta\circ  I \circ d)\\[2mm]
= I\circ  d( - \square - \Lambda)+ I\circ  d( \square + \Lambda)=0,
\end{array}
\]
by  \Lemma \ref{lemma2.1}. Finally, Green hyperbolicity of $D_i$ for $i=1,2$ follows from \cite[Thm.~3.3.1]{BGP}, since their  principal symbol is $(\xi\cdot \rg^{-1}(x)\xi) \one_{V_i}$.   \qed

\begin{remark}
 The following alternative expression for $D_{i}$ can be found in \cite{Wuensch}:
 \[
 \begin{array}{l}
D_{1}= \delta\circ  d- d\circ \delta,\\[2mm]
D_{2}= \delta\circ  d- d\circ \delta+  4 \Riem + 2 \Lambda.
\end{array}
\]
\end{remark}

\subsection{Gaussian normal coordinates}\label{sec:lg.3}
We assume that  $M= I_{t}\times \Sigma_{\rx}$, where $ I\subset \rr$ is an interval with $0\in \mathring{I}$ and $\Sigma$ a smooth $d$-dimensional manifold.  We denote by $\xi= (\tau, k)$ the dual variables to $x= (t, \rx)$.

We set $\Sigma_{t}= \{t\}\times \Sigma$ and identify $\Sigma_{0}$ with $\Sigma$. We assume that 
\[
\rg= - dt^{2}+ \rh(t, \rx)d\rx^{2},
\]
where 
 \[
 \rh\in \cinf(M,  \otimes^{2}_{s}T^{*}\Sigma),
 \]
 is a $t$-dependent Riemannian metric on $\Sigma$. If the normal geodesic flow to $\Sigma\subset M$ is defined for a uniform time interval, as is the case for spacetimes of bounded geometry, we can reduce ourselves to this model situation.
 
 Let
 \beq\label{e4.01}
 \rer\defeq\12 \p_{t}\rh \rh^{-1}\in \cinf(M; L(T^{*}\Sigma)).
\eeq 
By computing the Christoffel symbols for $\rg$, see \eqref{etiti.1}, one can check that the second fundamental form of $\rg$ is ${\rk}=  \12 \p_{t}\rh$, hence $\rer= {\rk}\rh^{-1}$. 

 We will use the notation introduced in \eqref{e0.00b} for the Hermitian forms on $V_{1}, V_{2}$.
 \subsubsection{ Decomposition of $(0,1)$-tensors}\label{sec:lg.3.1}
 We identify  \beq\label{etiti.0}
 \begin{array}{l}
 \cinf(M; T^{*}M)\tosim\cinf(M)\oplus \cinf(M; T^{*}\Sigma)\hbox{ by}\\[2mm]
  w\mapsto (w_{t}, w_{\Sig}), \\[2mm]
   w \eqdef w_{t}dt+ w_{\Sig}.
  \end{array}
 \eeq
  The scalar product $(\cdot| \cdot)_{\rg}$ reads then
 \[
 (w| w)_{\rg}= - | w_{t}|^{2}+ (w_{\Sig}| w_{\Sig})_{\rh},
 \]
 i.e.,
\[
\rg^{-1}= \mat{-1}{0}{0}{\rh^{-1}}.
\]
\begin{lemma}\label{lemmatiti.1}  If $w_{1}, w_{2}$ are $(0, 1)$-tensors on $\Sigma$ we have:
 \beq\label{e2.04b}
(w_{ 1}| \rer w_{ 2})_{\rh}= (\rer w_ {1}| w_{2})_{\rh},
\eeq
i.e.~$\rer$ is selfadjoint for $(\cdot| \cdot)_{\rh}$.
\end{lemma}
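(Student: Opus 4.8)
The statement is fiberwise and purely algebraic, so the plan is to fix a point of $M$, choose local coordinates $(\rx^{1},\dots,\rx^{d})$ on $\Sigma$, and reduce everything to the symmetry of a single $(2,0)$-tensor built from $\rh$. In such coordinates one has $(w_{1}|w_{2})_{\rh}=(\rh^{-1})^{ij}\bar w_{1i}w_{2j}$, and by \eqref{e4.01} the endomorphism $\rer$ of $T^{*}\Sigma$ has components $\rer\indices{_i^k}=\12(\p_{t}\rh)_{ij}(\rh^{-1})^{jk}$, so that $(\rer w)_{i}=\rer\indices{_i^k}w_{k}$.

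First I would unwind both sides of \eqref{e2.04b} into contractions against the $(2,0)$-tensor $N\defeq\rh^{-1}\rer$, with components $N^{ik}=(\rh^{-1})^{ij}\rer\indices{_j^k}=\12(\rh^{-1})^{ij}(\p_{t}\rh)_{jl}(\rh^{-1})^{lk}$. Using that $\rh^{-1}$ and $N$ are real and that $\rh^{-1}$ is symmetric, one gets $(w_{1}|\rer w_{2})_{\rh}=N^{ik}\bar w_{1i}w_{2k}$ on the one hand and $(\rer w_{1}|w_{2})_{\rh}=N^{jk}\bar w_{1k}w_{2j}$ on the other; after relabelling $j\leftrightarrow k$ in the latter, the two expressions coincide for all $w_{1},w_{2}$ precisely when $N^{ik}=N^{ki}$.

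The remaining point is the symmetry of $N$, and the only difficulty — such as it is — is to be careful with index positions, i.e.\ to keep $\rer$ acting on $(0,1)$-tensors with the pattern above so that $N=\rh^{-1}\rer$ is the symmetric tensor it should be rather than a transpose of it. Concretely, in matrix form $N=\12\,\rh^{-1}(\p_{t}\rh)\,\rh^{-1}$ is of the shape $ABA$ with $A=\rh^{-1}$ and $B=\12\p_{t}\rh$ both symmetric, hence $N^{\top}=A^{\top}B^{\top}A^{\top}=ABA=N$; equivalently, differentiating $\rh\,\rh^{-1}=\one$ in $t$ gives $N=-\12\,\p_{t}(\rh^{-1})$, manifestly a symmetric $(2,0)$-tensor as the $t$-derivative of one. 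This yields $N^{ik}=N^{ki}$ and hence \eqref{e2.04b}.
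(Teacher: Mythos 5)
Your proof is correct, and it rests on exactly the same observation that the paper uses for the $(0,2)$-tensor analogue in Lemma \ref{lemmatiti.2}: the symmetry of $\rh^{-1}\p_{t}\rh\,\rh^{-1}$ (equivalently $-\p_{t}(\rh^{-1})$) as a product of real symmetric factors, the paper leaving the $(0,1)$-tensor case without explicit proof. Nothing further is needed.
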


\subsubsection{Decomposition of $(0,2)$-tensors}\label{sec:lg.3.2}
Similarly we identify  \beq\label{etiti.-1}
 \begin{array}{l}
 \cinf(M; \otimes^{2}_{\rm s}T^{*}M)\tosim\cinf(M)\oplus \cinf(M; T^{*}\Sigma)\oplus\cinf(M; \otimes^{2}_{\rm s}T^{*}\Sigma)\hbox{ by}\\[2mm]
  u\mapsto (u_{tt}, u_{t\Sig},u_{\Sig\Sig}), \\[2mm]
u\eqdef u_{tt}dt\otimes dt+ w_{t\Sig}\otimes dt+ dt\otimes w_{t\Sig}+ w_{\Sig\Sig}.
  \end{array}
 \eeq

The scalar product  $(\cdot| \cdot)_{g^{\otimes 2}}$  reads
\beq\label{e2.9b}
(u|u)_{g^{\otimes 2}}= |u_{tt}|^{2}-  2(u_{t\Sig}| u_{t\Sig})_{\rh} +  (u_{\Sig\Sig}| u_{\Sig\Sig})_{h^{\otimes 2}}.
\eeq
 We have the following analog of \Lemma \ref{lemmatiti.1}.  In \eqref{e2.04c} below, $r^{\scriptscriptstyle\#}\in L(T\Sigma)$ is the transpose of $\rer\in L(T^{*}\Sigma)$.
\begin{lemma}\label{lemmatiti.2}
 If $u_{1}, u_{2}$ are $(0,2)$-tensors on $\Sigma$ we have:
 \begin{equation}
 \label{e2.04c}
 \begin{array}{l}
  (u_{1}| \rer \circ u_{2})_{h^{\otimes 2}}= (\rer\circ u_{1}| u_{2})_{h^{\otimes 2}}, \\[2mm]
  (u_{1}| u_{2}\circ r^{\scriptscriptstyle\#} )_{h^{\otimes 2}}= (u_{1}\circ r^{\scriptscriptstyle\#}| u_{2})_{_{h^{\otimes 2}}},
\end{array}
 \end{equation}
 i.e.~the operators $\rer \circ \cdot$ and $\cdot \circ r^{\scriptscriptstyle\#}$ are self-adjoint for $(\cdot| \cdot)_{_{h^{\otimes 2}}}$.
\end{lemma}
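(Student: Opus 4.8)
The plan is to reduce both identities to a pointwise computation in a local chart on $\Sigma$, the only real input being the symmetry of the tensor obtained from $\rer$ by raising its lower index with $\rh^{-1}$. Writing $\rer\indices{_{a}^{b}}$ for the components of $\rer\in L(T^{*}\Sigma)$ and setting $R^{ab}\defeq\rh^{ac}\rer\indices{_{c}^{b}}=\12\rh^{ac}\rh^{bd}(\p_{t}\rh)_{cd}$, we have $R^{ab}=R^{ba}$ since $\p_{t}\rh$ and $\rh^{-1}$ are symmetric; this is precisely \Lemma \ref{lemmatiti.1} written in coordinates. I will also use that $\rer$ is real, being built from the real metric $\rh$, so that complex conjugation commutes with its action on tensors.

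For the first identity, $\rer\circ u$ is $\rer$ applied to the first tensor slot, $(\rer\circ u)_{ab}=\rer\indices{_{a}^{c}}u_{cb}$, and with the convention $(v|w)_{h^{\otimes 2}}=\rh^{ac}\rh^{bd}\bar v_{ab}w_{cd}$ one computes, after pulling the conjugation through $\rer$ and contracting one copy of $\rh^{-1}$ into $\rer$ to form $R$,
\[
(\rer\circ u_{1}| u_{2})_{h^{\otimes 2}}=R^{ce}\,\rh^{bd}\,\overline{(u_{1})}_{eb}\,(u_{2})_{cd},\qquad (u_{1}| \rer\circ u_{2})_{h^{\otimes 2}}=R^{ae}\,\rh^{bd}\,\overline{(u_{1})}_{ab}\,(u_{2})_{ed}.
\]
Relabelling the dummy indices in the left-hand expression and invoking $R^{ae}=R^{ea}$ turns it into the right-hand one, which is the claim. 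The second identity is entirely parallel: $u\circ\rer^{\scriptscriptstyle\#}$ is $\rer$ applied to the \emph{second} slot, $(u\circ\rer^{\scriptscriptstyle\#})_{ab}=u_{ac}\,\rer\indices{_{b}^{c}}$ (the transpose $\rer^{\scriptscriptstyle\#}\in L(T\Sigma)$ interchanging the two indices of $\rer$), and the same manipulation — contract an $\rh^{-1}$ into $\rer$ to form $R$, then relabel and use $R^{ab}=R^{ba}$ — gives $(u_{1}| u_{2}\circ\rer^{\scriptscriptstyle\#})_{h^{\otimes 2}}=(u_{1}\circ\rer^{\scriptscriptstyle\#}| u_{2})_{h^{\otimes 2}}$. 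Note that neither computation uses the symmetry of $u_{1}$ or $u_{2}$, so the statement in fact holds for arbitrary $(0,2)$-tensors.

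I do not expect any genuine obstacle here; this is bookkeeping, and the paper rightly announces it as an ``analog of \Lemma \ref{lemmatiti.1}''. The only points needing care are keeping track of which tensor slot $\rer$, resp.~$\rer^{\scriptscriptstyle\#}$, acts on; the index-placement convention identifying the transpose $\rer^{\scriptscriptstyle\#}$ with $\rer$ up to swapping its two indices; and the reality of $\rer$, used to commute it past the complex conjugation in $(\cdot|\cdot)_{h^{\otimes 2}}$. One could also phrase the two self-adjointness statements coordinate-freely, using that $\rer$ being $\rh$-self-adjoint (\Lemma \ref{lemmatiti.1}) means $\rh^{-1}\rer=\rer^{\scriptscriptstyle\#}\rh^{-1}$ and that $\rh^{-1}u_{1},\rh^{-1}u_{2}$ are $\rh$-self-adjoint endomorphisms of $T\Sigma$ because $u_{1},u_{2}$ are symmetric; but the coordinate computation above is the most transparent.
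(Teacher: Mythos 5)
Your proof is correct and takes essentially the same route as the paper's: a direct pointwise computation whose only real input is the symmetry of $\p_{t}\rh$ and $\rh^{-1}$ (equivalently, the $\rh$-self-adjointness of $\rer$ from \Lemma \ref{lemmatiti.1}), together with the reality of $\rer$. The paper carries out the identical manipulation via the trace formula $(u_{1}|u_{2})_{h^{\otimes 2}}=\trace(u_{1}^{*}\rh^{-1}u_{2}\rh^{-1})$ rather than in abstract indices, but the content is the same.
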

\proof
 Since $(u_{1}| u_{2})_{_{h^{\otimes 2}}}= \trace(u_{1}^{*}\rh^{-1}u_{2}\rh^{-1})$, we have
 \[
 \begin{array}{l}
 (u_{1}| \rer \circ u_{2})_{_{h^{\otimes 2}}}= \trace(u_{1}^{*}\rh^{-1}\p_{t}\rh\rh^{-1}u_{2}\rh^{-1})\\[2mm]
 = \trace((\p_{t}\rh \rh^{-1}u_{1})^{*}\rh^{-1}u_{2}\rh^{-1})= (\rer\circ u_{1}| u_{2})_{_{h^{\otimes 2}}},\\[2mm]
 (u_{1}| u_{2}\circ r^{\scriptscriptstyle\#})_{_{h^{\otimes 2}}}=  \trace(u_{1}^{*}\rh^{-1}u_{2}\rh^{-1}\p_{t}\rh\rh^{-1})\\[2mm]
 = \trace((u_{1}\p_{t}\rh\rh^{-1})^{*}\rh^{-1}u_{2}\rh^{-1})= (u_{1}\circ r^{\scriptscriptstyle\#}| u_{2})_{_{h^{\otimes 2}}}. \ \Box
 \end{array}
 \]
\subsubsection{Covariant derivatives}\label{sec:lg3.3}
We recall that  $\nabla$, $d$, $\delta$  are the covariant derivative, differential and codifferential for $\rg$. Similarly we denote by $\nabla_{\Sigma}$, $d_{\Sigma}$, $\delta_{\Sigma}$ the corresponding operators for $\rh$, acting on tensors on $\Sigma$. 
We set
\[
\nabla_{0}= \nabla_{\p_{t}}.
\]

Let us denote by $\Gamma^{c}_{ab}(\rg)$ resp.~ $\Gamma^{i}_{ij}(\rh)$ the Christoffel symbols of $\rg$ resp.~$\rh$. A routine computation gives
\begin{equation}
\label{etiti.1}
\begin{array}{l}
\Gamma^{0}_{ij}(\rg)= \12\p_{t}\rh_{ij},\\[2mm]
\Gamma^{i}_{0j}(\rg)=\Gamma^{i}_{j0}(\rg)=  \rer\indices{_{j}^{i}},\\[2mm]
\Gamma^{i}_{jk}(\rg)= \Gamma^{i}_{jk}(\rh),
\end{array}
\end{equation}
all other entries being equal to $0$.  
From \eqref{etiti.1} we obtain that
\begin{equation}
\label{etiti.2}
\begin{array}{l}
(\nabla_{0}w)_{t}= \p_{t}w_{t},\\[2mm]
(\nabla_{0}w)_{\Sigma}= \p_{t}w_{\Sigma}- \rer w_{\Sigma},\\[2mm]
(\nabla_{0}u)_{t\Sigma}= \p_{t}u_{t\Sigma}- \rer u_{t\Sigma},\\[2mm]
(\nabla_{0}u)_{\Sigma\Sigma}= \p_{t}u_{\Sigma\Sigma}- \rer \circ u_{\Sigma\Sigma}- u_{\Sigma\Sigma}\circ r^{\scriptscriptstyle\#}.   
\end{array}
\end{equation}
Furthermore,
 \begin{equation}
 \label{etiti.4}
 \begin{array}{l}
  (dw)_{tt}= \p_{t}w_{t},\\[2mm]
  (dw)_{t\Sigma}= \12 \p_{t}w_{\Sigma}- \rer w_{\Sigma}+ \12 d_{\Sigma}w_{t},\\[2mm]
  (dw)_{\Sigma\Sigma}=  d_{\Sigma}w_{\Sigma}- \12 \p_{t}\rh w_{t}.
 \end{array}
 \end{equation}

\subsection{Reduced setting} \label{ss:reduction}
 We now perform a reduction  which corresponds to identify tensors  on $\Sigma_{t}$ with tensors on $\Sigma$ by parallel transport along $\p_{t}$. This allows us to simplify the expressions of  the operators $D_{i}, \nabla_{0}$, $d, \delta$.
Let
\[
 \ru(t)= \Texp(\textstyle\int_{t}^{0} \rer(s)ds)\in\cinf(I; \cinf(\Sigma; L(T^{*}\Sigma))),
\]
i.e.~$\ru(t)$ is the unique solution of
\beq\label{e2.a1}
\begin{cases}
\p_{t}\ru(t)= - \ru(t)\rer(t),\\
\ru(0)= \one.
\end{cases}
\eeq
for $t\in I$. We  define
\beq\label{eq:red1}
\rU_{1}(w_{t},w_{\Sigma}) \defeq (w_{t}, \ru w_{\Sigma}),
\eeq
where we use the identification in \eqref{etiti.0} and
 \beq\label{eq:red2}
\rU_{2}(u_{tt}, u_{t\Sigma}, u_{\Sigma\Sigma})\defeq (u_{tt}, \ru w_{t\Sigma}, \ru \circ  u_{\Sigma\Sigma}\circ \ru^{\scriptscriptstyle\#}),
\eeq
using the identification in \eqref{etiti.-1}. Let us set
\beq\label{eq:red3}
\ha\rg_{0}\defeq - dt^{2}+ \ha\rh_{0}(\rx)d\rx^{2},
\eeq
where $\ha\rh_{0}(\rx)=\rh(0,\rx)$. 

\begin{proposition}\label{prop4.1}  We have:
 \ben
 \item\label{it:U1}  $\rU_{1}^{*}\ha\rg_{0}^{-1}\rU_{1}= \rg^{-1}$, 
 \item\label{it:U2} $\rU_{2}^{*}\ha (g_{0}^{\otimes 2})^{-1} \rU_{2}=(g^{\otimes 2})^{-1}$,
  \item\label{it:U3}  $\rU_{2}\rg= \ha\rg_{0}$, 
 \item\label{it:U4}  $\ha I_{0}\defeq\rU_{2} I \rU_{2}^{-1}$ is the trace reversal w.r.t.~$\ha\rg_{0}$, 
 \item\label{it:U5} $\rU_{i}\circ \nab_{0}= \p_{t}\circ \rU_{i}$,  $i= 1, 2$. 
 \een
\end{proposition}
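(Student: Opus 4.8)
The plan is to deduce all five identities from a single algebraic fact about the transport operator $\ru$, namely that it intertwines the time-dependent spatial metric with the fixed one:
\[
\ru(t)^{\scriptscriptstyle\#}\,\ha\rh_{0}^{-1}\,\ru(t)=\rh(t)^{-1}\qquad\bigl(\text{equivalently, }\ \ru(t)\,\rh(t)\,\ru(t)^{\scriptscriptstyle\#}=\ha\rh_{0}\bigr),\quad t\in I.
\]
Conceptually this just says that parallel transport along $\p_{t}$ preserves $\rg$, which is also the content of item~\eqref{it:U5}; accordingly I would prove \eqref{it:U5} first. For this, insert the explicit formulas \eqref{etiti.2} for $\nab_{0}$ into the blocks $tt$, $t\Sig$ and $\Sig\Sig$ of the decompositions \eqref{etiti.0}, \eqref{etiti.-1}, and differentiate $\rU_{i}$ using its defining ODE \eqref{e2.a1}, $\p_{t}\ru=-\ru\rer$, together with its transpose $\p_{t}\ru^{\scriptscriptstyle\#}=-r^{\scriptscriptstyle\#}\ru^{\scriptscriptstyle\#}$. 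For instance, on the $\Sig\Sig$-block one finds $\p_{t}(\ru\, u_{\Sig\Sig}\,\ru^{\scriptscriptstyle\#})=\ru\bigl(\p_{t}u_{\Sig\Sig}-\rer\circ u_{\Sig\Sig}-u_{\Sig\Sig}\circ r^{\scriptscriptstyle\#}\bigr)\ru^{\scriptscriptstyle\#}=\ru\,(\nab_{0}u)_{\Sig\Sig}\,\ru^{\scriptscriptstyle\#}$, while the $tt$ and $t\Sig$ components are immediate and the case $i=1$ is the same but shorter. This step is pure bookkeeping.

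Once \eqref{it:U5} is available, \eqref{it:U3} follows at once: since $\nab\rg=0$ we have $\nab_{0}\rg=0$, hence $\p_{t}(\rU_{2}\rg)=\rU_{2}(\nab_{0}\rg)=0$, so $\rU_{2}\rg$ is independent of $t$ and equals its value at $t=0$, which by \eqref{eq:red2}, \eqref{eq:red3} and $\ru(0)=\one$ is exactly $\ha\rg_{0}$. (One may also check $\nab_{0}\rg=0$ directly from \eqref{etiti.2} using the identity $\rh\circ r^{\scriptscriptstyle\#}=\rer\circ\rh$, i.e.~Lemma~\ref{lemmatiti.1} rewritten, together with $\p_{t}\rh=2\rer\rh$ from \eqref{e4.01}.) The $\Sig\Sig$-component of \eqref{it:U3} reads $\ru\,\rh\,\ru^{\scriptscriptstyle\#}=\ha\rh_{0}$, and inverting it yields the displayed key identity. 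Then \eqref{it:U1} is immediate: $\rU_{1}$ and $\rg^{-1}$ are block-diagonal in the splitting \eqref{etiti.0}, the $tt$-block reads $-1=-1$, and the $\Sig\Sig$-block is precisely the key identity (with $\ru^{*}=\ru^{\scriptscriptstyle\#}$, since $\rer$ and hence $\ru$ is real).

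For \eqref{it:U2} I would expand the two Hermitian forms via \eqref{e2.9b}: the $tt$-block is unchanged, the $t\Sig$-block again reduces to $(\ru w\,|\,\ru w)_{\ha\rh_{0}}=(w\,|\,w)_{\rh}$, and the $\Sig\Sig$-block asks that $\trace\bigl(\ru u\ru^{\scriptscriptstyle\#}\,\ha\rh_{0}^{-1}\,\ru u\ru^{\scriptscriptstyle\#}\,\ha\rh_{0}^{-1}\bigr)=\trace(u\,\rh^{-1}u\,\rh^{-1})$, which drops out on substituting $\ru^{\scriptscriptstyle\#}\ha\rh_{0}^{-1}\ru=\rh^{-1}$ twice and using cyclicity of the trace --- literally the computation in the proof of Lemma~\ref{lemmatiti.2}. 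For \eqref{it:U4}, apply the explicit formula $Iu=u-\tfrac12\trace_{\rg}(u)\,\rg$: then $\rU_{2}I\rU_{2}^{-1}v=v-\tfrac12\trace_{\rg}(\rU_{2}^{-1}v)\,\rU_{2}\rg=v-\tfrac12\trace_{\rg}(\rU_{2}^{-1}v)\,\ha\rg_{0}$ by \eqref{it:U3}, so it remains to see $\trace_{\rg}(\rU_{2}^{-1}v)=\trace_{\ha\rg_{0}}(v)$. In the splitting \eqref{etiti.-1} both sides equal $-v_{tt}$ plus a spatial trace, and one is left with $\trace\bigl(\rh^{-1}\ru^{-1}v_{\Sig\Sig}(\ru^{-1})^{\scriptscriptstyle\#}\bigr)=\trace\bigl(\ha\rh_{0}^{-1}v_{\Sig\Sig}\bigr)$, which follows from $\rh^{-1}\ru^{-1}=\ru^{\scriptscriptstyle\#}\ha\rh_{0}^{-1}$ (the key identity) and cyclicity. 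This gives $\rU_{2}I\rU_{2}^{-1}v=v-\tfrac12\trace_{\ha\rg_{0}}(v)\,\ha\rg_{0}$, the trace reversal for $\ha\rg_{0}$.

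Finally, it remains to justify the key identity itself, and this is the one genuine --- if still mild --- point of the argument. It can be read off as the inverse of the $\Sig\Sig$-block of \eqref{it:U3} above, or proved independently: both $F(t):=\ru^{\scriptscriptstyle\#}\,\ha\rh_{0}^{-1}\,\ru$ and $G(t):=\rh(t)^{-1}$ satisfy the same linear ODE $\dot X=-r^{\scriptscriptstyle\#}X-X\rer$ with $X(0)=\ha\rh_{0}^{-1}$ --- for $F$ this is immediate from \eqref{e2.a1}, and for $G$ one uses $\p_{t}\rh=2\rer\rh$ together with the symmetry of $\rh^{-1}\rer$, i.e.~Lemma~\ref{lemmatiti.1} --- so uniqueness gives $F=G$. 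I expect the only thing that needs care throughout is the bookkeeping of the transposes $^{\scriptscriptstyle\#}$ and invoking Lemma~\ref{lemmatiti.1} at the right moments to commute $\rer$ past $\rh$ (and $\ru$); no further idea is required.
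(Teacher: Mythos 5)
Your proof is correct and follows essentially the same route as the paper: the core of both arguments is the identity $\ru\,\rh\,\ru^{\scriptscriptstyle\#}=\ha\rh_{0}$, which the paper obtains by differentiating $\ru\rh\ru^{\scriptscriptstyle\#}$ directly and you obtain by an equivalent ODE-uniqueness argument (or as the $\Sig\Sig$-block of \eqref{it:U3}), after which \eqref{it:U1}--\eqref{it:U4} follow by the same algebra and \eqref{it:U5} by the same bookkeeping with \eqref{etiti.2} and \eqref{e2.a1}. Your reordering (proving \eqref{it:U5} first and deducing \eqref{it:U3} from $\nab_{0}\rg=0$) is a harmless conceptual repackaging of the paper's computation.
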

\proof We compute using \eqref{e2.a1}:
 \[
\p_{t}(\ru(t)\rh({t})\ru^{\scriptscriptstyle\#}(t))= \ru(t)\left(- \rer(t) \rh(t)+ \p_{t}\rh(t)- \rh(t)r^{\scriptscriptstyle\#}(t)\right)\ru^{\scriptscriptstyle\#}(t)=0.
\]
This implies that $\ru(t)\rh({t})\ru^{\scriptscriptstyle\#}(t)= \ha\rh_{0}$,
hence  $ \rU_{1}^{\scriptscriptstyle\#}\rg_{0}^{-1}\rU_{1}= \rg^{-1}$ or equivalently, $\rU_{2}\rg= \rg_{0}$, which proves \eqref{it:U1} and \eqref{it:U3}. Similarly
\[
\bea
(\rU_{2}u|\rU_{2} u)_{\rg_{0}^{\otimes 2}}&= \trace(\rg_{0}^{-1}\rU_{2}\bar{u} \rg_{0}^{-1}\rU_{2}u)
=\trace(\rg_{0}^{-1}\rU_{1} \bar{u} \rU_{1}^{\scriptscriptstyle\#}\rg_{0}^{-1}\rU_{1}u \rU_{1}^{\scriptscriptstyle\#})\\[2mm]
&=\trace(\rU_{1}^{\scriptscriptstyle\#}\rg_{0}^{-1}\rU_{1} \bar{u} \rU_{1}^{\scriptscriptstyle\#}\rg_{0}^{-1}\rU_{1}u )
=\trace(\rg^{-1}\bar{u}\rg^{-1}u)= (u| u)_{\rg^{\otimes 2}},
\eea
\]
which proves  \eqref{it:U2}. From \eqref{it:U2} and \eqref{it:U3}  we obtain
\[
(\rg | u)_{\rg}= (\rU_{2}\rg| \rU_{2}u)_{\rg_{0}}= (\rg_{0}| \rU_{2}u)_{\rg_{0}},
\]
which implies \eqref{it:U4}.  Finally, by \eqref{etiti.2} we obtain  \eqref{it:U5}. \qed\medskip
\begin{lemma}
 \label{lemmatiti.3}
We have:
\[
\begin{array}{l}
(\rU_{2} d\rU_{1}^{-1}w)_{tt}= \p_{t}w_{t},\\[2mm]
(\rU_{2} d\rU_{1}^{-1}w)_{t\Sigma}= \12 (\p_{t}w_{\Sigma}- \ru\rer \ru^{-1}w_{\Sigma}+ \ru d_{\Sigma}w_{t}),\\[2mm]
(\rU_{2} d\rU_{1}^{-1}w)_{\Sigma\Sigma}= \ru (d_{\Sigma}\ru^{-1}w_{\Sigma})\ru^{\scriptscriptstyle\#}- \12 \ru \p_{t}\rh \ru^{\scriptscriptstyle\#}w_{t}.
\end{array}
\]
\end{lemma}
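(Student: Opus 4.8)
The plan is to compute the composition $\rU_2\circ d\circ\rU_1^{-1}$ componentwise, applying the three maps in turn to a $(0,1)$-tensor $w=(w_t,w_\Sigma)$ written in the decomposition \eqref{etiti.0}. From the definition \eqref{eq:red1} of $\rU_1$ one reads off that $\rU_1^{-1}(w_t,w_\Sigma)=(w_t,\ru^{-1}w_\Sigma)$; set $\tilde w\defeq\rU_1^{-1}w$, so $\tilde w_t=w_t$ and $\tilde w_\Sigma=\ru^{-1}w_\Sigma$. Next I would insert $\tilde w$ into the formulas \eqref{etiti.4} for the three components of $d$, and finally apply $\rU_2$ as in \eqref{eq:red2}, which leaves the $tt$-component unchanged, multiplies the $t\Sigma$-component on the left by $\ru$, and conjugates the $\Sigma\Sigma$-component by $\ru$ on the left and $\ru^{\scriptscriptstyle\#}$ on the right.

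The only computation that is not immediate is the time derivative of $\ru^{-1}$. Differentiating $\ru\ru^{-1}=\one$ and using the ODE \eqref{e2.a1}, namely $\p_t\ru=-\ru\rer$, gives
\[
\p_t(\ru^{-1})=-\ru^{-1}(\p_t\ru)\ru^{-1}=\rer\,\ru^{-1},\qquad\text{hence}\qquad \p_t(\ru^{-1}w_\Sigma)=\ru^{-1}\p_t w_\Sigma+\rer\,\ru^{-1}w_\Sigma.
\]
Substituting this into $(d\tilde w)_{t\Sigma}=\tfrac12\p_t\tilde w_\Sigma-\rer\tilde w_\Sigma+\tfrac12 d_\Sigma\tilde w_t$ from \eqref{etiti.4}, the term $\tfrac12\rer\ru^{-1}w_\Sigma$ produced by the derivative combines with $-\rer\ru^{-1}w_\Sigma$ to leave $-\tfrac12\rer\ru^{-1}w_\Sigma$; multiplying on the left by $\ru$ and using $\ru\ru^{-1}=\one$ on the $\p_t w_\Sigma$ piece yields exactly $\tfrac12(\p_t w_\Sigma-\ru\rer\ru^{-1}w_\Sigma+\ru d_\Sigma w_t)$. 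For the $tt$-component, $\rU_2$ acts trivially and $(d\tilde w)_{tt}=\p_t\tilde w_t=\p_t w_t$; for the $\Sigma\Sigma$-component one conjugates $(d\tilde w)_{\Sigma\Sigma}=d_\Sigma(\ru^{-1}w_\Sigma)-\tfrac12\p_t\rh\,w_t$ by $\ru,\ru^{\scriptscriptstyle\#}$ to obtain $\ru(d_\Sigma\ru^{-1}w_\Sigma)\ru^{\scriptscriptstyle\#}-\tfrac12\ru\p_t\rh\,\ru^{\scriptscriptstyle\#}w_t$.

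I do not expect any genuine obstacle: the argument is bookkeeping with the decompositions \eqref{etiti.0}, \eqref{etiti.-1}, the formulas \eqref{etiti.4}, and the defining ODE for $\ru$. The two points that require care are (i) the ordering of $\ru$, $\ru^{-1}$ and the transpose $\ru^{\scriptscriptstyle\#}$ throughout (consistently with \eqref{eq:red2} and with the convention that $r^{\scriptscriptstyle\#}\in L(T\Sigma)$ is the transpose of $\rer\in L(T^*\Sigma)$), and (ii) not cancelling the $\ru$ and $\ru^{-1}$ flanking $\rer$ in the middle term of the $t\Sigma$-component before the derivative $\p_t(\ru^{-1})$ has been accounted for.
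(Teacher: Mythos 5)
Your proposal is correct and follows exactly the paper's argument: the paper's proof is the one-line remark that the lemma follows from \eqref{etiti.4} together with $\p_{t}\ru^{-1}=\rer\,\ru^{-1}$, which is precisely the computation you spell out. The componentwise bookkeeping, including the cancellation $\tfrac12\rer\ru^{-1}w_\Sigma-\rer\ru^{-1}w_\Sigma=-\tfrac12\rer\ru^{-1}w_\Sigma$ in the $t\Sigma$-component, checks out.
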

\proof This follows from \eqref{etiti.4}, using that $\p_{t}\ru^{-1}= \rer \ru^{-1}$. \qed

\subsubsection{Reduced operators}\label{sssect.reduct}
Let us set for $u\in\coinf(M; V_{i})$, $i= 1,2$:
\[
Tu= \bs_{t}u\hbox{ for }\bs_{t}= |\rh_{t}|^{\frac{1}{4}}|\rh_{0}|^{-\frac{1}{4}}. 
\]
Then 
\beq\label{etiti.5}
(u| u)_{V_{i}(M)}=\int_{M}(u| u)_{V_{i}}d\vol_{\rg}=  \int_{M} (Tu| Tu)_{V_{i}} d\vol_{\rg_{0}},
\eeq
and hence
\[
\bea
(u| u)_{V_{1}(M)}&= \int_{M} (T\rU_{1}u| T\rU_{1}u)_{\rg_{0}} d\vol_{\rg_{0}},\\
(u| u)_{V_{2}(M)}&= 2\int_{M} (T\rU_{2}u| T\rU_{2}u)_{g^{\otimes 2}_{0}} d\vol_{\rg_{0}}.
\eea
\]
We define the analogues of $D_{i}$, $d$ and $I$ in the reduced setting:
\beq\label{etiti.7}
\bea
\hat{D}_i&\defeq (T\rU_i ) \circ D_i  \circ (T\rU_i)^{-1},  \\
 \hat{d}&\defeq (T\rU_2) \circ  d \circ  (T\rU_1)^{-1},\\
\hat{I}&\defeq (T\rU_2)\circ I \circ (T\rU_2)^{-1}.
\eea
\eeq
\begin{proposition}\label{prop4.2} The operators $\hat{D}_{i}$, $\hat{d}$ and $\hat{I}$ have the following properties.
 \ben
 \item\label{iuiu1} $\hat{D}_{i}= \p_{t}^{2}+ \hat{a}_{i}(t)$, where $\hat{a}_{i}\in \cinfb(I; \Psi^{2}(\Sigma; V_{i}))$ has principal symbol ${\rm k}\dual \rh_{t}^{-1}{\rm k}\,\one_{V_{i}}$ and is self-adjoint for the Hermitian form $(\cdot| \cdot)_{\hat{V}_{i}(\Sigma)}$ defined by:
 \[
 (u| u)_{\hat{V}_{1}(\Sigma)}\defeq  \int_{\Sigma}(u| u)_{\rg_{0}}d\vol_{\rh_{0}},\quad 
 (u| u)_{\hat{V}_{2}(\Sigma)}\defeq  2\int_{\Sigma}(u| u)_{g^{\otimes 2}_{0}}d\vol_{\rh_{0}}.
\]
 \item\label{iuiu2}We have
 \[
\begin{array}{l}
 (\hat{d}w)_{tt}= \p_{t}w_{t}- \12 \trace(r) w_{t},\\[2mm]
 (\hat{d}w)_{t\Sigma}= \12 ( \p_{t}w_{\Sigma}- \12 \trace (\rer) w_{\Sigma}- \ru \rer \ru^{-1}w_{\Sigma} + (\bs\ru)\circ d_{\Sigma}\circ (\bs\ru)^{-1}w_{t}),\\[2mm]
 (\hat{d}w)_{\Sigma\Sigma}=  (\bs\ru)\circ d_{\Sigma}\circ(\bs\ru)^{-1}w_{\Sigma}\ru^{\scriptscriptstyle\#}- \12 \ru \p_{t}\rh\ru^{\scriptscriptstyle\#}w_{t}.
\end{array}
 \]
  \item  \label{iuiu3} $
  \hat{I}= I_{0}= \one - \frac{1}{4}| \rg_{0}) (\rg_{0}|$.
  \item \label{iuiu4} We have $\hat{D}_{2}\hat{d}= \hat{d}\hat{D}_{1}$ and  $\hat{I}\hat{D}_{2}= \hat{D}_{2}\hat{I}$.  \een
\end{proposition}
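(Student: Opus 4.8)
The plan is to deduce all four statements from facts already established: Proposition~\ref{prop4.1} (the behaviour of $\nabla_0$, of $I$, and of the Hermitian forms under $\rU_1,\rU_2$), Lemma~\ref{lemmatiti.3} (the reduced form of $d$), and the algebraic relations in Lemmas~\ref{lemma2.11}, \ref{lemma2.12} and~\ref{lemma2.1}. The only genuinely new input is the structure of $D_i$ near $\Sigma$ in Gaussian normal coordinates; the rest is conjugation.

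For \eqref{iuiu1} I would first expand $D_i=-\square+(\text{order }0)$ in the splitting $\rg=-dt^2+\rh_t$. Using \eqref{etiti.1}--\eqref{etiti.2} and $\nabla_{\p_t}\p_t=0$, a computation on sections of $V_1$ and $V_2$ shows that the only time derivatives occurring in $\square$ sit in $-\nabla_0^2$ and in a single drift term $-\trace(\rer)\,\nabla_0$; everything else involves only derivatives along $\Sigma$ and multiplication operators. Thus
\[
D_i=-\nabla_0^{*}\nabla_0+S_i(t)+R_i(t),
\]
where $\nabla_0^{*}$ is the adjoint of $\nabla_0$ for $(\cdot|\cdot)_{V_i(M)}$ (an integration by parts against $\dvol_\rg=|\rh_t|^{1/2}dt\,d\rx$ gives $\nabla_0^{*}=-\nabla_0-\trace(\rer)$), $S_i(t)$ acts only along $\Sigma$, has order $2$ with principal symbol ${\rm k}\dual\rh_t^{-1}{\rm k}\,\one_{V_i}$, and $R_i(t)$ has order $0$. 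Now conjugate by $T\rU_i$. By Prop.~\ref{prop4.1}\,\eqref{it:U5}, $\rU_i\nabla_0\rU_i^{-1}=\p_t$, and since $\bs_t$ is scalar with $\p_t\log\bs_t=\tfrac14\p_t\log|\rh_t|=\tfrac12\trace(\rer)$ one gets $(T\rU_i)\nabla_0(T\rU_i)^{-1}=\p_t-\tfrac12\trace(\rer)=:\p_t+m(t)$. By \eqref{etiti.5} and Prop.~\ref{prop4.1}\,\eqref{it:U1}--\eqref{it:U2}, $T\rU_i$ is pseudo-unitary from $(\cinf(M;V_i),(\cdot|\cdot)_{V_i(M)})$ to $\cinf(M;V_i)$ equipped with the $t$-independent form $\int_I(\cdot|\cdot)_{\hat V_i(\Sigma)}dt$, so $(T\rU_i)\nabla_0^{*}(T\rU_i)^{-1}$ is the adjoint of $\p_t+m$ for that form, namely $-\p_t+m$ (as $m$ is a real scalar). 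Hence
\[
\hat D_i=-(-\p_t+m)(\p_t+m)+\tilde S_i(t)+\tilde R_i(t)=\p_t^2+\hat a_i(t),\quad \hat a_i(t)=\tilde S_i(t)+\tilde R_i(t)+\p_t m-m^2,
\]
the drift having cancelled. Conjugation by the scalar $\bs_t$ and the fibrewise isomorphism $\rU_i$ preserves the principal symbol, so $\hat a_i$ has principal symbol ${\rm k}\dual\rh_t^{-1}{\rm k}\,\one_{V_i}$; the membership in $\cinfb(I;\Psi^2(\Sigma;V_i))$ comes from Hypothesis~I and Thm.~\ref{th-omar}, which make all coefficients ($\rh_t,\rer,\ru,\bs_t$, the curvature of $\rg$) bounded with all derivatives in the sense of bounded geometry; and $\hat a_i(t)=\hat a_i(t)^{*}$ for $(\cdot|\cdot)_{\hat V_i(\Sigma)}$ because $\hat D_i$ is self-adjoint for $\int_I(\cdot|\cdot)_{\hat V_i(\Sigma)}dt$ (as $D_i=D_i^{*}$ by Lemmas~\ref{lemma2.11} and~\ref{lemma2.12}) while $\p_t^2$ already is.

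Parts \eqref{iuiu2}--\eqref{iuiu4} are then short. Since $T$ is the same scalar multiplication $\bs_t$ on sections of $V_1$ and of $V_2$, $\hat d=\bs_t(\rU_2\,d\,\rU_1^{-1})\bs_t^{-1}$, and \eqref{iuiu2} follows by inserting the three formulas of Lemma~\ref{lemmatiti.3} and commuting $\bs_t$ through, again using $\p_t\log\bs_t=\tfrac12\trace(\rer)$ to generate the $-\tfrac12\trace(\rer)$ terms and rewriting $\bs_t\,\ru\,d_\Sigma\,\ru^{-1}\bs_t^{-1}=(\bs\ru)\circ d_\Sigma\circ(\bs\ru)^{-1}$. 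For \eqref{iuiu3}, Prop.~\ref{prop4.1}\,\eqref{it:U4} says $\rU_2 I\rU_2^{-1}$ is the trace reversal $I_0=\one-\tfrac14|\rg_0)(\rg_0|$ for $\rg_0$, a fibrewise endomorphism that commutes with the scalar $\bs_t$, so $\hat I=I_0$. For \eqref{iuiu4}, one conjugates by $T\rU_i$ the identities $D_2\,d=d\,D_1$ and $I\,D_2=D_2\,I$, which are Lemma~\ref{lemma2.1}\,i) and~iv) rewritten via $D_2=-\square+2\Riem$, $D_1=-\square-\Lambda$: e.g.\ $\hat D_2\hat d=(T\rU_2)D_2\,d\,(T\rU_1)^{-1}=(T\rU_2)d\,D_1\,(T\rU_1)^{-1}=\hat d\,\hat D_1$, and similarly $\hat I\hat D_2=\hat D_2\hat I$.

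I expect the main obstacle to be the computation behind \eqref{iuiu1}: one has to carry out the lengthy (but standard) evaluation of $\square$ on symmetric $(0,1)$- and $(0,2)$-tensors in Gaussian normal coordinates, isolate its first-order part in $\nabla_0$ and verify it equals exactly $-\trace(\rer)\,\nabla_0$ — equivalently, that $-\square+\nabla_0^{*}\nabla_0$ has no time derivatives — since this is precisely what makes the normalization $\bs_t=|\rh_t|^{1/4}|\rh_0|^{-1/4}$ kill the drift term and turn $\hat D_i$ into $\p_t^2+\hat a_i(t)$; and one must keep track of the uniform bounds on all coefficients to land in Shubin's class $\cinfb(I;\Psi^2(\Sigma;V_i))$ rather than just in smooth families of differential operators.
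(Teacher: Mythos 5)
Your proposal is correct and, for items (2)--(4), coincides with the paper's proof (Lemma \ref{lemmatiti.3} plus $\p_{t}\bs=\tfrac12\trace(\rer)\bs$ for (2), Prop.~\ref{prop4.1}~(4) and the fact that $I_{0}$ commutes with the scalar $\bs_{t}$ for (3), and conjugation of $D_{2}d=dD_{1}$, $ID_{2}=D_{2}I$ for (4)). The genuine divergence is in item (1). The paper argues indirectly: since $\hat D_{i}$ has the block-diagonal principal symbol $\tau^{2}+{\rm k}\dual\rh_{t}^{-1}{\rm k}\,\one$, it must have the form $\p_{t}^{2}+b_{i}(t,\rx)\p_{t}+a_{i}(t,\rx,\p_{\rx})$ with $b_{i}$ of order $0$, and then self-adjointness of $\hat D_{i}$ for the $t$-independent form $\int_{M}(\cdot|\cdot)_{\hat V_{i}}d\vol_{\rg_{0}}$ (inherited from $D_{i}=D_{i}^{*}$ via \eqref{etiti.5} and Prop.~\ref{prop4.1}) is used to conclude $b_{i}=0$ and $a_{i}=a_{i}^{*}$. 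You instead exhibit the drift explicitly, writing $-\square=\nab_{0}^{2}+\trace(\rer)\nab_{0}+(\text{spatial})=-\nab_{0}^{*}\nab_{0}+(\text{spatial})$ and checking that $T\rU_{i}$ conjugates $\nab_{0}$ to $\p_{t}-\tfrac12\trace(\rer)$ and $\nab_{0}^{*}$ to $-\p_{t}-\tfrac12\trace(\rer)$, so the first-order term cancels. Your route costs the tensor computation of $\square$ in Gaussian coordinates that you correctly flag as the remaining work (one must verify that the only $\nab_{0}$-term in $h^{ij}\nab_{i}\nab_{j}$ is $-h^{ij}\Gamma^{0}_{ij}\nab_{0}=-\trace(\rer)\nab_{0}$, a scalar multiple of the identity on $V_{i}$), but it buys two things: it makes transparent why the normalization $\bs_{t}=|\rh_{t}|^{1/4}|\rh_{0}|^{-1/4}$ is exactly the one that kills the drift, and it supplies the reality/scalarity of the drift coefficient, which is needed to upgrade the bare consequence of self-adjointness ($b_{i}=-b_{i}^{*}$) to $b_{i}=0$ --- a point the paper's proof passes over quickly. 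Both arguments are valid.
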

\proof 
\eqref{iuiu4} is straightforward. 

Observe  that $\hat{D}_{i}$ has the same principal symbol as $D_{i}$,  i.e.~$\xi\cdot \rg^{-1}\xi\, \one_{V_i}$. Therefore,
\[
\hat{D}_{i}= \p_{t}^{2}+ b_{i}(t, \rx)\p_{t}+a_{i}(t, \rx, \p_{x}), 
\]
where $\sigma_{\rm pr}(a_{i})(t, \rx, k)= 
k_{i}\rh^{ij}(t, \rx)k_{j}\one$  and $\hat{a}_{i}\in \cinf(I; \Psi^{2}(\Sigma; V_{i}))$, $b_{i}\in \cinf(I; \Psi^{0}(\Sigma; V_{i}))$.

Since $D_{i}$ is self-adjoint for  $(\cdot| \cdot)_{V_{i}(M)}$, we deduce from 
 Prop.~\ref{prop4.1} and \eqref{etiti.5} that $\hat{D}_{i}$ is self-adjoint for the Hermitian form:
\[
(u| u)_{\hat{V}_{i}(M)}= \int_{M}(u| u)_{\rg_{0}}d\vol_{\rg_{0}}.
\]
This implies that $b_{i}(t, \rx)=0$ and that $a_{i}(t, \rx, \p_{\rx})$ is self-adjoint for the scalar product $(\cdot| \cdot)_{\hat{V}_{i}(\Sigma)}$, which proves \eqref{iuiu1}. 

 We have $\p_{t}|\rh_{t}|= 2|\rh_{t}|\trace(r)$ hence $\p_{t}\bs=  \12 \trace(r) \bs$.  Using \Lemma \ref{lemmatiti.3} we obtain \eqref{iuiu2}. Finally \eqref{iuiu3} follows from Prop.~\ref{prop4.1} (4). \qed
\subsubsection{Gauge invariance}\label{sssect.gaugeinv}
We can write $\hat{d}$ in the form 
\[
\hat{d}= \hat{d}_{0}(t)\p_{t}+ \hat{d}_{1}(t),
\]
where $\hat{d}_{i}\in \cinfb(I; \Diff^{i}(\Sigma; V_{1}, V_{2}))$.  
An easy computation shows that the gauge identity $\hat{D}_{2}\hat{d}= \hat{d}\hat{D}_{1}$ is equivalent to
\begin{equation}
\label{e3.1}
\begin{array}{rl}
i)& \p_{t}\hat{d}_{0}=0,\\[2mm]
ii)& 2 \p_{t}\hat{d}_{1}+ \hat{a}_{2}\hat{d}_{0}- \hat{d}_{0}\hat{a}_{1}=0,\\[2mm]
iii)& \p_{t}^{2}\hat{d}_{1}+ \hat{a}_{2}\hat{d}_{1}- \hat{d}_{1}\hat{a}_{1}- \hat{d}_{0}\p_{t}\hat{a}_{1}=0.
\end{array}
\end{equation}
\section{Hadamard and \calde projectors}\label{sec01}\init

In this section we first revisit the construction of Hadamard projectors for second order hyperbolic equations acting on sections of  Hermitian vector bundles. 

Hadamard projectors  act on distributional Cauchy data and project on Cauchy data whose solutions have wavefront sets in one of the two energy shells $\cN^{\pm}$. If the Hermitian bundle is Hilbertian, i.e.~if the fiber scalar product is positive definite, they produce Hadamard states for the associated quantum fields. In general they produce only Hadamard pseudo-states.

We then consider second order elliptic equations, typically obtained by Wick rotation  of the hyperbolic equations in the time variable, and we construct the associated \calde projectors.  We also study Dirichlet-to-Neumann maps, which will be important in later sections. Finally we show that Hadamard and \calde projectors coincide modulo smoothing operators.

\subsection{$\Psi$DO calculus on manifolds of bounded geometry}\label{sec01.-1}
The constructions in this section rely on a global pseudodifferential calculus on a Cauchy surface $\Sigma$. Namely, we  use  Shubin's calculus which we now quickly recall.

Let $(M, \hat{g})$ be a Riemannian manifold of bounded geometry and $V\xrightarrow{\pi}M$ a finite rank complex vector bundle of bounded geometry. 

One can then define for $m\in \rr$ the symbol classes $S^{m}(T^{*}M; L(V))$ of poly-homogeneous symbols, see \cite{Sh} or \cite[Sect.~5]{GS}. Using a bounded atlas $\{(U_{i}, \psi_{i})\}_{i\in \nn}$ and associated local trivializations of $V$ and partition of unity $1= \sum_{i}\chi_{i}^{2}$ one can define a quantization map
\[
\Op: S^{m}(T^{*}M; L(V))\to L(\coinf(M; V)),
\]
$\Op(a)$ being a (classical) pseudodifferential operator of order $m$.  Choosing a different atlas, trivializations or partition of unity produces of course in general a different quantization map $\Op'$. However,  
\[
\Op(a)- \Op'(a)\in \cW^{-\infty}(M; V),
\]
where
\[
\cW^{-\infty}(M; V)\defeq \medcap_{m\in \nn}B(H^{-m}(M; V), H^{m}(M; V)),
\]
is an ideal of smoothing operators.  
Similarly 
if $\Omega\subset M$ is an open set we set
 \[
 \cW^{-\infty}(\Omega; V)\defeq\medcap_{m\in \nn}B(H^{-m}(\Omega; V),H^{m}(\Omega; V)),
 \]
and   if $(M_{i}, \hat{g}_{i})$ and $V_{i}\xrightarrow{\pi}M_{i}$ are of bounded geometry: \[
\cW^{-\infty}(M_{1}, M_{2}; V_{1}, V_{2})\defeq \medcap_{m\in \nn}B(H^{-m}(M_{1}; V_{1}), H^{m}(M_{2}; V_{2})).
 \]
One sets then
\[
\Psi^{m}(M; V)= \Op(S^{m}(T^{*}M; L(V)))+ \cW^{-\infty}(M; V),
\]
and $\Psi^{\infty}(M; V)= \bigcup_{m\in \rr}\Psi^{m}(M; V)$.  

We refer the reader to \cite[App.~1]{Sh} and \cite[Sect.~5]{GS} for more details.

 \subsection{Lorentzian case}\label{sec01.2}
We set $M= I_{t}\times \Sigma_{\rx}$, where $ I\subset \rr$ is an interval with $0\in \mathring{I}$ and $(\Sigma, \rh_{0})$ a $d$-dimensional Riemannian manifold of bounded geometry. We set $\Sigma_{t}= \{t\}\times \Sigma$ and identify $\Sigma_{0}$ with $\Sigma$.
The dual variables to $(t, \rx)$ are denoted by $(\tau, {\rm k})$.

 We fix a $t$-dependent Riemannian metric on $\Sigma$,
 \[
 \rh:I \ni t\mapsto \rh(t)\in \cinfb(I; \BT_{2}^{0}(\Sigma, \rh_{0})).
 \]
  We assume that $\rh(0)= \rh_{0}$ and for ease of notation we often denote  $\rh(t)$  by $\rh_{t}$.
 
 We equip $M$ with the Lorentzian metric
 \[
 \rg\defeq -dt^{2}+ \rh_{t}d\rx^{2}.
 \]

\subsubsection{Hermitian bundle}\label{sec01.2.1}
We fix a finite rank complex vector bundle $V\xrightarrow{\pi}\Sigma$ of bounded geometry  over $(\Sigma, \rh_{0})$.  We still denote by $V$  the vector bundle over $M$: $I\times V\xrightarrow{\pi} M$  which is a vector bundle  with the same fibers as $V$. 
We have for example
\[
\cinfb(M; V)\sim \cinfb(I; \cinfb(\Sigma; V)).
\]
We assume that  $V\xrightarrow{\pi}M$ is equipped with a non-degenerate fiberwise {\em Hermitian} structure $(\cdot| \cdot)_{V}$, which is assumed to be  independent of $t$.
 
 We  fix a reference fiberwise {\em Hilbertian} structure $(\cdot| \cdot)_{\tV}$ on the fibers of $V$ which is also independent of $t$.

 We denote by $(\cdot| \cdot)_{\tV}$, $(\cdot| \cdot)_{V}$ the same Hermitian structures acting on  the fibers of $V\xrightarrow{\pi}\Sigma$. If $x\in M$ and $u, v\in V_{x}$ we have 
 \[
 (u| v)_{V}= (u| \tau_{x}v)_{\tV},  \ \ \tau_{x}\in L(V_{x}),
 \]
 and we denote by $\tau\in \cinf(M; L(V))$ the corresponding section, which is independent on $t$.
  
 Note that $\tau= \tau^{*}$. By polar decomposition, after possibly changing $(\cdot| \cdot)_{\tV}$, we can assume that
 \[
 \tau^{*}\tau= \one,\hbox{ i.e. } \tau\hbox{ is unitary for }(\cdot| \cdot)_{\tV}.
 \]
 We assume that the Hermitian structures $(\cdot| \cdot)_{V}$ and  $(\cdot| \cdot)_{\tV}$, and hence $\tau$, are of bounded geometry.
 
  If $a\in L(V_{x})$ for  $x\in M$ we denote by $a^{*}$, resp.~ $a^{\star}$, the adjoints of $a$ for $(\cdot| \cdot)_{\tV}$, resp.~ $(\cdot| \cdot)_{V}$. Then,
   \begin{equation}
 \label{e01.1}
 a^{\star}= \tau^{-1}_x a^{*}\tau_x \end{equation}
for some $\tau_{x}\in L(V_{x})$.

For $u, v\in \cinf_{\rm sc}(M; V)$ we set
 \[
 \begin{array}{l}
  (u|v)_{\tV(\Sigma_{t})}\defeq\int_{\Sigma_{t}}(u|v)_{\tV}|\rh_{0}|^{\12}d\rx,\\[2mm]
  (u|v)_{V(\Sigma_{t})}\defeq\int_{\Sigma_{t}}(u|v)_{V}|\rh_{0}|^{\12}d\rx= (u| \tau v)_{\tV(\Sigma_{t})},\\[2mm]
  (u| v)_{\tV(M)}\defeq\int_{M}(u(t)|v(t))_{\tV}|\rh_{0}|^{\12}dtd\rx,\\[2mm]
 (u| v)_{V(M)}\defeq\int_{M}(u(t)|v(t))_{V}|\rh_{0}|^{\12}dtd\rx= (u| \tau v)_{\tV(M)}.
\end{array}
 \]
If $\Omega\subset M$ is some open set, we  also denote 
\[
\begin{array}{l}
 (u| v)_{\tV(\Omega)}\defeq\int_{\Omega}(u(t)|v(t))_{\tV}|\rh_{0}|^{\12}dtd\rx,\\[2mm]
 (u| v)_{V(\Omega)}\defeq\int_{\Omega}(u(t)|v(t))_{V}|\rh_{0}|^{\12}dtd\rx= (u| \tau v)_{\tV(\Omega)}.
\end{array}
\]
We denote by $L^{2}(\Sigma; \tV)$ the $L^{2}$ space obtained from the Hilbertian scalar product $(\cdot| \cdot)_{\tV(\Sigma)}$.
 \subsubsection{Adjoints}\label{sec01.2.2}
 If  $a\in \cinfb(I; \Diff(\Sigma; V))$, resp.~ $A\in \Diff(M; V)$, we denote by  $a^{*}$ resp.~ $A^{*}$ its formal adjoint for $(\cdot| \cdot)_{\tV(\Sigma_{t})}$ resp.~$(\cdot| \cdot)_{\tV(M)}$.  We set $\Re a= \12 (a+ a^{*})$. 
 
 We denote by $a^{\star}$ resp.~ $A^{\star}$ its formal adjoint for $(\cdot| \cdot)_{V(\Sigma_{t})}$ resp.~$(\cdot| \cdot)_{V(M)}$. As above we have:
 \[
  a^{\star}= \tau^{-1} a^{*}\tau, \quad A^{\star}= \tau^{-1}A^{*}\tau.
 \]

 \subsubsection{Hyperbolic operator}\label{sec01.2.3}
 We fix a $t$-dependent differential operator $a= a(t, \rx, D_{\rx})$ belonging to $\cinfb(I; \Diff^{2}(\Sigma; V))$ and  denote by $\sigma_{\rm pr}(a)\in \cinf(T^{*}\Sigma; L(V))$ its principal symbol.
 
 We assume the following properties:
\[
 \begin{array}{rl}
{\rm (H1)}& \ a(t) = a^{\star}(t),\  t\in I, \\[2mm]
{\rm (H2)}&\sigma_{\rm pr}(a)(t)(\rx, {\rm k})= {\rm k}\dual \rh_{t}^{-1}(\rx){\rm k}\,\one_{V}, \ t\in I.
\end{array}
\]
We set
\beq\label{e01.1b}
D\defeq \p_{t}^{2}+ a(t)\hbox{ acting on }\coinf(M; V),
\eeq
which is a hyperbolic operator with scalar principal part. Note that 
 $D= D^{\star}$, but of course $D\neq D^{*}$ in general.
 \subsubsection{Green's formula}\label{sec01.2.2b}
  We set
 \beq\label{e01.1bb}
 \varrho u= \begin{pmatrix}{u(0)} \\{\i^{-1}\p_{t}u(0)}\end{pmatrix}, \ \ u\in \cinfsc(M; V),
 \eeq
 and  equip  $\coinf(\Sigma; V\otimes \cc^{2})$ with the Hilbertian scalar product $(\cdot| \cdot)_{\tV(\Sigma)\otimes \cc^{2}}$ defined by
 \[
 (f|f)_{\tV(\Sigma)\otimes \cc^{2}}\defeq (f_{0}| f_{0})_{\tV(\Sigma)}+ (f_{1}| f_{1})_{\tV(\Sigma)}.
 \]
The Green identity in \Lemma \ref{lemmatiti.1b} takes the form:
\begin{equation}
  \label{e01.3z}
  (u| Dv)_{V(J_{\pm}(\Sigma))}- (Du| v)_{V(J_{\pm}(\Sigma))}= \pm \i^{-1}(\varrho u| q \varrho v)_{\tV(\Sigma)\otimes\cc^{2}},
  \end{equation}
  where 
 \begin{equation}
 \label{e01.3d}
 q= \mat{0}{\tau}{\tau}{0}.
 \end{equation}
The Hilbertian vector bundle $V_{\Sigma}$ in \ref{greeno} equals $\tV(\Sigma)\otimes \cc^{2}$, since the operator $D$ is of second order.
\subsubsection{Symplectic adjoint}\label{sss.sympl-adj}
 If $A$ is an operator acting on $\coinf(\Sigma; \tV\otimes \cc^{2})$ we denote by $A^{\dag}$ its adjoint for $q$, i.e.:
 \[
 A^{\dag}= q^{-1}A^{*}q.
 \]

  \subsubsection{Square root of $a(t)$}\label{sec01.2.3b} We first construct an approximate square root $a(t)$ adapted to our future needs.
 
\begin{lemma}\label{lemma01.1}
There exist $r_{-\infty}\in \cinfb(I; \Psi^{-\infty}(\Sigma; V))$ and $\epsilon\in \cinfb(I; \Psi^{1}(\Sigma; V))$ such that:
\ben
\item $\epsilon= \epsilon^{\star}, \ \epsilon^{2}= a+ r_{-\infty}$,
\item $\sigma_{\rm pr}(\epsilon)=   (k\dual \rh_{t}^{-1}(\rx)k)^{\12}\one_{V}$,
\item $\epsilon$ with domain $H^{1}(\Sigma; V)$ is $m$-accretive with $\Re \epsilon\geq 1$.
\een
\end{lemma}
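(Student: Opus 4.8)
The plan is to construct $\epsilon$ as an approximate square root of $a(t)$ in Shubin's calculus on $(\Sigma,\rh_{0})$, \emph{uniformly} in $t\in I$, then symmetrize it for the Hermitian form $(\cdot|\cdot)_{V}$, and finally perturb it by a smoothing operator to enforce accretivity; all constants below are uniform in $t$ thanks to the bounded geometry of $(\Sigma,\rh_{0})$ and the hypothesis $t\mapsto\rh_{t},\rh_{t}^{-1}\in\cinfb(I;\cdot)$. For the symbolic square root, fix $\chi_{0}\in\cinf(\rr)$ with $\chi_{0}(s)=1$ for $s\geq1$, $\chi_{0}(s)=0$ for $s\leq\tfrac12$, and set $b_{1}(t,\rx,{\rm k}):=\chi_{0}({\rm k}\dual\rh_{t}^{-1}(\rx){\rm k})\,({\rm k}\dual\rh_{t}^{-1}(\rx){\rm k})^{\12}\one_{V}$, a uniformly bounded family of elliptic order-$1$ symbols by (H2). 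Solving $\bigl(\sum_{j\geq0}\Op(b_{1-j})\bigr)^{2}\equiv a$ modulo $\Psi^{-\infty}$ order by order — the equation at order $1-j$ reads $2b_{1}b_{1-j}=(\text{an expression in }b_{1},\dots,b_{2-j}\text{ and the symbol of }a)$ and is solvable because $b_{1}$ is an invertible scalar symbol modulo $S^{-\infty}$ — and summing asymptotically, one obtains $\epsilon_{0}\in\cinfb(I;\Psi^{1}(\Sigma;V))$ with $\sigma_{\rm pr}(\epsilon_{0})=({\rm k}\dual\rh_{t}^{-1}(\rx){\rm k})^{\12}\one_{V}$ and $\epsilon_{0}^{2}=a+r^{(0)}$, $r^{(0)}\in\cinfb(I;\Psi^{-\infty}(\Sigma;V))$. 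Since $a=a^{\star}$ by (H1), the operator $\epsilon_{0}^{\star}$ is another such approximate square root with the same principal symbol (a positive scalar symbol is $\star$-self-adjoint); comparing the two full symbols order by order, using again that $b_{1}$ is an invertible scalar, shows $\epsilon_{0}-\epsilon_{0}^{\star}\in\cinfb(I;\Psi^{-\infty}(\Sigma;V))$. Hence $\epsilon_{1}:=\tfrac12(\epsilon_{0}+\epsilon_{0}^{\star})$ is self-adjoint for $(\cdot|\cdot)_{V}$, differs from $\epsilon_{0}$ by a smoothing term, has the required principal symbol, and satisfies $\epsilon_{1}^{2}=a\pmod{\cinfb(I;\Psi^{-\infty})}$. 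This already yields items (1) and (2).

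The delicate point is accretivity. The operator $\Re\epsilon_{1}=\tfrac12(\epsilon_{1}+\epsilon_{1}^{*})$ is an elliptic element of $\cinfb(I;\Psi^{1}(\Sigma;V))$ with positive principal symbol $({\rm k}\dual\rh_{t}^{-1}(\rx){\rm k})^{\12}\one_{V}\geq c\,|{\rm k}|\,\one_{V}$, so Gårding's inequality gives a uniform bound $\Re\epsilon_{1}\geq-C_{0}$ on $H^{1}(\Sigma;V)$ and, more precisely, $\Re\epsilon_{1}\geq1$ microlocally on the high-frequency region $\{{\rm k}\dual\rh_{t}^{-1}{\rm k}\geq R_{0}^{2}\}$ once $R_{0}$ is fixed large enough. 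To upgrade this to $\Re\epsilon\geq1$ everywhere, one corrects $\epsilon_{1}$ by a self-adjoint smoothing term supported, in frequency, in the complementary region: choosing $\chi=\chi(\rx,{\rm k})$ equal to $1$ on $\{{\rm k}\dual\rh_{t}^{-1}{\rm k}\leq R_{0}^{2}\}$ and to $0$ on $\{{\rm k}\dual\rh_{t}^{-1}{\rm k}\geq4R_{0}^{2}\}$, set $s:=\tfrac12\bigl((C_{0}+1)\Op(\chi)+((C_{0}+1)\Op(\chi))^{\star}\bigr)\in\cinfb(I;\Psi^{-\infty}(\Sigma;V))$, which satisfies $s=s^{\star}$. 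A frequency-space partition of unity together with the sharp Gårding inequality then yields $\Re(\epsilon_{1}+s)\geq1$ on $H^{1}(\Sigma;V)$, uniformly in $t$ (on low frequencies the summand $(C_{0}+1)\one$ compensates the bound $-C_{0}$, on high frequencies $\Re\epsilon_{1}$ already dominates). I expect that threading the Gårding constants — turning the symbol-level positivity into the clean operator bound $\geq1$ with constants uniform in $t$ — is the genuinely technical part of the argument. One then sets $\epsilon:=\epsilon_{1}+s$: it is self-adjoint for $(\cdot|\cdot)_{V}$, has principal symbol $({\rm k}\dual\rh_{t}^{-1}(\rx){\rm k})^{\12}\one_{V}$, and, since $\cW^{-\infty}$ is a two-sided ideal, $\epsilon^{2}=\epsilon_{1}^{2}+\epsilon_{1}s+s\epsilon_{1}+s^{2}=a+r_{-\infty}$ with $r_{-\infty}\in\cinfb(I;\Psi^{-\infty}(\Sigma;V))$.

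Finally, for $m$-accretivity: $\epsilon$ with domain $H^{1}(\Sigma;V)\subset L^{2}(\Sigma;\tV)$ is closed by the elliptic estimate $\|u\|_{H^{1}}\leq C(\|\epsilon u\|_{L^{2}}+\|u\|_{L^{2}})$, and accretive since $\Re(\epsilon u|u)_{\tV(\Sigma)}=(\Re\epsilon\,u|u)_{\tV(\Sigma)}\geq\|u\|_{\tV(\Sigma)}^{2}$. For $\Re\lambda>-1$ the estimate $\|(\epsilon+\lambda)u\|_{\tV(\Sigma)}\,\|u\|_{\tV(\Sigma)}\geq\Re\bigl((\epsilon+\lambda)u|u\bigr)_{\tV(\Sigma)}\geq(1+\Re\lambda)\|u\|_{\tV(\Sigma)}^{2}$ shows $\epsilon+\lambda$ is injective with closed range; applying the same estimate to the elliptic operator $\epsilon^{*}+\bar{\lambda}$ — for which $\Re\epsilon^{*}=\Re\epsilon\geq1$ as well — gives $\Ran(\epsilon+\lambda)^{\perp}=\Ker(\epsilon^{*}+\bar{\lambda})=\{0\}$, so $\epsilon+\lambda$ is onto. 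Hence $\epsilon$ is $m$-accretive (indeed boundedly invertible) with $\Re\epsilon\geq1$, which is item (3), completing the proof.
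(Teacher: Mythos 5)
Your route is genuinely different from the paper's. You build $\epsilon$ "from the symbol up'': an asymptotic symbolic square root $\epsilon_0$, a $\star$-symmetrization $\epsilon_1=\tfrac12(\epsilon_0+\epsilon_0^\star)$ (your uniqueness-mod-$\Psi^{-\infty}$ argument for why $\epsilon_0-\epsilon_0^\star$ is smoothing is correct, since the principal symbol is scalar and elliptic), and then a smoothing correction to force positivity. The paper works "from the operator down'': it first regularizes $a$ itself, setting $2a_{\rm ref}=\tfrac12(a+a^*)+\tau^{-1}\tfrac12(a+a^*)\tau$ (self-adjoint for \emph{both} structures, with $a-a_{\rm ref}$ of order $1$ thanks to (H2)) and $r_{-\infty}=R\chi_R(a_{\rm ref})$, obtains $\Re(a+r_{-\infty})\geq 1$ from the \emph{exact} spectral theorem for $a_{\rm ref}$ with no G\r{a}rding loss, and then defines $\epsilon$ as Kato's unique $m$-accretive square root of $a+r_{-\infty}$. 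That choice buys three things at once: the identity $\epsilon^2=a+r_{-\infty}$ holds exactly, $m$-accretivity and $\Re\epsilon\geq 1$ come for free from Kato (Pb.~V.3.39), and only the $\Psi$DO nature of $\epsilon$ has to be checked a posteriori from the resolvent-integral formula. Your items (1), (2) and the closedness/surjectivity part of (3) are fine.

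The soft spot — which you correctly flag yourself — is the lower bound $\Re\epsilon\geq 1$, and as written it does not close. With a \emph{fixed} cutoff radius $R_0$ and coefficient $C_0+1$, the sharp G\r{a}rding remainder for the order-one symbol $\lambda+(C_0+1)\chi$ is a constant $C_2$ depending on the $S^1$-seminorms of that symbol, hence on $C_0$ and $R_0$, and there is no reason to have $C_0+1-C_2\geq 1$; the phrase "on low frequencies the summand $(C_0+1)\one$ compensates the bound $-C_0$'' silently converts a symbol-level inequality into an operator inequality. A second, smaller issue: you symmetrize $s$ for $\star$ so that $\epsilon=\epsilon^\star$, but accretivity concerns $\Re s=\tfrac12(s+s^*)=\tfrac12(s+\tau s\tau^{-1})$, which differs from $(C_0+1)\Op(\chi)$ by $\tau$-commutator terms that are smoothing but not small. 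The argument can be repaired — e.g.~let the coefficient and the cutoff radius grow together, $s\sim R\,\Op(\chi(\cdot/R))$, so that the $S^1$-seminorms of the combined symbol stay bounded while its minimum tends to $+\infty$; or, better, replace the frequency cutoff by a spectral cutoff $R\chi_R(A)$ of a self-adjoint elliptic reference operator $A$, which is exactly the paper's device one order up. As it stands, though, the positivity step is a genuine gap rather than a routine verification.
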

\proof 
  Note first that $a$ is uniformly elliptic in $\cinfb(I; \Psi^{2}(\Sigma; V))$ hence has a parametrix $a^{(-1)}\in \cinfb(I; \Psi^{-2}(\Sigma; V))$. Therefore $a(t)$ is closed with domain $H^{2}(\Sigma; \tV)$ and $\Dom a^{*}(t)= H^{2}(\Sigma; \tV)$. We set 
\[
2a_{\rm ref}\defeq \12(a+ a^{*})+ \tau^{-1}\12(a+ a^{*})\tau.
\]
It satisfies:
\begin{equation}
\label{e01.2}
\begin{array}{rl}
i)&a- a_{\rm ref}\in \cinfb(I; \Diff^{1}(\Sigma; V)),\\[2mm]
ii)&a_{\rm ref}= a_{\rm ref}^{*}= a_{\rm ref}^{\star}.\\[2mm]
\end{array}
\end{equation}
In fact, \eqref{e01.2} {\it i)} follows from (H2), and \eqref{e01.2} {\it ii)} from the fact that $\tau= \tau^{*}= \tau^{-1}$.

Clearly $a_{\rm ref}$ with domain $H^{2}(\Sigma; \tV )$ is self-adjoint for $(\cdot| \cdot)_{\tV(\Sigma)}$. We fix  $\chi\in \coinf(\rr)$ with $\chi(0)=1$ and  set $\chi_{R}(\lambda)= \chi(R^{-1}\lambda)$ for $R\geq 1$.    Since $a_{\rm ref}$ is  elliptic we know that $\chi_{R}(a_{\rm ref})\in \cinfb(I; \Psi^{-\infty}(\Sigma; V))$. 

We set now
\begin{equation}
\label{e01.2b}
r_{-\infty}=  R\chi_{R}(a_{\rm ref}),
\end{equation}
where $R\gg 1$ will be chosen below. From \eqref{e01.2}  we deduce that
\begin{equation}
\label{e01.2c}
r_{-\infty}= r_{-\infty}^{\star}= r_{-\infty}^{*}\end{equation}
and
\begin{equation}
\label{e01.2d}
\Re (a+ r_{-\infty}) = a_{\rm ref}+ R\chi_{R}(a_{\rm ref})+ a_{1}
\end{equation}
for some $a_{1}\in \cinfb(I; \Diff^{1}(\Sigma; V))$.

By  the self-adjoint functional calculus we can find $R\gg 1$ such that:
\beq\label{e01.3e}
a_{\rm ref} +  R\chi(R^{-1}a_{\rm ref})\geq \12 a_{\rm ref}+ 1,
\eeq
and hence  by \eqref{e01.2d}
\begin{equation}
\label{e01.2e}
 \Re (a+ r_{-\infty})\geq 1.
\end{equation}
The same inequality is valid for $(a+ r_{-\infty})^{*}$. If follows that For $\Re z\geq 0 $, $a+ r_{-\infty}+ z: H^{2}(\Sigma; \tV)\to L^{2}(\Sigma; \tV)$ is injective with a dense range.   \eqref{e01.2e} also implies that $\Ran( a+ r_{-\infty}+z)$ is closed, using that $a$ is closed. Therefore $a+ r_{-\infty}$ is $m$-accretive. By \cite[Thm.~V.3.35]{K1} $a+ r_{-\infty}$ has a unique $m$-accretive square root:
\beq\label{e01.2f}
\bea
\epsilon &\defeq  \pi^{-1}\int_{0}^{+\infty} \lambda^{-\12}(a+ r_{-\infty}+ \lambda)^{-1}(a+ r_{-\infty})d\lambda\\[2mm]
&= 2\pi^{-1}\int_{0}^{+\infty}(a+ r_{-\infty}+ s^{2})^{-1}(a+ r_{-\infty})d\lambda,
\eea
\eeq
where the integrals are strongly convergent on $\Dom a = H^{2}(\Sigma; \tV)$.
 By \cite[Pb.~V.3.39]{K1} we have
 \beq\label{e01.10}
 \Re\epsilon\geq 1.
 \eeq
Arguing  as in  \cite[Subsect.~5.3]{GS}, using the representation of $\epsilon$ in the second line of \eqref{e01.2f},  we obtain that  $\epsilon\in \cinfb(I; \Psi^{1}(\Sigma; V))$ with 
 \[
 \sigma_{\rm pr}(\epsilon)= (k\dual \rh_{t}^{-1}(\rx)k)^{\12}\one_{V}.
  \]
 The operator $\epsilon(t)$ with domain $H^{1}(\Sigma; \tV)$ is closed, elliptic, $m$-accretive   and invertible by \eqref{e01.10}, hence $\epsilon^{-1}\in \cinfb(I; \Psi^{-1}(\Sigma; V))$.
From \eqref{e01.2f} we obtain that
$\epsilon= \epsilon^{\star}$. \qed

 \subsubsection{Factorization of $D$}\label{sec01.2.4}
\begin{proposition}\label{prop01.1}
 There exists $b\in \cinfb(I; \Psi^{1}(\Sigma; V))$ unique modulo $\cinfb(I; \Psi^{-\infty}(\Sigma; V))$ such that
 \[
 \begin{array}{rl}
 i)& b= \epsilon+ \cinfb(I; \Psi^{0}(\Sigma; V)),\\[2mm]
 ii)&\i\p_{t}b- b^{2}+ a= r_{-\infty}\in \cinfb(I; \Psi^{-\infty}(\Sigma; V)),\\[2mm]
 iii)&\begin{array}{l}
 b+ b^{\star}= (2\epsilon)^{\12}(\one + r_{-1})^{2}(2\epsilon)^{\12}, \\[2mm]
  r_{-1}\in \cinfb(I; \Psi^{-1}(\Sigma; V)),\ r_{-1}= r_{-1}^{\star},  \ \| r_{-1}\|\leq \frac{2}{3}.
\end{array}
  \end{array}
 \]
 
\end{proposition}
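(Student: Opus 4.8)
The plan is to obtain $b$ as the ``positive-frequency root'' in a microlocal factorization $D=(\p_{t}+\i b)(\p_{t}-\i b)$ modulo a smoothing remainder. Expanding, $(\p_{t}+\i b)(\p_{t}-\i b)=\p_{t}^{2}-\i(\p_{t}b)+b^{2}$, so this identity holds modulo $\cinfb(I;\Psi^{-\infty}(\Sigma;V))$ precisely when $b$ solves the Riccati equation $\i\p_{t}b-b^{2}+a\in\cinfb(I;\Psi^{-\infty}(\Sigma;V))$ of item (ii). I would construct such a $b$ by successive approximation in the Shubin calculus of Subsect.~\ref{sec01.-1}, starting from the approximate square root $\epsilon$ of \Lemma \ref{lemma01.1}, and then read off (i), the uniqueness statement, and (iii) from the construction.

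For the construction, write $b=\epsilon+c$ with $c\in\cinfb(I;\Psi^{0}(\Sigma;V))$ to be found. Using $\epsilon=\epsilon^{\star}$ and $\epsilon^{2}=a+r_{-\infty}$ from \Lemma \ref{lemma01.1}, item (ii) is equivalent to
\[
\epsilon c+c\epsilon=\i\p_{t}\epsilon+\i\p_{t}c-c^{2}\quad\text{modulo }\cinfb(I;\Psi^{-\infty}(\Sigma;V)).
\]
The anticommutator map $\ad^{+}_{\epsilon}\colon X\mapsto\epsilon X+X\epsilon$ sends $\cinfb(I;\Psi^{m}(\Sigma;V))$ into $\cinfb(I;\Psi^{m+1}(\Sigma;V))$ and is elliptic, its leading symbol being $2(k\dual\rh_{t}^{-1}k)^{\12}$ times that of $X$; hence it admits a parametrix $Q$ of order $-1$ with $\ad^{+}_{\epsilon}Q=\one$ and $Q\ad^{+}_{\epsilon}=\one$ modulo smoothing. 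Applying $Q$ turns the equation into the fixed point problem $c=Q(\i\p_{t}\epsilon)+Q(\i\p_{t}c-c^{2})$, solved by the iteration $c^{(1)}=Q(\i\p_{t}\epsilon)\in\cinfb(I;\Psi^{0}(\Sigma;V))$, $c^{(n+1)}=Q(\i\p_{t}\epsilon+\i\p_{t}c^{(n)}-(c^{(n)})^{2})$. Since $\p_{t}$ preserves orders in $\cinfb(I;\Psi^{\bullet})$ while $Q$ lowers them by one, an induction gives $c^{(n+1)}-c^{(n)}\in\cinfb(I;\Psi^{-n}(\Sigma;V))$, so a Borel asymptotic sum produces $c$ with $\i\p_{t}b-b^{2}+a\in\cinfb(I;\Psi^{-\infty}(\Sigma;V))$, which proves (i) and (ii).

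Uniqueness is an induction on orders: if $b,b'$ both satisfy (i), (ii), then $d\defeq b-b'\in\cinfb(I;\Psi^{0}(\Sigma;V))$ and, subtracting the two Riccati equations and using $b^{2}-b'^{2}=bd+db'$, one gets $\epsilon d+d\epsilon=\i\p_{t}d-(b-\epsilon)d-d(b'-\epsilon)$ modulo smoothing; the right-hand side lies in $\cinfb(I;\Psi^{-j}(\Sigma;V))$ whenever $d$ does, whereas the order $1-j$ principal symbol of the left-hand side is $2(k\dual\rh_{t}^{-1}k)^{\12}\sigma_{-j}(d)$, which must then vanish, so $d\in\cinfb(I;\Psi^{-j-1}(\Sigma;V))$ and hence $d\in\cinfb(I;\Psi^{-\infty}(\Sigma;V))$. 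For (iii), note $b+b^{\star}=2\epsilon+(c+c^{\star})$ with $c+c^{\star}\in\cinfb(I;\Psi^{0}(\Sigma;V))$ and $(c+c^{\star})^{\star}=c+c^{\star}$. Since $\epsilon$ is $m$-accretive with $\Re\epsilon\geq1$, it has an invertible $m$-accretive square root $(2\epsilon)^{\12}\in\cinfb(I;\Psi^{\12}(\Sigma;V))$, and its integral representation, treated with $\star$ as in the proof of \Lemma \ref{lemma01.1}, gives $(2\epsilon)^{\12}=((2\epsilon)^{\12})^{\star}$. Then
\[
b+b^{\star}=(2\epsilon)^{\12}(\one+s)(2\epsilon)^{\12},\qquad s\defeq(2\epsilon)^{-\12}(c+c^{\star})(2\epsilon)^{-\12}\in\cinfb(I;\Psi^{-2}(\Sigma;V)),\quad s=s^{\star}.
\]
Inspecting \Lemma \ref{lemma01.1}, enlarging the parameter $R$ makes $\Re\epsilon$ arbitrarily large while keeping the leading symbol of $c$, hence $\|c+c^{\star}\|$, bounded, so $\|(2\epsilon)^{-\12}\|$ and therefore $\|s\|$ can be made $\leq\12$; then $\one+s$ has spectrum in a small disk around $1$, $(\one+s)^{\12}$ is defined by a Cauchy integral over a contour symmetric about $\rr$, and $r_{-1}\defeq(\one+s)^{\12}-\one$ satisfies $r_{-1}\in\cinfb(I;\Psi^{-2}(\Sigma;V))\subset\cinfb(I;\Psi^{-1}(\Sigma;V))$, $r_{-1}=r_{-1}^{\star}$, $(\one+r_{-1})^{2}=\one+s$, and $\|r_{-1}\|\leq\sum_{k\geq1}|\tbinom{1/2}{k}|\,\|s\|^{k}\leq\tfrac{2}{3}$, which gives (iii).

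I expect the main obstacle to be the quantitative estimate in (iii) rather than the symbolic factorization: one must check that enlarging $R$ in \Lemma \ref{lemma01.1} simultaneously raises $\Re\epsilon$ — so that $\|(2\epsilon)^{-\12}\|$ is controlled, e.g.\ $\|(2\epsilon)^{-\12}\|\leq(2C)^{-\12}$ when $\Re\epsilon\geq C$, via the integral formula for the accretive square root and $\|(\one+2\lambda\epsilon)^{-1}\|\leq(1+2\lambda C)^{-1}$ — and leaves the bounds on $c$ in $\cinfb(I;\Psi^{0}(\Sigma;V))$ uniform, so that $\|s\|$ is genuinely small and the explicit constant $\tfrac{2}{3}$ is reached.
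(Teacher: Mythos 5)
Your construction of $b$ satisfying {\it i)}, {\it ii)} and the uniqueness argument are essentially the paper's: both reduce the Riccati equation to a fixed point problem for $b_{0}=b-\epsilon$ by inverting the anticommutator with $\epsilon$ to leading order (the paper multiplies by $(2\epsilon)^{-1}$ and absorbs $[\epsilon,b_{0}]$ into the lower-order remainder, you use a parametrix of $\ad^{+}_{\epsilon}$; these are equivalent), followed by the order-by-order iteration of Prop.~\ref{prop:fixed-point}.

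For {\it iii)}, however, your mechanism for achieving $\|r_{-1}\|\leq\frac{2}{3}$ has a genuine gap. You propose to enlarge the parameter $R$ \emph{inside} \Lemma \ref{lemma01.1}, i.e.~to change $r_{-\infty}$ and hence $\epsilon$ itself, so that $\Re\epsilon\geq C$ with $C$ large and $\|(2\epsilon)^{-\12}\|\leq(2C)^{-\12}$. But then $c=b_{0}$ is re-solved from a fixed point equation built on the new $\epsilon$, and the quantity you need to control, $\|c+c^{\star}\|_{B(L^{2})}$, is not governed by the leading symbol of $c$: it depends on the full lower-order and smoothing parts of $c$, which are only defined modulo $\cinfb(I;\Psi^{-\infty})$ and whose operator norms have no a priori uniform bound as $R\to\infty$ (note also that $\p_{t}\epsilon$ enters $c$ and is an unbounded order-one operator, so even $\|c\|$ is not finite; only the cancellation in $c+c^{\star}$ saves you, and that cancellation must be shown to be uniform in $R$). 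The paper avoids this entirely by keeping $\epsilon$ and $b_{0}$ \emph{fixed} and exploiting the mod-$\Psi^{-\infty}$ freedom in $b$ differently: it replaces $b_{0}$ by $b_{0}(1-\chi_{R}(a_{\rm ref}))$ with a fresh large $R$, which changes $b$ only by a smoothing term, and then the smallness of $s_{-1}=-(2\epsilon)^{-\12}(b_{0}(1-\chi_{R}(a_{\rm ref}))+(1-\chi_{R}(a_{\rm ref}))b_{0}^{\star})(2\epsilon)^{-\12}$ follows from the single fact that $(1-\chi_{R}(a_{\rm ref}))(2\epsilon)^{-\12}\to0$ in norm as $R\to\infty$, with no need to track how $c$ depends on $R$. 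A minor further point: from $c+c^{\star}\in\cinfb(I;\Psi^{0})$ your order bookkeeping only gives $s\in\cinfb(I;\Psi^{-1})$, not $\Psi^{-2}$ (you would need the order-zero cancellation in $c+c^{\star}$ for that); this does not affect the statement, since $r_{-1}\in\cinfb(I;\Psi^{-1})$ is all that is required.
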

\begin{proof} We first solve {\it i)} and {\it ii)}.  Let  $b_{0}\in \cinfb(I, \Psi^{0}(\Sigma; V))$. A routine computation shows that
\[
\i\p_{t}(\epsilon+ b_{0})- (\epsilon+ b_{0})^{2}+ a= 0
\]
iff 
\beq
\label{e01.3c} b_{0}= (2\epsilon)^{-1}\i \p_{t}\epsilon+ F(b_{0}),
\eeq
for
\[ F(b_{0})= (2\epsilon)^{-1}(\i \p_{t}b_{0}+ [\epsilon, b_{0}]- b_{0}^{2}). 
\]
By  Prop.~\ref{prop:fixed-point} we find  $b_{0}\in \cinfb(I, \Psi^{0}(\Sigma; V))$, unique modulo $\cinfb(I, \Psi^{-\infty}(\Sigma; V))$ solving  \eqref{e01.3c}  modulo $\cinfb(I, \Psi^{-\infty}(\Sigma; V))$.

Next we take $\chi$ as in \eqref{e01.3e} and set
\[
b=  \epsilon+ b_{0}(1- \chi_{R}(a_{\rm ref})),
\]
where $R\gg 1$ will be fixed below.
Since $\chi_{R}(a_{\rm ref})\in \cinfb(I; \Psi^{-\infty}(\Sigma; V))$, we see that $b$ satisfies {\it i)} and {\it ii)}.  

It remains to check {\it iii)}.
 By the same argument as in \Lemma \ref{lemma01.1} we can construct the square root $(2\epsilon)^{\12}$, which belongs to $\cinfb(I; \Psi^{\12}(\Sigma; V))$ and is invertible with $(2\epsilon)^{\12\star}= (2\epsilon)^{\12}$. We have
 \[
 b+ b^{\star}= (2\epsilon)^{\12}(1- s_{-1})(2\epsilon)^{\12},
 \]
 where
 \[
 s_{-1}= -(2\epsilon)^{-\12}(b_{0}(1-\chi_{R}(a_{\rm ref}))+ (1-\chi_{R}(a_{\rm ref}))b_{0}^{\star})(2\epsilon)^{-\12}.
 \]
 We see that $s_{-1}= s_{-1}^{\star}$, $s_{-1}\in \cinfb(I; \Psi^{-1}(\Sigma; V))$ and since $(1-\chi_{R}(a_{\rm ref}))(2\epsilon)^{-\12}$ tends to $0$ in norm when $R\to \infty$ we can fix $R\gg 1$ such that $\|s_{-1}\|\leq \frac{1}{4}$.  We set now
 \[
 1+r_{-1}= (1- s_{-1})^{\12}= \sum_{n=0}^{\infty}c_{n}(s_{-1})^{n},
 \]
where  $c_{n}= \frac{f^{(n)}(0)}{n!}$ with  $f(x)= (1-x)^{\12}$ satisfies $|c_{n}|\leq 2$ for $n\in\nn$. It follows that
 \[
 \|r_{-1}\|\leq 2 \sum_{n=1}^{\infty}\| s_{-1}\|^{n}= 2 \|s_{-1}\|(1- \| s_{-1}\|)^{-1}\leq \frac{2}{3}.
 \]
 Moreover 
 \[
 r_{-1}\in \cinfb(I; \Psi^{-1}(\Sigma; V)), \quad r_{-1}= r_{-1}^{\star}, \quad (1+ r_{-1})^{2}= (1-s_{-1}).
 \]
This proves {\it iii)}.  \end{proof}

\medskip

We now set
\beq\label{e01.21c}
b^{+}\defeq b, \quad b^{-}= - b^{\star}, 
\eeq
and obtain that
\begin{equation}
\label{e01.4}
\begin{array}{l}
b^{\pm}= \pm \epsilon+ \cinfb(I; \Psi^{0}(\Sigma; V)), \\[2mm]
\i \p_{t}b^{\pm}- (b^{\pm })^{2}+ a= r_{-\infty}^{\pm}, 
\end{array}
\end{equation}
for $r^{+}_{-\infty}= r_{-\infty}$, $r^{-}_{-\infty}= r_{-\infty}^{\star}$.
This is equivalent to  the two factorizations of $D$ modulo smoothing error terms:
\begin{equation}
\label{e01.5}
(\p_{t}+ \i b^{\pm})(\p_{t}- \i b^{\pm})= D - r_{-\infty}^{\pm}.
\end{equation}

 \subsubsection{Cauchy evolution}\label{sec01.2.5}
 For $s\in I$ the Cauchy problem
 \begin{equation}
 \label{e01.3b}
\begin{cases}
 Du= 0\hbox{ in }M\\
 \varrho_{s}u= f\in  \coinf(\Sigma; V\otimes \cc^{2})
\end{cases}
 \end{equation}
 is well-posed, where  
 \[
 \varrho_{s}u= \col{u(s)}{\i^{-1}\p_{t}u(s)}.
 \]
 We denote by $u= U_{s}f$ the unique solution of \eqref{e01.3b}, so that $U_{s}: \coinf(\Sigma; V\otimes \cc^{2})\to \cinf_{\rm sc}(M; V)$. For $t\in I$ we denote by 
 \[
 U(t, s)\defeq \varrho_{t}\circ U_{s}: \coinf(\Sigma; V\otimes \cc^{2})\to \coinf(\Sigma; V\otimes \cc^{2})
 \]
 the Cauchy evolution of $D$.  The Green identity \eqref{e01.3z} implies that $U(t,s)$ is pseudo-unitary for $(\cdot|q \cdot)_{\tV(\Sigma)\otimes \cc^{2}}$:
  \[
  q= U(t,s)^{*}q U(t,s), \hbox{ or equivalently }U(t,s)^{\dag}= U(s, t)\ t, s\in I,
  \]
  where as before $A^{*}$ denotes the adjoint of $A$ for $(\cdot| \cdot)_{\tV(\Sigma)\otimes \cc^{2}}$.

    \subsubsection{Factorization of the Cauchy evolution}\label{sec01.2.6}
  For a solution  $u\in\cinf_{\rm sc}(M; V)$ of $Du=0$ we set $\psi(t)= \col{u(t)}{\i^{-1}\p_{t}u(t)}$ so that
  \[
  \p_{t}\psi(t)= \i A(t)\psi(t), \quad A(t)= \mat{0}{\one}{a(t)}{0}, 
  \]
and $\psi(t)= U(t,s)\psi(s)$. Next, we define $S(t)$ by  
  \[
 S^{-1}(t)\psi(t)= \col{(\p_{t}-\i b^{-}(t)) u(t)}{(\p_{t}-\i b^{+}(t))u(t)},
  \]
which yields
\[
S= \i^{-1}\mat{\one}{-\one}{b^{+}}{-b^{-}}(b^{+}- b^{-})^{-1}, \quad S^{-1}= \i \mat{- b^{-}}{\one}{-b^{+}}{\one}.
\]
From \eqref{e01.5} we obtain that
\[
\p_{t}S^{-1}(t)\psi(t)= \i B(t) S^{-1}(t)\psi(t), 
\]
for
\[ B= \mat{- b^{-}}{0}{0}{- b^{+}}+B_{-\infty},
\]
where
\[B_{-\infty}=  \mat{r_{-\infty}^{-}}{- r_{-\infty}^{-}}{r_{-\infty}^{+}}{-r_{-\infty}^{+}}(b^{+}- b^{-})^{-1}\in \cinfb(I; \Psi^{-\infty}(\Sigma; V\otimes \cc^{2})).
\]
We find
\[
S^{*}q S= \tau(b^{+}- b^{-})^{-1}\mat{\one}{0}{0}{-\one}.
\]
Let $c= (2\epsilon)^{\12}(1+ r_{-1})$, where $r_{-1}$ is as in Prop.~\ref{prop01.1} {\it iii)}. Then $c^{\star}(b^{+}- b^{-})^{-1}c=1$,  hence 
$c^{*}\tau(b^{+}-b^{-})^{-1}c= \tau$. Setting 
\[
T\defeq S\circ\mat{c}{0}{0}{c},
\]
we have
\begin{equation}
\label{e01.7}
T^{*}q T= \mat{\tau}{0}{0}{-\tau}.
\end{equation}
Moreover  we have
\[
\p_{t}T^{-1}(t)\psi(t)= \i C(t)T^{-1}(t)\psi(t), 
\]
for
\[
C= \mat{\epsilon^{+}}{0}{0}{\epsilon^{-}}+ C_{-\infty}(t)
\]
where
\[
\epsilon^{\pm}= \pm c^{-1}b^{\pm}c+ \i c^{-1}\p_{t}c= \pm \epsilon+ \cinfb(I; \Psi^{0}(\Sigma; V)),
\]
and
\[
C_{-\infty}= c^{-1}B_{-\infty}c\in \cinfb(I; \Psi^{-\infty}(\Sigma; V\otimes \cc^{2})).
\]
In the next proposition we use the notation recalled in \ref{sec01.1.4}. The hypotheses of Kato's theorem are easy to check using $\Psi$DO calculus.
\begin{proposition}\label{prop01.2}
 For all $t,s\in I$ we have
 \[
\bea
 U(t,s)&= T(t)\Texp(\i \textstyle\int_{s}^{t}C(\sigma)d\sigma)T^{-1}(s)\\[2mm]
 &= T(t)\mat{\Texp(\i\int_{s}^{t}\epsilon^{+}(\sigma)d\sigma)}{0}{0}{\Texp(\i\int_{s}^{t}\epsilon^{-}(\sigma)d\sigma)}T^{-1}(s)+ R_{-\infty}(t,s).
\eea
  \]
 
\end{proposition}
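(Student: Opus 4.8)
The plan is to read both identities directly off the pointwise transport equation established just before the statement: for any $u\in\cinfsc(M;V)$ with $Du=0$, writing $\psi(t)=\varrho_{t}u$, one has $\p_{t}\bigl(T^{-1}(t)\psi(t)\bigr)=\i C(t)\,T^{-1}(t)\psi(t)$ with $C(t)=\mat{\epsilon^{+}(t)}{0}{0}{\epsilon^{-}(t)}+C_{-\infty}(t)$. The first task is to give a meaning to $\Texp(\i\int_{s}^{t}C(\sigma)d\sigma)$, i.e.~to construct the propagator of $\i C(\cdot)$ by Kato's theorem as recalled in \ref{sec01.1.4}. I would take $\cH=L^{2}(\Sigma;\tV)\otimes\cc^{2}$ with dense domain $H^{1}(\Sigma;\tV)\otimes\cc^{2}$; strong $C^{1}$-dependence on $t$ is immediate because every entry of $C$ lies in $\cinfb(I;\Psi^{\infty}(\Sigma;V))$. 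For the resolvent bound, observe that $\epsilon^{\pm}\in\cinfb(I;\Psi^{1}(\Sigma;V))$ is elliptic with \emph{real scalar} principal symbol $\pm(k\dual\rh_{t}^{-1}k)^{\12}\one_{V}$, so that $\epsilon^{\pm}-(\epsilon^{\pm})^{*}\in\cinfb(I;\Psi^{0}(\Sigma;V))$ is bounded and, by G\aa rding's inequality, $\pm\Re\epsilon^{\pm}$ is bounded from below; hence on a suitable vertical line ($\Re z_{0}^{+}\ll0$ for the top block, $\Re z_{0}^{-}\gg0$ for the bottom one) one gets $\sup_{\lambda\in\rr}\|(z_{0}^{\pm}+\i\lambda-\epsilon^{\pm}(t))^{-1}\|<\infty$ uniformly in $t$. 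Treating the two blocks separately (a single $z_{0}$ need not serve both), Kato's theorem produces $\Texp(\i\int_{s}^{t}\epsilon^{\pm}(\sigma)d\sigma)$, hence the block-diagonal propagator of $\i C_{0}$ with $C_{0}=\mat{\epsilon^{+}}{0}{0}{\epsilon^{-}}$; perturbing by the bounded operator $\i C_{-\infty}(t)\in\cinfb(I;\cW^{-\infty}(\Sigma;V\otimes\cc^{2}))$ through the usual Duhamel series then yields $\Texp(\i\int_{s}^{t}C)$.

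Given this, the first identity follows by uniqueness of the propagator. Fix $f\in\coinf(\Sigma;V\otimes\cc^{2})$, put $u=U_{s}f$ so that $\psi(t)=U(t,s)f$, and set $\phi(t)\defeq T^{-1}(t)U(t,s)f$, a smooth curve with values in $\bigcap_{m\in\nn}H^{m}(\Sigma;V\otimes\cc^{2})$. By the transport equation it solves $\p_{t}\phi=\i C(t)\phi$ with $\phi(s)=T^{-1}(s)f$, so by uniqueness $\phi(t)=\Texp(\i\int_{s}^{t}C)\,T^{-1}(s)f$; applying $T(t)$ and using that $f$ is arbitrary gives $U(t,s)=T(t)\,\Texp(\i\int_{s}^{t}C)\,T^{-1}(s)$ as operators on $\coinf(\Sigma;V\otimes\cc^{2})$, which is the first line.

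For the second identity I would compare the full propagator with its diagonal part by Duhamel,
\[
\Texp(\i\,\textstyle\int_{s}^{t}C)-\Texp(\i\,\textstyle\int_{s}^{t}C_{0})=\int_{s}^{t}\Texp(\i\,\textstyle\int_{\sigma}^{t}C)\,\i\, C_{-\infty}(\sigma)\,\Texp(\i\,\textstyle\int_{s}^{\sigma}C_{0})\,d\sigma ,
\]
and define $R_{-\infty}(t,s)$ to be this difference, noting that $\Texp(\i\int_{s}^{t}C_{0})$ is precisely the block-diagonal operator $\mat{\Texp(\i\int_{s}^{t}\epsilon^{+})}{0}{0}{\Texp(\i\int_{s}^{t}\epsilon^{-})}$ appearing in the statement. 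Since $C_{-\infty}(\sigma)\in\cW^{-\infty}$ maps $H^{-m}$ into $H^{m'}$ for all $m,m'$, while $\Texp(\i\int C_{0})$ and $\Texp(\i\int C)$ are bounded on every Sobolev space $H^{m}(\Sigma;V\otimes\cc^{2})$ — the diagonal blocks being propagators generated by $\Psi^{1}$ operators with real scalar principal part, for which an energy estimate gives $H^{m}$-boundedness with constants controlled on compact time subintervals, and $\Texp(\i\int C)$ differing from them by the bounded perturbation above — the integrand, and hence $R_{-\infty}(t,s)$, maps $H^{-m}$ into $H^{m'}$ for all $m,m'$; that is, $R_{-\infty}(t,s)\in\cW^{-\infty}(\Sigma;V\otimes\cc^{2})$, depending smoothly on $(t,s)$, which is the second line. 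The step I expect to cost the most effort is this last one: not the Duhamel bookkeeping itself, but the two quantitative inputs behind it — uniform boundedness, over \emph{all} Sobolev orders, of the propagators generated by the $\Psi^{1}$ operators $\epsilon^{\pm}(t)$, and the fact that each term of the Volterra series gains arbitrarily many derivatives, so that the series converges in the smoothing ideal $\cW^{-\infty}$ rather than merely improving regularity by a fixed amount — this is where the $\Psi$DO order bookkeeping and the bounded-geometry uniformity of Shubin's calculus (Subsect.~\ref{sec01.-1}) genuinely enter.
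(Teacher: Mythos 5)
Your proposal is correct and follows essentially the route the paper intends: the paper offers no written proof beyond the remark that ``the hypotheses of Kato's theorem are easy to check using $\Psi$DO calculus,'' the first identity being immediate from the transport equation $\p_{t}T^{-1}(t)\psi(t)=\i C(t)T^{-1}(t)\psi(t)$ derived just before the statement together with uniqueness of the Kato propagator, and the second from Duhamel applied to the smoothing perturbation $C_{-\infty}$. Your only cosmetic redundancy is invoking a Volterra series for the remainder when the single exact Duhamel identity already suffices, since the full propagator appearing under the integral is bounded on every $H^{m}$.
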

\subsubsection{Hadamard projectors}\label{sec01.2.7}
We set $\pi^{+}= \mat{\one}{0}{0}{0}$, $\pi^{-}= \mat{0}{0}{0}{\one}$, and
\beq\label{e01.21b}
c^{\pm}\defeq T(0)\pi^{\pm}T^{-1}(0)=  \mat{\mp(b^{+}- b^{-})^{-1}b^{\mp}}{\pm(b^{+}- b^{-})^{-1}}{\mp b^{+}(b^{+}- b^{-})^{-1}b^{-}}{\pm b^{\pm}(b^{+}- b^{-})^{-1}}(0).
\eeq
We recall that $A^{\dag}$ is the symplectic adjoint of $A$, see \ref{sss.sympl-adj}.
\begin{proposition}\label{prop01.2b1}
The operators $c^\pm$ defined in \eqref{e01.21b} satisfy:
 \ben
  \item $c^{+}+ c^{-}= \one$,
  \item $(c^{\pm})^{\dag}= c^{\pm}$,
  \item $\WF(U(\cdot, 0)c^{\pm})'\subset (\cN^{\pm}\cup \cF)\times T^{*}\Sigma$ for $\cF= \{k=0\}\subset T^{*}M.$
 \een
 \end{proposition}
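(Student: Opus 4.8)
The three assertions will all be deduced from the pseudodifferential factorization of the Cauchy evolution in Prop.~\ref{prop01.2}. Items (1) and (2) are purely algebraic. For (1), since $\pi^{+}+\pi^{-}=\one$, \eqref{e01.21b} gives immediately
\[
c^{+}+c^{-}=T(0)(\pi^{+}+\pi^{-})T^{-1}(0)=T(0)T^{-1}(0)=\one .
\]
For (2) I would start from the identity $T^{*}qT=\mat{\tau}{0}{0}{-\tau}=:\tilde q$ of \eqref{e01.7} (recall $\tau^{*}=\tau$), which at $t=0$ yields $T(0)^{*}q=\tilde q\,T(0)^{-1}$ and $\bigl(T(0)^{-1}\bigr)^{*}=qT(0)\tilde q^{-1}$. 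Since $\tilde q$ is block diagonal it commutes with $\pi^{\pm}$, and $(\pi^{\pm})^{*}=\pi^{\pm}$ for $(\cdot|\cdot)_{\tV(\Sigma)\otimes\cc^{2}}$, so
\[
(c^{\pm})^{\dag}=q^{-1}(c^{\pm})^{*}q=q^{-1}\bigl(T(0)^{-1}\bigr)^{*}\pi^{\pm}T(0)^{*}q=T(0)\,\tilde q^{-1}\pi^{\pm}\tilde q\,T(0)^{-1}=T(0)\pi^{\pm}T(0)^{-1}=c^{\pm}.
\]

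For the Hadamard wavefront estimate (3) I would treat the $+$ sign, the $-$ sign being identical with $\epsilon^{-}$ in place of $\epsilon^{+}$. Using $c^{+}=T(0)\pi^{+}T^{-1}(0)$, inserting the factorization of Prop.~\ref{prop01.2}, and noting $\mat{\Texp^{+}}{0}{0}{\Texp^{-}}\pi^{+}=\mat{\Texp^{+}}{0}{0}{0}$ with $\Texp^{+}(t)\defeq\Texp\bigl(\i\int_{0}^{t}\epsilon^{+}(\sigma)d\sigma\bigr)$, one obtains
\[
U(\cdot,0)\,c^{+}=T(\cdot)\,\mat{\Texp^{+}}{0}{0}{0}\,T^{-1}(0)+R ,
\]
where $R=R_{-\infty}(\cdot,0)\,T(0)\pi^{+}T^{-1}(0)$ has a smooth kernel on $M\times\Sigma$, hence contributes nothing to the wavefront set, because $R_{-\infty}(t,s)\in\cinfb(I\times I;\cW^{-\infty}(\Sigma;V\otimes\cc^{2}))$. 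Each entry of the principal term is a composition $A(t)\circ W^{+}\circ B$ with $W^{+}=\bigl(t\mapsto\Texp^{+}(t)\bigr)$ viewed as an operator $\coinf(\Sigma;V)\to\cinf(M;V)$, $A(t)$ an entry of $T(\cdot)$ (a $t$-dependent $\Psi$DO on $\Sigma$) and $B$ an entry of $T^{-1}(0)$ (a $\Psi$DO on $\Sigma$). Composition with pseudodifferential operators is microlocal, so it does not enlarge the primed wavefront set, and everything reduces to bounding $\WF(W^{+})'$.

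To control $\WF(W^{+})'$ I would use that $W^{+}$ solves $(\p_{t}-\i\epsilon^{+}(t))W^{+}=0$ by the defining property of $\Texp$, and that by Lem.~\ref{lemma01.1}, Prop.~\ref{prop01.1} and $\epsilon^{+}=\epsilon+\cinfb(I;\Psi^{0}(\Sigma;V))$ the operator $\p_{t}-\i\epsilon^{+}(t)$ is a first order pseudodifferential operator on $M$ with scalar principal symbol $\i\tau-\i(k\dual\rh_{t}^{-1}(\rx)k)^{\12}$. Its characteristic set is $\{(t,\rx,\tau,k)\in T^{*}M\setminus\zero\st\tau=(k\dual\rh_{t}^{-1}(\rx)k)^{\12}\}$, which is exactly $\cN^{+}$, since on $\cN$ one has $\tau^{2}=k\dual\rh_{t}^{-1}k$ and $\cN^{+}$ is the component on which $\tau>0$ (as $\langle\p_{t},\xi\rangle=\tau$ and $\p_{t}$ is future-directed timelike). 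Microlocal elliptic regularity in the $(t,\rx)$ variables, together with the standard analysis of the half-wave propagator ruling out singularities in the spatial source variable, then gives $\WF(W^{+})'\subset\cN^{+}\times T^{*}\Sigma$; any contribution of the low-frequency cutoffs used to build $\epsilon$, $b^{\pm}$ and $T$ near $\{k=0\}$ (equivalently, of the smoothing remainders in \eqref{e01.4}) is by inspection supported over $\cF=\{k=0\}$. Combining these inclusions gives $\WF(U(\cdot,0)c^{\pm})'\subset(\cN^{\pm}\cup\cF)\times T^{*}\Sigma$. I expect this last step to be the main obstacle: making the reduction to the half-wave propagator fully rigorous on a manifold of bounded geometry needs care with composition of wavefront sets for $t$-dependent families in Shubin's calculus and with the behaviour near $\{k=0\}$, where $\epsilon$ fails to be polyhomogeneous; everything else is bookkeeping already prepared by Props.~\ref{prop01.1} and \ref{prop01.2}.
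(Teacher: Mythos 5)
Your proposal is correct and follows essentially the same route as the paper: (1) is immediate, (2) is exactly the consequence of \eqref{e01.7} that the paper invokes, and for (3) you use the factorization of Prop.~\ref{prop01.2} together with elliptic regularity for the non-classical operator $\p_{t}-\i\epsilon^{\pm}$, whose characteristic set off $\{k=0\}$ is $\cN^{\pm}$ — precisely the paper's argument, which handles the failure of polyhomogeneity near $k=0$ (the issue you correctly flag as the main obstacle) by the reduction of \cite[Lem.~6.5.5]{DH}, detailed in \cite[Prop.~6.8]{GS}.
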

 
 \proof (1) is straightforward; (2) follows from \eqref{e01.7}. We set  $Q_{\pm}= (\p_{t}- \i \epsilon^{\pm}(t, \rx, \p_{\rx}))$, considered as an operator acting on $M\times \Sigma$ on the first group of variables and let $A(t, \rx, \rx')$ the distributional kernel of $U(\cdot, 0)c^{\pm}$. 

Prop.~\ref{prop01.2} it follows that $Q_{\pm}A\in \cinf(M\times \Sigma, L(V\otimes\cc^{2}, V))$.  If $Q_{\pm}$ were classical $\Psi$DOs on $M\times \Sigma$, this would imply that $\WF(U(\cdot, 0)c^{\pm})'\subset \cN^{\pm}\times T^{*}\Sigma$ by elliptic regularity. We reduce ourselves to this situation by an argument from \cite[Lem.~6.5.5]{DH}, see for example \cite[Prop.~6.8]{GS} for details. \qed\smallskip
 
 We call the maps $c^{\pm}$ {\em Hadamard projectors}.
 
\begin{remark}
 From \eqref{e01.21b} we obtain immediately that $c^{\pm}$ are projections  indeed. The terminology  is   justified by the fact that  Prop.~\ref{prop01.2b1} implies that $\lambda^{\pm}= \pm q\circ c^{\pm}$  are a pair of Cauchy surface \emph{Hadamard} pseudo-covariances for $D$. Note however that the positivity condition $\lambda^{\pm}\geq 0$ for $(\cdot| \cdot)_{\tV(\Sigma)\otimes \cc^{2}}$ is in general not satisfied. Moreover, different choices of $b$ in Prop.~\ref{prop01.1} lead to different projections  $c^{\pm}$, differing by a term in $\Psi^{-\infty}(\Sigma; V\otimes\cc^{2})$.
\end{remark}

\subsection{Euclidean case}\label{sec01.3} We now consider a Euclidean analogue of the setting considered so far. 
We set  $\tilde{M}= I_{s}\times \Sigma_{\rx}$, where $I\subset \rr$ is an interval with $0\in \mathring{I}$ and $\Sigma$  a $d$-dimensional manifold. As before we identify $\{0\}\times \Sigma$ with $\Sigma$. We fix an $s$-dependent sesquilinear form:
\[
\tilde{\rh}: I\ni s\mapsto \cinfb(I; L(T\Sigma, T\Sigma^{*})),
\]
such that $\trh(0)$ is a Riemannian metric on $\Sigma$, i.e.~$\trh(0)= \trh(0)^{*}$, $\trh(0)>0$.  For ease of notation $\trh(s)$ is often denoted by $\trh_{s}$.

We assume that $\trh$ is {\em uniformly coercive}, i.e. there exists $C>0$ such that:
\begin{equation}
\label{e01.8}
\begin{array}{l}
C^{-1}\trh(0)\leq\Re \trh(s)\leq C \trh(0) \\[2mm]
\big|\! \Im \trh(s)\big|\leq C \Re\trh(s), \ \  s\in I. 
\end{array}
\end{equation}
\subsubsection{Hilbertian bundle}\label{sec01.3.1}
We equip $\tilde{M}$ with the Hilbertian bundle $\tV$ as in \ref{sec01.2.1}.
For $u,v\in \coinf(\Sigma; \tV)$ resp.~$\coinf(\tM; \tV)$ we set
\[
\begin{array}{l}
(u| v)_{\tV(\Sigma)}\defeq \int_{\Sigma}(u| v)_{\tV}|\trh_{0}|^{\12}d\rx, \\[2mm]
(u| v)_{\tV(\tM)}\defeq\int_{\tM}(u| v)_{\tV}|\trh_{0}|^{\12}dtd\rx.
\end{array}
\]
\subsubsection{Adjoints}\label{sec01.3.2}
As in \ref{sec01.2.2} if $\ta(s)\in \cinfb(I; \Diff(\Sigma; \tV)$, resp.~$\tA\in \Diff(\tM; \tV)$ we denote by $\ta^{*}(s)$ resp.~$\tA^{*}$ its formal adjoint for $(\cdot| \cdot)_{\tV(\Sigma)}$ resp.~$(\cdot| \cdot)_{\tV(\tM)}$.

\subsubsection{Elliptic operator}\label{sec01.3.3}
We fix an $s$-dependent differential operator $\ta(s)= \ta(s, \rx, D_{\rx})$ belonging to $\cinfb(I; \Diff^{2}(\Sigma; \tV))$ and denote by $\sigma_{\rm pr}(\ta)(s)$ its principal symbol. We assume the following property:
\[
\begin{array}{rl}
{\rm (\tilde{H}1)} \quad \sigma_{\rm pr}(\ta)(s)(\rx, {\rm k})= {\rm k}\dual \trh_{s}^{-1}(\rx){\rm k}\one_{\tV}.
\end{array}
\]
We set
\beq\label{e01.8b}
\tD\defeq -\p_{s}^{2}+ \ta(s)\hbox{ acting on }\coinf(\tM; \tV),
\eeq 
which is an elliptic differential operator. 

\subsubsection{Factorization of $\tD$}\label{sec01.3.4}
As in \Lemma \ref{lemma01.1}, we see that $\ta(s)$ is closed with domain $H^{2}(\Sigma; \tV)$ and $\Dom \ta^{*}(s)= H^{2}(\Sigma; \tV)$.

We add to $\ta(s)$ a self-adjoint term $\tilde{r}_{-\infty}\in \cinfb(I; \Psi^{-\infty}(\Sigma; \tV))$ such that 
\beq\label{e01.9}
\Re \ta(s)+ r_{-\infty}(s)\geq \delta \one, \ \  \delta>0, 
\eeq
and  $\ta(s)+ r_{-\infty}(s)$ is $m$-accretive. We denote by 
\[
\teps\defeq (\ta+ r_{-\infty})^{\12}\in \cinfb(I; \Psi^{1}(\Sigma; \tV))
\]
  its unique $m$-accretive square root  given by \eqref{e01.2f}, which satisfies:
 \beq\label{e01.10b}
 \Re\teps(s)\geq \delta^{\12}\one.
 \eeq
As in \ref{sec01.2.3b} we have:
 \[
 \sigma_{\rm pr}(\teps(s))= (\sigma_{\rm pr}(\ta(s)))^{\12}.
 \]
 The operator $\teps(s)$ with domain $H^{1}(\Sigma; \tV)$ is closed, elliptic   and invertible by \eqref{e01.10b}, hence $\teps^{-1}(s)\in \cinfb(I; \Psi^{-1}(\Sigma; \tV))$.

\begin{proposition}\label{prop01.2bbb}
 There exists $\tb^{\pm}(s)\in\cinfb(I; \Psi^{1}(\Sigma; \tV))$, unique modulo a term in $\cinfb(I; \Psi^{-\infty}(\Sigma; \tV))$,  such that:
 \[
 \begin{array}{rl}
 i)& \tb^{\pm}(s)= \pm\teps(s)+ \cinfb(I; \Psi^{0}(\Sigma; \tV)),\\[2mm]
 ii)&\p_{s}\tb^{\pm}(s)- (\tb^{\pm})^{2}(s)+ \ta(s)= \tr^{\pm}_{-\infty}(s)\in \cinfb(I; \Psi^{-\infty}(\Sigma; \tV)),\\[2mm]
 iii)&\pm\Re \tb^{\pm}\geq \one,\\[2mm]
 iv)&\tb^{+}- \tb^{-}: H^{1}(\Sigma; \tV)\to L^{2}(\Sigma; \tV)\hbox{ is invertible,  }(\tb^{+}- \tb^{-})^{-1}\in \cinfb(I; \Psi^{-1}(\Sigma; \tV)).
 \end{array}
 \]
 \end{proposition}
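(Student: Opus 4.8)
The plan is to transcribe the proof of Proposition~\ref{prop01.1} to the elliptic setting. The only genuinely new feature is that $\ta(s)$, and with it the $m$-accretive square root $\teps(s)$ constructed just before the statement, is no longer self-adjoint but merely sectorial, with $\Re\teps(s)\geq\delta^{\12}\one$; since the present proposition involves no Hermitian form, this causes no essential trouble.

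\emph{Items i) and ii).} I look for $\tb^{\pm}=\pm\teps+\tb_{0}^{\pm}$ with an unknown $\tb_{0}^{\pm}\in\cinfb(I;\Psi^{0}(\Sigma;\tV))$. Substituting into ii) and using $\teps^{2}=\ta+r_{-\infty}$, a direct computation turns ii) into a fixed point equation of the same form as \eqref{e01.3c}, namely
\[
\tb_{0}^{\pm}=(2\teps)^{-1}\p_{s}\teps+\tilde F^{\pm}(\tb_{0}^{\pm}),\qquad \tilde F^{\pm}(b)\defeq(2\teps)^{-1}\bigl([\teps,b]\pm\p_{s}b\mp b^{2}\bigr),
\]
valid modulo $\cinfb(I;\Psi^{-\infty}(\Sigma;\tV))$. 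The map $\tilde F^{\pm}$ gains one order of regularity: the derivative lost in $[\teps,\,\cdot\,]$ and in $b^{2}$ is recovered by the elliptic factor $(2\teps)^{-1}\in\cinfb(I;\Psi^{-1}(\Sigma;\tV))$, while $\p_{s}$ preserves $\cinfb$. Hence by the abstract fixed point result Prop.~\ref{prop:fixed-point}, which makes no use of self-adjointness, there is $\tb_{0}^{\pm}\in\cinfb(I;\Psi^{0}(\Sigma;\tV))$, unique modulo $\cinfb(I;\Psi^{-\infty})$, solving this equation modulo $\cinfb(I;\Psi^{-\infty})$. This yields i) and ii) together with the claimed uniqueness.

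\emph{Item iii).} Here I would argue as in Lemma~\ref{lemma01.1} and in the proof of Proposition~\ref{prop01.1}~iii): one fixes the real part of $\tb^{\pm}$ by a smoothing correction. Enlarging the smoothing term $r_{-\infty}$ in the construction of $\teps$ by a summand $R\chi_{R}(\ta_{\rm ref})$, with $\ta_{\rm ref}$ a self-adjoint elliptic reference operator and $R\gg1$, one arranges $\Re\teps\geq\delta^{\12}\one$ with $\delta$ as large as wanted; and replacing $\tb_{0}^{\pm}$ by $\tb_{0}^{\pm}\bigl(1-\chi_{R}(\ta_{\rm ref})\bigr)$ changes $\tb^{\pm}$ only modulo $\cinfb(I;\Psi^{-\infty})$, hence preserves i)--ii). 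For $R$ large the low-frequency part of the correction $\tb_{0}^{\pm}\bigl(1-\chi_{R}(\ta_{\rm ref})\bigr)$ contributes negligibly to $\Re\tb^{\pm}$, whereas on high frequencies $\Re\teps$ dominates it; read as quadratic-form inequalities on $L^{2}(\Sigma;\tV)$, this gives $\pm\Re\tb^{\pm}\geq\one$.

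\emph{Item iv).} By iii) one has $\Re(\tb^{+}-\tb^{-})=\Re\tb^{+}-\Re\tb^{-}\geq2\,\one$ as forms on $L^{2}(\Sigma;\tV)$, so $\tb^{+}-\tb^{-}$ with domain $H^{1}(\Sigma;\tV)$ is $m$-accretive, hence invertible as a map $H^{1}(\Sigma;\tV)\to L^{2}(\Sigma;\tV)$; since it equals $2\teps+\tb_{0}^{+}-\tb_{0}^{-}$, with elliptic principal symbol $2({\rm k}\dual\trh_{s}^{-1}{\rm k})^{\12}\one_{\tV}$ (by the form of $\sigma_{\rm pr}(\ta)$ and the uniform coercivity \eqref{e01.8}), the Shubin-calculus parametrix combined with this a priori bound gives $(\tb^{+}-\tb^{-})^{-1}\in\cinfb(I;\Psi^{-1}(\Sigma;\tV))$. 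Apart from iii), all of this is a routine transcription of the Lorentzian construction of Subsect.~\ref{sec01.2}; the step that genuinely requires care is iii), where one must manage two independent smoothing modifications at once and where ``$\Re$'' refers to the non-self-adjoint operator $\teps$, so that every lower bound must be read as a quadratic-form inequality on $L^{2}(\Sigma;\tV)$ rather than a spectral one.
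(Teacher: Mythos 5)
Your proposal follows essentially the same route as the paper: the same ansatz $\tb^{\pm}=\pm\teps+\tb_{0}^{\pm}$ leading to the same fixed point equation solved via Prop.~\ref{prop:fixed-point} for i)--ii), a smoothing correction combined with the lower bound $\Re\teps\geq\delta^{\12}\one$ to force $\pm\Re\tb^{\pm}\geq\one$ for iii), and $m$-accretivity of $\tb^{+}-\tb^{-}$ plus elliptic parametrix for iv). The extra detail you supply on iii) (managing the two smoothing modifications, reading the bounds as form inequalities) is consistent with how the paper handles the analogous step in Prop.~\ref{prop01.1}~iii), so no correction is needed.
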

 \proof We  look for $\tb^{\pm}$ under the form $\tb^{\pm}= \pm\teps+ \tb_{0}$, $\tb_{0}\in \cinfb(I; \Psi^{0}(\Sigma; \tV))$ and obtain the equation
 \[
 \tb_{0}= (2 \teps)^{-1}\p_{s}\teps+ \tilde{F}^{\pm}(\tb_{0}), 
 \]
 for 
 \[
 \tilde{F}^{\pm}(\tb_{0})= (2\teps)^{-1}(\pm\p_{s}\tb_{0}+ [\teps, \tb_{0}]\mp\tb_{0}^{2}).
 \]
 We use the same fixed point argument as in Prop.~\ref{prop01.1} and obtain $\tb^{\pm}$ satisfying {\it i)} and {\it ii)}. To obtain {\it iii)} we use that $\tb^{\pm}= \pm\teps+ \cinfb(I; \Psi^{0}(\Sigma; \tV))$  and add to $\tb^{\pm}$ elements  $r_{1, -\infty}^{\pm}\in \cinfb(I, \Psi^{-\infty}(\Sigma; \tV))$ so that 
 $\pm \Re \tb^{\pm}\geq 1$. Then $\tb^{+}- \tb^{-}$ is $m$-accretive with $0\not\in \sigma(\tb^{+}- \tb^{-})$, which implies {\it iv)}.
  \qed
 
 We obtain the following factorization of $\tD$, analogous to \eqref{e01.5}:
 \begin{equation}
 \label{e01.11}
  (-\p_{s}+ \tb^{\pm})(\p_{s}+ \tb^{\pm})= \tD - \tr^{\pm}_{-\infty}.
 \end{equation}

\begin{remark}\label{remark01.1}
 Suppose that the interval $I$ is symmetric with respect to $0$ and that
 \[
 \ta^{*}(s)= \tau\ta(-s)\tau^{-1}, \ \ s\in I.
 \]
 Then from \eqref{e01.2f} we have $\teps^{*}(s)= \tau\teps(-s)\tau^{-1}$.  We set
 \[
 \tilde{F}(\tb_{0})= (2\teps)^{-1}(\p_{s}\tb_{0}+ [\teps, \tb_{0}]- \tb_{0}^{2}),
 \]
 we solve the fixed point equation
 \[
 \tb_{0}= (2\teps)^{-1}\p_{s}\teps + \tilde{F}(\tb_{0}),
 \]
  and construct  $\tb= \teps+ \tb_{0}$ such that 
  \[
  \begin{array}{l}
  \p_{s}\tb- \tb^{2}+ \ta= \tr_{-\infty}\in \cinfb(I; \cW^{-\infty}(\Sigma; \tV)),\\[2mm]
  \Re \tb\geq \one. 
  \end{array}
  \]
  Then we can take:
  \beq\label{e01.21d}
  \begin{array}{l}
  \tb^{+}(s)= \tb(s), \quad \tr^{+}_{-\infty}(s)= \tr_{-\infty}(s),\\[2mm]
\tb^{-}(s)= - \tau^{-1}\tb^{*}(-s)\tau, \quad \tr_{-\infty}^{-}(s)= \tau^{-1}\tr_{-\infty}^{*}(-s)\tau.
\end{array}
\eeq
 \end{remark}
\subsubsection{Parametrix for $\tD$}\label{sec01.3.5}
By Prop. \ref{prop01.2bbb} we know that   $\pm\tb^{\pm}(s)$ is $m$-accretive with $0\in \rs(\tb^{\pm}(s))$ (the resolvent set)  and $\Dom \tb^{\pm}(s)= H^{1}(\Sigma; \tV)$ for all $s\in I$. Since   $\tb^{\pm}\in \cinfb(I; \Psi^{1}(\Sigma; \tV))$, we can check the hypotheses of \cite{K2} (in the version presented in \cite{SG}) and conclude that
\[
V^{\pm}(s, s')\defeq \Texp(\textstyle\int_{s'}^{s}\tb^{\pm}(\sigma)d\sigma)\hbox{ exists for }\mp (s-s')\geq 0, \ s,s'\in I.
\]
\begin{lemma}\label{lemma01.2}
\ben
\item  $V^{\pm}(s, s'): H^{m}(\Sigma; \tV)\to H^{m}(\Sigma; \tV)$ is uniformly bounded for $s, s'\in I$ and  $\pm(s-s')\geq 0$;
 \item $V^{\pm}(s, s')\in \cW^{-\infty}(\Sigma; \tV)$ for $\mp(s-s')\geq \delta>0$.
\een
\end{lemma}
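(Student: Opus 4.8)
The existence of the propagators $V^{\pm}$ being already granted by Kato's theorem \cite{K2} as recalled above, the plan is to prove (1) by an energy estimate combined with a pseudodifferential conjugation, and (2) by upgrading that energy estimate into a half-derivative smoothing gain which is then iterated over a splitting of the time interval. I will treat only $V^{+}$; the case of $V^{-}$ is symmetric under $s\leftrightarrow s'$, since $-\tb^{-}$ has the same structural properties as $\tb^{+}$ by Prop.~\ref{prop01.2bbb}~{\it iii)} and {\it iv)}.

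For (1): given $u_{0}\in\coinf(\Sigma;\tV)$, set $u(s)\defeq V^{+}(s,s')u_{0}$ with $s\leq s'$, so that $\p_{s}u=\tb^{+}(s)u$ and $u(s')=u_{0}$. Then
\[
\frac{d}{ds}(u(s)|u(s))_{\tV(\Sigma)}=2\Re\big(\tb^{+}(s)u(s)\,\big|\,u(s)\big)_{\tV(\Sigma)}\geq 2\,(u(s)|u(s))_{\tV(\Sigma)}
\]
by Prop.~\ref{prop01.2bbb}~{\it iii)}, and integrating from $s$ to $s'$ gives $\|V^{+}(s,s')\|_{B(L^{2}(\Sigma;\tV))}\leq\e^{-(s'-s)}\leq 1$. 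To pass to $H^{m}$, I would fix an invertible elliptic $\Lambda_{m}\in\Psi^{m}(\Sigma;\tV)$ and note that $\Lambda_{m}V^{+}(s,s')\Lambda_{m}^{-1}$ is the propagator generated by $\tb^{+}(s)+e_{m}(s)$, where $e_{m}\defeq[\Lambda_{m},\tb^{+}]\Lambda_{m}^{-1}\in\cinfb(I;\Psi^{0}(\Sigma;\tV))$ (the commutator being of order $m$ since $\tb^{+}\in\cinfb(I;\Psi^{1}(\Sigma;\tV))$). The same computation with the $L^{2}$-bounded zeroth order perturbation $e_{m}$ and Gronwall's inequality yield $\|V^{+}(s,s')\|_{B(H^{m}(\Sigma;\tV))}\leq C_{m}\,\e^{C_{m}(s'-s)}$, uniform in $s,s'\in I$.

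For (2): the extra ingredient is that $\Re\tb^{\pm}$ is not merely $\geq\one$ but elliptic of order one. Indeed, by the ellipticity assumption ${\rm (\tilde{H}1)}$ and the uniform coercivity \eqref{e01.8} of $\trh$, the principal symbol of $\pm\Re\tb^{\pm}(s)$ equals $\Re\big(({\rm k}\dual\trh_{s}^{-1}(\rx){\rm k})^{\12}\big)\one_{\tV}$, which is bounded below by $c\,|{\rm k}|\,\one_{\tV}$; hence, by Gårding's inequality in Shubin's calculus \cite{Sh},
\[
\Re\big(\pm\tb^{\pm}(s)u\,\big|\,u\big)_{\tV(\Sigma)}\geq c'\|u\|^{2}_{H^{\12}(\Sigma;\tV)}-C'\|u\|^{2}_{\tV(\Sigma)},\quad s\in I.
\]
Feeding this into the energy identity of (1) (after conjugating by $\Lambda_{k}$ to work in $H^{k}$) gives, for $s\leq s'$ and any $k\in\rr$,
\[
\int_{s}^{s'}\|V^{+}(\sigma,s')u_{0}\|^{2}_{H^{k+\12}(\Sigma;\tV)}\,d\sigma\leq C\,\e^{C_{k}(s'-s)}\,\|u_{0}\|^{2}_{H^{k}(\Sigma;\tV)}.
\]
If $s'-s\geq\delta_{0}>0$ this forces a ``good time'' $\sigma^{*}\in[s+\tfrac{\delta_{0}}{3},s'-\tfrac{\delta_{0}}{3}]$ with $\|V^{+}(\sigma^{*},s')u_{0}\|_{H^{k+\12}}\lesssim\|u_{0}\|_{H^{k}}$; then $V^{+}(s,s')=V^{+}(s,\sigma^{*})V^{+}(\sigma^{*},s')$ together with the uniform $H^{k+\12}$-boundedness from (1) of $V^{+}(s,\sigma^{*})$ shows $V^{+}(s,s')\in B(H^{k}(\Sigma;\tV),H^{k+\12}(\Sigma;\tV))$, uniformly, for every $k\in\rr$ and every $s'-s\geq\delta_{0}$. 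Finally, given $\delta>0$ and $N\in\nn$, one splits $[s,s']$ with $s'-s\geq\delta$ into $4N$ subintervals of length $\geq\delta/(4N)$ and composes the corresponding half-gain estimates to obtain $V^{+}(s,s')\in B(H^{-N}(\Sigma;\tV),H^{N}(\Sigma;\tV))$; since $N$ is arbitrary, $V^{+}(s,s')\in\cW^{-\infty}(\Sigma;\tV)$.

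The pseudodifferential bookkeeping (the conjugations, the principal symbol of $\Re\tb^{\pm}$, Gårding's inequality and $L^{2}$-boundedness of order-zero operators in Shubin's calculus) is routine. The one genuinely delicate point is the smoothing estimate: the energy identity only controls a time-\emph{average} of $\|V^{+}(\cdot,s')u_{0}\|_{H^{k+\12}}$, so one must invoke part (1) to transport the gained regularity from an interior good time $\sigma^{*}$ back to the endpoint $s$, and then bootstrap the fixed half-derivative gain into arbitrarily high regularity over the fixed span $s'-s\geq\delta$.
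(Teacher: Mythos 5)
Your proof is correct, and while part (1) essentially coincides with the paper's argument (the paper simply invokes Kato's theorem on $H^{m}(\Sigma;\tV)$, the verification of its hypotheses via the conjugation $\Lambda_{m}\tb^{\pm}\Lambda_{m}^{-1}=\tb^{\pm}+\Psi^{0}$ being exactly your computation), your treatment of part (2) takes a genuinely different route. The paper works on the product $I\times\Sigma$: it notes that $\p_{s}-\tb^{+}(s)$ is elliptic in $\Psi^{1}(I\times\Sigma;\tV)$, with principal symbol $\i\sigma-({\rm k}\dual\trh_{s}^{-1}{\rm k})^{\12}$, multiplies $V^{+}(\cdot,s')u$ by a cutoff $\chi$ supported in $\{s\leq s'-\delta/2\}$ so that $(\p_{s}-\tb^{+})\chi V^{+}(\cdot,s')u=\chi'V^{+}(\cdot,s')u$, and then bootstraps by spacetime elliptic regularity, gaining a full derivative per step. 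You instead stay on $\Sigma$ and prove a parabolic-type smoothing estimate: G\r{a}rding's inequality for $\Re\tb^{+}$ (whose scalar principal symbol is bounded below by $c\,|{\rm k}|$ thanks to the sector condition \eqref{e01.8}) upgrades the energy identity to an integrated $H^{k+\12}$ bound, a good-time selection combined with the uniform bounds of part (1) converts the time-average into a pointwise half-derivative gain at the endpoint, and iteration over a subdivision of $[s,s']$ yields arbitrary regularity. The paper's argument is shorter once one grants elliptic regularity in Shubin's calculus on $I\times\Sigma$; yours avoids spacetime pseudodifferential calculus entirely, at the cost of the more delicate averaging and composition bookkeeping, which you handle correctly --- in particular the good time $\sigma^{*}$ may depend on $u_{0}$, but the operator norm of $V^{+}(s,\sigma^{*})$ on $H^{k+\12}$ is uniform in $\sigma^{*}\in I$, so the resulting estimate is uniform as required.
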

\proof  To prove (1) it suffices to apply Kato's theorem in \cite{K2} to the Hilbert space $H^{m}(\Sigma; \tV)$.  The hypotheses follow from $\Psi$DO calculus. If we denote by $V^{\pm}_{m}(s, s')$ the resulting semi-group on $H^{m}(\Sigma; \tV)$, then  $V_{m}^{\pm}$ is an extension resp.~restriction of $V^{\pm}$ if $m<0$ resp.~$m>0$.  Let us prove (2) in the $+$ case. We fix $s'\in I$, $\chi\in \coinf(I)$ with $\supp \chi\subset \{s\leq s'-\delta/2\}$, $\chi=1$ in $\{s\leq s'- \delta\}$. Then $(\p_{s}- \tilde{b}^{+}(s))\chi(s)V^{+}(s, s')u= \chi'(s)V^{+}(s, s')u$. The operator $\p_{s}- \tilde{b}^{+}(s)$ has principal symbol $(\i \sigma- \sigma_{\rm pr}(\ta(s))({\rm k})^{\12})\one_{\tV}$ hence is elliptic in $\Psi^{1}(I\times\Sigma; \tV)$. If $u\in H^{m_{0}}(\Sigma; \tV)$, then $V^{+}(\cdot,  s')u\in H^{n_0}(I\times \Sigma; \tV)$ for some $n_{0}$  and  by elliptic regularity  $\chi(\cdot )V^{+}(\cdot, s')u\in H^{n_{0}+1}(I\times \Sigma; \tV)$. By iterating this argument we obtain that $\chi(\cdot )V^{+}(\cdot, s')u\in H^{n}(I\times \Sigma; \tV)$ for any $n$ so $V^{+}(s, s')u\in H^{m}(\Sigma; \tV)$ for any $m\in \nn$.
\qed

For $v\in \cinfb(I; \coinf(\Sigma; \tV))$  we set 
\[
T^{\pm}v(s)\defeq \pm \int_{\rr}H(\mp(s-s'))V^{\pm}(s, s')v(s')ds',
\]
where $H(t)= \one_{\rr^{+}}(t)$ is the Heaviside function, 
so that
\begin{equation}
\label{e01.12}
(-\p_{s}+ \tb^{\pm})\circ T^{\pm}=T^{\pm}\circ (-\p_{s}+ \tb^{\pm})= \one.
\end{equation}
\begin{proposition}\label{prop01.2c}
 Let 
 \[
 \tD^{(-1)}= \left((\tb^{+}- \tb^{-})^{-1}(T^{+}- T^{-})\right).
 \]
 Then 
 \[
 \tD \circ \tD^{(-1)}= \one + R_{-\infty},
 \]
 for some $R_{-\infty}\in \cW^{-\infty}(\tM; \tV)$.
 \end{proposition}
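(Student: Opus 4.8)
The plan is to verify the identity by composing $\tD$ with the explicit formula for $\tD^{(-1)}$, using the two factorizations of $\tD$ in \eqref{e01.11} and the exact inverse property \eqref{e01.12} of the propagators $T^{\pm}$. Write $\Theta\defeq(\tb^{+}-\tb^{-})^{-1}\in\cinfb(I;\Psi^{-1}(\Sigma;\tV))$ and $\Theta'\defeq\p_{s}\Theta$. First I would compute $\tD\circ\Theta\circ T^{\pm}$: expanding $\tD=(-\p_{s}+\tb^{\pm})(\p_{s}+\tb^{\pm})+\tr^{\pm}_{-\infty}$, commuting the $\p_{s}$'s past $\Theta$, and using $(-\p_{s}+\tb^{\pm})T^{\pm}=\one$ together with the elementary consequences $\p_{s}T^{\pm}=-\one+\tb^{\pm}T^{\pm}$ and $(\p_{s}+\tb^{\pm})T^{\pm}=-\one+2\tb^{\pm}T^{\pm}$, one arrives (after a short and routine calculation, in which the delta contributions produced by $\p_{s}T^{\pm}$ cancel against the $(\p_{s}+\tb^{\pm})$ factor) at
\[
\tD\,\Theta\,T^{\pm}=\big(\Theta\p_{s}+\Theta\tb^{\pm}+2\Theta'\big)+c^{\pm}\,T^{\pm},\qquad
c^{\pm}\defeq[\ta,\Theta]-\Theta''-2\Theta'\tb^{\pm}-2\Theta(\tb^{\pm})'+\Theta\tr^{\pm}_{-\infty}.
\]
Subtracting the $+$ and $-$ versions, the terms $\Theta\p_{s}+2\Theta'$ drop out and $\Theta(\tb^{+}-\tb^{-})=\one$, so that $\tD\,\tD^{(-1)}=\one+\big(c^{+}T^{+}-c^{-}T^{-}\big)$, and everything reduces to showing $c^{+}T^{+}-c^{-}T^{-}\in\cW^{-\infty}(\tM;\tV)$.

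The key to the remainder estimate is that the operators $c^{\pm}$ are smoothing in the spatial variable, $c^{\pm}\in\cinfb(I;\Psi^{-\infty}(\Sigma;\tV))$. The term $\Theta\tr^{\pm}_{-\infty}$ is of this type by construction; for the rest one substitutes the Riccati relation of Proposition~\ref{prop01.2bbb}\,(ii), i.e.\ $(\tb^{\pm})'=(\tb^{\pm})^{2}-\ta+\tr^{\pm}_{-\infty}$, and uses the $\Psi$DO calculus on $(\Sigma,\rh_{0})$ recalled in \ref{sec01.-1}: the symbolic contributions of $[\ta,\Theta]-\Theta''-2\Theta'\tb^{\pm}-2\Theta(\tb^{\pm})'$ telescope order by order — already the principal symbol vanishes — precisely because $\tb^{\pm}$ was built to solve the Riccati equation to all orders. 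One records in passing the identity $c^{+}-c^{-}=\Theta(\tr^{+}_{-\infty}-\tr^{-}_{-\infty})$, immediate from $\Theta'=-\Theta(\tb^{+}-\tb^{-})'\Theta$, which makes the structure of the remainder transparent.

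Finally, one upgrades the spatial smoothing of $c^{\pm}$ to genuine smoothing of $c^{\pm}T^{\pm}$ on $\tM$. Lemma~\ref{lemma01.2}(2) takes care of the region $|s-s'|\geq\delta>0$, and near the diagonal $\{s=s'\}$ one runs the microlocal argument used in the proof of Proposition~\ref{prop01.2b1}, ultimately going back to \cite[Lem.~6.5.5]{DH}: the operator $-\p_{s}+\tb^{\pm}$ is elliptic in every codirection of $\tM$ except $\{k=0\}$, while the spatially smoothing factor $c^{\pm}$ removes that codirection, so that $c^{\pm}T^{\pm}\in\cW^{-\infty}(\tM;\tV)$, whence also $c^{+}T^{+}-c^{-}T^{-}\in\cW^{-\infty}(\tM;\tV)$. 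The main obstacle is precisely this last part — converting spatial smoothing into smoothing on the full cylinder $\tM$ — together with the bookkeeping establishing the spatial smoothing of $c^{\pm}$; the algebraic reduction of the first paragraph is straightforward. Alternatively, one may bypass the direct computation of $\tD\tD^{(-1)}$: show first that $\tD^{(-1)}$ is a left parametrix of $\tD$, and then invoke the ellipticity of $\tD$ on $\tM$ (principal symbol $\sigma^{2}+k\dual\trh_{s}^{-1}k$) to conclude that it is a two-sided parametrix modulo $\cW^{-\infty}(\tM;\tV)$.
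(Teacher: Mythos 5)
Your algebraic reduction is correct (I checked the formula for $\tD\,\Theta\,T^{\pm}$, including the $\Theta\p_{s}$ term coming from $\p_{s}\circ\Theta=\Theta'+\Theta\p_{s}$), and it is essentially the paper's computation reorganized: the paper first establishes the first-order identity $(\p_{s}+\tb^{+})\tD^{(-1)}=T^{+}+r_{1,-\infty}$ with $r_{1,-\infty}=-\Theta(\tr^{+}_{-\infty}-\tr^{-}_{-\infty})\Theta\,(T^{+}-T^{-})$ and then applies $(-\p_{s}+\tb^{+})$ and the factorization \eqref{e01.11}, whereas you expand $-\p_{s}^{2}+\ta$ directly. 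The advantage of the paper's ordering is that the smoothing prefactor appears explicitly from the start: unwinding it gives $c^{+}=-f'+[\tb^{+},f]+\tr^{+}_{-\infty}\Theta$ and $c^{-}=-f'+\tb^{+}f-f\tb^{-}+\tr^{+}_{-\infty}\Theta$ with $f=-\Theta(\tr^{+}_{-\infty}-\tr^{-}_{-\infty})\Theta$, so that $c^{\pm}\in\cinfb(I;\Psi^{-\infty}(\Sigma;\tV))$ is manifest. In your version this fact is asserted via an order-by-order ``telescoping'' that you do not carry out, and the identity you record only controls the difference $c^{+}-c^{-}$, not each term separately — which is what your strategy needs, since you treat $c^{+}T^{+}$ and $c^{-}T^{-}$ individually. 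This is a fillable gap (the claim is true), but as written it is not proved.

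The genuine flaw is in your last step. A spatially smoothing factor does not ``remove the codirection $\{k=0\}$'': a distribution on $\tM$ that is smooth in $\rx$ but merely piecewise smooth in $s$ has its wavefront set contained in $\{k=0\}$, and the kernel of $c^{\pm}T^{\pm}$ is exactly of this type — it is $c^{\pm}(s)V^{\pm}(s,s')$ cut off by a Heaviside function across $\{s=s'\}$. The elliptic-regularity argument borrowed from Prop.~\ref{prop01.2b1} controls regularity in the spatial codirections, which is not where the problem lies; it gives no improvement in the $s$-variable. What actually controls the $s$-regularity is the transport identity $\p_{s}\circ T^{\pm}=\tb^{\pm}T^{\pm}-\one$ from \eqref{e01.12}: each $s$-derivative of $c^{\pm}T^{\pm}v$ costs one spatial order, which is then absorbed by the infinite spatial smoothing of $c^{\pm}$. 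This, combined with the uniform bounds of \Lemma \ref{lemma01.2}\,(1), is precisely what the paper invokes at this point, and it is the ingredient your argument is missing. (Your closing ``alternative'' is also not needed: the proposition only asks for a right parametrix, which your first paragraph already produces, and invoking ellipticity of $\tD$ on $\tM$ to upgrade a left parametrix would produce a parametrix in Shubin's calculus on $\tM$, not the specific operator $\tD^{(-1)}$.)
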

 \proof
We obtain using \eqref{e01.12} and Prop.~\ref{prop01.2bbb} {\it ii)}:
\[
\bea
&\p_{s}\left((\tb^{+}- \tb^{-})^{-1}(T^{+}- T^{-})\right)\\
&=(\tb^{+}- \tb^{-})^{-1}\left(\tb^{+}T^{+}- \tb^{-}T^{-}- (\tb^{+2}- \tb^{-2})(\tb^{+}- \tb^{-})^{-1}(T^{+}- T^{-})\right)+ r_{1, -\infty},
\eea
\]
for
\[
r_{1,-\infty}= - (\tb^{+}- \tb^{-})^{-1}(\tr^{+}_{-\infty}- \tr^{-}_{-\infty})(\tb^{+}- \tb^{-})^{-1}(T^{+}- T^{-}).
\]
Next,
\[
\begin{array}{l}
\tb^{+}- (\tb^{+2}- \tb^{-2})(\tb^{+}-\tb^{-})^{-1}= -(\tb^{+}- \tb^{-})\tb^{-}(\tb^{+}- \tb^{-})^{-1}\\[2mm]
-\tb^{-}+ (\tb^{+2}- \tb^{-2})(\tb^{+}- \tb^{-1})^{-1}= (\tb^{+}- \tb^{-}\tb^{+})(\tb^{+}- \tb^{-})^{-1},
\end{array}
\]
hence
\beq\label{e01.12b}
\bea
&\p_{s}\left((\tb^{+}- \tb^{-})^{-1}(T^{+}- T^{-})\right)\\[2mm]
&= - \tb^{-}(\tb^{+}- \tb^{-})^{-1}T^{+}+ \tb^{+}(\tb^{+}- \tb^{-})^{-1}T^{-}+ r_{1,-\infty}\\[2mm]
&=T^{\pm}- \tb^{\pm}\left((\tb^{+}- \tb^{-})^{-1}(T^{+}- T^{-})\right)+ r_{1,-\infty}.
\eea
\eeq
Using again \eqref{e01.12} we obtain
\[
(- \p_{s}+ \tb^{+})(\p_{s}+ \tb^{+})\left((\tb^{+}- \tb^{-})^{-1}(T^{+}- T^{-})\right)= \one- (\p_{s}- \tb^{+})r_{1,-\infty}.
\]
Hence, setting
\begin{equation}
\label{e01.13}
\tD^{(-1)}v(s)= (\tb^{+}(s)- \tb^{-}(s))^{-1}(T^{+}- T^{-})v(s),
\end{equation}
 by \eqref{e01.11} we get
\begin{equation}
\label{e01.14}
\tD \tD^{(-1)}= \one+ R_{-\infty}, \ \ R_{-\infty}\in \cinfb(I; \Psi^{-\infty}(\Sigma; \tV)),
\end{equation}
for
\[
R_{-\infty}= (-\p_{s}+ \tb^{+}r_{1, -\infty})+ \tr^{+}_{-\infty}\tD^{(-1)}.
\]
Using \eqref{e01.12} and \Lemma \ref{lemma01.2}  (1) we obtain that if $r_{-\infty}\in \cinfb(I; \cW^{-\infty}(\Sigma; \tV))$, then $r_{-\infty}T^{\pm}\in \cW^{-\infty}(\tM; \tV)$.  This implies that $R_{-\infty}\in \cW^{-\infty}(\tM; \tV)$ as claimed. \qed

\subsubsection{Dirichlet realization of $\tD$}\label{sec01.3.5b}
Let us fix $T$ such that $\open{-T, T}\subset I$ and set $\Omega= \open{-T, T}\times \Sigma$. We denote by $H^{1}_{0}(\Omega; \tV)$ the closure of $\coinf(\Omega, \tV)$ for the norm 
\beq\label{e01.13b}
\|u\|^{2}_{H^{1}(\Omega; \tV)}= \int_{\Omega}\left((\p_{s}u| \p_{s}u)_{\tV}+ (u| -\Delta_{\trh_{0}}u)_{\tV}+ (u| u)_{\tV}\right)|\trh_{0}|^{\12}dtd\rx.
\eeq
We denote by $L^{2}(\Omega, \tV)$ the $L^{2}$ space defined using  the scalar product $(\cdot| \cdot)_{\tV(\Omega)}$.

We consider the sesquilinear form
\[
Q_{\Omega}(v, u)\defeq  (v| \tD u)_{\tV(\Omega)}, \hbox{ with domain }\Dom Q_{\Omega}= \coinf(\Omega; \tV).
\]
\begin{proposition}\label{prop01.2bb}
There exist $T_{0}>0$  such that  for $0<T\leq T_{0}$ one has:
\ben
\item $Q_{\Omega}$ and $Q^{*}_{\Omega}$ are closeable on $L^{2}(\Omega; \tV)$;
\item their closures $\bar{Q_{\Omega}}$, $\bar{Q^{*}_{\Omega}}$ are sectorial with domain $H^{1}_{0}(\Omega; \tV)$;
\item  the closed operators $\tD_{\Omega}$, $ \tD^{*}_{\Omega}$ associated to $\bar{Q_{\Omega}}$, $\bar{Q^{*}_{\Omega}}$ satisfy $0\in {\rm rs}(\tD_{\Omega})$, $0\in {\rm rs}(\tD^{*}_{\Omega})$;
\item $\tD^{*}_{\Omega}$ is the adjoint of $\tD_{\Omega}$.
\een
\end{proposition}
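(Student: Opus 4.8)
The plan is to realize $\tD_{\Omega}$ as the $m$-sectorial operator associated, through Kato's first representation theorem, with the closed sectorial form that $Q_{\Omega}$ generates on $L^{2}(\Omega;\tV)$; the substance is to prove that for $T$ small $Q_{\Omega}$ is bounded and coercive on $H^{1}_{0}(\Omega;\tV)$, everything else being soft functional analysis.

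\emph{Boundedness on $H^{1}_{0}$.} For $u,v\in\coinf(\Omega;\tV)$ integrate by parts once in $s$; the boundary terms at $s=\pm T$ vanish, so
\[
Q_{\Omega}(v,u)=(\p_{s}v|\p_{s}u)_{\tV(\Omega)}+\int_{-T}^{T}(v(s)|\ta(s)u(s))_{\tV(\Sigma)}\,ds .
\]
Since $\ta\in\cinfb(I;\Diff^{2}(\Sigma;\tV))$, each $\ta(s)\colon H^{1}(\Sigma;\tV)\to H^{-1}(\Sigma;\tV)$ is bounded uniformly in $s\in I$; combining this with Cauchy--Schwarz in $s$ and the definition \eqref{e01.13b} of $\|\cdot\|_{H^{1}(\Omega;\tV)}$ gives $|Q_{\Omega}(v,u)|\leq C\|v\|_{H^{1}(\Omega;\tV)}\|u\|_{H^{1}(\Omega;\tV)}$, hence a bounded extension $\bar{Q_{\Omega}}$ to $H^{1}_{0}(\Omega;\tV)\times H^{1}_{0}(\Omega;\tV)$ (recall that $H^{1}_{0}$ is by definition the closure of $\coinf(\Omega;\tV)$). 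The same computation with $\tD^{*}=-\p_{s}^{2}+\ta^{*}(s)$, $\ta^{*}\in\cinfb(I;\Diff^{2}(\Sigma;\tV))$, applies to $Q^{*}_{\Omega}(v,u):=(v|\tD^{*}u)_{\tV(\Omega)}$.

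\emph{Coercivity, with the smallness of $T$.} Because $\sigma_{\rm pr}(\ta(s))(\rx,{\rm k})={\rm k}\dual\trh_{s}^{-1}(\rx){\rm k}\,\one_{\tV}$ and $\Re\trh_{s}\geq C^{-1}\trh_{0}>0$ by \eqref{e01.8}, the operators $\Re\ta(s)$ are uniformly elliptic on the manifold of bounded geometry $(\Sigma,\trh_{0})$, so the bounded-geometry version of G\aa rding's inequality (available in the calculus recalled in \S\ref{sec01.-1}, cf.\ \cite{Sh}) furnishes $c_{0}>0$, $C_{0}\geq0$, independent of $s$, with $\Re(w|\ta(s)w)_{\tV(\Sigma)}\geq c_{0}\|w\|_{H^{1}(\Sigma;\tV)}^{2}-C_{0}\|w\|_{\tV(\Sigma)}^{2}$ for all $w\in H^{1}(\Sigma;\tV)$, $s\in I$. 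Since $u(\pm T,\cdot)=0$ for $u\in\coinf(\Omega;\tV)$, the Poincar\'e inequality in the $s$-variable gives $\|u\|^{2}_{\tV(\Omega)}\leq 4T^{2}\|\p_{s}u\|^{2}_{\tV(\Omega)}$. Integrating the G\aa rding bound over $s\in\open{-T,T}$ and adding $\|\p_{s}u\|^{2}_{\tV(\Omega)}$ yields
\[
\Re Q_{\Omega}(u,u)\geq\|\p_{s}u\|^{2}_{\tV(\Omega)}+c_{0}\int_{-T}^{T}\|u(s)\|_{H^{1}(\Sigma;\tV)}^{2}\,ds-C_{0}\|u\|^{2}_{\tV(\Omega)} ,
\]
and choosing $T_{0}>0$ with $4T_{0}^{2}C_{0}\leq\tfrac12$ the last term is absorbed for $0<T\le T_{0}$, giving $\Re Q_{\Omega}(u,u)\geq\delta'\|u\|^{2}_{H^{1}(\Omega;\tV)}$ for some $\delta'>0$ (using once more \eqref{e01.13b} and $\|u\|^{2}_{\tV(\Omega)}\lesssim\|\p_{s}u\|^{2}_{\tV(\Omega)}$). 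This passes by density to $\bar{Q_{\Omega}}$ on $H^{1}_{0}(\Omega;\tV)$, and identically to $\bar{Q^{*}_{\Omega}}$ since $\ta^{*}(s)$ has the same principal symbol; with the boundedness this gives $|\Im Q_{\Omega}(u,u)|\leq|Q_{\Omega}(u,u)|\leq(C/\delta')\Re Q_{\Omega}(u,u)$, so both forms are sectorial with vertex at $0$.

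\emph{Conclusion.} A bounded, coercive sesquilinear form on $H^{1}_{0}(\Omega;\tV)$, viewed as a densely defined form on $L^{2}(\Omega;\tV)$, is closed, and $\coinf(\Omega;\tV)$ is a form core; hence $Q_{\Omega}$, $Q^{*}_{\Omega}$ are closeable with closures $\bar{Q_{\Omega}}$, $\bar{Q^{*}_{\Omega}}$ sectorial of form domain $H^{1}_{0}(\Omega;\tV)$, which is (1)--(2). By the first representation theorem \cite[Thm.~VI.2.1]{K1}, $\bar{Q_{\Omega}}$ (resp.\ $\bar{Q^{*}_{\Omega}}$) is associated with an $m$-sectorial operator $\tD_{\Omega}$ (resp.\ $\tD^{*}_{\Omega}$) with $\Dom\tD_{\Omega}\subset H^{1}_{0}(\Omega;\tV)$ and $\bar{Q_{\Omega}}(v,u)=(v|\tD_{\Omega}u)_{\tV(\Omega)}$; since $\bar{Q^{*}_{\Omega}}(v,u)=\overline{\bar{Q_{\Omega}}(u,v)}$ is the adjoint form of $\bar{Q_{\Omega}}$, \cite[Thm.~VI.2.5]{K1} identifies $\tD^{*}_{\Omega}$ with the Hilbert-space adjoint of $\tD_{\Omega}$, giving (4). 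Finally, for $u\in\Dom\tD_{\Omega}$ Cauchy--Schwarz and the coercivity give $\|\tD_{\Omega}u\|_{\tV(\Omega)}\geq\delta'\|u\|_{\tV(\Omega)}$, so $\tD_{\Omega}$ is injective with closed range, and likewise $\tD^{*}_{\Omega}$; hence $\Ran\tD_{\Omega}=(\Ker\tD^{*}_{\Omega})^{\perp}=L^{2}(\Omega;\tV)$ and $0\in\rs(\tD_{\Omega})$, symmetrically $0\in\rs(\tD^{*}_{\Omega})$, which is (3). The only genuinely technical point is the coercivity step: extracting the $s$-uniform G\aa rding inequality from the uniform coercivity \eqref{e01.8} within the bounded-geometry calculus, and then using the short time interval, via the Poincar\'e inequality, to swallow the lower-order part of $\tD$ together with the ensuing negative $L^{2}$-term.
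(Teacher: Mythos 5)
Your proof is correct and follows essentially the same route as the paper: establish the two-sided comparison of $\Re Q_{\Omega}$ (and the bound on $\Im Q_{\Omega}$) with the reference form \eqref{e01.13b} via ellipticity of $\Re\ta$ and the Poincar\'e inequality in $s$ to absorb the lower-order terms for $T$ small, then invoke Kato's representation theorems \cite[Sect.~VI.2.1, Thm.~VI.2.5]{K1}. The only difference is that you make explicit the uniform G\aa rding inequality and the closed-range argument for $0\in\rs(\tD_{\Omega})$, which the paper leaves implicit in the single estimate \eqref{eq:coercive}.
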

\proof
Let $Q_{\rm ref}$ be the sesquilinear form associated to \eqref{e01.13b} with domain $\coinf(\Omega; \tV)$. By \eqref{e01.9} and the Poincaré inequality, we can find $T_{0}>0$ such that for $0<T\leq T_{0}$ one has
\beq\label{eq:coercive}
C^{-1}Q_{\rm ref }\leq \Re Q_{\Omega}\leq C Q_{\rm ref}, \quad  \module{\Im Q_{\Omega}}\leq C Q_{\rm ref}, \quad C>0.
\eeq
This implies (2). Then,  (3) follows  from \cite[Sect.~VI.2.1]{K1} and (4) from \cite[Thm.~VI.2.5]{K1}. \qed

 The operator $\tD_{\Omega}$ is the {\em Dirichlet realization} of $\tD$.  We denote by $\tD_{\Omega}^{-1}: L^{2}(\Omega; \tV)\to \Dom \tD_{\Omega}$ its  inverse.
\subsubsection{Parametrix for the Dirichlet problem}\label{sec01.3.6}
Let  us fix $T>0$ such that $[-T, T]\subset \mathring{I}$. We want to find a parametrix for the Dirichlet problem
\begin{equation}
\label{e01.15}
\begin{cases}
\tD u= f\hbox{ in }\Omega,\\
u_{|\p\Omega}=0.
\end{cases}
\end{equation}
Let
\[
W^{\pm}(s, s')= \Texp(-\textstyle\int_{s'}^{s}\tb^{\pm}(\sigma)d\sigma), \hbox{ for }\pm(s-s')\geq 0,
\]
and
\[
R_{1, -\infty}=  \mat{0}{W^{-}(-T, T)}{W^{+}(T, -T)}{0}.
\]
Since $\pm\Re\tb^{\pm}\geq 1$ by Prop.~\ref{prop01.2bbb}, we have  $\| R_{1, -\infty}\|_{B(L^{2}(\Sigma; \tV\otimes \cc^{2}))}\leq \e^{-2T}$ and by \Lemma \ref{lemma01.2} (2), $R_{1, -\infty}\in \cW^{-\infty}(\Sigma; \tV\otimes \cc^{2})$. Therefore  $\one + R_{1, -\infty}$ is invertible in $B(L^{2}(\Sigma; \tV\otimes \cc^{2}))$  and by spectral invariance, $(\one+ R_{1, -\infty})^{-1}\in \one + \cW^{-\infty}(\Sigma, \tV\otimes \cc^{2})$.

 Let us set for $g\in \cinfb(I; \cinfb(\Sigma; \tV))$:
 \[
 \varrho_{\p \Omega} g\defeq \col{g(-T)}{g(T)},
 \]
and for  $v^{\pm}\in L^{2}(\Sigma; \tV)$:
 \beq\label{e01.01}
S\col{v^{+}}{v^{-}}(s)\defeq W^{+}(s, -T)v^{+}+ W^{-}(s, T)v^{-},
 \eeq
 so that 
 \[
 \varrho_{\p \Omega}\circ S= \one + R_{1, -\infty}.
 \]
The following proposition gives a construction of a parametrix for the Dirichlet problem \eqref{e01.15}.
\begin{proposition}\label{prop01.3}
For $f\in \cinfb(I; \cinfb(\Sigma; \tV))$ let 
\[
\tD_{\Omega}^{(-1)}= \tD^{(-1)}- S\circ (1+ R_{1, -\infty})^{-1}\circ \varrho\circ \tD^{(-1)}.
\]
Then
\[
\begin{cases}
 \tD\circ \tD_{\Omega}^{(-1)}= \one + R_{-\infty}, \ \  R_{-\infty}\in \cW^{-\infty}(\tM; \tV),\\
 \varrho_{\p \Omega}\circ \tD_{\Omega}^{(-1)}=0.
\end{cases}
\]
\end{proposition}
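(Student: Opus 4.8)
The plan is to read off both identities by expanding the definition of $\tD_{\Omega}^{(-1)}= \tD^{(-1)}- S\circ(1+R_{1,-\infty})^{-1}\circ\varrho_{\p\Omega}\circ\tD^{(-1)}$, the only inputs being the parametrix $\tD^{(-1)}$ of Proposition \ref{prop01.2c}, the factorization \eqref{e01.11}, and the relation $\varrho_{\p\Omega}\circ S= \one+R_{1,-\infty}$ recorded just above. The boundary identity is then essentially automatic: applying $\varrho_{\p\Omega}$ and using $\varrho_{\p\Omega}\circ S= \one+R_{1,-\infty}$ makes $(1+R_{1,-\infty})^{-1}$ cancel, leaving $\varrho_{\p\Omega}\circ\tD_{\Omega}^{(-1)}= \varrho_{\p\Omega}\circ\tD^{(-1)}- (\one+R_{1,-\infty})(1+R_{1,-\infty})^{-1}\varrho_{\p\Omega}\circ\tD^{(-1)}= 0$.

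The first identity requires a little more, and the key observation I would isolate is that $\tD\circ S$ is smoothing. Indeed $W^{\pm}(s,s')$ solves $(\p_{s}+\tb^{\pm}(s))W^{\pm}(s,s')=0$, so the factorization \eqref{e01.11}, namely $(-\p_{s}+\tb^{\pm})(\p_{s}+\tb^{\pm})= \tD- \tr^{\pm}_{-\infty}$, collapses on the two columns of $S$ and gives
\[
\tD\circ S\col{v^{+}}{v^{-}}(s)= \tr^{+}_{-\infty}(s)W^{+}(s,-T)v^{+}+ \tr^{-}_{-\infty}(s)W^{-}(s,T)v^{-}.
\]
By \Lemma \ref{lemma01.2}\,(1) the operators $W^{\pm}(s,\mp T)$ are bounded on every $H^{m}(\Sigma;\tV)$ uniformly for $s\in[-T,T]$, and composing with the smoothing families $\tr^{\pm}_{-\infty}\in\cinfb(I;\Psi^{-\infty}(\Sigma;\tV))$ produces operators $H^{-m}(\Sigma;\tV)\to H^{N}(\Sigma;\tV)$ bounded uniformly in $s$, for all $m,N$. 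Differentiating in $s$ via $\p_{s}W^{\pm}= -\tb^{\pm}W^{\pm}$ keeps a smoothing factor on the left, since $\tr^{\pm}_{-\infty}\tb^{\pm}\in\cinfb(I;\Psi^{-\infty}(\Sigma;\tV))$, so all $s$-derivatives satisfy the same bounds; hence $\tD\circ S\in\cW^{-\infty}$.

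Granting this, expanding $\tD_{\Omega}^{(-1)}$ and using Proposition \ref{prop01.2c} yields
\[
\tD\circ\tD_{\Omega}^{(-1)}= \one+ R_{-\infty}- (\tD\circ S)\circ(1+R_{1,-\infty})^{-1}\circ\varrho_{\p\Omega}\circ\tD^{(-1)},
\]
with the first $R_{-\infty}\in\cW^{-\infty}(\tM;\tV)$. Here $\tD^{(-1)}$ gains two Sobolev orders and the trace $\varrho_{\p\Omega}$ loses only finitely many, so $\varrho_{\p\Omega}\circ\tD^{(-1)}$ maps each $H^{-m}(\tM;\tV)$ into some fixed Sobolev space over $\Sigma$; since $(1+R_{1,-\infty})^{-1}\in\one+\cW^{-\infty}(\Sigma;\tV\otimes\cc^{2})$ is bounded on every Sobolev space and $\tD\circ S$ is infinitely smoothing, the trailing term lies in $\cW^{-\infty}(\tM;\tV)$ and can be absorbed into $R_{-\infty}$, giving the claim. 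The only step with genuine content is the smoothing of $\tD\circ S$ together with this order count in the remainder term; I do not anticipate a real obstacle here, only careful Sobolev bookkeeping, everything else being purely algebraic.
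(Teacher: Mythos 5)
Your proof is correct and follows essentially the same route as the paper: the boundary identity comes directly from $\varrho_{\p\Omega}\circ S=\one+R_{1,-\infty}$, and the first identity reduces to showing $\tD\circ S$ is smoothing via the factorization \eqref{e01.11} together with $(\p_{s}+\tb^{\pm})W^{\pm}=0$, then absorbing the remainder using Prop.~\ref{prop01.2c}. The Sobolev bookkeeping for the trailing term (making sense of $\varrho_{\p\Omega}\circ\tD^{(-1)}$ on very negative Sobolev spaces) is glossed over at the same level of detail in the paper's own proof, so there is nothing to add.
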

\proof 
By \eqref{e01.11} and the analog of \Lemma \ref{lemma01.2} for $W^{\pm}$ we obtain that:
\[
\tD S\col{v^{+}}{v^{-}}= \tr^{+}_{-\infty}W^{+}(\cdot, -T)v^{+}+ \tr^{-}_{-\infty}W^{-}(\cdot, T)v^{-}= R_{2, -\infty}\col{v^{+}}{v^{-}}
\]
where $R_{2, -\infty}\in \cinfb(I; \cW^{-\infty}(\Sigma; \tV))$. On the other hand by \eqref{e01.13} we have
\[
\tD^{(-1)}v(\pm T)= (\tb^{+}- \tb^{-})^{-1}(\pm T)\int_{-T}^{T}V^{\mp}(\pm T, s')v(s')ds'.
\]
Therefore if we set 
\[
u= \tD^{(-1)}v- S(1+R_{1, -\infty})^{-1}\varrho\tD^{(-1)}v,
\]
we get
\[
\begin{cases}
\tD u= v+ R_{3, -\infty}v,\\
\varrho_{\p \Omega}u=0,
\end{cases}
\]
for some $R_{3, -\infty}\in  \cW^{-\infty}(\tM; \tV)$. \qed

\begin{proposition}\label{prop01.3b} Let  $\tD_{\Omega}^{(-1)}$ be as Prop.~\ref{prop01.3}. Then,
 \[
 \tD_{\Omega}^{-1}- \tD_{\Omega}^{(-1)}\in \cW^{-\infty}(\Omega; \tV).
 \]
 
\end{proposition}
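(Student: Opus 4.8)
The plan is to show that the parametrix $\tD_{\Omega}^{(-1)}$ and the true inverse $\tD_{\Omega}^{-1}$ of the Dirichlet realization differ by a smoothing operator, by the standard argument that two one-sided inverses modulo smoothing, one of which is honest, agree modulo smoothing — provided one takes care of mapping properties and of the boundary condition. First I would note that by Prop.~\ref{prop01.3}, $\tD_{\Omega}^{(-1)}$ satisfies $\tD\circ \tD_{\Omega}^{(-1)} = \one + R_{-\infty}$ with $R_{-\infty}\in \cW^{-\infty}(\tM;\tV)$ and $\varrho_{\p\Omega}\circ \tD_{\Omega}^{(-1)}=0$, so for $f\in\coinf(\Omega;\tV)$ the section $u\defeq \tD_{\Omega}^{(-1)}f$ solves $\tD u = f + R_{-\infty}f$ with $u|_{\p\Omega}=0$. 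Applying the honest inverse $\tD_{\Omega}^{-1}$ (which exists by Prop.~\ref{prop01.2bb}~(3)) to both sides, and using $\tD_{\Omega}^{-1}\tD_{\Omega}=\one$ on $\Dom\tD_{\Omega}$, one gets
\[
u = \tD_{\Omega}^{-1}\tD u = \tD_{\Omega}^{-1}f + \tD_{\Omega}^{-1}R_{-\infty}f,
\]
hence $\tD_{\Omega}^{(-1)} - \tD_{\Omega}^{-1} = \tD_{\Omega}^{-1}R_{-\infty}$ as operators on $\coinf(\Omega;\tV)$. It therefore remains only to check that $\tD_{\Omega}^{-1}R_{-\infty}\in \cW^{-\infty}(\Omega;\tV)$.

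To do this I would first verify the small hypothesis hidden in the step above: that $u=\tD_{\Omega}^{(-1)}f$ actually lies in $\Dom\tD_{\Omega}=H^1_0(\Omega;\tV)$, so that applying $\tD_{\Omega}^{-1}$ is legitimate. This follows because $u|_{\p\Omega}=0$ and $u\in\cinfb$-regularity (from the explicit formula for $\tD_{\Omega}^{(-1)}$ built out of the evolution operators $V^\pm, W^\pm$ and $\tD^{(-1)}$, all of which preserve Sobolev regularity by \Lemma\ref{lemma01.2}~(1) and $\Psi$DO calculus), together with the characterization of $\Dom\tD_{\Omega}$ via the sesquilinear form $\bar{Q_\Omega}$; one should note $\tD u=\tD_\Omega u$ in the operator sense since $u$ is in the form domain and $\tD u$ computed distributionally is $f+R_{-\infty}f\in L^2$. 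Then, for the smoothing claim: $R_{-\infty}:H^{-m}(\Omega;\tV)\to H^m(\Omega;\tV)$ for all $m$, so in particular $R_{-\infty}: L^2(\Omega;\tV)\to H^m(\Omega;\tV)$, and by elliptic regularity up to the boundary for the Dirichlet realization $\tD_\Omega$ — using that $\tD$ is elliptic with the Dirichlet boundary condition, $\tD_\Omega^{-1}$ gains two derivatives and one can iterate since the range of $R_{-\infty}$ sits in $H^m_0$-type spaces compatible with the boundary condition — one obtains $\tD_\Omega^{-1}R_{-\infty}: H^{-m}(\Omega;\tV)\to H^{m}(\Omega;\tV)$ for every $m$, i.e.\ $\tD_\Omega^{-1}R_{-\infty}\in\cW^{-\infty}(\Omega;\tV)$.

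The main obstacle I anticipate is the boundary elliptic regularity bootstrap: one must be careful that the ``smoothing'' of $R_{-\infty}$ interacts correctly with the Dirichlet boundary condition so that the iteration $\tD_\Omega^{-1}: H^k_{\rm compat}\to H^{k+2}_{\rm compat}$ can be repeated indefinitely — this requires knowing that $R_{-\infty}f$ and its normal derivatives satisfy the right compatibility conditions at $\p\Omega$, or alternatively phrasing the whole argument via the resolvent $(\tD_\Omega - z)^{-1}$ and localization away from and near the boundary. An alternative route that sidesteps some of this is to also produce a parametrix $\tD_\Omega^{(-1)\prime}$ for the \emph{adjoint} problem (which exists by the symmetric construction, since $\tD^*_\Omega$ has the same structure by Prop.~\ref{prop01.2bb}~(4)) and use the sandwich identity $\tD_\Omega^{-1} - \tD_\Omega^{(-1)} = \tD_\Omega^{(-1)\prime}{}^*\,(\text{smoothing})\,\cdots$; but the one-sided argument above is cleaner and I expect it to go through once the boundary regularity is handled with the standard reflection/localization trick.
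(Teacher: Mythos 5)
Your argument is correct and is essentially the paper's own proof: both reduce the claim to the identity $\tD_{\Omega}^{(-1)}=\tD_{\Omega}^{-1}+\tD_{\Omega}^{-1}R_{-\infty}$, obtained by applying $\tD_{\Omega}^{-1}$ to the parametrix identity of Prop.~\ref{prop01.3} (using that $\tD_{\Omega}^{(-1)}f$ satisfies the Dirichlet condition and hence lies in $\Dom\tD_{\Omega}$), combined with the global elliptic regularity $\tD_{\Omega}^{-1}:H^{s}(\Omega;\tV)\to H^{s+2}(\Omega;\tV)$ for all $s$, which the paper establishes by commuting tangential derivatives and using the equation to control $\p_{s}$-derivatives. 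The compatibility-condition concern you raise in the bootstrap is not an actual obstruction here, since for the pure Dirichlet problem on the cylinder $\Omega=\open{-T,T}\times\Sigma$ (no corners) elliptic regularity up to the boundary iterates without further hypotheses.
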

\proof 
From the sesquilinear form associated to $\tD_{\Omega}$ we know that $\tD_{\Omega}^{-1}: H^{-1}(\Omega;\tV)\to H^{1}(\Omega;\tV)$. By the usual argument  of commuting tangential derivatives with $\tD$ and using the equation $\tD u =v$ to control $\p_{s}$ derivatives, we obtain that $\tD_{\Omega}^{-1}= H^{s}(\Omega; \tV)\to H^{s+2}(\Omega; \tV)$ for all $s\in\rr$.  By Prop.~\ref{prop01.3} we obtain  $\tD^{(-1)}_{\Omega}= \tD_{\Omega}^{-1} + R_{-\infty}\tD_{\Omega}^{-1}$, which proves the proposition. \qed

\newmodif{
Let us fix an extension map $e: \cinf(\p\Omega; \tV)\to \bar{C^{\infty}}(\Omega; \tV)$ such that $(ef)\traa{\Omega}=f$. We can assume moreover that $\trho^{\pm}ef= 0$ by choosing $e$ such that $ef= 0$ near $s= 0$. 
\begin{lemma}\label{minito}
We have
\[
\trho^{\pm}\tD_{\Omega}^{-1}\tD u= \trho^{\pm}u+ \trho^{\pm} \tD_{\Omega}^{-1}\tD e(u\traa{\Omega}), \ u\in \bar{H^{1}}(\Omega; \tV).
\]
\end{lemma}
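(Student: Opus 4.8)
The plan is to reduce the claimed identity to the elementary fact that $\tD_{\Omega}^{-1}\tD$ acts as the identity on $H^{1}_{0}(\Omega;\tV)$, after decomposing $u$ into a piece with vanishing boundary trace and the explicit extension $e(u\traa{\Omega})$. Everything beyond this fact is linearity and bookkeeping.

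First I would set $u_{0}\defeq u-e(u\traa{\Omega})$, noting that $e(u\traa{\Omega})\in\bar{H^{1}}(\Omega;\tV)$ because $e$, being a right inverse of the boundary trace, extends continuously to the relevant boundary Sobolev space. From the defining property $(ef)\traa{\Omega}=f$ of $e$ we get $u_{0}\traa{\Omega}=0$, hence $u_{0}$ lies in $H^{1}_{0}(\Omega;\tV)$ (the closure of $\coinf(\Omega;\tV)$ for the norm \eqref{e01.13b}). Moreover, since $e$ was chosen so that $ef=0$ near $s=0$, we have $\trho^{\pm}e(u\traa{\Omega})=0$.

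Next I would invoke, exactly as in the proof of Prop.~\ref{prop01.3b}, that the sesquilinear form $\bar Q_{\Omega}$ is bounded, sectorial and — by \eqref{eq:coercive} — coercive on $H^{1}_{0}(\Omega;\tV)$, so that $\tD_{\Omega}^{-1}$ extends to a continuous map $\bar{H^{-1}}(\Omega;\tV)\to\bar{H^{1}}(\Omega;\tV)$ inverting $\tD\colon H^{1}_{0}(\Omega;\tV)\to\bar{H^{-1}}(\Omega;\tV)$ and extending the inverse of the operator $\tD_{\Omega}$ of Prop.~\ref{prop01.2bb}. Concretely, for $u_{0}\in H^{1}_{0}(\Omega;\tV)$ one has $\bar Q_{\Omega}(v,u_{0})=(v\,|\,\tD u_{0})_{\tV(\Omega)}$ for all $v\in H^{1}_{0}(\Omega;\tV)$ (by density from $\coinf\times\coinf$ and continuity of both sides on $H^{1}_{0}\times H^{1}_{0}$), so the distribution $\tD u_{0}\in\bar{H^{-1}}(\Omega;\tV)$ represents the functional $\bar Q_{\Omega}(\,\cdot\,,u_{0})$, whence $\tD_{\Omega}^{-1}\tD u_{0}=u_{0}$. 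Applying $\tD_{\Omega}^{-1}\tD$ to $u=u_{0}+e(u\traa{\Omega})$ and then $\trho^{\pm}$ gives
\[
\trho^{\pm}\tD_{\Omega}^{-1}\tD u=\trho^{\pm}u_{0}+\trho^{\pm}\tD_{\Omega}^{-1}\tD e(u\traa{\Omega}),
\]
and since $\trho^{\pm}u_{0}=\trho^{\pm}u-\trho^{\pm}e(u\traa{\Omega})=\trho^{\pm}u$, the lemma follows.

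The one point that genuinely requires care is the identity $\tD_{\Omega}^{-1}\tD u_{0}=u_{0}$: it cannot be read off the operator $\tD_{\Omega}\colon\Dom\tD_{\Omega}\to L^{2}(\Omega;\tV)$ directly, since for $u\in\bar{H^{1}}(\Omega;\tV)$ the section $\tD u_{0}$ is only in $\bar{H^{-1}}(\Omega;\tV)$ and $u_{0}$ need not belong to $\Dom\tD_{\Omega}$. Thus one must use the variational/dual description of $\tD_{\Omega}$ together with the coercivity already established in the proof of Prop.~\ref{prop01.2bb}. The rest is routine.
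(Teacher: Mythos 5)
Your proof is correct and is essentially the paper's argument: the same decomposition $u=u_{0}+e(u\traa{\p\Omega})$ with $u_{0}\in H^{1}_{0}(\Omega;\tV)$, the same use of $\tD_{\Omega}^{-1}\tD u_{0}=u_{0}$ (the paper phrases this via uniqueness of the Dirichlet problem for $u-e(u\traa{\p\Omega})+w$), and the same observation that $\trho^{\pm}e=0$. Your added care about the variational meaning of $\tD_{\Omega}^{-1}$ on $\bar{H^{-1}}$ data is a reasonable elaboration of what the paper leaves implicit, not a different method.
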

\proof Let $v= \tD_{\Omega}^{-1}\tD u$. We set $v= u - eu\traa{\p \Omega}+ w$ where
\[
\begin{cases}
\tD w= \tD eu\traa{\Omega}\hbox{ in }\Omega,\\
w\traa{\p \Omega}= 0,
\end{cases}
\]
i.e.~$w= \tD_{\Omega}^{-1}\tD e u\traa{\Omega}$. Applying $\trho^{\pm}$ to this identity we obtain the lemma. \qed
}

\subsubsection{Green's formula}\label{sec01.3.3b}
Let $\Omega^{\pm}= \Omega\cap \{\pm s>0\}$. For $u\in\cinf(\Omega; \tV)$ we set 
\beq\label{e01.14b}
 \trho u\defeq \col{u\traa{\Sigma}}{-\p_{s}u\traa{\Sigma}}=\col{u(0)}{-\p_{s}u(0)}.
\eeq
We denote by $\trho^{\pm}$ the analogous trace operators defined on $\bar{\cinf}(\Omega^{\pm}; \tV)$:
\[
\trho^{\pm}u\defeq  \col{u(0^{\pm})}{-\p_{s}u(0^{\pm})},
\]
\modif{and by $\trho_{\pm T}$ the trace operator at $\pm T$, i.e.
$$
\trho_{\pm T} u\defeq \col{u\traa{\p\Omega^{\pm}\setminus \Sigma}}{-\p_{s}u\traa{\p\Omega^{\pm}\setminus \Sigma}}=\col{u(\pm T)}{-\p_{s}u(\pm T)}.
$$}
\modif{The definitions extend of course to spaces with sufficient Sobolev regularity}. 

\modif{Let us denote
\[
\tsig= \mat{0}{-\one}{\one}{0}, \quad \tilde{q}= \mat{0}{\one}{\one}{0}.
\]}

\begin{proposition}\label{prop01.3c}
Let $u, v\in \modif{\bar{H^{1}}(\Omega^{\pm}; \tV)}$ with $\tD^{*} u, \tD v\in L^{2}(\Omega^{\pm}; \tV)$. Then
\begin{equation}
\label{e01.green1}
\bea
(u| \tD v)_{\tV(\Omega^{\pm})}- (\tD^{*} u| v)_{\tV(\Omega^{\pm})}&= \pm (\trho^{\pm} u| \tsig\trho^{\pm}v)_{\tV(\Sigma)\otimes \cc^{2}} \fantom\,  \modif{\mp (\trho_{\pm T} u| \tsig\trho_{\pm T}v)_{\tV(\Sigma)\otimes \cc^{2}}}.
\eea
\end{equation}
Let $u, v\in \modif{ \bar{H^{1}}(\Omega^{\pm}; \tV)}$ with $\tD u, \tD v\in L^{2}(\Omega^{\pm}; \tV)$. Then
\begin{equation}
\label{e01.green2}
\bea
(u| \tD v)_{\tV(\Omega^{\pm})}+ (\tD u | v)_{\tV(\Omega^{\pm})}&= 2\,\eta_{\Omega^{\pm}}(u, v)\mp (\trho^{\pm} u| \tilde{q}\trho^{\pm}v)_{\tV(\Sigma)\otimes \cc^{2}}
\fantom\,  \modif{\pm (\trho_{\pm T} u| \tilde{q}\trho_{\pm T}v)_{\tV(\Sigma)\otimes \cc^{2}}},
\eea
\end{equation}
where
\[
\eta_{\Omega^{\pm}}(u, v)=  (\p_{s}u| \p_{s}v)_{\tV(\Omega^{\pm})}+ (u | (\Re \ta) v)_{\tV(\Omega^{\pm})}.
\]
\end{proposition}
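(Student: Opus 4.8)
The proof is a direct integration by parts on the cylinder $\Omega^{\pm}$, treating the $s$-derivative and the spatial operator $\ta$ separately. The plan is to first establish the formulas for smooth sections $u,v\in\bar{\cinf}(\Omega^{\pm};\tV)$ (or, say, sections that are smooth up to the boundary with compact spatial support), where all manipulations are literally legitimate, and then extend by density to $\bar{H^1}(\Omega^{\pm};\tV)$ with the stated $L^2$-conditions on $\tD u,\tD v$, using that both sides are continuous for the relevant topologies.

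First I would write $\tD=-\p_s^2+\ta(s)$ and split accordingly. For the $s$-part, on $\Omega^{+}=\open{0,T}\times\Sigma$ we integrate by parts twice in $s$:
\[
\int_{0}^{T}(u|-\p_s^2 v)_{\tV(\Sigma)}\,ds=\int_{0}^{T}(\p_s u|\p_s v)_{\tV(\Sigma)}\,ds-\big[(u|\p_s v)_{\tV(\Sigma)}\big]_{0}^{T},
\]
and once more to move the derivative onto $u$, producing the boundary terms at $s=0^{+}$ and $s=T$ which, after using $\p_s=-$(second component of $\trho^{\pm}$ resp.\ $\trho_{\pm T}$ up to sign), assemble into the $\tsig$-pairings in \eqref{e01.green1}; keeping only the first integration by parts gives the symmetric $\eta_{\Omega^{\pm}}$-term and the $\tilde q$-pairings in \eqref{e01.green2}. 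The sign pattern $\pm$ at $\Sigma$ versus $\mp$ at $\pm T$ is just bookkeeping of which endpoint of $\open{0,T}$ (or $\open{-T,0}$) one is evaluating and the orientation convention in \eqref{e01.14b}. For the spatial part, since $\ta(s)$ acts only in $\rx$ and $(\cdot|\cdot)_{\tV(\Sigma)}$ is used fibrewise with the fixed density $|\trh_0|^{1/2}$, we have $\int_{\Omega^{\pm}}(u|\ta v)_{\tV}=\int_{\Omega^{\pm}}(\ta^{*}u|v)_{\tV}$ with \emph{no} boundary contribution (the spatial manifold $\Sigma$ is boundaryless), and $\Re\ta=\tfrac12(\ta+\ta^{*})$ yields $(u|\ta v)+(\ta u|v)=2(u|(\Re\ta)v)$, contributing the $(u|(\Re\ta)v)_{\tV(\Omega^{\pm})}$ term of $\eta_{\Omega^{\pm}}$. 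Adding the two contributions gives both identities for smooth $u,v$.

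Then I would carry out the density argument. Given $u,v\in\bar{H^1}(\Omega^{\pm};\tV)$ with $\tD v\in L^2(\Omega^{\pm};\tV)$ (and $\tD^{*}u\in L^2$ for \eqref{e01.green1}), interior elliptic regularity in the tangential variables together with the equation $\tD v=f$ to recover $\p_s^2 v$ shows that $v$ has enough regularity for the traces $\trho^{\pm}v$, $\trho_{\pm T}v$ to make sense as elements of the appropriate Sobolev spaces on $\Sigma$ (a half-order loss, as usual, lands the Dirichlet trace in $H^{1/2}$ and the Neumann trace in $H^{-1/2}$, and the $\tsig$- or $\tilde q$-pairing is the natural duality bracket). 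Approximating $u,v$ in graph norm by sections smooth up to $\p\Omega^{\pm}$ — which is possible because $\bar{H^1}(\Omega^{\pm};\tV)$ consists of restrictions of global $H^1$-sections and $\tD$ is elliptic — and passing to the limit on both sides (the volume integrals converge by $L^2$-convergence of $u,\p_s u,(\Re\ta)^{1/2}u$ and of $v,\tD v$; the boundary pairings converge by continuity of the trace maps into the duality pairing) yields \eqref{e01.green1} and \eqref{e01.green2} in full generality.

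The only genuinely delicate point is the regularity/trace bookkeeping in the low-regularity case: one must be sure that when $u,v$ are merely in $\bar{H^1}$ with $\tD$ (or $\tD^{*}$) applied to them in $L^2$, the mixed Dirichlet--Neumann traces at $\Sigma$ and at $\pm T$ are well defined and the pairings $(\cdot|\tsig\,\cdot)$, $(\cdot|\tilde q\,\cdot)$ extend continuously to the relevant dual pairings — this is exactly the standard boundary-trace theory for second-order elliptic operators on a manifold with boundary (here with corners absent since $\p\Omega^{\pm}$ is a disjoint union of two copies of $\Sigma$), and it is compatible with the bounded-geometry setup because all estimates are uniform in the bounded atlas. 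I expect the sign conventions to be the most error-prone part in practice, so I would fix them once and for all from \eqref{e01.14b} and check them on the model case $\ta\equiv-\Delta_{\trh_0}$, $\Sigma=\rr^d$.
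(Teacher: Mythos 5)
Your proposal is correct and follows essentially the same route as the paper: both identities come from integration by parts in $s$ (the spatial operator $\ta$ contributing no boundary term since $\Sigma$ is boundaryless), with the $\tsig$- and $\tilde q$-pairings read off from the trace conventions \eqref{e01.14b}. The only difference is in justifying the low-regularity case: the paper simply invokes elliptic regularity to conclude $u,v\in H^{2}(\Omega^{\pm})$, so the traces land in $H^{1}(\Sigma;\tV)\oplus L^{2}(\Sigma;\tV)$ and the integration by parts is immediately legitimate, whereas you run a smooth-approximation/density argument to the same effect.
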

\proof  By elliptic regularity we know that $u, v\in H^{2}(\Omega^{\pm}; \cc^{2})$, hence $\trho^{\pm}u, \trho^{\pm}v$ belong to $H^{1}(\Sigma; \tV)\oplus L^{2}(\Sigma; \tV)$ and in consequence the r.h.s.~in \eqref{e01.green1}, \eqref{e01.green2} are well-defined. The identities follow then by integration by parts in $s$. \qed

In agreement with the notation introduced in \ref{sec01.1.1} we denote by $\bar{H^{1}_{0}}(\Omega^{\pm}; \tV)$ the space of restrictions to $\Omega^{\pm}$ of elements of $H^{1}_{0}(\Omega; \tV)$, i.e. 
\beq
\label{defdebarsobolo}
u\in \bar{H^{1}_{0}}(\Omega^{\pm}; \tV) \hbox{ iff }u\in H^{1}(\Omega^{\pm}; \tV)\hbox{ and }u\traa{\p\Omega^{\pm}\setminus \Sigma}=0.
\eeq
\modif{Note that in the special case when $u, v\in \modif{\bar{H^{1}_0}(\Omega^{\pm}; \tV)}$, we have  
\beq\label{greendir}
(\trho_{\pm T} u| \tsig\trho_{\pm T}v)_{\tV(\Sigma)\otimes \cc^{2}}=0= (\trho_{\pm T} u| \tilde q\trho_{\pm T}v)_{\tV(\Sigma)\otimes \cc^{2}}
\eeq
and the Green's formulas in Prop.~\ref{prop01.3c} become simpler. 
 }

\subsubsection{\calde projectors}\label{sec01.3.7}
 We denote by $\trho^{*}: \cE'(\Sigma; \tV)\to \cD'(\Omega; \tV)$ the formal adjoint of $\trho$ in \eqref{e01.14b}, where $\cinf(\Omega; \tV)$, resp.~$\coinf(\Sigma; \tV)\otimes\cc^{2}$, are equipped with the scalar products $(\cdot| \cdot)_{\tV(\tM)}$, resp.~$(\cdot| \cdot)_{\tV(\Sigma)\otimes \cc^{2}}$.  Explicitly we have
\begin{equation}
\label{e01.16}
\trho^{*}f= \delta(s)\otimes f_{0}+ \delta'(s)\otimes f_{1}, \  \ f= \col{f_{0}}{f_{1}}\in \coinf(\Sigma; \tV)\otimes \cc^{2}.
\end{equation}
\begin{definition}\label{def01.1}
 The {\em \calde projectors}  \modif{for the Dirichlet realization $\tD_\Omega$ of $\tD$} are the maps 
 \beq\label{e01.16b}
 \tilde{c}^{\pm}\defeq \mp \trho^{\pm} \tD_{\Omega}^{-1} \trho^{*}\tilde{\sigma}.
 \eeq
 \end{definition}
We will see in Prop.~\ref{prop01.4b} that $\tilde{c}^{\pm}$ are indeed projections if we consider them as operators acting on the spaces 
\beq\label{e01.16cc}
\cH^{s}(\Sigma; \tV\otimes \cc^{2})\defeq H^{s}(\Sigma; \tV)\oplus H^{s-1}(\Sigma; \tV), \ \  s\in \rr.
\eeq
\begin{proposition}\label{prop01.4} The \calde projectors satisfy
 \beq\label{e01.16c}
 \tilde{c}^{\pm}= \mat{\mp(\tb^{+}- \tb^{-})^{-1}\tb^{\mp}}{\pm(\tb^{+}- \tb^{-})^{-1}}{\mp \tb^{+}(\tb^{+}- \tb^{-})^{-1}\tb^{-}}{\pm \tb^{\pm}(\tb^{+}- \tb^{-})^{-1}}(0)+ R^{\pm}_{-\infty}.
\eeq
for some  $R^{\pm}_{-\infty}\in\cW^{-\infty}(\Sigma; \tV \otimes \cc^2)$.
 \end{proposition}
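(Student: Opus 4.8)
The plan is to compute $\tilde c^\pm = \mp \trho^\pm \tD_\Omega^{-1} \trho^* \tsig$ by replacing the true inverse $\tD_\Omega^{-1}$ with the explicit parametrix $\tD_\Omega^{(-1)}$ from Prop.~\ref{prop01.3}, using Prop.~\ref{prop01.3b} to control the error. First I would note that by Prop.~\ref{prop01.3b}, $\tD_\Omega^{-1} = \tD_\Omega^{(-1)} + \cW^{-\infty}(\Omega;\tV)$, and since $\trho^\pm$ maps $H^{s+2}(\Omega^\pm;\tV)$ continuously to $\cH^{s+1}(\Sigma;\tV\otimes\cc^2)$ while $\trho^*\tsig$ maps $\cH^s(\Sigma;\tV\otimes\cc^2)$ into $H^{s-3/2-\varepsilon}$-type spaces on $\Omega$ (more precisely into $\cE'$ supported on $\Sigma$, hence into every negative Sobolev space), the smoothing error in $\tD_\Omega^{-1}$ contributes a term in $\cW^{-\infty}(\Sigma;\tV\otimes\cc^2)$ to $\tilde c^\pm$. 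So it remains to compute $\mp\trho^\pm \tD_\Omega^{(-1)} \trho^*\tsig$ modulo $\cW^{-\infty}$.

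Next I would expand $\tD_\Omega^{(-1)} = \tD^{(-1)} - S\circ(1+R_{1,-\infty})^{-1}\circ\varrho_{\p\Omega}\circ\tD^{(-1)}$. The second term, after applying $\trho^\pm$, involves $\trho^\pm S$; since $S$ as in \eqref{e01.01} is built from $W^\pm(s,\mp T)$ and $\varrho_{\p\Omega}S = \one + R_{1,-\infty}$ with $R_{1,-\infty}$ smoothing, one checks that the correction term is smoothing after composing with $\trho^*\tsig$ — the key point being that $\trho^\pm S \col{v^+}{v^-}(0) = W^+(0,-T)v^+ + W^-(0,T)v^-$ and $W^\pm(0,\mp T) \in \cW^{-\infty}(\Sigma;\tV)$ by Lem.~\ref{lemma01.2}(2), since $|0 - (\mp T)| = T > 0$. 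The same holds for the $\p_s$-component. Thus modulo $\cW^{-\infty}$ we are reduced to computing $\mp\trho^\pm\tD^{(-1)}\trho^*\tsig$, where $\tD^{(-1)} = (\tb^+-\tb^-)^{-1}(T^+ - T^-)$ from Prop.~\ref{prop01.2c}.

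The heart of the computation is then evaluating $\trho^\pm \circ (\tb^+-\tb^-)^{-1}(T^+-T^-)\circ \trho^*\tsig$. Using \eqref{e01.16}, $\trho^*\tsig f = \delta(s)\otimes(\tsig f)_0 + \delta'(s)\otimes(\tsig f)_1$; applying $T^\pm v(s) = \pm\int H(\mp(s-s'))V^\pm(s,s')v(s')\,ds'$ to a distribution supported at $s'=0$ gives, for $\pm s < 0$ resp.\ $\pm s>0$ (matching the Heaviside support), an expression involving $V^\pm(s,0)$ and its derivative; then taking the one-sided traces $\trho^\pm$ at $s=0^\pm$ and using $V^\pm(0,0)=\one$ together with $\p_s V^\pm(s,s')|_{s=s'} = \tb^\pm(s')$ yields the stated matrix. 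I would organize this as: (i) compute $T^\pm\trho^*\tsig f$ explicitly as a function of $s$ near $0$ on each side, (ii) apply $(\tb^+-\tb^-)^{-1}(0)$, (iii) apply $\trho^\pm$, being careful with the sign conventions in \eqref{e01.14b} (note the minus sign in front of $\p_s$) and the overall $\mp$ sign in the definition \eqref{e01.16b}. The resulting $2\times2$ matrix should match \eqref{e01.16c} after using $\tsig = \mat{0}{-\one}{\one}{0}$.

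The main obstacle I anticipate is bookkeeping the distributional boundary terms correctly: $T^\pm$ applied to $\delta$ and $\delta'$ produces terms that are genuinely one-sided (the Heaviside function $H(\mp(s-s'))$ with $s'=0$ restricts to a half-line), so the trace $\trho^\pm$ picks up both a ``propagator'' contribution $V^\pm(0^\pm,0)$ and, from the $\delta'$ term and from differentiating, boundary contributions at $s=0$; one must verify these combine to give exactly $\mp(\tb^+-\tb^-)^{-1}\tb^\mp$ in the upper-left entry and not, say, a shifted version. A clean way to handle this is to recognize that $\mp\trho^\pm\tD^{(-1)}\trho^*\tsig$ must satisfy the same defining relations as a Calderón projector for the \emph{parametrix} — namely it projects onto Cauchy data of approximate solutions decaying as $\pm s\to 0$ resp.\ growing, governed by $\tb^\pm$ — and then the explicit block form \eqref{e01.16c} is forced, with the smoothing remainder absorbing all the discrepancies already identified. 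Alternatively, one can simply quote that this is the standard ``model'' computation, parallel to the derivation of the Hadamard projectors $c^\pm$ in \eqref{e01.21b}, with $b^\pm$ replaced by $\tb^\pm$ and the Lorentzian evolution replaced by the exponentially decaying semigroups $V^\pm$.
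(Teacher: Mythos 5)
Your proposal follows essentially the same route as the paper: replace $\tD_{\Omega}^{-1}$ by the parametrix $\tD_{\Omega}^{(-1)}$ via Prop.~\ref{prop01.3b}, discard the boundary-correction term $S\circ(1+R_{1,-\infty})^{-1}\circ\varrho_{\p\Omega}\circ\tD^{(-1)}$ using Lem.~\ref{lemma01.2}~(2) for $W^{\pm}$, and then compute $\mp\trho^{\pm}\tD^{(-1)}\trho^{*}\tsig$ by evaluating $T^{\pm}$ on $\delta(s)\otimes f_{1}$ and $\delta'(s)\otimes f_{0}$ and taking one-sided traces, exactly as in the paper (which uses $\p_{s'}V^{\pm}(s,s')=-V^{\pm}(s,s')\tb^{\pm}(s')$ and the identity \eqref{e01.12b} to handle the $\p_{s}$-component). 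The plan is correct.
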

 \proof 
  We first claim that we can replace $\tD_{\Omega}^{-1}$ by $\tD^{(-1)}$ in \eqref{e01.16b} modulo an error term in $\cW^{-\infty}(\Sigma; \tV\otimes \cc^{2})$. By 
   Prop.~\ref{prop01.3b} we can replace $\tD_{\Omega}^{-1}$ by the parametrix $\tD_{\Omega}^{(-1)}$ in \eqref{e01.16b}, modulo an error term in $\cW^{-\infty}(\Sigma; \tV\otimes \cc^{2})$.  
   Next, by  \Lemma \ref{lemma01.2} (2) with $V^{\pm}$ replaced by $W^{\pm}$, we  obtain that $\varrho^{\pm}S$, where $S$ is defined in \eqref{e01.01},  belongs to $\cW^{-\infty}(\Sigma; \tV\otimes \cc^{2})$. Using the expression of $\tD_{\Omega}^{(-1)}- \tD^{(-1)}$ given in Prop.~\ref{prop01.3} this proves our claim.

Furthermore,    if $v= \delta(s)\otimes f_{1}$ then:
 \[
 \bea
  T^{\pm}v(s)&= \pm\int_{\rr}H(\mp(s-s'))V^{\pm}(s, s')\delta(s')\otimes f_{1}ds'\\
  &= \pm H(\mp s)V^{\pm}(s, 0)f_{1}.
 \eea
 \]
Therefore 
 \[
 T^{+}v(0^{+})=0, \quad T^{-}v(0^{+})= - f_{1}
 \]
and
 \beq\label{e01.17}
 (\tD^{(-1)}v)(0^{+})= (\tb^{+}- \tb^{-})^{-1}(0)f_{1}.
 \eeq
 By \eqref{e01.12b} we have
 \begin{equation}
 \label{e01.18}
 \p_{s}(\tD^{(-1)}v)(0^{+})= T^{+}v(0^{+})- \tb^{+}(\tD^{(-1)}v)(0^{+})+ r_{1, -\infty}f_{1},  \ \ r_{1, -\infty}\in\cW^{-\infty}(\Sigma; \tV).
 \end{equation}
 If $v= - \delta'(s)\otimes f_{0}$, we obtain similarly
 \[
\bea
  T^{\pm}v(s)&= \mp\int_{\rr}H(\mp(s-s'))V^{\pm}(s, s') \delta'(s')\otimes f_{0}ds'\\[2mm]
  &=\pm\int_{\rr}\left(\delta(\mp(s-s'))V^{\pm}(s, s')\pm H(\mp(s-s'))V^{\pm}(s, s')\tb^{\pm}(s')\right) \delta(s')\otimes f_{0}ds'\\[2mm]
  &= \delta(s)f_{0}\mp H(\mp s)V^{\pm}(s, 0)\tb^{\pm}(0)f_{0},
\eea
 \]
using that $\p_{s'}V^{\pm}(s, s')= - V^{\pm}(s, s')\tb^{\pm}(s')$.  Therefore,
\[
T^{+}v(0^{+})=0, \quad T^{-}v(0^{+})= \tb^{-}(0)f_{0}
\]
and
\begin{equation}
\label{e01.19}
(\tD^{(-1)}v)(0^{+})= - (\tb^{+}- \tb^{-})^{-1}(0^{+})\tb^{-}(0)f_{0}.
\end{equation}
Using again \eqref{e01.18} we obtain:
\[
(\p_{s}\tD^{(-1)}v)(0^{+})= \tb^{+}(0)(\tb^{+}- \tb^{-})^{-1}(0)\tb^{-}(0)f_{0}+r_{1, -\infty}f_{0}.
\]
Therefore we obtain
\[
\tilde{c}^{+}= \mat{-(\tb^{+}- \tb^{-})^{-1}\tb^{-}}{(\tb^{+}- \tb^{-})^{-1}}{- \tb^{+}(\tb^{+}- \tb^{-})^{-1}\tb^{-}}{\tb^{+}(\tb^{+}- \tb^{-})^{-1}}(0)+ R_{-\infty}.
\]
The proof for $\tilde{c}^{-}$ is analogous. \qed

\begin{proposition}\label{prop01.4b} The \calde projectors 
 $\tilde{c}^{\pm}$ are bounded on $\cH^{s}(\Sigma; \tV\otimes \cc^{2})$ for $s\in \rr$ and satisfy
 \[
 \tilde{c}^{+}+ \tilde{c}^{-}=\one, \quad \tilde{c}^{\pm}= (\tilde{c}^{\pm})^{2}\hbox{ on }\cH^{s}(\Sigma; \tV\otimes \cc^{2}).
 \]
 \end{proposition}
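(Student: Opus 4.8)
Both identities will be extracted from the defining formula \eqref{e01.16b} together with the well-posedness of the Dirichlet problem for $\tD_{\Omega}$ (Prop.~\ref{prop01.2bb}--\ref{prop01.3b}); the explicit formula of Prop.~\ref{prop01.4} enters only for boundedness and to control traces at $\Sigma$. For boundedness, note that by Prop.~\ref{prop01.4} the operator $\tilde{c}^{\pm}$ equals, modulo a term in $\cW^{-\infty}(\Sigma;\tV\otimes\cc^{2})$, the matrix of operators on $\Sigma$ with entries $(\tb^{+}-\tb^{-})^{-1}\tb^{\mp}$, $(\tb^{+}-\tb^{-})^{-1}$, $\tb^{+}(\tb^{+}-\tb^{-})^{-1}\tb^{-}$ and $\tb^{\pm}(\tb^{+}-\tb^{-})^{-1}$, which by Prop.~\ref{prop01.2bbb} lie in $\Psi^{0}$, $\Psi^{-1}$, $\Psi^{1}$ and $\Psi^{0}$ respectively. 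Read against the splitting $\cH^{s}(\Sigma;\tV\otimes\cc^{2})=H^{s}(\Sigma;\tV)\oplus H^{s-1}(\Sigma;\tV)$, each block sends the relevant summand of $\cH^{s}$ continuously into the corresponding summand of $\cH^{s}$, and the smoothing remainder is bounded on every $\cH^{s}$; hence $\tilde{c}^{\pm}\in B(\cH^{s}(\Sigma;\tV\otimes\cc^{2}))$ for all $s\in\rr$.

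The algebraic identities rest on one elementary observation: for $u\in\bar{H^{1}}(\Omega^{\pm};\tV)$ with $\tD u\in L^{2}(\Omega^{\pm};\tV)$, writing $u^{\rm e}$ for the extension of $u$ by $0$ to $\Omega$, one has in $\cD'(\Omega;\tV)$
\[
\tD u^{\rm e}=(\tD u)^{\rm e}\mp\trho^{*}\tsig\,\trho^{\pm}u ,
\]
obtained by collecting the $\delta(s)$- and $\delta'(s)$-contributions produced by the jump of $u^{\rm e}$ across $\Sigma=\{s=0\}$ (using \eqref{e01.16} and the definition of $\tsig$); equivalently it is a restatement of the Green identity \eqref{e01.green1}. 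Now fix $f\in\cH^{s}(\Sigma;\tV\otimes\cc^{2})$ with $s$ large (the general case then follows from the boundedness above and density), put $v\defeq\tD_{\Omega}^{-1}\trho^{*}\tsig f$, so that $\trho^{\pm}v=\mp\tilde{c}^{\pm}f$ by \eqref{e01.16b}, and set $u^{\pm}\defeq v|_{\Omega^{\pm}}$. Since $\trho^{*}\tsig f$ is supported on $\Sigma$, $v$ is smooth away from $\Sigma$, $\tD u^{\pm}=0$ in $\Omega^{\pm}$, and $v$ vanishes on $\{s=\pm T\}$; moreover, solving the well-posed Dirichlet problem on $\Omega^{\pm}$ with data $(\trho^{\pm}v)_{0}\in H^{s}(\Sigma)$ at $\Sigma$ and $0$ at $\pm T$ (using Prop.~\ref{prop01.4} for the trace and Prop.~\ref{prop01.3b} for the regularity) gives $u^{\pm}\in\bar{H^{1}}(\Omega^{\pm};\tV)$. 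Then $v=(u^{+})^{\rm e}+(u^{-})^{\rm e}$, and the jump identity yields $\tD(u^{\pm})^{\rm e}=\mp\trho^{*}\tsig\,\trho^{\pm}v=\trho^{*}\tsig\,\tilde{c}^{\pm}f$; summing and comparing with $\tD v=\trho^{*}\tsig f$, and using the injectivity of $\trho^{*}\tsig$, we obtain $\tilde{c}^{+}+\tilde{c}^{-}=\one$.

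For idempotency, keep this notation and set $w^{\pm}\defeq(u^{\pm})^{\rm e}$, which satisfies $\tD w^{\pm}=\trho^{*}\tsig(\tilde{c}^{\pm}f)$ in $\cD'(\Omega;\tV)$, $w^{\pm}|_{\p\Omega}=0$, and restricts to an $\bar{H^{1}}$-section on each of $\Omega^{+},\Omega^{-}$. These three properties determine $w^{\pm}$ uniquely: if $w$ is a distribution with $\tD w=0$ in $\Omega$ that is $\bar{H^{1}}$ on each half, then by the jump identity the distributional $\tD w$ equals $\trho^{*}\tsig(\trho^{-}w-\trho^{+}w)$, which must vanish, so $w\in H^{1}(\Omega;\tV)$; if moreover $w|_{\p\Omega}=0$ then $w\in H^{1}_{0}(\Omega;\tV)$ and $\tD_{\Omega}w=0$, hence $w=0$ since $0\in\rs(\tD_{\Omega})$ by Prop.~\ref{prop01.2bb}. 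Since $\tD_{\Omega}^{-1}\trho^{*}\tsig(\tilde{c}^{\pm}f)$ also has these three properties (its $\Sigma$-traces lie in $H^{s}$ by Prop.~\ref{prop01.4}, and it is regular up to $\{s=\pm T\}$ by Prop.~\ref{prop01.3b}), it equals $w^{\pm}$; applying $\trho^{\pm}$ and using $\trho^{\pm}w^{\pm}=\trho^{\pm}u^{\pm}=\trho^{\pm}v=\mp\tilde{c}^{\pm}f$ gives $\mp\tilde{c}^{\pm}(\tilde{c}^{\pm}f)=\mp\tilde{c}^{\pm}f$, i.e.\ $(\tilde{c}^{\pm})^{2}=\tilde{c}^{\pm}$ on $\cH^{s}$ for $s$ large, and then on all $\cH^{s}$ by density and boundedness.

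The genuinely delicate point is the regularity bookkeeping across the interface $\Sigma$: one must check that $v=\tD_{\Omega}^{-1}\trho^{*}\tsig f$ restricts to honest elements of $\bar{H^{1}}(\Omega^{\pm};\tV)$, so that the traces $\trho^{\pm}$, the zero-extensions $(\cdot)^{\rm e}$, and the uniqueness step are all legitimate, and likewise that $\tD_{\Omega}^{-1}$ (extended off $L^{2}$) applied to the layer distribution $\trho^{*}\tsig(\cdot)$ is well defined. This combines the control of the $\Sigma$-traces from Prop.~\ref{prop01.4}, the well-posed Dirichlet problem on the halves $\Omega^{\pm}$ (proved as in Prop.~\ref{prop01.2bb}), and the regularity statement of Prop.~\ref{prop01.3b}; once these are in place the remainder is routine manipulation of the jump identity and of $0\in\rs(\tD_{\Omega})$.
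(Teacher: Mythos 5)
Your proposal is correct. The boundedness and the identity $\tilde{c}^{+}+\tilde{c}^{-}=\one$ are obtained exactly as in the paper (the latter is the jump-formula argument of \cite[Thm.~4.5]{GW2}, which the paper invokes by reference). Where you genuinely diverge is the idempotency: the paper proves $\tilde{c}^{\pm}=(\tilde{c}^{\pm})^{2}$ by the cutoff argument of \cite[Prop.~4.8]{GW2}, choosing functions $\psi_{n}$ adapted to the bounded geometry of $(\Sigma,\trh_{0})$ with $\nabla\psi_{n}=O(n^{-1})$, showing $\psi_{n}\tilde{c}^{\pm}f-\tilde{c}^{\pm}\psi_{n}\tilde{c}^{\pm}f\to 0$ in $\cD'$, and passing to the limit; this sidesteps the need to apply $\tD_{\Omega}^{-1}\trho^{*}\tsig$ to the non-compactly supported data $\tilde{c}^{\pm}f$ by approximating them with compactly supported ones. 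You instead characterize $(u^{\pm})^{\rm e}$ as the unique solution of a transmission problem (distributional equation $\tD w=\trho^{*}\tsig(\tilde{c}^{\pm}f)$, vanishing on $\p\Omega$, $\bar{H^{1}}$ on each half) and identify it with $\tD_{\Omega}^{-1}\trho^{*}\tsig(\tilde{c}^{\pm}f)$ via $0\in\rs(\tD_{\Omega})$. This is a clean and arguably more conceptual route, but it shifts the burden onto precisely the points you flag at the end: the mapping property $\tD_{\Omega}^{-1}:H^{m}(\Omega;\tV)\to H^{m+2}(\Omega;\tV)$ for negative $m$ (so that the layer distribution $\trho^{*}\tsig g$ with $g\in\cH^{s}$ merely of Sobolev class, not compactly supported, is admissible input), and the interface regularity ensuring $v|_{\Omega^{\pm}}\in\bar{H^{1}}(\Omega^{\pm};\tV)$ with well-defined one-sided traces --- the latter is the content of \cite[Lem.~A.1]{GW2}, which the paper does invoke, but only later in the proof of Prop.~\ref{prop:calde}. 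With those two facts granted (both available from Prop.~\ref{prop01.3b} and the explicit parametrix), your argument is complete, and it has the minor advantage of treating compact and non-compact $\Sigma$ uniformly.
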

 \proof The boundedness property follows immediately from Prop.~\ref{prop01.4}.  We prove that $\tilde{c}^{+}+ \tilde{c}^{-}=\one$ as in \cite[Thm.~4.5]{GW2}. To prove the last  statement we apply \cite[Prop.~4.8]{GW2} which generalizes easily to the vector bundle setting. 
 Let us explain the argument in  \cite[Prop.~4.8]{GW2}.
 If $\Sigma$ is compact the proof is elementary.

 If $\Sigma$ is not compact, we first choose a convenient sequence $(\psi_{n})_{n\in \nn}$ of cutoff functions on $\Sigma$.  We
 fix some reference point $x_{0}\in \Sigma$ and denote by $d(x_{0}, x)$ the geodesic distance for $\trh_{0}$. Since $(\Sigma, \trh_{0})$ is of bounded geometry, there exists a function $r\in \cinf(\Sigma)$  such that 
\[
C^{-1} d(x_{0}, x)\leq r(x)\leq C d(x_{0}, x), \ \nabla r\in \cinfb(\Sigma; T\Sigma).
\]
Next, we set  $\psi_{n}(x)= F(n^{-1}r(x))$ for $F\in \coinf(\rr)$ equal to $1$ near $0$.  Then:
\beq\label{prop-de-cutoff}
\begin{array}{rl}
i)& \psi_{n}= 1 \hbox{ on }K\Subset \Sigma \hbox{ for }n\hbox{ large enough},\\[2mm]
ii)& \nabla \psi_{n}\in O(n^{-1}) \hbox{ in }\cinfb(\Sigma; T\Sigma).
\end{array}
\eeq
Clearly,  $\slim_{n\to \infty}\psi_{n}= \one$
on $\coinf(\Sigma; \tV)\otimes \cc^{2}$. Arguing as in  \cite[Prop.~4.8]{GW2} we obtain that   $\psi_{n}\tilde{c}^{\pm}f- \tilde{c}^{\pm}\psi_{n}\tilde{c}^{\pm}f$ tends to $0$ in $\cD'(\Sigma; \tV)\otimes \cc^{2}$ as $n\to \infty$ for all $f\in \coinf(\Sigma; \tV)\otimes \cc^{2}$.

By \eqref{prop-de-cutoff} the sequence $\psi_{n}$ tends also strongly to $\one$ on $\cH^{s}(\Sigma; \tV\otimes \cc^{2})$. Therefore $\tilde{c}^{\pm}= (\tilde{c}^{\pm})^{2}$ on $\coinf(\Sigma; \tV)\otimes \cc^{2}$ and thus on $\cH^{s}(\Sigma; \tV\otimes \cc^{2})$ by density. \qed

\subsubsection{Dirichlet-to-Neumann maps}\label{sec01.3.8}
Let us now consider the following inhomogeneous boundary value problem:
\begin{equation}
\label{e01.20}
\begin{cases}
\tD u= 0\hbox{ in }\Omega^{\pm},\\
u\traa{\p \Omega^{+}\setminus \Sigma}=0,\\
u\traa{\Sigma}=v.
\end{cases}
\end{equation}
We claim that \eqref{e01.20} has a unique solution $u\eqdef P_{\Omega^{\pm}}v$. In fact, let $\chi\in \bar{\coinf}(\open{0, T})$ with $\chi= 1$ on $\open{0, T/2}$ and $\chi^{\pm}(s)= \chi(\pm s)$. Then for
\[
u_{1}= \chi^{\pm}(s)W^{\pm}(s, 0)v,
\]
we have $u_{1}(\pm T)=0$, $u_{1}(0)= v$ and by \eqref{e01.11}
\[
\tD u_{1}= (- \p_{s}+ \tb^{\pm})(\p_{s}\chi)^{\pm}W^{\pm}(\cdot, 0)v+ \tr^{\pm}_{-\infty}\chi^{\pm}W^{\pm}(\cdot, 0)v\defeq m^{\pm}_{-\infty}v,
\]
where $m^{\pm}_{-\infty}\in \cW^{-\infty}(\Sigma, \Omega^{\pm}; \tV)$. 

By the same arguments as in \ref{sec01.3.5b}, we can consider  the Dirichlet realization $\tD_{\Omega^{\pm}}$  of $\tD$ in $\Omega^{\pm}$, which for $0<T\ll 1$ is invertible. Therefore we can solve \eqref{e01.20} by
\beq\label{e01.21}
P_{\Omega^{\pm}}v= \chi^{\pm}W^{\pm}(\cdot, 0)v- \tD_{\Omega^{\pm}}^{-1}m^{\pm}_{-\infty}v.
\eeq
\begin{definition}\label{def01.2}
 The {\em Dirichlet-to-Neumann maps} are
 \[
 N_{\Omega^{\pm}}v\defeq- \p_{s}P_{\Omega^{\pm}}v \traa{\Sigma}.
 \]
 \end{definition}
\begin{proposition}\label{prop01.5} The Dirichlet-to-Neumann maps satisfy
 \[
 N_{\Omega^{\pm}}= \tb^{\pm}(0)+ r^{\pm}_{-\infty}, \ \ r^{\pm}_{-\infty}\in \cW^{-\infty}(\Sigma; \tV).
 \]
 
\end{proposition}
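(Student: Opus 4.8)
The plan is to compute $N_{\Omega^\pm}$ directly from the explicit parametrix \eqref{e01.21} for the solution operator $P_{\Omega^\pm}$. Recall that $P_{\Omega^\pm}v = \chi^\pm W^\pm(\cdot,0)v - \tD_{\Omega^\pm}^{-1}m^\pm_{-\infty}v$, where $m^\pm_{-\infty}\in\cW^{-\infty}(\Sigma,\Omega^\pm;\tV)$ and $\chi^\pm\equiv 1$ near $s=0$. First I would differentiate in $s$ and restrict to $\Sigma=\{s=0\}$. For the first term, since $\chi^\pm\equiv 1$ near $0$ we have $\p_s(\chi^\pm W^\pm(\cdot,0)v)\traa{\Sigma} = \p_s W^\pm(s,0)v\traa{s=0}$. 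By definition $W^\pm(s,s') = \Texp(-\int_{s'}^s \tb^\pm(\sigma)d\sigma)$, so $\p_s W^\pm(s,0) = -\tb^\pm(s)W^\pm(s,0)$, and evaluating at $s=0$ (where $W^\pm(0,0)=\one$) gives $-\p_s(\chi^\pm W^\pm(\cdot,0)v)\traa{\Sigma} = \tb^\pm(0)v$. This produces the leading term.

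Next I would show that the correction term $-\p_s\big(\tD_{\Omega^\pm}^{-1}m^\pm_{-\infty}v\big)\traa{\Sigma}$ is smoothing, i.e.\ lies in $\cW^{-\infty}(\Sigma;\tV)$. The key input is the mapping property of $\tD_{\Omega^\pm}^{-1}$: by the argument in the proof of Prop.~\ref{prop01.3b} (commuting tangential derivatives with $\tD$ and using the equation to control $\p_s$ derivatives), $\tD_{\Omega^\pm}^{-1}: H^s(\Omega^\pm;\tV)\to H^{s+2}(\Omega^\pm;\tV)$ for all $s\in\rr$. Since $m^\pm_{-\infty}\in\cW^{-\infty}(\Sigma,\Omega^\pm;\tV)$ maps $H^{-N}(\Sigma;\tV)$ into $H^{N}(\Omega^\pm;\tV)$ for every $N$, composing gives $\tD_{\Omega^\pm}^{-1}m^\pm_{-\infty}: H^{-N}(\Sigma;\tV)\to H^{N+2}(\Omega^\pm;\tV)$ for all $N$. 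Then the trace map $u\mapsto -\p_s u\traa{\Sigma}$, continuous from $H^{N+2}(\Omega^\pm;\tV)$ to $H^{N+1/2}(\Sigma;\tV)$ (or just $H^{N}$), shows the composition is smoothing from $\Sigma$ to $\Sigma$, hence defines $r^\pm_{-\infty}\in\cW^{-\infty}(\Sigma;\tV)$. Combining, $N_{\Omega^\pm}v = -\p_s P_{\Omega^\pm}v\traa{\Sigma} = \tb^\pm(0)v + r^\pm_{-\infty}v$, which is the claim.

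The main obstacle I anticipate is the careful justification that the remainder is genuinely in $\cW^{-\infty}(\Sigma;\tV)$ rather than merely regularizing by a finite number of derivatives. This requires one to track uniformly (in $N$) the mapping bounds of $\tD_{\Omega^\pm}^{-1}$, which relies on the elliptic regularity up to the boundary $\{s=\pm T\}$ of $\Omega^\pm$ combined with interior/corner considerations near $\p\Omega^\pm\setminus\Sigma$; here one uses that the Dirichlet realization $\tD_{\Omega^\pm}$ is well-defined and invertible for $0<T\ll 1$ (as established by the form-perturbation argument analogous to Prop.~\ref{prop01.2bb}), so that elliptic regularity holds globally on $\Omega^\pm$ for data supported away from the interface. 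A secondary point to keep clean is that all the $\Psi$DO statements are in Shubin's calculus on the manifold $\Sigma$ of bounded geometry, so one should phrase the smoothing claims using the ideal $\cW^{-\infty}$ and invoke that $\tb^\pm\in\cinfb(I;\Psi^1(\Sigma;\tV))$ together with the $\Psi$DO calculus to ensure that $W^\pm(s,0)$ has the expected structure near $s=0$. Once these regularity bookkeeping points are in place, the computation of the leading symbol is immediate from the definition of $W^\pm$.
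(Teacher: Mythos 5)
Your proposal is correct and follows essentially the same route as the paper: differentiate the explicit parametrix \eqref{e01.21}, read off the leading term $\tb^{\pm}(0)$ from $\p_{s}W^{\pm}(s,0)=-\tb^{\pm}(s)W^{\pm}(s,0)$, and show the remaining term is smoothing. Your treatment of the remainder via the mapping property $\tD_{\Omega^{\pm}}^{-1}:H^{s}\to H^{s+2}$ for all $s$ is just a slight repackaging of the paper's appeal to Prop.~\ref{prop01.3b} (where that property is established by commuting tangential derivatives), so the two arguments coincide in substance.
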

\proof By  \eqref{e01.21} we have 
\[
\bea
N_{\Omega^{\pm}}v&= - \p_{s}\chi^{\pm}W^{\pm}(\cdot, 0)v\traa{\Sigma}+ \p_{s}\tD_{\Omega^{\pm}}^{-1}m_{-\infty}^{\pm}v\traa{\Sigma}\\
&\eqdef\tb^{\pm}(0)v+ r^{\pm}_{-\infty}v.
\eea
\]
To prove that $r^{\pm}_{-\infty}\in  \cW^{-\infty}(\Sigma; \tV)$, we  use Prop.~\ref{prop01.3b}  and the explicit form of the parametrix $\tD_{\Omega}^{(-1)}$  with $\Omega$ replaced by $\Omega^{\pm}$. \qed

 \subsubsection{Positivity of the Dirichlet to Neumann maps}\label{sec01.3.9}
 \begin{proposition}\label{prop01.5b}
 The Dirichlet-to-Neumann maps satisfy
 \[
 \pm \Re N_{\Omega^{\pm}}\sim (-\Delta_{\trh_{0}}+1)^{\12} \hbox{ on }\coinf(\Sigma; \tV),
 \]
 for the scalar product $(\cdot| \cdot)_{\tV(\Sigma)}$.
\end{proposition}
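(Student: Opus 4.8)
The plan is to establish, for all $f\in\coinf(\Sigma;\tV)$, the two-sided bound
\[
c\,(f| (-\Delta_{\trh_{0}}+1)^{\12}f)_{\tV(\Sigma)}\ \leq\ \pm(f| (\Re N_{\Omega^{\pm}})f)_{\tV(\Sigma)}\ \leq\ C\,(f| (-\Delta_{\trh_{0}}+1)^{\12}f)_{\tV(\Sigma)},
\]
obtaining the lower bound from the variational (Dirichlet-energy) description of the Dirichlet-to-Neumann map via Green's formula, and the upper bound from the pseudodifferential structure of $N_{\Omega^{\pm}}$ provided by Prop.~\ref{prop01.5}. Throughout write $\Lambda\defeq(-\Delta_{\trh_{0}}+1)^{\12}$, a self-adjoint invertible element of $\Psi^{1}(\Sigma;\tV)$ with principal symbol $(k\dual\trh_{0}^{-1}k)^{\12}\one_{\tV}$, so that $(f| \Lambda f)_{\tV(\Sigma)}\sim\|f\|^{2}_{H^{\12}(\Sigma;\tV)}$ in Shubin's calculus on the bounded geometry manifold $(\Sigma,\trh_{0})$.

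For the lower bound I would take $w=P_{\Omega^{\pm}}f$, the solution of \eqref{e01.20} given by \eqref{e01.21}, which lies in $\bar{H^{1}}(\Omega^{\pm};\tV)$ with $\tD w=0$, $\trho_{\pm T}w=0$ and $\trho^{\pm}w=\col{f}{N_{\Omega^{\pm}}f}$. Applying the Green formula \eqref{e01.green2} with $u=v=w$ annihilates the left-hand side and the trace term at $\pm T$, leaving
\[
\pm(f| (\Re N_{\Omega^{\pm}})f)_{\tV(\Sigma)}=\eta_{\Omega^{\pm}}(w,w)=\|\p_{s}w\|^{2}_{\tV(\Omega^{\pm})}+(w| (\Re\ta)\,w)_{\tV(\Omega^{\pm})}.
\]
Since $\Re\ta(s)$ is elliptic of order $2$ with principal symbol $\Re(k\dual\trh_{s}^{-1}k)\one_{\tV}\geq c\,k\dual\trh_{0}^{-1}k\,\one_{\tV}$ uniformly in $s$ by the coercivity \eqref{e01.8}, a uniform Gårding inequality on $(\Sigma,\trh_{0})$ combined with the one-dimensional Poincaré inequality in $s$ on $\Omega^{\pm}$ (valid since $w$ vanishes at $s=\pm T$) and the smallness $0<T\ll1$ give $\eta_{\Omega^{\pm}}(w,w)\geq c\,\|w\|^{2}_{H^{1}(\Omega^{\pm};\tV)}$; then the (uniform) boundedness of the trace map $H^{1}(\Omega^{\pm};\tV)\to H^{\12}(\Sigma;\tV)$ yields $\|w\|^{2}_{H^{1}(\Omega^{\pm};\tV)}\geq c\,\|w\traa{\Sigma}\|^{2}_{H^{\12}(\Sigma;\tV)}=c\,\|f\|^{2}_{H^{\12}(\Sigma;\tV)}\sim c\,(f| \Lambda f)_{\tV(\Sigma)}$, which is the claimed lower bound.

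For the upper bound I would invoke Prop.~\ref{prop01.5}: $N_{\Omega^{\pm}}=\tb^{\pm}(0)+r^{\pm}_{-\infty}$ with $r^{\pm}_{-\infty}\in\cW^{-\infty}(\Sigma;\tV)$, hence $\pm\Re N_{\Omega^{\pm}}=\pm\Re\tb^{\pm}(0)+\Re r^{\pm}_{-\infty}$. The operator $\pm\Re\tb^{\pm}(0)$ is a self-adjoint element of $\Psi^{1}(\Sigma;\tV)$ whose principal symbol, by the factorization $\tb^{\pm}=\pm\teps+\cinfb(I;\Psi^{0}(\Sigma;\tV))$ from Prop.~\ref{prop01.2bbb} and the identity $\sigma_{\rm pr}(\teps(s))=(\sigma_{\rm pr}(\ta(s)))^{\12}$ evaluated at $s=0$ (where $\trh_{0}$ is Riemannian), equals $(k\dual\trh_{0}^{-1}k)^{\12}\one_{\tV}=\sigma_{\rm pr}(\Lambda)$; therefore $\Lambda^{-\12}(\pm\Re\tb^{\pm}(0))\Lambda^{-\12}\in\Psi^{0}(\Sigma;\tV)$ is bounded on $L^{2}(\Sigma;\tV)$, which amounts to $\pm\Re\tb^{\pm}(0)\leq C\Lambda$ as quadratic forms. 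Since $\Re r^{\pm}_{-\infty}$ is smoothing, hence bounded on $L^{2}(\Sigma;\tV)$, one has $\Re r^{\pm}_{-\infty}\leq C\one\leq C\Lambda$ using $\Lambda\geq\one$; adding these gives $\pm\Re N_{\Omega^{\pm}}\leq C\Lambda$, completing the two-sided estimate.

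The step I expect to be the main obstacle is the lower bound: one has to make both the Gårding estimate for $\Re\ta$ and the trace inequality uniform on the possibly non-compact Cauchy surface $(\Sigma,\trh_{0})$ and to absorb the subleading part of $\Re\ta$, which is precisely where the bounded-geometry hypotheses and the smallness of $T$ enter. The upper bound, by contrast, is routine symbol bookkeeping once Prop.~\ref{prop01.5} is available.
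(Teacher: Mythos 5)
Your proposal is correct and follows essentially the same route as the paper: the upper bound is read off from the pseudodifferential description $N_{\Omega^{\pm}}=\tb^{\pm}(0)+r^{\pm}_{-\infty}$ of Prop.~\ref{prop01.5}, and the lower bound comes from taking $u=P_{\Omega^{\pm}}v\in\bar{H^{1}_{0}}(\Omega^{\pm};\tV)$, applying the Green formula \eqref{e01.green2} to get $\pm(v|\Re N_{\Omega^{\pm}}v)=\eta_{\Omega^{\pm}}(u,u)\sim\|u\|^{2}_{H^{1}(\Omega^{\pm})}$ via the coercivity \eqref{eq:coercive}, and concluding with the boundedness of the trace map into $H^{\12}(\Sigma;\tV)$. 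The only difference is that you spell out the symbol bookkeeping for the upper bound and the uniform Gårding/Poincaré input for the coercivity, which the paper leaves implicit.
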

\proof 
 By Prop.~\ref{prop01.5} we have $\pm \Re N_{\Omega^{\pm}}\leq c_{0}(-\Delta_{\trh_{0}}+1)^{\12}$, so it suffices to prove the other inequality.
Let $u= P_{\Omega^{\pm}}v$ for $v\in \coinf(\Sigma; \tV)$.  We have $u\in \bar{H^{1}_{0}}(\Omega^{\pm}; \tV)$ and $\tD u= 0$ in $\Omega^{\pm}$ so we can apply \eqref{e01.green2}. We obtain:
\[
\pm(v|\Re N_{\Omega^{\pm}}v)_{\tV(\Sigma)}= \eta_{\Omega^{\pm}}(u, u)\sim \|u\|^{2}_{H^{1}(\Omega^{\pm})}.
\]
By the continuity properties of $u\mapsto u\traa{\Sigma}$ between Sobolev spaces  the rhs is equivalent to $\|v\|^{2}_{H^{\12}(\Sigma)}$.  \qed

\subsection{Relation between Lorentzian and Euclidean projectors}\label{sec01.4} 

We are ready to prove that if $\ta(s)$ formally coincides with the Wick rotation of $a(t)$ in the sense of Taylor coefficients at $0$, then the  Hadamard projectors $c^\pm$  for $D=\p_t^2+a(t)$ differs by a smoothing term from the \calde projectors $\tilde c^\pm$ for $\tD=-\p_s^2+\ta(s)$.

\begin{proposition}\label{prop01.6}
 Suppose that
\beq\label{e01.23b}
  (\i \p_{t})^{n}a(0)= \p_{s}^{n}\ta(0) \  \forall n\in \nn.
 \eeq
 Then
 \beq\label{e01.22b}
 c^{\pm}- \tilde{c}^{\pm}\in \cW^{-\infty}(\Sigma; \tV\otimes \cc^{2}).
 \eeq
 \end{proposition}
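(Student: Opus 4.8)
The plan is to reduce the statement to a comparison of the two explicit factorizations constructed earlier, namely the Lorentzian factorization of $D$ in Prop.~\ref{prop01.1} (via $b^{\pm}$) and the Euclidean factorization of $\tD$ in Prop.~\ref{prop01.2bbb} (via $\tb^{\pm}$), and then to use the explicit formulas \eqref{e01.21b} and \eqref{e01.16c} for $c^{\pm}$ and $\tilde c^{\pm}$ in terms of these symbols evaluated at the interface $\{t=0\}=\{s=0\}$. Since both formulas have exactly the same matrix shape, the whole proposition comes down to showing that
\[
b^{\pm}(0)- \tb^{\pm}(0)\in \Psi^{-\infty}(\Sigma; V),
\]
because then the entries of the two matrices differ by operators in $\cW^{-\infty}(\Sigma; \tV)$ (using that $(b^+-b^-)^{-1}$, $(\tb^+-\tb^-)^{-1}\in\Psi^{-1}$ and composition/inversion preserve the smoothing ideal), and the already-present smoothing remainders $R^{\pm}_{-\infty}$ in \eqref{e01.16c} are absorbed. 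So the core task is purely a statement about the Taylor coefficients at $0$ of the two symbols.

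First I would observe that under the Wick rotation hypothesis \eqref{e01.23b}, the operator families $a(t)$ and $\ta(s)$ have, by construction, matching formal Taylor series at the interface: $(\i\p_t)^n a(0)=\p_s^n\ta(0)$. The key point is that the fixed-point equations determining the lower-order parts $b_0$ and $\tb_0$ (equations \eqref{e01.3c} and the analogous equation in the proof of Prop.~\ref{prop01.2bbb}) are \emph{local in $t$ resp.\ $s$} in the sense of formal symbols: they express the full symbol of $b_0$, order by order in the symbol calculus, as universal differential-polynomial expressions in the symbol of $\epsilon$ (resp.\ $\teps$) and its $t$- (resp.\ $s$-) derivatives. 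Hence I would proceed in two steps: (i) show that the approximate square roots satisfy $(\i\p_t)^n\epsilon(0)=\p_s^n\teps(0)$ modulo $\Psi^{-\infty}$ for all $n$; this follows from the integral representation \eqref{e01.2f}, since $\epsilon$ and $\teps$ are defined by the \emph{same} functional-calculus formula applied to $a+r_{-\infty}$ and $\ta+\tilde r_{-\infty}$ respectively, and the $r_{-\infty}$ terms are smoothing, so all $t$- resp.\ $s$-derivatives at $0$ agree modulo $\Psi^{-\infty}$ once one takes into account the factor $\i$ from $t=\i s$. (ii) Feed this into the fixed-point recursion: since the recursion determines the symbol of $b_0$ at the interface as a universal expression in the jets of $\epsilon$ at $t=0$ (with the same $\i\p_t\leftrightarrow\p_s$ correspondence built into the two equations — note $\i\p_t b - b^2 + a$ versus $\p_s\tb - \tb^2 + \ta$), and the same for $\tb_0$ in terms of the jets of $\teps$, the uniqueness modulo $\Psi^{-\infty}$ (Prop.~\ref{prop:fixed-point}) forces $b^{\pm}(0)-\tb^{\pm}(0)\in\Psi^{-\infty}(\Sigma;V)$.

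Then I would conclude by plugging into the matrix formulas. From \eqref{e01.21b} we have $c^{\pm}$ given entrywise by rational (in fact polynomial composed with one inversion) expressions in $b^{+}(0), b^{-}(0)$, and from \eqref{e01.16c} the projectors $\tilde c^{\pm}$ are given by the \emph{identical} expressions in $\tb^{+}(0), \tb^{-}(0)$, up to the remainder $R^{\pm}_{-\infty}\in\cW^{-\infty}$. Using that $b^{\pm}(0)=\tb^{\pm}(0)+\Psi^{-\infty}$, that $(b^+-b^-)(0)$ is elliptic of order $1$ with inverse in $\Psi^{-1}$ (Prop.~\ref{prop01.1}~iii) and Prop.~\ref{prop01.2bbb}~iv)), and that both composition of $\Psi$DOs with a $\Psi^{-\infty}$ factor and perturbation of the inverse of an elliptic operator by a $\Psi^{-\infty}$ term stay in $\Psi^{-\infty}$, every entry of $c^{\pm}-\tilde c^{\pm}$ lands in $\cW^{-\infty}(\Sigma;\tV)$, which is \eqref{e01.22b}.

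The main obstacle I anticipate is step (i)–(ii): making rigorous the claim that the symbol of $b_0$ at the interface depends only on the jet of $a$ (equivalently of $\epsilon$) at $t=0$ modulo smoothing, i.e.\ that the fixed-point construction in Prop.~\ref{prop:fixed-point}/\ref{prop01.1} is compatible with the formal-Taylor-at-$0$ operation and with the substitution $t=\i s$. Concretely, one must check that solving \eqref{e01.3c} can be done order by order in the $\Psi$DO calculus by a Borel-type summation whose $k$-th symbol level at $t=0$ is a fixed universal expression in $\p_t^j\sigma(\epsilon)|_{t=0}$, $j\le$ some bound, so that the \emph{only} inputs are the Taylor coefficients $(\i\p_t)^n a(0)$; granting the hypothesis \eqref{e01.23b}, these coincide with $\p_s^n\ta(0)$, and the same recursion applied to the Euclidean problem (with $\p_s$ in place of $\i\p_t$ throughout — which is exactly how \eqref{e01.11} relates to \eqref{e01.5} under $t\mapsto\i s$) yields the same symbol for $\tb_0$ at $s=0$. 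Once this formal-jet compatibility is in place, the rest is routine symbol bookkeeping.
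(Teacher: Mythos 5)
Your overall strategy is exactly the paper's: reduce \eqref{e01.22b} to $b^{\pm}(0)-\tb^{\pm}(0)\in\cW^{-\infty}(\Sigma;\tV)$ by comparing \eqref{e01.21b} with \eqref{e01.16c}, match the jets of $\epsilon$ and $\teps$ at the interface via the integral representation \eqref{e01.2f}, and propagate this through the fixed-point recursions using the correspondence $\i\p_t\leftrightarrow\p_s$ together with the uniqueness modulo smoothing. Your step (i) and step (ii) are the paper's Step 2 almost verbatim, and the concluding symbol bookkeeping is the same.

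There is, however, one point you skate over which the paper treats as a separate step and which your argument as written does not cover: the ``$-$'' sign. In the Lorentzian construction $b^{-}$ is \emph{not} produced by a fixed-point recursion; it is defined outright as $b^{-}=-b^{\star}=-\tau^{-1}b^{+*}\tau$ in \eqref{e01.21c}, whereas $\tb^{-}$ is constructed by its own, independent fixed-point argument in Prop.~\ref{prop01.2bbb} (with the modified map $\tilde F^{-}$, not ``the same recursion''). So your claim that uniqueness of the recursion ``forces $b^{\pm}(0)-\tb^{\pm}(0)\in\Psi^{-\infty}$'' only delivers the $+$ case directly. To handle the $-$ case one must first show $\tb^{-}(0)=-\tau^{-1}\tb^{+*}(0)\tau$ modulo $\cW^{-\infty}$, which is the content of the paper's Step 1: one checks that $\hat b(s)=-\tau^{-1}\tb^{+*}(-s)\tau$ solves the Riccati equation for $\hat a(s)=\tau^{-1}\ta^{*}(-s)\tau$, and then that $\p_s^{n}\hat a(0)=\p_s^{n}\ta(0)$ for all $n$ — and this last identity uses hypothesis (H1) ($a(t)=a^{\star}(t)$) combined with \eqref{e01.23b}, via $\p_s^n\ta^{*}(0)=(-1)^n\tau\,\p_s^n\ta(0)\,\tau^{-1}$. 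Note the reflection $s\mapsto -s$: unlike the Lorentzian relation $b^{-}(t)=-\tau^{-1}b^{+*}(t)\tau$, which holds pointwise in $t$, the Euclidean adjoint relation only compares $\tb^{-}$ at $s$ with $\tb^{+}$ at $-s$, so the matching of jets genuinely happens only at $s=0$. This is a fixable but real omission: without it the $-$ half of \eqref{e01.30}, and hence of \eqref{e01.22b}, is not established.
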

 \proof
By comparing formulas \eqref{e01.21b}  and \eqref{e01.16c}, we see that it suffices to prove that
 \beq\label{e01.30}
 b^{\pm}(0)- \tb^{\pm}(0)\in \cW^{-\infty}(\Sigma; \tV),
 \eeq
 where $b^{\pm}$ are defined in \eqref{e01.21c} and $\tb^{\pm}$ in Prop.~\ref{prop01.2bbb}. 
 The  proof of \eqref{e01.30} is divided in two steps.

 \step{1}  In the first step we prove that
 \begin{equation}
 \label{e01.28}
 \tb^{-}(0)= - \tau^{-1}\tb^{+*}(0)\tau \, \hbox{ mod } \, \cinfb(I; \cW^{-\infty}(\Sigma; \tV)).
 \end{equation}
 
 Let us prove \eqref{e01.28}.
  Let  us abbreviate  $\tb^{+}$ by $\tb$.  We know that $\tb=\teps+ \tb_{0}$,  where $\tb_{0}$ solves the fixed point equation
\begin{equation}
\label{e01.25}
\tb_{0}= \tilde{c}_{0}+ \tilde{F}(\tb_{0}), \ \  \tilde{c}_{0}= (2\teps)^{-1}(\p_{s}\teps)
\end{equation}
 for
 \beq\label{e01.25b}
\tilde{F}(\tilde{d})= (2\teps)^{-1}(\p_{s}\tilde{d}+ [\teps, \tilde{d}]-\tilde{d}^{2}).
\eeq
Let us recall that \eqref{e01.25} is solved symbolically by
\begin{equation}
\label{e01.26}
\tb_{0}= \tilde{c}_{0}+ \sum_{k\geq 1}\tilde{d}_{k}- \tilde{d}_{k-1}, \quad \tilde{d}_{0}= \tilde{c}_{0}, \ \tilde{d}_{k}= \tilde{c}_{0}+ \tilde{F}(\tilde{d}_{k-1}).
\end{equation}

Let us also set 
  \[
  \hat{b}(s)= - \tau^{-1}\tb^{*}(-s)\tau.
  \]
  Since 
  \[
  \p_{s}\tb(s)- \tb^{2}(s)+ a(s)=0\hbox{ mod }\cinfb(I; \cW^{-\infty}(\Sigma; \tV)),
  \]
   we obtain that 
  \[
  \p_{s}\hat{b}(s)- \hat{b}^{2}(s)+ \hat{a}(s)=0\hbox{ mod }\cinfb(I; \cW^{-\infty}(\Sigma; \tV))
  \]
   for $\hat{a}(s)= \tau^{-1}\ta^{*}(-s)\tau$.  Note that $\hat{a}(s)$ has the same properties as $\ta(s)$. Moreover,
 \[
 \hat{b}(s)= -\hat{\epsilon}(s)\hbox{ mod }\cinfb(I; \Psi^{0}(\Sigma; \tV)),
 \]
 where $\hat{\epsilon}(s)= \hat{a}^{\12}(s)$ is the $m$-accretive square root of $\hat{a}(s)$. By the uniqueness statement in Prop.~\ref{prop01.2bbb}, it follows that 
 \[
 \hat{b}(s)= \hat{b}^{-}(s)\hbox{ mod }\cinfb(I; \cW^{-\infty}(\Sigma; \tV)),
 \]
 where $\hat{b}^{-}(s)$ is the solution in Prop.~\ref{prop01.2bbb} with $a(s)$ replaced by $\hat{a}(s)$. Therefore we have
 \[
 \hat{b}= - \hat{\epsilon}+ \hat{b}_{0},
 \]
 where $\hat{b}_{0}$ solves the fixed point equation:
\begin{equation}
\label{e01.24}
\hat{b}_{0}= \hat{c}_{0}+ \hat{F}^{-}(\hat{b}_{0}),  \ \ \hat{c}_{0}= (2\hat{\epsilon})^{-1}\p_{s}\hat{\epsilon},
\end{equation}
where
\beq\label{e01.24b}
\hat{F}^{-}(\hat{d})= (2\hat{\epsilon})^{-1}(-\p_{s}\hat{d}+ [\hat{\epsilon}, \hat{d}]- \hat{d}^{2}).
\eeq
Next,  \eqref{e01.24} is solved symbolically by \eqref{e01.26}, with $\tilde{c}_{0}, \tilde{d}_{k}, \tilde{F}$ replaced by $\hat{c}_{0}$, $\hat{d}_{k}, \hat{F}$.

 A direct inspection of \eqref{e01.26} shows that modulo $\cinfb(I; \cW^{-\infty}(\Sigma; \tV))$, $\tb_{0}(0)$ depends only on the Taylor expansion of $s\mapsto \teps(s)$ at $s=0$, i.e.~on the Taylor expansion of $s\mapsto \ta(s)$ at $s= 0$. The same result holds also  for $\hat{b}_{0}(0)$, replacing $\ta(s)$ by $\hat{a}(s)$.
  
  We claim that
  \begin{equation}
  \label{e01.27} 
  (\p_{s})^{n}\hat{a}(0)= \p_{s}^{n}\ta(0), \ \forall n\in \nn, 
  \end{equation}
  which by the discussion above implies \eqref{e01.28}. 
 We first note that since $a(t)= a^{\star}(t)$ by ${\rm (H1)}$, we have $a^{*}(t)= \tau a(t)\tau^{-1}$, hence 
 \[
 ((\i \p_{t})^{n}a)^{*}(0)= (-1)^{n}\tau (\i \p_{t})^{n}a(0)\tau^{-1}, \ n\in \nn.
 \]
  By \eqref{e01.23b} this implies that
 \beq\label{e01.22c}
 \p_{s}^{n}\ta^{*}(0)= (-1)^{n}\tau\p_{s}^{n}\ta(0)\tau^{-1}, \ n\in \nn,
 \eeq
 and hence \eqref{e01.27}. This completes Step 1.
 
\step{2} In Step 2 we complete the proof of the proposition. Let $b$ be  the operator from Prop.~\ref{prop01.1}. We know that $b= \epsilon+ b_{0}$, where  $b_{0}$ solves the fixed point equation
 \beq\label{e01.23}
b_{0}= c_{0}+ F(c_{0}), \ \ c_{0}= (2\epsilon)^{-1}(\i \p_{t}\epsilon).
\eeq
for
\beq\label{e01.22}
F(d)= (2\epsilon)^{-1}(\i \p_{t}d+ [\epsilon, d]-d^{2}).
 \eeq
The equation \eqref{e01.23} is solved symbolically as before by 
\beq\label{e01.29}
b_{0}= c_{0}+ \sum_{k\geq 1}d_{k}- d_{k-1}, \quad d_{0}= c_{0}, \quad d_{k}= c_{0}+ F(d_{k-1}).
\eeq
 From the definitions of $F, \tilde{F}$ we get that if $d\in \cinfb(I; \Psi^{0}(\Sigma; V))$, $\tilde{d}\in \cinfb(I; \Psi^{0}(\Sigma; \tV))$ satisfy 
\[
(\i \p_{t})^{n}d(0)= (\p_{s})^{n}\tilde{d}(0), \  n\in \nn,
\]
then
\[
(\i \p_{t})^{n}F(d)(0)= (\p_{s})^{n}\tilde{F}(\tilde{d})(0), \  n\in \nn.
\]
Since $(\i \p_{t})^{n}a(0)= \p_{s}^{n}\ta(0) \ \forall n\in \nn$, we obtain by \eqref{e01.2f} that 
\[
(\i \p_{t})^{n}\epsilon(0)= \p_{s}^{n}\teps(0), \ n\in \nn,\hbox{ i.e. }(\i \p_{t})^{n}c_{0}(0)= \p_{s}^{n}\tilde{c}_{0}(0), \  n\in \nn.
\]
Therefore $(\i \p_{t})^{n}d_{k}(0)= (\p_{s})^{n}\tilde{d}_{k}(0) \ \forall n, k\in \nn$, which  
 implies that \[
 b(0)- \tb(0)= b^{+}(0)- \tb^{+}(0)\in \cW^{-\infty}(\Sigma; \tV).
 \]
 We have $b^{-}(0)= - b^{\star}(0)= - \tau^{-1}b^{*}(0)\tau$, see \eqref{e01.21c}, and $\tb^{-}(0)= - \tau^{-1}\tb^{*}(0)\tau$ mod $\cinfb(I; \cW^{-\infty}(\Sigma; \tV))$ by  \eqref{e01.28}. Therefore 
 \[
 b^{-}(0)- \tb^{-}(0)\in \cW^{-\infty}(\Sigma; \tV),
 \] which implies the proposition. \qed
\section{Wick rotation in Gaussian time}\label{sec3}\init

In this section we prove the main result of this paper, namely the existence of Hadamard states for linearized gravity on Einstein spacetimes $(M, \rg)$ satisfying the analyticity hypotheses in Subsect.~\ref{hippopotamus}.

 The idea of constructing states by Wick rotation in Gaussian time $t$ is inspired by the earlier work in the real analytic case \cite{GW2}, see \cite{schapira,wrochnawick} for further analyticity properties in related frameworks; we also remark that these results are largely consistent with the program recently outlined by Kontsevich--Segal \cite{kontsevich}.

\subsection{Framework}
 We assume that the metric $\rg$ satisfies the hypotheses in Subsect.~\ref{hippopotamus} and we apply the framework of Sect.~\ref{sec01} to the reduced operators $\hat{D}_{i}$, $i=1, 2$, constructed in Subsect.~\ref{ss:reduction}.

 \subsubsection{Hyperbolic operators}
From now on  the  operators $\hat{D}_{i}, \hat{d}$,$\hat{I}$ introduced in \ref{sssect.reduct} will simply be denoted by  $D_{i}= \p_{t}^{2}+ a_{i}(t)$, $d$ and $I_{2}$ respectively. 
  We can write
 $d$ as 
\[
d= d_{0}(t)\p_{t}+ d_{1}(t),\ d_{i}\in \cinfb(I; \Diff^{i}(\Sigma; V_{1}, V_{2})).
\]

   In order to have more uniform formulas with respect to  the index $i=1,2$ we set $I_{1}= \one$, which corresponds to the notation introduced in \eqref{def-physical-scalar-product}.
  
  The trace operators $\varrho_{i}$ are defined as in \eqref{e01.1bb}. 
  \subsubsection{Hermitian bundles}
 The  Hermitian  bundles are 
 \[
V_{i}= I\times (\sum_{k=0}^{i}\cc\otimes_{\rm s}^{k}T^{*}\Sigma),
 \]
 see  the identifications \eqref{etiti.0} and \eqref{etiti.-1}, equipped with the Hermitian structures
 \[
(\cdot | \cdot)_{V_{1}}= (\cdot| \cdot)_{\rg_{0}}, \quad (\cdot | \cdot)_{V_{2}}= 2(\cdot| \cdot)_{g^{\otimes 2}_{0}}.
 \]
 The  physical Hermitian forms  are:
\beq\label{e3.-1}
(u| u)_{I, V_{i}}= (u | I_{i} u)_{V_{i}}.
\eeq
\subsubsection{Hilbertian bundles}
 The corresponding Hilbertian bundles are $\tV_{i}$, equal to $V_{i}$ as complex vector bundles but equipped with the Hilbertian structures defined by
 \[
 (u| \tau_{i}v)_{\tV_{i}}= (u| v)_{V_{i}},
 \]
 for
 \begin{equation}
 \label{e3.1b}
 \tau_{0}= 1, \quad \tau_{1}= \mat{-1}{0}{0}{\one}, \quad \tau_{2}= \left(\begin{array}{ccc}
 1&0&0\\
 0&-\one&0\\
 0&0&\one
 \end{array}\right).
 \end{equation}
 As in Subsect.~\ref{sec:lg.3.1} and \ref{sec:lg.3.2}, we write an element of  $\tV_{1}$, resp.~$\tV_{2}$, as
 \[
  w= (w_{s}, w_{\Sig}), \hbox{ resp.~} u= (u_{ss}, u_{s\Sig}, u_{\Sig\Sig}),
 \]
 which gives:
\[
 \begin{array}{l}
  (w| w)_{\tV_{1}}= \bar{w_{s}}w_{s}+ \bar{w}_{\Sig}\dual \rh_{0}^{-1}\bar{w}_{\Sig},\\[2mm]
  (u| u)_{\tV_{2}}= 2(\bar{u_{ss}}u_{ss}+ 2\bar{u}_{s\Sig}\dual \rh_{0}^{-1}\bar{u}_{s\Sig}+ \bar{u}_{\Sig\Sig}\dual (h_{0}^{\otimes 2})^{-1}u_{\Sig\Sig}).
\end{array}
 \]
\def\tK{\tilde{K}}

\subsubsection{Gauge invariance}
  We recall that the gauge operator is $K =I\circ d$, see \eqref{e2.4bb}. It will be convenient to treat the factors $I$ and $d$ in $K$ separately. 
  We recall from \Lemma \ref{lemma2.1} {\it ii)} that $d= d_{0}\p_{t}+ d_{1}$ satisfies $D_{2}d= dD_{1}$. 
  We   can hence define the Cauchy surface version of $d$ by
  \beq\label{defdeTsig}
\varrho_{2}\circ d\eqdef T_{\Sigma}\circ \varrho_{1}\hbox{ on }\Ker D_{1}|_{\cinfsc(M; V_{1})}.
  \eeq
Using  \eqref{e3.1} we obtain that 
\begin{equation}
\label{e3.4}
T_{\Sigma}= \mat{d_{1}(0)}{\i d_{0}(0)}{\frac{\i}{2}(d_{0}a_{1}+ a_{2}d_{0})(0)}{d_{1}(0)}.
\end{equation}

  \subsubsection{Trace reversal}
Recall that the trace reversal  operator $\hat{I}$ in \eqref{etiti.7}   is denoted by $I_{2}$. Explicitly,
\[
I_{2}u= u- \frac{1}{4}(\rg_{0}|u)_{V_{2}}\rg_{0}, \ u\in \cinf(M; V_{2}),
\]
 and  we have also set 
\[
I_{1}u= u, \ u\in \cinf(M; V_{1}).
\]
Since $D_{i}I_{i}= I_{i}D_{i}$, we can define $I_{i\Sigma}: \coinf(\Sigma; V_{i}(\Sigma)\otimes \cc^{2})$ by
\[
\varrho_{i}\circ I_{i}\eqdef I_{i\Sigma}\circ \varrho_{i} \hbox{ on }\Ker D_{i}|_{\cinfsc(M; V_{i})}.
\]
A routine computation shows that
\begin{equation}
\label{e3.7}
I_{i\Sigma}= I_{i}\otimes \one_{\cc^{2}},
\end{equation}
and since $K= I\circ d$ we have
\[
K_{\Sigma}= I_{\Sigma}\circ T_{\Sigma}.
\]
\subsubsection{Physical charge on a Cauchy surface}
The (unphysical) charges for $D_{i}$ on the Cauchy surface $\Sigma$ corresponding to the (unphysical) Hermitian forms $(\cdot| \cdot)_{V_{i}}$ in \eqref{eq:green} are given by
\beq\label{unphysical}
q_{i}= \mat{0}{\tau_{i}}{\tau_{i}}{0}. 
\eeq
For linearized gravity one has to use the physical charges corresponding to $(\cdot| \cdot)_{I, V_{i}}$, see \eqref{e3.-1}.  On the Cauchy surface $\Sigma$ they are given by
\beq\label{e3.-2}
q_{i, {\rm phys}}= q_{i}\circ I_{i\Sigma}= \mat{0}{\tau_{i}I_{i}}{\tau_{i}I_{i}}{0}.
\eeq
The extra subscript in $q_{i, {\rm phys}}$ (absent in previous parts of the paper) is used throughout this section to disambiguate from \eqref{unphysical}.

\subsection{Wick rotation}

By the hypotheses in Subsect.~\ref{hippopotamus}, we know that  the maps $t\mapsto a_{i}(t), d_{0}(t), d_{1}(t)$ 
extend holomorphically to some disk $D_{1}(0, \epsilon)$ with values in differential operators on $\Sigma$, i.e.
\[
a_{i}\in \AT(C_{1}(0, \epsilon); \Diff^{2}(\Sigma; V_{i})), \quad d_{j}\in \AT(C_{1}(0, \epsilon); \Diff^{j}(\Sigma; V_{1}, V_{2})),
\]
$i=1,2, j= 0, 1$. Therefore we can define the \emph{Wick rotated operators}
\beq\label{wick-rotated-ops}
\ta_{i}(s)= a_{i}(\i s), \ i=1, 2, \quad \tilde{d}_{0}(s)= - \i d_{0}(\i s), \quad \tilde{d}_{1}(s)= d_{1}(\i s),
\eeq
which for  $\tilde{I}= \open{-\epsilon, \epsilon}$,  $0<\epsilon\ll 1$ satisfy
\begin{equation}
\label{e3.7b}
\ta_{i}\in \cinfb(\tilde{I}; \Diff^{2}(\Sigma; \tV_{i})), \quad \tilde{d}_{j}\in \cinfb(\tilde{I}; \Diff^{j}(\Sigma; \modif{\tilde V_{1}}, \modif{\tilde V_{2}})), 
\end{equation}
$i=1,2, j= 0, 1$.
  
\subsubsection{Wick rotated operators}
The elliptic operators \[
\tD_{i}= - \p_{s}^{2}+ \ta_{i}(s),
\] satisfy the conditions in Subsect.~\ref{sec01.3}.
 The trace operators $\trho_{i}$ are defined as in \eqref{e01.14b}.

Since  $\ta_{i}(s)= a_{i}(\i s)$ we have
\begin{equation}
\label{e3.2}
(\i\p_{t})^{n}a_{i}(0)= \p_{s}^{n}\ta_{i}(0), \ n\in \nn, \ i= 1,2.
\end{equation}
i.e.~condition \eqref{e01.23b} in Prop.~\ref{prop01.6} is satisfied. This fact will be crucial to establish the Hadamard property.

Note that since $a_{i}(t)= a_{i}^{\star}(t)$ and $a_{i}^{\star}= \tau_{i}^{-1}a^{*}\tau_{i}$ we have 
$a_{i}^{*}(t)= \tau_{i} a_{i}(t)\tau_{i}^{-1}$ hence  by analytic continuation in $t$ we have
\begin{equation}
\label{e3.2b}
\ta_{i}^{*}(s)= \tau_{i}\ta_{i}(-s)\tau_{i}^{-1}, \ s\in I, \ i= 1,2.
\end{equation}

Setting
\begin{equation}
\label{e3.2d}
\kappa_{i}u(s)\defeq \tau_{i}u(-s),
\end{equation}
 \eqref{e3.2b} is equivalent to
\begin{equation}
\label{e3.2e}
\tD_{i}^{*}=  \kappa_{i} \tD_{i}\kappa_{i}^{-1}= - \p_{s}^{2}+ \tau_{i}\ta_{i}(-s)\tau_{i}^{-1}.
\end{equation}
Since the open set $\Omega= \open{-T, T}\times \Sigma$ is invariant under $s\mapsto -s$ we obtain that
\[
\tD_{i\Omega}^{*}= \kappa_{i} \tD_{i\Omega}\kappa_{i}^{-1},
\]
where we recall that $\tD_{i\Omega}$ is the Dirichlet realization of $\tD_{i}$ in $\Omega$.

\subsubsection{Gauge invariance for Wick rotated operators}
Setting
\[
\tilde{d}= \tilde{d}_{0}(s)\p_{s}+ \tilde{d}_{1}(s),
\]
 we deduce from $D_{2}d= d D_{1}$ and analytic continuation in $t$ that
 \beq\label{e3.7c}
 \tD_{2}\tilde{d}= \tilde{d}\tD_{1}.
 \eeq 
 \def\tT{\tilde{T}} Therefore, we can  we define $\tT_{\Sigma}$ by
\[
\trho_{2}^{\pm}\circ \tilde{d}\eqdef \tT_{\Sigma}\trho_{1}^{\pm} \hbox{ on }\Ker \tD_{1}|_{\bar{\cinf}(\Omega^{\pm}; \tV_{1})}.
\]
 \begin{lemma}\label{lemma5.1}
Let  $T_{\Sigma}$ be the operator defined  in \eqref{defdeTsig}. Then $T_{\Sigma}= \tT_{\Sigma}$.

\end{lemma}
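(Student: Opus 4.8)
The statement $T_\Sigma=\tilde T_\Sigma$ is an identity between two operators on $\coinf(\Sigma;V_{1}(\Sigma)\otimes\cc^{2})$ built from the Cauchy data (at $t=0$, resp.\ $s=0$) of solutions of $D_{1},D_{2}$ and of $\tD_{1},\tD_{2}$. The key point is that $T_\Sigma$ was characterized in \eqref{defdeTsig} purely by the property $\varrho_{2}\circ d=T_\Sigma\circ\varrho_{1}$ on $\Ker D_{1}$, and the explicit formula \eqref{e3.4} for $T_\Sigma$ was derived by writing $d=d_{0}(t)\p_{t}+d_{1}(t)$, expanding $u\in\Ker D_{1}$ near $t=0$ via $\p_{t}^{2}u(0)=-a_{1}(0)u(0)$, and reading off the Cauchy data of $du$ at $t=0$. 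The same computation, done for $\tilde d=\tilde d_{0}(s)\p_{s}+\tilde d_{1}(s)$ acting on $u\in\Ker\tD_{1}$ and using $\p_{s}^{2}u(0)=-\ta_{1}(0)u(0)$, produces
\[
\tilde T_\Sigma=\mat{\tilde d_{1}(0)}{-\tilde d_{0}(0)}{-\frac12(\tilde d_{0}\ta_{1}+\ta_{2}\tilde d_{0})(0)}{\tilde d_{1}(0)}
\]
(with the precise signs dictated by the definitions \eqref{e01.14b} of $\trho^{\pm}$ versus \eqref{e01.1bb} of $\varrho$, which differ by a factor $\i$ in the second component and a sign in $\p_{s}$ versus $\p_{t}$). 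So the plan is: (1) recall/re-derive the formula \eqref{e3.4} for $T_\Sigma$; (2) carry out the analogous Cauchy-data computation for $\tilde T_\Sigma$ in terms of $\tilde d_{0}(0),\tilde d_{1}(0),\ta_{1}(0),\ta_{2}(0)$; (3) substitute the Wick-rotation relations \eqref{wick-rotated-ops}, namely $\tilde d_{0}(0)=-\i d_{0}(0)$, $\tilde d_{1}(0)=d_{1}(0)$, $\ta_{i}(0)=a_{i}(0)$, and check that the two matrices coincide entry by entry.

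\textbf{Execution.} First I would note that $d=d_{0}(t)\p_{t}+d_{1}(t)$ with $d_{i}\in\cinfb(I;\Diff^{i}(\Sigma;V_{1},V_{2}))$, so that for $u\in\Ker D_{1}|_{\cinfsc}$ one has $du=d_{0}(t)\p_{t}u+d_{1}(t)u$ and hence, using $\varrho_{2}w=\bigl(w(0),\i^{-1}\p_{t}w(0)\bigr)^{\!\top}$ and $\p_{t}^{2}u(0)=-a_{1}(0)u(0)$,
\[
\bea
(du)(0)&=d_{0}(0)\,\p_{t}u(0)+d_{1}(0)\,u(0),\\
(\p_{t}du)(0)&=\dot d_{0}(0)\p_{t}u(0)-d_{0}(0)a_{1}(0)u(0)+\dot d_{1}(0)u(0)+d_{1}(0)\p_{t}u(0),
\eea
\]
but the term $\dot d_{0}(0),\dot d_{1}(0)$ are eliminated using the gauge identities \eqref{e3.1}, which give $\p_{t}d_{0}=0$ and $2\p_{t}d_{1}=d_{0}a_{1}-a_{2}d_{0}$. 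Writing $\varrho_{1}u=(u(0),\i^{-1}\p_{t}u(0))^{\top}=:(f_{0},f_{1})^{\top}$ (so $\p_{t}u(0)=\i f_{1}$), one reads off $T_\Sigma(f_{0},f_{1})^{\top}=\bigl(d_{1}(0)f_{0}+\i d_{0}(0)f_{1},\ \tfrac{\i}{2}(d_{0}a_{1}+a_{2}d_{0})(0)f_{0}+d_{1}(0)f_{1}\bigr)^{\top}$, which is \eqref{e3.4}. Second, I would repeat this verbatim with $t\to s$, $d\to\tilde d$, $a_{i}\to\ta_{i}$, $\varrho_{i}\to\trho_{i}$, being careful that $\trho^{\pm}w=(w(0),-\p_{s}w(0))^{\top}$ (no $\i^{-1}$, and a sign on the derivative), that $\tD_{1}u=0$ gives $\p_{s}^{2}u(0)=\ta_{1}(0)u(0)$, and that the Wick-rotated gauge identities obtained by analytic continuation of \eqref{e3.1} read $\p_{s}\tilde d_{0}=0$ and $2\p_{s}\tilde d_{1}=-\tilde d_{0}\ta_{1}+\ta_{2}\tilde d_{0}$ (the signs change because $\p_{t}=\i\p_{s}$ and two factors of $\i$ vs.\ one). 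Third, I would plug in $\tilde d_{0}(0)=-\i d_{0}(0)$, $\tilde d_{1}(0)=d_{1}(0)$, $\ta_{i}(0)=a_{i}(0)$ and verify that all four entries match \eqref{e3.4}: e.g.\ the $(1,2)$-entry of $\tilde T_\Sigma$ is $-\tilde d_{0}(0)\cdot(-1)=\i d_{0}(0)$ (matching the $\i d_{0}(0)$ in \eqref{e3.4} once the sign conventions in $\trho$ are tracked), and the $(2,1)$-entry is $-\tfrac12(\tilde d_{0}\ta_{1}+\ta_{2}\tilde d_{0})(0)=-\tfrac12(-\i)(d_{0}a_{1}+a_{2}d_{0})(0)=\tfrac{\i}{2}(d_{0}a_{1}+a_{2}d_{0})(0)$, matching \eqref{e3.4}.

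\textbf{Main obstacle.} The computation itself is routine once the definitions are unwound; the real source of error — and the thing I would be most careful about — is the bookkeeping of the three mismatched sign/factor conventions simultaneously: (i) $\varrho$ uses $\i^{-1}\p_{t}$ in its second slot (\eqref{e01.1bb}) while $\trho^{\pm}$ uses $-\p_{s}$ (\eqref{e01.14b}); (ii) Wick rotation sends $\p_{t}\mapsto\i\p_{s}$, so second-order-in-time data pick up an extra $-1$ while first-order data pick up $\i$; (iii) the Wick-rotated gauge identities \eqref{e3.1} flip signs accordingly. One must also confirm that $\tilde T_\Sigma$ is well defined independently of the choice $\pm$ of half-space $\Omega^{\pm}$, which follows because $\tilde d$ commutes through $\tD_{1},\tD_{2}$ as in \eqref{e3.7c} and a solution in $\Ker\tD_{1}|_{\bar\cinf(\Omega^{\pm};\tV_{1})}$ is determined near $s=0$ by its $\trho^{\pm}$-data just as in the Lorentzian case. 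Once these conventions are pinned down consistently, the two matrices are visibly equal, and the lemma follows; no analytic input beyond the already-established holomorphic extendibility of $t\mapsto d_{j}(t),a_{i}(t)$ (used only to make sense of \eqref{wick-rotated-ops}) is needed.
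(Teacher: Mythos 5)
Your proposal is correct and follows essentially the same route as the paper: derive the explicit $2\times 2$ matrix for $\tT_{\Sigma}$ from the Wick-rotated gauge identities (the analogue of \eqref{e3.1}), then substitute $\tilde d_{1}(0)=d_{1}(0)$, $\tilde d_{0}(0)=-\i d_{0}(0)$ and $\ta_{i}(0)=a_{i}(0)$ and compare with \eqref{e3.4}. (Only a cosmetic slip: the $(1,2)$ entry is simply $-\tilde d_{0}(0)=\i d_{0}(0)$, without the extra factor of $(-1)$ you wrote.)
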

\proof  Property \eqref{e3.7c} is equivalent to:
\begin{equation}
\label{e3.3}
\begin{array}{rl}
i)& \p_{s}\tilde{d}_{0}=0,\\[2mm]
ii)& 2 \p_{s}\tilde{d}_{1}- \ta_{2}\tilde{d}_{0}+ \tilde{d}_{0}\ta_{1}=0,\\[2mm]
iii)& \p_{s}^{2}\tilde{d}_{1}- \ta_{2}\tilde{d}_{1}+ \tilde{d}_{1}\ta_{1}+ \tilde{d}_{0}\p_{s}\ta_{1}=0.
\end{array}
\end{equation}
From  \eqref{e3.3} we obtain that:
\begin{equation}
\label{e3.5}
\tT_{\Sigma}= \mat{\tilde{d}_{1}(0)}{- \tilde{d}_{0}(0)}{-\frac{1}{2}(\tilde{d}_{0}\ta_{1}+ \ta_{2}\tilde{d}_{0})(0)}{\tilde{d}_{1}(0)}.
\end{equation}
Using \eqref{e3.2}  for $n=0$ and the fact that $\tilde{d}_{1}(0)= d_{1}(0)$,  $\tilde{d}_{0}(0)=-\i d_{0}(0)$, we obtain $T_{\Sigma}= \tT_{\Sigma}$. \qed

\subsubsection{Trace reversal for Wick rotated operators}
Let us also define a `trace reversal'  for $\tD_{2}$.  We set
\[
\tilde{I}_{2}u= u- \frac{1}{4}(\rg_{0}|u)_{\tV_{2}}\rg_{0}, \ u\in \cinf(M; V_{2}).
\]
Note that $\tilde{I}_{2}= I_{2}$ and the expression above immediately shows that  $\tilde{I}_{2}$ is selfadjoint for $(\cdot| \cdot)_{\tV_{2}}$. For coherence of notation we set  again $\tilde{I}_{1}= 1$.

We deduce from $D_{2}\circ I = I \circ D_{2}$ and analyticity in $t$ that
\beq\label{e3.6c}
\tilde{I}_{2}\tD_{2}= \tD_{2}\tilde{I}_{2},
\eeq
 
We then define $\tilde{I}_{2\Sigma}$ by
\[
\tilde\varrho_{2}^{\pm} \circ \tilde{I}_{2}\defeq \tilde{I}_{2\Sigma}\circ \tilde{\varrho}_{2}^{\pm},
\]
and a routine computation shows that  $\tilde{I}_{2\Sigma}= \tilde{I}_{2}\otimes \one_{\cc^{2}}$ hence
\begin{equation}
\label{e3.6d}
\tilde{I}_{2\Sigma}= I_{2\Sigma}.
\end{equation}
In a similar vein we set 
\[
\tK= \sqrt{2}\, \tilde{I}\circ\tilde{d},
\]
so that
\[
\tD_{2}\circ \tK= \tK\circ \tD_{1}.
\]
 The Cauchy surface version of $\tK$ is 
\beq\label{e3.6e}
\tK_{\Sigma}= \tilde{I}_{\Sigma}\circ \tilde{T}_{\Sigma}= K_{\Sigma}
\eeq
by \Lemma \ref{lemma5.1} and \eqref{e3.6d}.
\subsection{\calde projectors} \label{ss:calde}

We now \modif{show a list of} properties of the \calde projectors $\tilde{c}_{i}^{\pm}$ constructed in \ref{sec01.3.7} for  the \modif{Dirichlet realizations of the}  operators $\tD_{i}$.

In Prop.~\ref{prop:calde} below, the adjoints $(\tilde{c}_i^\pm)^\dagger$ are computed with respect to the physical charges $q_{i, {\rm phys}}$, see \eqref{e3.-2}.  

\begin{proposition}\label{prop:calde}  The Calder\'on projectors $\tilde{c}_{i}^{\pm}$ for  $i=1,2$ satisfy:
\ben 
\item\label{item:cald0} $\tilde{c}_i^\pm\in\Psi^1(\Sigma;\tV_{i}\otimes \cc^{2})$;
\item\label{item:cald1} $\tilde{c}^{\pm}_{i}= (\tilde{c}^{\pm}_{i})^{2}$ and $\tilde{c}^{+}_i + \tilde{c}^{-}_i= \one$ on  $\cH^{s}(\Sigma;\tV_{i}\otimes \cc^{2})$;
\item\label{item:caldnew1}   $(I_i\otimes \one)\tilde{c}_i^\pm = \tilde{c}_i^\pm (I_i\otimes \one)$ on  $\coinf(\Sigma;\tilde V_{i}\otimes\cc^2)$;
\item\label{item:caldnew2}   $\tilde{c}_i^\pm = (\tilde{c}_i^\pm)^\dagger$;
\item\label{item:cald3} $\tilde{c}_2^\pm K_\Sig={K_\Sig} \tilde{c}_1^\pm\newmodif{\pm K_{-\infty}}$ on $\coinf(\Sigma;\tilde V_{1}\otimes\cc^2)$\newmodif{, where} $$\newmodif{K_{-\infty}\in \cW^{-\infty}(\Sigma; \tV_1 \otimes \cc^2,\tV_2 \otimes \cc^2).}$$

\een
\end{proposition}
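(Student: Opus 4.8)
The plan is to reduce every item to the explicit symbolic formula \eqref{e01.16c} for the Calderón projectors in terms of the operators $\tb_i^\pm$ furnished by Prop.~\ref{prop01.2bbb}, and then to exploit the symmetry property \eqref{e3.2b} together with the gauge identity \eqref{e3.7c} and the trace-reversal identity \eqref{e3.6c}. Item~\eqref{item:cald0} is immediate: by Prop.~\ref{prop01.4} we have $\tilde c_i^\pm$ equal, modulo $\cW^{-\infty}$, to a $2\times 2$ matrix whose entries are built from $\tb_i^\pm(0)\in\Psi^1$, $(\tb_i^+-\tb_i^-)^{-1}(0)\in\Psi^{-1}$ and their products, so $\tilde c_i^\pm\in\Psi^1(\Sigma;\tV_i\otimes\cc^2)$. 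Item~\eqref{item:cald1} is exactly Prop.~\ref{prop01.4b} applied to $\tD_i$.

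For items~\eqref{item:caldnew1} and~\eqref{item:caldnew2} I would first establish, using \eqref{e3.2b} (i.e.\ $\ta_i^*(s)=\tau_i\ta_i(-s)\tau_i^{-1}$), the operator identity
\[
\tb_i^-(0)= -\tau_i^{-1}\tb_i^{+*}(0)\tau_i \quad \hbox{mod }\cW^{-\infty}(\Sigma;\tV_i),
\]
which is precisely Step~1 in the proof of Prop.~\ref{prop01.6} (it only uses that $\Omega$ is symmetric under $s\mapsto -s$ and the fixed-point construction of $\tb_i^\pm$). Feeding this into \eqref{e01.16c} and computing the $q_{i,\mathrm{phys}}$-adjoint with $q_{i,\mathrm{phys}}=q_i\circ I_{i\Sigma}=\begin{pmatrix}0&\tau_i I_i\\ \tau_i I_i&0\end{pmatrix}$ gives $\tilde c_i^\pm=(\tilde c_i^\pm)^\dagger$ modulo $\cW^{-\infty}$; one then upgrades this to an exact identity by the same argument used for Prop.~\ref{prop01.4b}, namely that $(\tilde c_i^\pm)^\dagger$ is again a Calderón projector (for the formal adjoint realization $\tD_{i\Omega}^*=\kappa_i\tD_{i\Omega}\kappa_i^{-1}$) and two Calderón projectors agreeing modulo smoothing and satisfying $\tilde c^++\tilde c^-=\one$ must coincide. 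For~\eqref{item:caldnew1}, since $\tilde I_2\tD_2=\tD_2\tilde I_2$ the operator $\tilde I_{2\Sigma}=I_2\otimes\one$ intertwines the Dirichlet realizations (it preserves $\Omega$ and the boundary conditions), hence commutes with $\tD_{2\Omega}^{-1}$ and with $\trho^\pm,\trho^*,\tilde\sigma$; the case $i=1$ is trivial since $I_1=\one$.

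For item~\eqref{item:cald3} I would argue as follows. From \eqref{e3.7c} we have $\tD_2\tilde d=\tilde d\tD_1$, and since $\tilde d$ also maps the Dirichlet boundary conditions for $\tD_1$ on $\Omega$ to those for $\tD_2$ (because $\tilde d$ is first order and $\trho_{\pm T}$ only sees Cauchy data at $\pm T$), one gets the intertwining $\tilde d\,\tD_{1\Omega}^{-1}=\tD_{2\Omega}^{-1}\tilde d$ up to a boundary correction supported at $s=\pm T$. Pushing this through Def.~\ref{def01.1} of $\tilde c_i^\pm$ and using $K_\Sigma=I_\Sigma\circ T_\Sigma=\tilde K_\Sigma$ (Lem.~\ref{lemma5.1} and \eqref{e3.6d}--\eqref{e3.6e}) yields $\tilde c_2^\pm K_\Sigma=K_\Sigma\tilde c_1^\pm$ modulo a term coming from $W^\pm(\mp T,0)$-type contributions, which by Lem.~\ref{lemma01.2}(2) lie in $\cW^{-\infty}$; this is the operator $K_{-\infty}$. \textbf{The main obstacle} I anticipate is precisely controlling this boundary term: unlike the Lorentzian Hadamard projectors (Prop.~\ref{prop01.2b1}), the Calderón projectors genuinely ``see'' the artificial Dirichlet data at $s=\pm T$, so $\tilde d$ does not exactly intertwine the inverses and one cannot expect $\tilde c_2^\pm K_\Sigma=K_\Sigma\tilde c_1^\pm$ on the nose — only modulo smoothing. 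Making the $\pm$ sign and the smoothing class of $K_{-\infty}$ precise requires carefully tracking through the parametrix formula of Prop.~\ref{prop01.3} how the correction $S\circ(1+R_{1,-\infty})^{-1}\circ\trho\circ\tD^{(-1)}$ transforms under composition with $\tilde d$, and verifying that every resulting term either cancels against its $\tD_1$-counterpart or is of the form $r_{-\infty}W^\pm$, hence smoothing by Lem.~\ref{lemma01.2}.
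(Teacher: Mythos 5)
Items \eqref{item:cald0}--\eqref{item:caldnew1} are handled essentially as in the paper. The serious gap is in your treatment of item \eqref{item:caldnew2}. The symbolic identity $\tb_i^-(0)=-\tau_i^{-1}\tb_i^{+*}(0)\tau_i$ mod $\cW^{-\infty}$ (Step 1 of Prop.~\ref{prop01.6}) only yields $(\tilde c_i^\pm)^\dagger=\tilde c_i^\pm$ \emph{modulo smoothing}, and your proposed upgrade --- that two Calder\'on projectors agreeing modulo $\cW^{-\infty}$ and summing to $\one$ must coincide --- is false. Projections summing to $\one$ are far from unique under smoothing perturbations: the paper itself notes (Remark after Prop.~\ref{prop01.2b1}) that different choices of $b$ give different Hadamard projectors, all idempotent and summing to $\one$, differing by terms in $\Psi^{-\infty}$; likewise, changing the boundary condition at $s=\pm T$ changes $\tilde c_i^\pm$ by a smoothing term while preserving every property you invoke. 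The exact identity $(\tilde c_i^\pm)^\dagger=\tilde c_i^\pm$ is a statement about the specific Dirichlet realization and its symmetry under $\kappa_i$; the paper obtains it exactly from the Green formula \eqref{e01.green1} applied on $\Omega^+$ to $u=-r^+\tD_\Omega^{-1}\trho^*\tsig f$ and $v=-r^+\kappa\tD_\Omega^{-1}\trho^*\tsig g$, both in $\bar{H^1_0}(\Omega^+;\tV)$ with $\tD u=\tD^* v=0$, which gives $(\trho^+v|\tsig\trho^+u)_{\tV(\Sigma)\otimes\cc^2}=0$ and hence $\tilde c^{+*}q_{\rm phys}=q_{\rm phys}\tilde c^+$ on the nose.

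For item \eqref{item:cald3}, your parenthetical claim that $\tilde d$ maps the Dirichlet boundary condition for $\tD_1$ to that for $\tD_2$ is wrong: $w\traa{s=\pm T}=0$ does not force $(\tilde d w)\traa{s=\pm T}=0$, since $\tilde d$ contains $\p_s$; this failure is precisely the source of the obstruction $K_{-\infty}$ (cf.\ the condition $\tilde K\Dom\tD_{1,\rm mod}\subset\Dom\tD_{2,\rm mod}$ in the Remark following the proposition). Your identification of the error with $W^\pm(\mp T,0)$-type contributions, smoothing by Lem.~\ref{lemma01.2}, is right in spirit, but the passage from the approximate intertwining to the operator identity $\tilde c_2^\pm K_\Sigma=K_\Sigma\tilde c_1^\pm\pm K_{-\infty}$ is missing. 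The paper computes $\trho_2^+\tD_{2\Omega}^{-1}\tK\trho_1^*\tsig_1 f=K_\Sigma\tilde c_1^+f-r_{-\infty}^+f$ via Lem.~\ref{minito} and the parametrix of Prop.~\ref{prop01.3b}, then uses the Green identity together with the already established item \eqref{item:caldnew2} to get $\tilde c_2^-(K_\Sigma\tilde c_1^+-r_{-\infty}^+)=0$ and its mirror relation, and only the final combination of these with $\tilde c_2^++\tilde c_2^-=\one$ produces the stated identity; without that step your sketch does not close.
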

\proof 
\eqref{item:cald0} is proved in Prop.~\ref{prop01.4}, \eqref{item:cald1} in Prop.~\ref{prop01.4b}. \eqref{item:caldnew1} follows from the definition of $\tilde{c}_{i}^{\pm}$ and the fact that $\tilde{I}_{i}\tD_{i\Omega}= \tD_{i \Omega}\tilde{I}_{i}$, since $\tilde{I}_{i}$ preserves the Dirichlet boundary condition on $\p\Omega$.

Let us now prove \eqref{item:caldnew2}. We drop the $i$ index.
 Let us set for $f, g\in \coinf(\Sigma; \tV\otimes \cc^{2})$:
 \[
 u= -r^{+}\tD_{\Omega}^{-1} \trho^{*}\tsig f, \ v= -r^{+}\kappa \tD_{\Omega}^{-1} \tilde\varrho^{*}\tsig g,
 \]
 where $r^{+}$ is the operator of restriction to $\Omega^{+}$. We use now  \cite[Lem.~A.1]{GW2} which extends to the case of vector bundles, and we obtain that $u, v\in \bar{H^{1}_{0}}(\Omega^{+}; \tV)$. Since $\tD u=\tD^{*}v=0$ in $\Omega^{+}$ we deduce from \modif{the Green's formula \eqref{e01.green1} and \eqref{greendir}} that
 \begin{equation}
 \label{e3.8}
 (\trho^{+}v| \tsig \trho^{+}u)_{\tV(\Sigma)\otimes \cc^{2}}=0.
 \end{equation}
Using that  $\trho^{+}\kappa= \mat{\tau}{0}{0}{-\tau}\trho^{-}$ we have\vspace{-0.2cm}
 \[
 \trho^{+}u=  \tilde{c}^{+}f, \quad \trho^{+}v=  \mat{\tau}{0}{0}{-\tau}\tilde{c}^{-}g. \]
Since $\tsig^{*}\mat{\tau}{0}{0}{-\tau}= -\mat{0}{\tau}{\tau}{0}$, we obtain that $\tilde{c}^{-*}\mat{0}{\tau}{\tau}{0}\tilde{c}^{+}=0$ hence $\tilde{c}^{+*}\mat{0}{\tau}{\tau}{0}= \mat{0}{\tau}{\tau}{0}\tilde{c}^{+}$.  Since $I_{\Sigma}^{*}= I_{\Sigma}$ and $\tilde{c}^{+}I_{\Sigma}= I_{\Sigma}\tilde{c}^{+}$, we obtain that 
$\tilde{c}^{+*}I_{\Sigma}\mat{0}{\tau}{\tau}{0}= I_{\Sigma}\mat{0}{\tau}{\tau}{0}\tilde{c}^{+}$, i.e.~$\tilde{c}^{+*}q_{\rm phys}= q_{\rm phys}\tilde{c}^{+}$, which proves  \eqref{item:caldnew2}.

We now prove \eqref{item:cald3}.
Let us set for $f_{i}\in \coinf(\Sigma; \tV_{i}\otimes \cc^{2})$:
\[
 u_{2}= r^{+}\modif{\tD_{2\Omega}^{-1}\tK}\trho_{1}^{*}\tsig_{1}f_{1}, \quad v_{2}= r^{+}\kappa_{2} \tD_{2\Omega}^{-1}\trho_{2}^{*}\tsig_{2}f_{2}.
\]
We have $\tD_{2}^{*}v_{2}= 0$ in $\Omega^{+}$,   $\tD_{2}u_{2}=0$ in $\Omega^{+}$. 

As before we can apply \eqref{e01.green1} and obtain
\beq\label{eq:mlml}
(\trho_{2}^{+}v_{2}| \tsig_{2} \trho_{2}^{+}u_{2})_{\tV_{2}(\Sigma)\otimes \cc^{2}}=0.
\eeq
\newmodif{Since $\tD_{2}\tK= \tK \tD_{1}$ as differential operators, we have  
\[
\bea
\trho_{2}^{+}u_{2}&= \trho_{2}^{+}\tD_{2\Omega}^{-1}\tK\trho_{1}^{*}\tsig_{1}f_{1}= \trho_{2}^{+}\tD_{2\Omega}^{-1}\tK\tD_{1}\tD_{1\Omega}^{-1}\trho_{1}^{*}\tsig_{1}f_{1}\\
&=\trho_{2}^{+}\tD_{2\Omega}^{-1}\tD_{2}\tK\tD_{1\Omega}^{-1}\trho_{1}^{*}\tsig_{1}f_{1}\\
&=\trho_{2}^{+}\tK\tD_{1\Omega}^{-1}\trho_{1}^{*}\tsig_{1}f_{1}+ \trho_{2}^{+}\tD_{2\Omega}^{-1}\tD_{2}e(K\tD_{1\Omega}^{-1}\trho_{1}^{*}\tsig f_{1})\traa{\p \Omega}
\eea
\]
where we have used Lem.~\ref{minito}  and $\tK_{\Sigma}= K_{\Sigma}$ in the last line. Using the parametrix in \ref{sec01.3.6}, see Prop.~\ref{prop01.3b}, we obtain that $(K\tD_{1\Omega}^{-1}\trho_{1}^{*}\tsig f_{1})\traa{\p \Omega}\defeq T_{-\infty}f_{1}$ is smoothing, i.e.~ $T_{-\infty}\in \cW^{-\infty}(\Sigma; \tV)$. Therefore $r_{-\infty}^{+}:= \trho_{2}^{+}\tD_{2\Omega}^{-1}\tD_{2}eT_{-\infty}$ belongs  to $\cW^{-\infty}(\Sigma; \tV\otimes \cc^{2})$.}

\newmodif{ Furthermore,
\[
\trho^{+}_{2}v_{2}= \mat{\tau_{2}}{0}{0}{-\tau_{2}}\tilde{c}_{2}^{-}f_{2}.
\]
Thus,  \eqref{eq:mlml} gives the identity
\[
\tilde{c}_{2}^{-*}\mat{0}{\tau_{2}}{\tau_{2}}{0} (K_{\Sigma}\tilde{c}_{1}^{+}- r_{-\infty}^{+})=0,
\]
which implies that $\tilde{c}_{2}^{-}(K_{\Sigma}\tilde{c}_{1}^{+}- r_{-\infty}^{+})=0$ since $\tilde{c}_{2}^{-*}\mat{0}{\tau_{2}}{\tau_{2}}{0}= \mat{0}{\tau_{2}}{\tau_{2}}{0}\tilde{c}_{2}^{-}$ by the proof of \eqref{item:caldnew2}. In a similar vein we get
\[
\tilde{c}_{2}^{+}(K_{\Sigma}\tilde{c}_{1}^{-}- r_{-\infty}^{-})=0,
\]
where $r_{-\infty}^{-}\in\cW^{-\infty}(\Sigma; \tV\otimes \cc^{2})$. Using (2) we obtain
\[
\bea
\tilde{c}_{2}^{+}K_{\Sigma}&= K_{\Sigma}\tilde{c}_{1}^{+}- \tilde{c}_{2}^{-}K_{\Sigma}\tilde{c}_{1}^{+}+ \tilde{c}_{2}^{+}K_{\Sigma}\tilde{c}_{1}^{-}\\[2mm]
&= K_{\Sigma}\tilde{c}_{1}^{+}- \tilde{c}_{2}^{-}r_{-\infty}^{+}+ \tilde c_{2}^{+}r_{-\infty}^{-}\\[2mm]
&=K_{\Sigma}\tilde{c}_{1}^{+}+ K_{-\infty},
\eea
\]
where $K_{-\infty}\in\cW^{-\infty}(\Sigma; \tV_1 \otimes \cc^2,\tV_2 \otimes \cc^2)$}. \qed\smallskip

 \modif{\begin{remark}We remark that one can attempt to modify $\tilde c_i^\pm$ (and $\tilde \pi_i^\pm$ at the same time) by changing  the boundary conditions for $\tilde D_1$ and $\tilde D_2$ at $s=T$ and $s=-T$ (in particular a well-motivated candidate is provided by boundary conditions studied in \cite{Anderson2008,witten}). If this gives a pair of \emph{invertible} operators $\tilde D_{1,\rm mod}$ and $\tilde D_{2,\rm mod}$ such that \beq\label{eq:dd}
\tilde K\Dom \tilde D_{1,\rm mod}\subset \Dom  \tilde D_{2,\rm mod},
\eeq
then one can show (under some further assumptions on the boundary conditions) that the modified operators $\tilde c_{i,{\rm mod}}^\pm$ satisfy gauge invariance $\tilde c_{2,{\rm mod}}^\pm K_\Sigma =K_\Sigma \tilde c_{1,{\rm mod}}^\pm$. This motivates a broader study of boundary conditions in Wick-rotated linearized gravity. One of the main difficulties is that it is not clear how to obtain operators satisfying  \eqref{eq:dd} and which are at the same time invertible.
\end{remark}}

The next lemma states that there exists a gauge transformation in the Wick-rotated setting mapping to tensors which have no mixed components and are symmetric with respect to the trace reversal at $\Sigma$. Furthermore, the gauge transformation preserves boundary conditions and is given in terms of a solution $w$ rather than some arbitrary $(0,1)$-tensor.

\begin{lemma}\label{lem:alt2}
 Let $u\in H^{s}(\Omega^{\pm}; \tV)\cap\modif{\bar{H^{1}}(\Sigma^{\pm}; \tV_{2})}$ for some $s>\12$. Then there exists $w\in H^{s+1}(\Omega^{\pm}; \tV_{1})\cap \bar{H^{1}_{0}}(\Omega^{\pm}; \tV_{1})$ such that $\tD_{1}w=0$ in $\Omega^{\pm}$ and $v= u- \tilde{K}w$ satisfies
 \begin{equation}
 \label{e5.1}
\begin{cases}
 (v_{s\Sigma})\traa{\Sigma}=0, \\[1mm]
 (\tilde{I}_{2}v)\traa{\Sigma}= v\traa{\Sigma}.
\end{cases}
\end{equation}
 \end{lemma}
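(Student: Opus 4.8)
\emph{Proof strategy.} I would look for $w$ in the form $w=P_{\Omega^{\pm}}g$, where $g\in H^{s+\frac12}(\Sigma;\tilde V_{1})$ is to be determined and $P_{\Omega^{\pm}}$ is the solution operator of the Dirichlet problem \eqref{e01.20} for $\tilde D_{1}$ in $\Omega^{\pm}$ (available once $T$ is small enough, which we assume). Then $\tilde D_{1}w=0$ and $w\traa{\partial\Omega^{\pm}\setminus\Sigma}=0$ by construction, so $w\in\bar{H^{1}_{0}}(\Omega^{\pm};\tilde V_{1})$, and elliptic regularity for the Dirichlet realization $\tilde D_{1\Omega^{\pm}}$ (used as in Prop.~\ref{prop01.3b}) gives $w\in H^{s+1}(\Omega^{\pm};\tilde V_{1})$. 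The conditions \eqref{e5.1} depend on $w$ only through $w\traa{\Sigma}=g$ and $\partial_{s}w\traa{\Sigma}=-N_{\Omega^{\pm}}g$, where $N_{\Omega^{\pm}}$ is the Dirichlet--to--Neumann map of Def.~\ref{def01.2}: indeed, since $\tilde K=\sqrt2\,\tilde I_{2}\tilde d$ with $\tilde d=\tilde d_{0}(s)\partial_{s}+\tilde d_{1}(s)$, one has $(\tilde Kw)\traa{\Sigma}=\sqrt2\,\tilde I_{2}\bigl(\tilde d_{1}(0)g-\tilde d_{0}(0)N_{\Omega^{\pm}}g\bigr)$; and $(\tilde I_{2}v)\traa{\Sigma}=v\traa{\Sigma}$ is equivalent to $\trace_{\rg_{0}}(v)\traa{\Sigma}=0$ because $\tilde I_{2}$ is an involution reversing the $\rg_{0}$-trace. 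Writing sections of $\tilde V_{2}$ over $\Sigma$ in the splitting \eqref{etiti.-1} and letting $\mathcal L g$ denote the pair consisting of the $s\Sigma$-component and the $\rg_{0}$-trace of $\sqrt2\,\tilde I_{2}(\tilde d_{1}(0)g-\tilde d_{0}(0)N_{\Omega^{\pm}}g)$, the lemma reduces to solving
\[
\mathcal L g=\bigl(\,u_{s\Sigma}\traa{\Sigma}\,,\ \trace_{\rg_{0}}(u)\traa{\Sigma}\,\bigr)\in H^{s-\frac12}(\Sigma;T^{*}\Sigma)\oplus H^{s-\frac12}(\Sigma;\cc)
\]
for $g=(g_{s},g_{\Sigma})\in H^{s+\frac12}(\Sigma;\cc\oplus T^{*}\Sigma)=H^{s+\frac12}(\Sigma;\tilde V_{1})$ (see \eqref{etiti.0}), the right-hand side lying in the indicated space because $s>\frac12$ and $u\in\bar{H^{1}}$.

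By Prop.~\ref{prop01.5}, $N_{\Omega^{\pm}}\in\Psi^{1}(\Sigma;\tilde V_{1})$ with $\sigma_{\rm pr}(N_{\Omega^{\pm}})(\rx,k)=\pm(k\dual\rh_{0}^{-1}k)^{\frac12}\one_{\tilde V_{1}}$ (Lem.~\ref{lemma01.1}), so $\mathcal L\in\Psi^{1}(\Sigma;\tilde V_{1},\,T^{*}\Sigma\oplus\cc)$ for Shubin's calculus on the bounded-geometry manifold $(\Sigma,\rh_{0})$. Reading off the leading parts of $\tilde d_{0}(0),\tilde d_{1}(0)$ from Prop.~\ref{prop4.2}, and using that $\tilde I_{2}$ acts as the identity on the $s\Sigma$-component, a routine computation gives, in the splitting $g=(g_{s},g_{\Sigma})$,
\[
\sigma_{\rm pr}(\mathcal L)(\rx,k)=\i\begin{pmatrix}\alpha\,k & \mp\beta\,|k|\,\one_{T^{*}\Sigma}\\[1mm] \mp\gamma\,|k| & \delta\,k^{\mathsf T}\end{pmatrix},\qquad |k|:=(k\dual\rh_{0}^{-1}k)^{\frac12},
\]
with nonzero constants $\alpha,\beta,\gamma,\delta$; a Schur-complement computation with respect to the invertible block $\mp\beta|k|\one$ shows this matrix is invertible for every $k\neq0$. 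Hence $\mathcal L$ is elliptic of order $1$. This is precisely the statement that the $d+1$ residual-gauge functions $g=w\traa{\Sigma}$ exactly suffice to normalize the $d$ mixed components and the single trace of $v$ at $\Sigma$.

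The remaining step, which I expect to be the main obstacle, is upgrading ellipticity to \emph{genuine} invertibility of $\mathcal L:H^{s+\frac12}(\Sigma;\tilde V_{1})\to H^{s-\frac12}(\Sigma;T^{*}\Sigma\oplus\cc)$ on the possibly non-compact $\Sigma$, where an elliptic operator need only have a parametrix. I would exploit that $N_{\Omega^{\pm}}$ is itself invertible, with $N_{\Omega^{\pm}}^{-1}\in\Psi^{-1}$, since $\pm\Re N_{\Omega^{\pm}}\sim(-\Delta_{\rh_{0}}+1)^{\frac12}$ by Prop.~\ref{prop01.5b}: splitting $\mathcal L=\mathcal L_{0}N_{\Omega^{\pm}}+\mathcal L_{1}$ with $\mathcal L_{0}$ (order zero, with constant symbol) the contribution of $-\sqrt2\,\tilde I_{2}\tilde d_{0}(0)$ and $\mathcal L_{1}\in\Psi^{1}$ that of $\sqrt2\,\tilde I_{2}\tilde d_{1}(0)$, one gets $\mathcal L=A\,N_{\Omega^{\pm}}$ with $A:=\mathcal L_{0}+\mathcal L_{1}N_{\Omega^{\pm}}^{-1}\in\Psi^{0}$, so it suffices to invert $A$. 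Its principal symbol equals $\sigma_{\rm pr}(\mathcal L)\,\sigma_{\rm pr}(N_{\Omega^{\pm}})^{-1}=\pm\i\,N_{0}(\rx,\hat k)$ with $N_{0}$ a real, degree-zero matrix whose symmetric part — under the natural identification $T^{*}\Sigma\oplus\cc\cong\tilde V_{1}$ — turns out to be negative (resp.\ positive) definite, uniformly, for $\Omega^{+}$ (resp.\ $\Omega^{-}$), by the same Schur computation as above. Consequently $A$ has numerical range in a half-plane staying away from $0$, hence is invertible on $L^{2}(\Sigma;\tilde V_{1})$ and then on every $H^{r}$ by $\Psi$DO calculus — this is the same Lax--Milgram/sectoriality argument used in \ref{sec01.3.5b} and \ref{sec01.3.9}, possibly needing $T$ to be shrunk once more. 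Granting this, $g:=\mathcal L^{-1}\bigl(u_{s\Sigma}\traa{\Sigma},\ \trace_{\rg_{0}}(u)\traa{\Sigma}\bigr)$ lies in $H^{s+\frac12}(\Sigma;\tilde V_{1})$, and $w=P_{\Omega^{\pm}}g$ has all the required properties; the essential input beyond ellipticity is the positivity of the Dirichlet--to--Neumann map.
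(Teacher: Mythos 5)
Your proposal follows essentially the same route as the paper: take $w=P_{\Omega^{\pm}}g$, reduce \eqref{e5.1} via the Dirichlet-to-Neumann map to a first-order pseudodifferential equation on $\Sigma$ (the paper's operator $H=N_{\Omega^{\pm}}+\i M$ is your $\mathcal L$ up to repackaging), check ellipticity of its principal symbol, and use the positivity of $N_{\Omega^{\pm}}$ from Prop.~\ref{prop01.5b} to upgrade the parametrix to genuine invertibility. The only (cosmetic) difference is that you factor out $N_{\Omega^{\pm}}$ and argue sectoriality of a zeroth-order operator, whereas the paper applies Lax--Milgram directly to the coercive form $(\cdot|H\cdot)_{\tV_{1}(\Sigma)}$ on $H^{\12}(\Sigma;\tV_{1})$; both hinge on exactly the same input.
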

\proof
It suffices to prove the `$+$' case. We rewrite \eqref{e5.1} as
\[
\begin{cases}
(\tilde{K}w)_{s\Sigma}\traa{\Sigma}=u_{\Sigma}, \\[1mm]
 (\rg_{0}| \tilde{K}w)_{\tV_{2}}= u_{s},
\end{cases}
\]
for $u_{\Sigma}= u_{s\Sigma}\traa{\Sigma}\in H^{s-\12}(\Sigma; T^{*}\Sigma)$ and $u_{s}= (\rg_{0}| u)_{\tV_{2}}\traa{\Sigma}\in H^{s-\12}(\Sigma; \cc)$.

Using that $\tilde{K}= \tilde{I}_{2}\tilde{d}$, $\tilde{I}_{2}= \tilde{I}_{2}^{*}$, $\tilde{I}_{2}\rg_{0}= - \rg_{0}$ and the fact that  $\tilde{I}$ does not act on the $_{s\Sigma}$ components, this is equivalent to
\begin{equation}
\label{e5.2}
 \begin{cases}
(\tilde{d}w)_{s\Sigma}\traa{\Sigma}=u_{\Sigma},\\[1mm]
(\rg_{0}|\tilde{d}w)_{\tV_{2}}\traa{\Sigma}=-u_{s}.
\end{cases}
\end{equation}
\def\DtN{Dirichlet-to-Neumann }

Let $w\in H^{s+1}(\Omega^{\pm}; \tV_{1})\cap \bar{H^{1}_{0}}(\Omega^{\pm}; \tV_{1})$ such that $\tD_{1}w=0$ in $\Omega^{+}$. Let us compute $\tilde{d}w\traa{\Sigma}$.   We replace $\p_{t}$ by $-\i\p_{s}$ and set $t=0$ in the expression of $d$ in Prop.~\ref{prop4.2}, which corresponds to the   relationship between $d$ and $\tilde{d}$ at $s=0$ given by \eqref{wick-rotated-ops}.  Keeping in mind that $\ru(0)= \one$ and $\bs(0)= 1$, we obtain
\beq\label{e5.3b}
\bea
 (\tilde{d}w)_{ss}\traa{\Sigma}&= -\i\p_{s}w_{s}\traa{\Sigma}- \12 \trace(r_{0}) w_{s}\traa{\Sigma},\\
 (\tilde{d}w)_{s\Sigma}\traa{\Sigma}&= \12 ( -\i\p_{s}w_{\Sigma}\traa{\Sigma}- \12  \trace(r_{0})w_{\Sigma}\traa{\Sigma}- \rer_{0}w_{\Sigma}\traa{\Sigma}+ d_{\Sigma}w_{s}\traa{\Sigma}),\\
 (\tilde{d}w)_{\Sigma\Sigma}\traa{\Sigma}&=  d_{\Sigma}w_{\Sigma}\traa{\Sigma}- \12 \p_{t}\rh_{0}w_{s}\traa{\Sigma},
\eea
 \eeq
 and hence
 \[
\bea
  \12(\rg_{0}|\tilde{d}w)_{\tV_{2}}\traa{\Sigma}&= - (\tilde{d}w)_{ss}\traa{\Sigma}+ (\rh_{0}|  (\tilde{d}w)_{\Sigma\Sigma}\traa{\Sigma})_{_{h_{0}^{\otimes 2}}}\\
  &=\i \p_{s}w_{s}\traa{\Sigma}+ \12  \trace(r_{0})w_{s}\traa{\Sigma}+ (\rh_{0}| d_{\Sigma}w_{\Sigma})_{_{h_{0}^{\otimes 2}}}- \12 (\rh_{0}| \p_{t}\rh_{0})_{_{h_{0}^{\otimes 2}}}w_{s}\traa{\Sigma}\\
  &=\i \p_{s}w_{s}\traa{\Sigma}- \12 \trace(r_{0})w_{s}\traa{\Sigma}- \delta_{\Sigma}w_{\Sigma}\traa{\Sigma}
\eea
 \]
 
 Since $\tD_{1}w=0$ in $\Omega^{+}$ and $w\traa{\p\Omega^{+}\setminus \Sigma}=0$, we have $(-\p_{s}w)\traa{\Sigma}= N_{\Omega^{+}}w\traa{\Sigma}$ where $N_{\Omega^{+}}$ is the \DtN map for $\tD_{1}$, see \ref{sec01.3.8}.  Setting 
 \[
 y\defeq w\traa{\Sigma}, \quad H= N_{\Omega^{+}}+\i \mat{-\12  \trace(r_{0})}{-\delta_{\Sigma}}{-d_{\Sigma}}{\12  \trace(r_{0})+ \rer_{0}},
 \]
  we can rewrite \eqref{e5.2} as
 \begin{equation}
 \label{e5.3}
 Hy=\col{-\frac{\i}{2}u_{s}}{2\i u_{\Sigma}}.
 \end{equation}
Next, $H$  belongs to $\Psi^{1}(\Sigma; \tV_{1})$. By Prop.~\ref{prop01.5} its principal symbol is
 $$\sigma_{\rm pr}(H)= ({\rm k}\dual \rh_{0}^{-1}{\rm k})^{\12}\one+ \i \mat{0}{-({\rm k}| \cdot)_{\rh_{0}}}{- {\rm k}}{0},$$ where we recall that  $(\rx, {\rm k})$  are the variables in $T^{*}\Sigma$. It follows that $H$ is elliptic in $\Psi^{1}$. 
 
 We claim that $H: H^{s+ \12}(\Sigma; \tV_{1})\to H^{s-\12}(\Sigma; \tV_{1})$ is invertible. Let  $Q_{H}$ be the sesquilinear form $(\cdot| H\cdot)_{\tV_{1}(\Sigma)}$ with domain $H^{\12}(\Sigma; \tV_{1})$. Since $H$ is elliptic, $Q_{H}$ is closed, and in view of Prop.~
 \ref{prop01.5b} $Q_{H}$ is coercive. By the Lax--Milgram theorem this implies that $H: H^{\12}(\Sigma, \tV_{1})\tosim H^{-\12}(\Sigma; \tV)$.  which proves our claim for $s= 0$. For arbitrary $s$ we use the standard argument of commutation with tangential derivatives.

 Therefore there exists a unique solution $y\in H^{s+ \12}(\Sigma; \tV_{1})$ of \eqref{e5.3}. 
We choose  $w= P_{\Omega^{+}}y$, see \ref{sec01.3.8}, i.e.~we take $w$ to be the unique solution of the boundary value problem
\[
\begin{cases}
\tD_{1}w= 0\hbox{ in }\Omega^{+},\\
w\traa{\p \Omega^{+}\setminus \Sigma}=0, \\
w\traa{\Sigma}= v.
\end{cases}
\]
By the arguments in \ref{sec01.3.8}, $w$ belongs to $H^{s+1}(\Sigma^{+}; \tV_{1})\cap \bar{H^{1}_{0}}(\Omega^{+}; \tV_{1})$. This completes the proof of the lemma. \qed

We now give a version of \Lemma \ref{lem:alt2} expressed in terms of Cauchy data. 
 Let us introduce the operator
$$
J_2 = \begin{pmatrix} I_2 \tau_2 & 0 \\ 0 & \one
\end{pmatrix},
$$
 which satisfies
\beq\label{e4.78}
J_2^*=J_2,  \quad J_2^2=\one, \quad  
J_2 q_{2, {\rm phys}} J_2 = \mat{0}{\one}{\one}{0}= \tilde{q}_2.
\eeq
In particular, $J_2$ transforms the physical charge $q_{2, {\rm phys}}$ into the Euclidean charge $\tilde{q}_2$.

Below, we set  $H^{\infty}(\Omega^{\pm}; \tV_{i} )= \bigcap_{s\in\rr}H^{s}(\Omega^{\pm}; \tV_{i} )$ and we define $\cH^{\infty}(\Sigma; \tV_{i}\otimes \cc^{2})$ similarly.

\begin{lemma}\label{lem:synchronous2}  
For all $f\in\Ker K_\Sig^\dagger|_{\coinf}$ there exists $h\in \cH^{\infty}(\Sigma; \tV_{1}\otimes \cc^{2})$ and $k\in \cH^{\infty}(\Sigma; \tV_{2}\otimes \cc^{2})$ such that  
\[ \tilde{c}^\pm_2 f= k + K_\Sig \tilde{c}_{1}^{\pm}h, \quad J_2 k=k.
\] 
\end{lemma}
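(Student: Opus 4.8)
The plan is to reformulate Lemma \ref{lem:alt2}, which concerns sections on $\Omega^{\pm}$, as an identity for Cauchy data on $\Sigma$. Fix a sign $\pm$; the other case is identical. Set $g\defeq\tilde c_2^\pm f$. Since $f\in\coinf(\Sigma;\tV_2)\otimes\cc^2$ lies in $\cH^s(\Sigma;\tV_2\otimes\cc^2)$ for every $s$ and $\tilde c_2^\pm$ is bounded on each $\cH^s$ by Prop.~\ref{prop01.4b}, we have $g\in\cH^\infty(\Sigma;\tV_2\otimes\cc^2)$; write $g=(g_0,g_1)$ with $g_0,g_1\in\bigcap_s H^s(\Sigma;\tV_2)$. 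Choose any $u\in\bar{H^\infty}(\Omega^\pm;\tV_2)$ with $\trho_2^\pm u=g$, e.g.\ $u(s)=\chi(\pm s)(g_0\mp s g_1)$ with $\chi$ a cutoff equal to $1$ near $s=0$ and supported away from $s=\pm T$. Note that $u$ need not solve $\tD_2 u=0$, as Lemma \ref{lem:alt2} is stated for arbitrary $u$.

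Now apply Lemma \ref{lem:alt2} to $u$. The $w$ it produces is built as $w=P_{\Omega^{\pm}}y$ from the unique solution $y$ of an elliptic equation $Hy=(\cdots)$ on $\Sigma$ with $H$ invertible on every Sobolev scale, and $P_{\Omega^{\pm}}$ preserves smoothness; hence the same $w$ serves for all $s$ and $w\in\bar{H^\infty}(\Omega^\pm;\tV_1)\cap\bar{H^1_0}(\Omega^\pm;\tV_1)$ with $\tD_1 w=0$ in $\Omega^\pm$, while $v\defeq u-\tK w$ satisfies $(v_{s\Sigma})\traa\Sigma=0$ and $(\tilde I_2 v)\traa\Sigma=v\traa\Sigma$. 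Put $k\defeq\trho_2^\pm v$ and $h\defeq\trho_1^\pm w$; by the trace theorem $k\in\cH^\infty(\Sigma;\tV_2\otimes\cc^2)$ and $h\in\cH^\infty(\Sigma;\tV_1\otimes\cc^2)$.

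It remains to verify the three assertions. First, since $\tD_1 w=0$ in $\Omega^\pm$ and $\tD_2\tK=\tK\tD_1$, the section $\tK w$ solves $\tD_2(\tK w)=0$ in $\Omega^\pm$, so by the definitions of the Cauchy-surface operators together with $\tK_\Sigma=K_\Sigma$ (eq.~\eqref{e3.6e}) we get $\trho_2^\pm(\tK w)=K_\Sigma h$, whence
\[
\tilde c_2^\pm f=g=\trho_2^\pm u=\trho_2^\pm v+\trho_2^\pm(\tK w)=k+K_\Sigma h .
\]
Second, $h=\trho_1^\pm w$ is the Cauchy datum of a solution of the Dirichlet problem for $\tD_1$ in $\Omega^\pm$, hence lies in $\Ran\tilde c_1^\pm$, so $\tilde c_1^\pm h=h$ by the reproduction property of Calder\'on projectors (compare the proof of Prop.~\ref{prop01.4b} and \cite[Thm.~4.5]{GW2}); thus $\tilde c_2^\pm f=k+K_\Sigma\tilde c_1^\pm h$. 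Third, writing $k=(k_0,k_1)$ with $k_0=v\traa\Sigma$, the vanishing of the mixed component of $k_0$ gives $\tau_2 k_0=k_0$ (as $\tau_2$ only flips the mixed component), and $(\tilde I_2 v)\traa\Sigma=v\traa\Sigma$ gives $I_2 k_0=k_0$ (recall $\tilde I_2=I_2$); since $J_2$ acts as $I_2\tau_2$ on the first slot and trivially on the second, $J_2 k=k$.

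The analytic substance --- invertibility of the operator $H$ assembled from the Dirichlet-to-Neumann map, which relies on Prop.~\ref{prop01.5b} and Lax--Milgram --- already lives in Lemma \ref{lem:alt2}; the only delicate points here are the uniform-in-$s$ bootstrap giving $w,v\in\bar{H^\infty}$ and the use of the Calder\'on reproduction identity. I expect the latter --- ensuring $\trho_1^\pm w\in\Ran\tilde c_1^\pm$ for the \emph{exact} projector, not merely modulo smoothing --- to be the point that needs the most care, since the explicit parametrix formula \eqref{e01.16c} only yields it up to a smoothing term. I also note that the hypothesis $f\in\Ker K_\Sigma^\dagger|_{\coinf}$ is not actually used in this construction (only $f\in\coinf$ enters); it is the class of Cauchy data on which the lemma gets applied in the positivity argument.
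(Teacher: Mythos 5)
Your proof is essentially the paper's: apply Lemma \ref{lem:alt2}, set $h=\trho_1^\pm w$, $k=\trho_2^\pm v$, and read off the three properties. The only deviation is your choice of $u$ as an arbitrary $H^\infty$ extension of the Cauchy data $\tilde c_2^\pm f$ rather than the Dirichlet solution $u=-r^\pm\tD_{2\Omega}^{-1}\trho_2^*\tsig_2 f$ used in the paper; this is immaterial here, since the $w$ produced by Lemma \ref{lem:alt2} depends only on $u\traa{\Sigma}=\pi_0\tilde c_2^\pm f$, so you obtain the same $h$ and hence the same $k=\tilde c_2^\pm f-K_\Sigma h$ (the paper's choice only becomes relevant in the positivity argument, where it is re-derived anyway). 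The one step you leave as a citation --- that $\tilde c_1^\pm h=h$ holds \emph{exactly} --- is settled in the paper by an extension-by-zero argument: let $ew$ denote the extension of $w$ by $0$ to $\Omega^\mp$; since $\tD_1w=0$ in $\Omega^\pm$ and $w\in\bar{H^1_0}(\Omega^\pm;\tV_1)$, the jumps of $w$ and $\p_sw$ across $\Sigma$ give $\tD_1(ew)=\mp\trho_1^*\tsig_1 h$ in $\cD'(\Omega)$, while $ew\traa{\p\Omega}=0$, so $ew=\mp\tD_{1\Omega}^{-1}\trho_1^*\tsig_1 h$ and therefore $h=\trho_1^\pm(ew)=\tilde c_1^\pm h$ by Definition \ref{def01.1}. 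With that supplied the argument is complete; your observation that the hypothesis $f\in\Ker K_\Sigma^\dagger|_{\coinf}$ is not used is consistent with the paper's proof, where it enters only in the subsequent application.
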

\proof  We prove only the $+$ case. Let  $f\in\Ker K_\Sig^\dagger|_{\coinf}$ and
$$
u = -r^+ \tD_{2\Omega}^{-1}\trho_{2}^{*} \tsig_{2} f \in  \overline{H^{1}_{0}}(\Omega^{+};\tV_2).  
$$
We see that $u$ is the solution of 
\[
\begin{cases}
\tD_{2}u= 0\hbox{ in }\Omega^{+},\\
u\traa{\p\Omega^{+}\setminus \Sigma}=0,\\
u\traa{\Sigma}= \pi_{0}\tilde{c}_{2}^{+}f,
\end{cases}
\]
where $\pi_{0}: \col{f_{0}}{f_{1}}\mapsto  f_{0}$ is the projection on the first component.
Commuting tangential derivatives  and using that $\tilde{c}_{2}^{+}f\in \cH^{\infty}(\Sigma; \tV\otimes \cc^{2})$, we obtain that $u\in H^{\infty}(\Omega^{+}; \tV)$. 
 Let $w$ be as in \Lemma \ref{lem:alt2} and $v= u- \tK w$,  $h= \trho_{1}^{+}w, k=\trho_{2}^{+}v$. We have 
 \[
 \tilde{c}_{2}^{+}f= \trho_{2}^{+}u= k+ K_{\Sigma}h
 \]
since $K_{\Sigma}= \tK_{\Sigma}$, see \eqref{e3.6e}. The conditions on $v\traa{\Sigma}$ in \Lemma \ref{lem:alt2} mean that $J_{2}k=k$.  Moreover 
 $h\in \cH^{\infty}(\Sigma; \tV_{1}\otimes \cc^{2})$ hence $k\in \cH^{\infty}(\Sigma; \tV_{2}\otimes \cc^{2})$.  Finally we claim that $h= \tilde{c}_{1}^{+}h$. In fact let $ew$ the extension of $w$ by $0$ in $\Omega^{-}$. Then $\tD_{1}ew= \trho_{1}^{*}\tsig_{1}h$ and $ew\traa{\p \Omega}=0$ so $ew= \tD_{1\Omega}^{-1}\trho_{1}^{*}\tsig_{1}h$, hence  $h= \trho_{1}^{+}w= \trho_{1}^{+}ew= \tilde{c}_{1}^{+}h$. This completes the proof of the lemma. \qed
 
 \smallskip
 
 We now prove the positivity of \calde projectors for the physical charge $q_{2,\rm phys}$ on  \modif{$\Ker K_{\Sigma}^{\dag}|_{\coinf}$, modulo a smoothing term arising}.
 
\begin{proposition}\label{prop:positivity}
Let $\tilde{c}_{2}^{\pm}$ be the \calde projectors for $\tilde D_2$. Then
\[
\pm (f| q_{2, {\rm phys}} \modif{(}\tilde c_2^\pm + \modif{\tr_{2,-\infty}^\pm)} f)_{\tV(\Sigma)\otimes \cc^{2}}\geq 0 \ \  \forall  f\in \Ker K_{\Sigma}^{\dag}|_{\coinf}
\]
\modif{where ${\tr_{2,-\infty}^\pm}\in \cW^{-\infty}(\Sigma; \tV_2 \otimes \cc^2)$.}
\end{proposition}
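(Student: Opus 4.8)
The plan is to exploit the special form of the Euclidean scalar product on tensors with no mixed components that are symmetric under trace reversal, combined with the positivity of the Dirichlet-to-Neumann maps (Prop.~\ref{prop01.5b}) and the gauge-reduction lemma (Lem.~\ref{lem:synchronous2}). First I would fix $f\in \Ker K_{\Sigma}^{\dag}|_{\coinf}$ and apply Lem.~\ref{lem:synchronous2} in the $+$ case to write $\tilde{c}_2^+ f = k + K_\Sigma \tilde{c}_1^+ h$ with $J_2 k = k$ and $h\in \cH^\infty$, $k\in \cH^\infty$. Because $\tilde{c}_2^+$ is self-adjoint for $q_{2,\rm phys}$ only modulo a smoothing term (Prop.~\ref{prop:calde}\eqref{item:caldnew2} holds \emph{exactly}, but gauge-invariance Prop.~\ref{prop:calde}\eqref{item:cald3} holds only modulo $K_{-\infty}$), the cross terms $(f|q_{2,\rm phys}K_\Sigma \tilde{c}_1^+ h)$ will not vanish on the nose; using $(\tilde c_2^+)^\dagger = \tilde c_2^+$ and $K_\Sigma^\dagger f = 0$ together with $\tilde c_2^+ K_\Sigma = K_\Sigma \tilde c_1^+ + K_{-\infty}$ one gets
\[
(f| q_{2,\rm phys}\,\tilde c_2^+ K_\Sigma \tilde c_1^+ h)_{\tV(\Sigma)\otimes\cc^2} = (f| q_{2,\rm phys}\, K_{-\infty} \tilde c_1^+ h)_{\tV(\Sigma)\otimes\cc^2},
\]
which is a smoothing remainder; this is the first place where the correction $\tr_{2,-\infty}^\pm$ enters.

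Next I would reduce the positivity of the remaining term $(f|q_{2,\rm phys} k)$ to a manifestly nonnegative quantity. Here the point is that $J_2 k = k$ means $k$ has, on $\Sigma$, no mixed $_{s\Sigma}$ components and is trace-reversal symmetric, and on such Cauchy data the physical charge $q_{2,\rm phys}$ agrees with (a positive multiple of) the Euclidean charge $\tilde q_2$ — this is exactly the content of \eqref{e4.78}, $J_2 q_{2,\rm phys} J_2 = \tilde q_2$. Writing $k = \trho_2^+ v$ where $v = u - \tilde K w$ solves $\tilde D_2 v = 0$ in $\Omega^+$ with $v\traa{\p\Omega^+\setminus\Sigma}=0$ (from the proof of Lem.~\ref{lem:synchronous2}), I would apply the Green formula \eqref{e01.green2} on $\Omega^+$ to $v$. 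Since $\tilde D_2 v = 0$ and $v\traa{\p\Omega^+\setminus\Sigma}=0$, the boundary term at $s=T$ drops by \eqref{greendir}, so
\[
0 = 2\,\eta_{\Omega^+}(v,v) - (\trho^+ v| \tilde q_2 \trho^+ v)_{\tV(\Sigma)\otimes\cc^2},
\]
i.e.~$(k|\tilde q_2 k)_{\tV(\Sigma)\otimes\cc^2} = 2\eta_{\Omega^+}(v,v) = 2(\p_s v|\p_s v)_{\tV_2(\Omega^+)} + 2(v|(\Re\ta_2)v)_{\tV_2(\Omega^+)} \geq 0$, using that $\Re\ta_2 \geq 0$ in the relevant range (which follows from the coercivity of $\trh$ and (\ref{e01.9})-type estimates; one may shrink $T$ if needed). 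Combining this with $J_2 k = k$ and \eqref{e4.78} yields $(f|q_{2,\rm phys}k)_{\tV(\Sigma)\otimes\cc^2} = (k|q_{2,\rm phys}k)_{\tV(\Sigma)\otimes\cc^2} \geq 0$, up to the earlier smoothing remainder; for the last identity I would also use the exact self-adjointness Prop.~\ref{prop:calde}\eqref{item:caldnew2} to replace $f$ by $\tilde c_2^+ f$ in the pairing and absorb $K_\Sigma \tilde c_1^+ h$ against $K_\Sigma^\dagger f = 0$ modulo smoothing.

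Assembling, the nonnegative main term plus the smoothing errors from the failure of exact gauge-invariance gives $\pm(f|q_{2,\rm phys}(\tilde c_2^\pm + \tr_{2,-\infty}^\pm)f)_{\tV(\Sigma)\otimes\cc^2}\geq 0$ with $\tr_{2,-\infty}^\pm\in\cW^{-\infty}(\Sigma;\tV_2\otimes\cc^2)$ collecting the contributions of $K_{-\infty}$ (and its adjoint, handled via Prop.~\ref{prop:calde}\eqref{item:caldnew2}); the $-$ case is entirely analogous, working in $\Omega^-$ and using the other sign in \eqref{e01.green2} and \eqref{greendir}. The main obstacle, and the reason a smoothing term is unavoidable here, is precisely that $\tilde c_2^\pm$ fails to satisfy \emph{exact} gauge invariance with respect to $K_\Sigma$ because of the Dirichlet boundary conditions at $s=\pm T$; one must carefully track that every use of gauge-invariance in the argument only costs a $\cW^{-\infty}$ term, and in particular check that $K_{-\infty}\tilde c_1^\pm h$ and $\tilde c_2^{\mp *}\bigl(\begin{smallmatrix}0&\tau_2\\\tau_2&0\end{smallmatrix}\bigr)^{-1}(\text{smoothing})$ remain smoothing after pairing against the fixed $f$, which follows from the mapping properties of $\cW^{-\infty}$ and the fact that $\tilde c_i^\pm\in\Psi^1$. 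A secondary technical point is ensuring $\Re\ta_2\geq 0$ (or at least bounded below by a harmless negative constant absorbed into the smoothing term after further shrinking $T$), which is where the uniform coercivity hypothesis \eqref{e01.8} on $\trh$ is used.
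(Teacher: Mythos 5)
Your overall strategy is the same as the paper's (decompose $\tilde c_2^+f=k+K_\Sigma\tilde c_1^+h$ via Lem.~\ref{lem:synchronous2}, kill the cross terms modulo smoothing using $(\tilde c_2^+)^\dagger=\tilde c_2^+$ and the approximate gauge invariance, reduce $(k|q_{2,\rm phys}k)$ to $(k|\tilde q_2 k)$ via $J_2k=k$ and \eqref{e4.78}, and conclude by Green's formula). However, there is a genuine gap at the key positivity step. You apply \eqref{e01.green2} to $v=u-\tilde Kw$ and invoke \eqref{greendir} to discard the boundary term at $s=T$; but \eqref{greendir} requires $v\in\bar{H^1_0}(\Omega^+;\tV_2)$, i.e.~$v\traa{s=T}=0$, and this fails in general. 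Indeed, Lem.~\ref{lem:alt2} only gives $w\in\bar{H^1_0}(\Omega^+;\tV_1)$, so $w(T)=0$, whereas $\tilde Kw$ contains the normal derivative $\p_s w$, which has no reason to vanish at $s=T$. Hence the term $(\trho_T v|\tilde q_2\trho_T v)$ survives and your claimed identity $(k|\tilde q_2k)=2\eta_{\Omega^+}(v,v)$ is not correct as stated. (A secondary issue: $\eta_{\Omega^+}(v,v)\geq 0$ is not automatic from ``$\Re\ta_2\geq 0$''; one needs the coercivity \eqref{eq:coercive} of $Q_{\Omega^+}$, which the paper uses via $2\Re Q_{\Omega^+}(v,v)\geq 0$.)

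The paper's proof avoids this by not applying Green's formula to $k$ at all: it introduces $\tilde k\defeq\tilde c_2^+(f-K_\Sigma h)=\trho_2^+\tD_{2\Omega}^{-1}\trho_2^*\tsig_2(f-K_\Sigma h)$, which is by construction the trace of a solution of the \emph{homogeneous Dirichlet} problem in $\Omega^+$, so that \eqref{e01.green2} and \eqref{greendir} apply legitimately and give $(\tilde k|\tilde q_2\tilde k)=2\Re Q_{\Omega^+}(v,v)\geq 0$. One then checks, using $\tilde c_2^++\tilde c_2^-=\one$ and $\tilde c_2^-K_\Sigma\tilde c_1^+=\tilde c_2^-K_{-\infty}$ from the proof of Prop.~\ref{prop:calde}~\eqref{item:cald3}, that $k-\tilde k=-\tilde c_2^-K_{-\infty}h$ is smoothing, and absorbs the difference into $\tr^+_{2,-\infty}$. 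Your argument could in principle be repaired by showing that the surviving $s=T$ boundary term is itself smoothing in $f$ (since $w$ near $s=T$ is governed by $W^+(T,0)\in\cW^{-\infty}$ by Lem.~\ref{lemma01.2}), but you would then have to justify writing that remainder in the form $(f|q_{2,\rm phys}\,\tr\, f)$; the paper's substitution $k\rightsquigarrow\tilde k$ is the cleaner route.
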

\proof We only prove  the `$+$' case. Let $f\in \Ker K_{\Sigma}^{\dag}|_{\coinf}$ and let $k, h$ be as in \Lemma \ref{lem:synchronous2}.   We recall that \[
\begin{array}{l}
\tilde{c}_{2}^{+}f= k + K_{\Sigma}\tilde{c}_{1}^{+}h,\\[2mm]
K_{\Sigma}^{\dag}\tilde{c}_{2}^{+}f= \tilde{c}_{1}^{+}K_{\Sigma}^{\dag}f \modif{ + K^{\dagger}_{-\infty} f}= \modif{ K^{\dagger}_{-\infty} f},\\[2mm]
K_{\Sigma}^{\dag}k= K_{\Sigma}^{\dag}(\tilde{c}_{2}^{+}f- K_{\Sigma}\tilde{c}_{1}^{+}h)=\modif{ K^{\dagger}_{-\infty} f}.
\end{array}
\]
 Since $(\tilde{c}_{2}^{+})^{2}= \tilde{c}_{2}^{+}$ on $\cH^{\infty}(\Sigma; \tV_{2}\otimes \cc^{2})$ and $\tilde{c}_{2}^{+}= (\tilde{c}_{2}^{+})^{\dag}$ we obtain
\[
\bea
&(f| q_{2, {\rm phys}}\tilde{c}_{2}^{+}f)_{\tV_{2}(\Sigma)\otimes \cc^{2}}= (\tilde{c}_{2}^{+}f| q_{2, {\rm phys}}\tilde{c}_{2}^{+}f)_{\tV_{2}(\Sigma)\otimes \cc^{2}}\\
&=(k| q_{2, {\rm phys}}\tilde{c}_{2}^{+}f)_{\tV_{2}(\Sigma)\otimes \cc^{2}} \modif{+   (h| q_{1, {\rm phys}}\tilde{c}_{1}^{+} K^{\dagger}_{-\infty} f)_{\tV_{1}(\Sigma)\otimes \cc^{2}} }\\
&= (k| q_{2, {\rm phys}}k)_{\tV_{2}(\Sigma)\otimes \cc^{2}}  \modif{+ (\tilde{c}_{1}^{+} K^{\dagger}_{-\infty} f | q_{1, {\rm phys}}h)_{\tV_{1}(\Sigma)\otimes \cc^{2}}+  (h| q_{1, {\rm phys}}\tilde{c}_{1}^{+} K^{\dagger}_{-\infty} f)_{\tV_{1}(\Sigma)\otimes \cc^{2}} }
\eea
\]
Since $J_{2}k= k$ we have by \eqref{e4.78}:
$$
 (k| q_{2, {\rm phys}}k)_{\tV_{2}(\Sigma)\otimes \cc^{2}}= (k | \tilde{q}_{2}k)_{\tV_{2}(\Sigma)}.
$$
Next, $\modif{\tilde k}\defeq \tilde{c}_{2}^{+}(f- K_{\Sigma}\tilde{c}_{1}^{+}h)= \tilde{c}_{2}^{+}( f- K_{\Sigma}h)$, hence $\modif{\tilde k}= \trho_{2}^{+}v$ for $v= \tD_{2\Omega}^{-1} \trho_{2}^{*}\tsig_{2}(f- K_{\Sigma} h)$. Since $\tD_{2}v= 0$ in $\Omega^{+}$ with $v\traa{\p \Omega^{+}\setminus \Sigma}=0$ we obtain by  Green's formula 
\eqref{e01.green2} that
\beq\label{eq:kaku}
(\modif{\tilde k} | \tilde{q}_{2}\modif{\tilde k})_{\tV_{2}(\Sigma)}= 2\Re Q_{\Omega^{+}}(v, v)\geq 0,
\eeq
where the positivity follows from coercivity of $Q_{\Omega^{+}}$, see \eqref{eq:coercive}. \modif{Finally, }
$$
\modif{k=\tilde k - \tilde c_2^- K_\Sigma \tilde c_1^+ h = \tilde k - \tilde c_2^- K_{-\infty} h.}
$$
\modif{In conclusion, $
(f| q_{2, {\rm phys}}\tilde{c}_{2}^{+}f)_{\tV_{2}(\Sigma)\otimes \cc^{2}}
$
 equals \eqref{eq:kaku} modulo 
 $$
 (f| q_{2, {\rm phys}}\tilde{r}_{2,-\infty}^{+}f)_{\tV_{2}(\Sigma)\otimes \cc^{2}},
 $$ where $\tilde{r}_{2,-\infty}^{+}\in \cW^{-\infty}(\Sigma; \tV_2 \otimes \cc^2)$ is the sum of terms obtained by composition of $K_{-\infty}$ with $\tilde c_i^+$, $i=1,2$, $K_\Sigma$, $J_2$, and the Poisson operator  from \Lemma \ref{lem:alt2}.  } \qed

\appendix

  \section{}
  
  \label{app}\init
  
  \subsection{Symbolic fixed point}
  We recall a useful way to solve  recursive equations that are often encountered in symbolic calculus. We refer the reader to \cite[Lem.~A.1]{GW0} for the proof.
\begin{proposition}\label{prop:fixed-point}
 Suppose $F: \cinfb(I; \Psi^{\infty}(M; V))\to \cinfb(I; \Psi^{\infty}(M; V))$ is a map such that:
 \[
 \begin{array}{rl}
 i)&F: \cinfb(I; \Psi^{0}(M; V))\to \cinfb(I; \Psi^{-1}(M; V)),\\[2mm]
 ii)&b_{1}- b_{2}\in \cinfb(I; \Psi^{-j}(M; V)) \implies F(b_{1})- F(b_{2})\in \cinfb(I; \Psi^{-j-1}(M; V)) \ \forall j\in \nn.
 \end{array}
 \]
 Moreover, let  $a\in \cinfb(I; \Psi^{0}(M; V))$. Then there exists a solution $b\in \cinfb(I; \Psi^{0}(M; V))$, unique modulo $\cinfb(I; \Psi^{-\infty}(M; V))$, of the equation
 \[
 b= a+ F(b)\hbox{ mod }\cinfb(I; \Psi^{-\infty}(M; V)).
 \]
 \end{proposition}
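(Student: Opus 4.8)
The plan is to solve the equation by Picard iteration at the level of symbols and then sum the resulting asymptotic series. First I would set $b_{0}\defeq a$ and $b_{k+1}\defeq a+F(b_{k})$ for $k\in\nn$; since all iterates will remain in $\cinfb(I;\Psi^{0}(M;V))$, the map $F$ is only ever evaluated on elements of its domain. I then claim that $b_{k+1}-b_{k}\in\cinfb(I;\Psi^{-k-1}(M;V))$ for all $k\in\nn$. For $k=0$ this is $b_{1}-b_{0}=F(a)\in\cinfb(I;\Psi^{-1}(M;V))$ by hypothesis~i). For the inductive step one has $b_{k+1}-b_{k}=F(b_{k})-F(b_{k-1})$, and if $b_{k}-b_{k-1}\in\cinfb(I;\Psi^{-k}(M;V))$ then hypothesis~ii) with $j=k$ gives $b_{k+1}-b_{k}=F(b_{k})-F(b_{k-1})\in\cinfb(I;\Psi^{-k-1}(M;V))$. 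In particular each $b_{k}$ differs from $a$ only by a term of order $-1$, so $b_{k}\in\cinfb(I;\Psi^{0}(M;V))$, and the telescoping partial sums $b_{N}=a+\sum_{k=0}^{N-1}(b_{k+1}-b_{k})$ are the partial sums of an asymptotic series whose terms have orders $0,-1,-2,\dots$.

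Next I would invoke a Borel-type asymptotic summation lemma for Shubin's calculus on the manifold of bounded geometry $(M,\hat g)$ acting on the bundle $V$, with the extra parameter living in the class $\cinfb(I;\cdot)$: given $c_{k}\in\cinfb(I;\Psi^{-k}(M;V))$ there exists $b\in\cinfb(I;\Psi^{0}(M;V))$ with $b-\sum_{k=0}^{N-1}c_{k}\in\cinfb(I;\Psi^{-N}(M;V))$ for every $N\in\nn$. Applying this to $c_{0}=a$ and $c_{k}=b_{k}-b_{k-1}$ for $k\geq1$ yields $b\in\cinfb(I;\Psi^{0}(M;V))$ with $b-b_{N}\in\cinfb(I;\Psi^{-N}(M;V))$ for all $N$.

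It remains to check that this $b$ solves the equation modulo $\cinfb(I;\Psi^{-\infty}(M;V))$ and that it is unique with this property. For the first point, fix $N\geq1$ and write, using $b_{N}=a+F(b_{N-1})$,
\[
b-a-F(b)=(b-b_{N})+\big(F(b_{N-1})-F(b)\big).
\]
Here $b-b_{N}\in\cinfb(I;\Psi^{-N}(M;V))$ by construction, while $b-b_{N-1}\in\cinfb(I;\Psi^{-(N-1)}(M;V))$, so hypothesis~ii) with $j=N-1$ gives $F(b_{N-1})-F(b)\in\cinfb(I;\Psi^{-N}(M;V))$. Hence $b-a-F(b)\in\cinfb(I;\Psi^{-N}(M;V))$ for every $N$, i.e.~$b-a-F(b)\in\cinfb(I;\Psi^{-\infty}(M;V))$. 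For uniqueness, if $b,b'\in\cinfb(I;\Psi^{0}(M;V))$ both solve the equation modulo smoothing, then $b-b'=F(b)-F(b')+r$ with $r\in\cinfb(I;\Psi^{-\infty}(M;V))$; I would prove by induction on $j$ that $b-b'\in\cinfb(I;\Psi^{-j}(M;V))$ — the case $j=0$ being trivial, and the step following since ii) gives $F(b)-F(b')\in\cinfb(I;\Psi^{-j-1}(M;V))$ and $r\in\cinfb(I;\Psi^{-j-1}(M;V))$ — whence $b-b'\in\cinfb(I;\Psi^{-\infty}(M;V))$.

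The only step that is not purely formal is the asymptotic summation: one needs a version of Borel's lemma producing a symbol whose construction is uniform with respect to a bounded atlas of $(M,\hat g)$ and to the local trivialisations of $V$, and compatible with the time-parameter class $\cinfb(I;\cdot)$. This is obtained in the standard way, cutting off the $k$-th term of the series by a factor $\chi(\xi/R_{k})$ with $R_{k}\to\infty$ chosen rapidly enough that the relevant seminorms of the (now convergent) sum are controlled; this is precisely the content of \cite[Lem.~A.1]{GW0} and is where the bounded-geometry hypotheses are genuinely used. Everything else reduces to repeated application of the two structural properties i) and ii) of $F$.
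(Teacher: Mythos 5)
Your argument is correct and is essentially the standard proof of this symbolic fixed-point lemma: Picard iteration $b_{k+1}=a+F(b_k)$, the telescoping estimate $b_{k+1}-b_k\in\cinfb(I;\Psi^{-k-1})$ from hypotheses i)--ii), asymptotic (Borel) summation in Shubin's calculus with the $\cinfb(I;\cdot)$ parameter, and the induction for uniqueness. The paper does not reprove the statement but defers to \cite[Lem.~A.1]{GW0}, whose proof proceeds along the same lines, the asymptotic summation in the bounded-geometry symbol classes being, as you note, the only non-formal ingredient.
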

  \subsection{Bounded analytic geometry}\label{app.1}
Let us first prove the claim in \ref{an-bg.2.3} about the independence of the spaces $\AT^{p}_{q}(M, \hat{g})$ in Def.~\ref{defp0.2} on the choice of the bounded analytic atlas $\{(U_{x}, \psi_{x})\}_{x\in M}$.  

Clearly  to define an manifold $(M, \hat{g})$ of bounded analytic geometry, there is some freedom in the ranges of $\psi_{x}$ and the holomorphy domain of $\hat{g}_{x}$.   In fact we can require equivalently in Def.~\ref{def1.1} that 
$\psi_{x}: U_{x}\tosim B_{n}(0, 1)$ and 
\ben
\item $\{\hat{g}_{x}\}_{x\in M}$ is bounded in $\BT(B_{n}(0, 1), \delta)$,
\item $c^{-1}\delta\leq \hat{g}_{x}\leq c \delta$ uniformly in $x\in M$,
\item $\{\hat{g}_{x}\}_{x\in M}$ is bounded in $\AT(C_{n}(0, \epsilon), \delta)$ for some $\epsilon>0$.
\een

The next proposition is the analog of \cite[Thm.~2.4]{GOW}. It implies that as $\{(U_{x}, \psi_{x})\}_{x\in M}$ one can take $U_{x}= B^{{\hat g}}_{M}(x, \epsilon)$,  for some $\epsilon>0$ small enough, and $\psi_{x}^{-1}= \exp^{{\hat g}}_{x}\circ e_{x}$, where $e_{x}: (\rr^{n}, \delta)\to (T_{x}M, {\hat g}(x))$ is a linear isometry.
It follows  from Prop.~\ref{propadd1.1} that the definition of the spaces $\AT^{p}_{q}(M, \hat{g})$ in Def.~\ref{defp0.2} is indeed independent on the choice of the bounded analytic atlas $\{(U_{x}, \psi_{x})\}_{x\in M}$, since we can  always take the geodesic maps $\exp^{\hat{g}}_{x}\circ e_{x}$
instead of $\psi_{x}^{-1}$. 

\begin{proposition}\label{propadd1.1}
 There exists $\epsilon, \epsilon_{1}, \epsilon_{2}>0$ such that if 
 \[
 \chi_{x}\defeq \psi_{x}\circ \exp^{{\hat g}}_{x}\circ e_{x}: B_{n}(0, \epsilon)\to \psi_{x}(B^{{\hat g}}_{M}(x, \epsilon))
 \]
 then 
 the family $\{\chi_{x}\}_{x\in M}$ is bounded in $\AT(C_{n}(0, \epsilon_{1}))$, $C_{n}(0, \epsilon_{2})\subset \chi_{x}(C_{n}(0, \epsilon_{1}))$ and the family $\{\chi_{x}^{-1}\}_{x\in M}$ is bounded in $\AT(C_{n}(0, \epsilon_{2}))$.
\end{proposition}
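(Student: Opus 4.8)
The plan is to prove Prop.~\ref{propadd1.1} by transferring the classical (non-analytic) bounded geometry statement \cite[Thm.~2.4]{GOW} to the analytic category, using the holomorphic extensions guaranteed by the bounded analytic atlas and Cauchy estimates to upgrade $C^\infty$-boundedness to analytic boundedness. First I would recall the classical statement: for a Riemannian manifold of bounded geometry, the transition maps $\chi_x = \psi_x\circ\exp^{\hat g}_x\circ e_x$ between geodesic normal charts and the charts of a bounded atlas form a bounded family in $\BT(C_n(0,\epsilon_1),\delta)$ for suitable $\epsilon,\epsilon_1$, with $C_n(0,\epsilon_2)\subset\chi_x(C_n(0,\epsilon_1))$ and $\{\chi_x^{-1}\}$ bounded as well (this is exactly \cite[Thm.~2.4]{GOW}, modulo passing from balls to polydisks, which is harmless by shrinking radii). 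So the only new content is the analyticity-with-uniform-estimates in the sense of the spaces $\AT$.

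The key steps, in order, are as follows. (1) Observe that $\hat g_x = (\psi_x^{-1})^*\hat g$ extends holomorphically to $D_n(0,\epsilon)$ with $\{\hat g_x\}$ bounded in $\AT^0_2(C_n(0,\epsilon),\delta)$, by the equivalent formulation of bounded analytic geometry recalled just before the proposition; in particular the Christoffel symbols $\Gamma^k_{ij}(\hat g_x)$, being polynomial in $\hat g_x$, $\hat g_x^{-1}$ and $\partial\hat g_x$, also extend holomorphically to a slightly smaller polydisk with uniformly bounded $\AT$-norms (using that $\hat g_x\geq c^{-1}\delta$ so $\hat g_x^{-1}$ stays holomorphic and bounded after shrinking, by a Neumann-series/Cramer argument). (2) The geodesic map $\exp^{\hat g}_x\circ e_x$ in coordinates is the time-$1$ flow of the geodesic ODE $\ddot y^k = -\Gamma^k_{ij}(y)\dot y^i\dot y^j$ with initial data $y(0)=0$, $\dot y(0)= e_x(v)$, $v\in\rr^n$; since the right-hand side is holomorphic in $y$ and (real-)polynomial hence holomorphic in the velocity, with coefficients holomorphic and uniformly bounded in a complex polydisk, the Cauchy--Kowalevski/holomorphic ODE dependence theorem gives that the solution is holomorphic in $(v$-extended-to-$\cc^n$$)$ on a uniform polydisk $D_n(0,\epsilon_1)$, with $\sup$-norm bounds depending only on the bounds of the $\Gamma$'s — i.e.\ $\{\chi_x\}$ is bounded in $\AT(C_n(0,\epsilon_1))$. (3) For the inverse: $\chi_x$ has $d\chi_x(0)$ uniformly close to the identity (since $\exp$ differentiates to the identity and $e_x$ is an isometry, and $\psi_x$ has controlled derivative), so by a quantitative holomorphic inverse function theorem — e.g.\ via the Lagrange inversion / fixed-point argument for $z\mapsto z - d\chi_x(0)^{-1}(\chi_x(z)-\chi_x(0))$ — one gets a uniform polydisk $C_n(0,\epsilon_2)\subset\chi_x(C_n(0,\epsilon_1))$ on which $\chi_x^{-1}$ is holomorphic with uniformly bounded $\AT$-norm. (4) Conclude as in the text: since the geodesic charts are themselves a legitimate bounded analytic atlas and any two bounded analytic atlases are related by such uniformly-$\AT$-bounded transition maps with uniformly-$\AT$-bounded inverses, pulling back a tensor $T$ through $\chi_x$ (respectively $\chi_x^{-1}$) preserves boundedness in $\AT^p_q(C_n(0,\epsilon'),\delta)$ by the chain rule and the fact that $\AT$ is a Banach algebra under composition with bounded analytic maps — hence $\AT^p_q(M,\hat g)$ is independent of the choice of atlas.

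The main obstacle I expect is Step (2)–(3): making the holomorphic dependence of the geodesic flow, and the holomorphic invertibility, \emph{quantitative and uniform in $x$}, i.e.\ producing polydisk radii $\epsilon_1,\epsilon_2$ and $\AT$-norm bounds that depend only on the constants $c$ and the $\AT$-bounds of $\{\hat g_x\}$, not on $x$. This is essentially a Cauchy--Kowalevski-with-parameters estimate: one runs the majorant-series / Picard-iteration proof of the holomorphic ODE existence theorem and checks that every constant appearing is controlled by the $\|\cdot\|_{r}$-norms of the coefficient functions $\Gamma^k_{ij}(\hat g_x)$, which are uniform by hypothesis. Given this, Step (3) is a standard quantitative analytic implicit/inverse function theorem argument (again majorant series), and Step (4) is bookkeeping with the chain rule in the $\AT$-norm. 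Since the same method underlies Thm.~\ref{th-omar-analytic} (also proved in this appendix) and Thm.~\ref{cauchykow}, I would set up a single lemma on holomorphic ODE flows with uniform Cauchy--Kowalevski estimates and quote it here and there.
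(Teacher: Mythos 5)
Your proposal follows essentially the same route as the paper: uniform $\AT$-bounds on the Christoffel symbols of $\hat g_x$, Cauchy--Kowalevski with uniform estimates for the geodesic ODE to bound $\{\exp_0^{\hat g_x}\}$ (hence $\{\chi_x\}=\{\exp_0^{\hat g_x}\circ T_x\}$) in $\AT(C_n(0,\epsilon_1))$, and a quantitative holomorphic inverse function theorem for $\chi_x^{-1}$. The only slip is the claim that $d\chi_x(0)$ is uniformly close to the identity: one has $d\chi_x(0)=T_x=D_x\psi_x\circ e_x$ with $T_x^{\rm t}\hat g_x(0)T_x=\delta$, so $T_x$ and $T_x^{-1}$ are merely uniformly bounded, which is all your fixed-point argument actually needs.
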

\proof
We choose $\epsilon$ as in the proof of \cite[Thm.~2.4]{GOW}, and $\epsilon_{1}$ such that  $C_{n}(0, 2\epsilon_{1})\subset B_{n}(0, \epsilon)$  and (3) above holds for $2\epsilon_{1}$.

We have $\psi_{x}\circ \exp_{x}^{\hat g}\circ e_{x}= \exp^{{\hat g}_{x}}_{0}\circ T_{x}$, for $T_{x}= D_{x}\psi_{x}\circ e_{x}\in L(\rr^{n})$. Using that $T_{x}^{\rm t}{\hat g}_{x}(0)T_{x}= e_{x}^{\rm t} {\hat g}(x)e_{x}= \delta$ and (2)  we obtain that $\{T_{x}\}_{x\in M}$ is bounded in $L(\rr^{n})$.

Denoting by $\Gamma_{ij, x}^{k}$ the Christoffel symbols of the metric ${\hat g}_{x}$,  $x(t)= \exp^{{\hat g}_{x}}_{0}(tv)$ solves the geodesic equations:
\[
\begin{cases}
\ddot{x}^{k}(t)= \Gamma^{k}_{ij, x}(x(t))\dot{x}^{i}(t)\dot{x}^{j}(t), \\
x(0)=0, \   \dot{x}(0)=v.
\end{cases}
\]
Since $\{\hat{g}_{x}\}_{x\in M}$ is bounded in  $\AT(C_{n}(0, \epsilon_{1}), \delta)$,  the family $\{\Gamma^{k}_{ij, x}\}_{x\in M}$ is bounded in $\AT(C_{n}(0, 2\epsilon_{1}))$ and we deduce from the  Cauchy--Kowalevski  theorem that   the family of diffeomorphisms $\{\exp^{\hat{g}_{x}}_{0}\}_{x\in M}$ is bounded in $\AT(C_{n}(0, \epsilon_{1}))$. The same is true for $\chi_{x}= \exp^{\hat{g}_{x}}_{0}\circ T_{x}$, since $\{T_{x}\}_{x\in M}$ is bounded in $L(\rr^{n})$.  Since by (2) $D\chi_{x}(0)$ is uniformly bounded in $L(\rr^{n})$, we can find $\epsilon_{2}>0$ such that  $C_{n}(0, \epsilon_{2})\subset \chi_{x}(C_{n}(0, \epsilon_{1}))$ and the family $\{\chi_{x}^{-1}\}_{x\in M}$ is bounded in $\AT(C_{n}(0, \epsilon_{2}))$. \qed

\subsubsection{Proof of Thm.~\ref{th-omar-analytic}}
 Without loss of generality we can assume that $U= M$. All statements are already proved in Thm.~\ref{th-omar}, except for the analyticity. 

For $x\in \Sigma$ we choose $U_{x}, \psi_{x}$ such that if $\Sigma_{x}= \psi_{x}(\Sigma\cap U_{x})$, then $\Sigma_{x}\cap C_{n}(0, \epsilon)= C_{n}(0, \epsilon)\cap \{v_{n}=0\}$. If $\rg_{x}= (\psi_{x}^{-1})^{*}\rg$ and $n_{x}$ is the future unit normal vector field to $\Sigma_{x}$ for $\rg_{x}$, then we can decompose $n_{x}$ as $n_{x}= n_{x}'+ \lambda_{x}e_{n}$. The families $\{\rg_{x}\}_{x\in \Sigma}$, $\{\rg_{x}^{-1}\}_{x\in \Sigma}$,  $\{n_{x}'\}_{x\in \Sigma}$ and  $\{\lambda_{x}\}_{x\in \Sigma}$ are bounded in $\AT^{0}_{2}(C_{n-1}(0, \epsilon), \delta)$, $\AT^{2}_{0}(C_{n-1}(0, \epsilon), \delta)$, $\AT^{1}_{0}(C_{n-1}(0, \epsilon), \delta)$ and $\AT^{0}_{0}(C_{n-1}(0, \epsilon), \delta)$ respectively.  We introduce the normal geodesic flow to $\Sigma_{x}$ for $\rg_{x}$:
\[
\chi_{x}:\ \begin{array}{l}
\open{-\epsilon_{1}, \epsilon_{1}}\times C_{n-1}(0,\epsilon_{1})\to C_{n}(0,\epsilon)\\
(t, v')\mapsto \exp_{(v',0)}^{g_{x}}(tn_{x}(v', 0)).
\end{array}
\]
By the Cauchy--Kowalevski theorem we obtain that  for $\epsilon_{1}$ small enough $\{\chi_{x}\}_{x\in \Sigma}$ is bounded in $\AT(C_{n}(0, \epsilon_{1}), \delta)$. We have $\chi_{x}^{*}\rg_{x}= - dt^{2}+ \rh_{x}(t, v')dv'^{2}$, and  $\{\rh_{x}\}_{x\in \Sigma}$, $\{\rh_{x}^{-1}\}_{x\in \Sigma}$ are bounded in $\AT^{0}_{2}(C_{n}(0, \epsilon_{1}), \delta)$ and $\AT^{2}_{0}(C_{n}(0, \epsilon_{1}), \delta)$ respectively, using the same properties of $\chi_{x}, \rg_{x}, \rg_{x}^{-1}$ recalled above. \qed

\bigskip

{\small
\subsubsection*{Acknowledgments} S.M.~was supported by the DFG research grant MU 4559/1-1 
``Hadamard States in Linearized Quantum Gravity''. Support from the grant
ANR-20-CE40-0018 is gratefully acknowledged. The research leading to these results has received funding from the European Union's Horizon 2020 research and innovation programme under the Marie Sk{\l}odowska-Curie grant agreement No 754340.   \medskip }

 \bibliographystyle{abbrv}
 \bibliography{linearizedgravity}

\begin{thebibliography}{10}

\bibitem{AFO}
B.~Allen, A.~Folacci, and A.~C. Ottewill.
\newblock {Renormalized graviton stress-energy tensor in curved vacuum
  space-times}.
\newblock {\em Phys. Rev. D}, 38(4):1069--1082, 1988.

\bibitem{Anderson2008}
M.~T. Anderson.
\newblock {On boundary value problems for Einstein metrics}.
\newblock {\em Geom. Topol.}, 12(4):2009--2045, 2008.

\bibitem{AA}
A.~Ashtekar and A.~Magnon-Ashtekar.
\newblock {On the symplectic structure of general relativity}.
\newblock {\em Commun. Math. Phys.}, 86(1):55--68, 1982.

\bibitem{Bar2012}
C.~B{\"{a}}r and N.~Ginoux.
\newblock {Classical and Quantum Fields on Lorentzian Manifolds}.
\newblock In C.~B{\"{a}}r, J.~Lohkamp, and M.~Schwarz, editors, {\em Glob.
  Differ. Geom. Springer Proc. Math. 17}, pages 359--400. Springer-Verlag,
  2012.

\bibitem{BGP}
C.~B{\"{a}}r, N.~Ginoux, and F.~Pf{\"{a}}ffle.
\newblock {\em {Wave Equations on Lorentzian Manifolds and Quantization}}.
\newblock European Mathematical Society Publishing House, Z{\"{u}}rich, 2007.

\bibitem{BDM}
M.~Benini, C.~Dappiaggi, and S.~Murro.
\newblock {Radiative observables for linearized gravity on asymptotically flat
  spacetimes and their boundary induced states}.
\newblock {\em J. Math. Phys.}, 55(8):082301, 2014.

\bibitem{borthwick}
J.~Borthwick.
\newblock {Maximal Kerr–de Sitter spacetimes}.
\newblock {\em Class. Quantum Gravity}, 35(21):215006, 2018.

\bibitem{BFH}
R.~Brunetti, K.~Fredenhagen, T.-P. Hack, N.~Pinamonti, and K.~Rejzner.
\newblock {Cosmological perturbation theory and quantum gravity}.
\newblock {\em J. High Energy Phys.}, 2016(8):32, 2016.

\bibitem{BFR}
R.~Brunetti, K.~Fredenhagen, and K.~Rejzner.
\newblock {Quantum gravity from the point of view of locally covariant quantum
  field theory}.
\newblock {\em Commun. Math. Phys.}, 345(3):741--779, 2016.

\bibitem{DMP}
C.~Dappiaggi, V.~Moretti, and N.~Pinamonti.
\newblock {\em {Hadamard States from Light-like Hypersurfaces}}, volume~25 of
  {\em SpringerBriefs in Mathematical Physics}.
\newblock Springer International Publishing, Cham, 2017.

\bibitem{DS}
C.~Dappiaggi and D.~Siemssen.
\newblock {Hadamard states for the vector potential on asymptotically flat
  spacetimes}.
\newblock {\em Rev. Math. Phys.}, 25(01):1350002, 2013.

\bibitem{DH}
J.~J. Duistermaat and L.~H{\"{o}}rmander.
\newblock {Fourier integral operators. II}.
\newblock {\em Acta Math.}, 128:183--269, 1972.

\bibitem{FH}
C.~J. Fewster and D.~S. Hunt.
\newblock {Quantization of linearized gravity in cosmological vacuum
  spacetimes}.
\newblock {\em Rev. Math. Phys.}, 25(02):1330003, 2013.

\bibitem{FP}
C.~J. Fewster and M.~J. Pfenning.
\newblock {A quantum weak energy inequality for spin-one fields in curved
  space–time}.
\newblock {\em J. Math. Phys.}, 44(10):4480, 2003.

\bibitem{FV}
C.~J. Fewster and R.~Verch.
\newblock {The necessity of the Hadamard condition}.
\newblock {\em Class. Quantum Gravity}, 30(23):235027, 2013.

\bibitem{FS}
F.~Finster and A.~Strohmaier.
\newblock {Gupta–Bleuler quantization of the Maxwell field in globally
  hyperbolic space-times}.
\newblock {\em Ann. Henri Poincar{\'{e}}}, 16(8):1837--1868, 2015.

\bibitem{FR}
K.~Fredenhagen and K.~Rejzner.
\newblock {Batalin-Vilkovisky formalism in perturbative Algebraic Quantum Field
  Theory}.
\newblock {\em Commun. Math. Phys.}, 317(3):697--725, 2013.

\bibitem{Fritzsch2023}
K.~Fritzsch, D.~Grieser, and E.~Schrohe.
\newblock {The Calder{\'{o}}n projector for fibred cusp operators}.
\newblock {\em J. Funct. Anal.}, 285(10):110127, 2023.

\bibitem{FNW}
S.~Fulling, F.~Narcowich, and R.~M. Wald.
\newblock {Singularity structure of the two-point function in quantum field
  theory in curved spacetime, II}.
\newblock {\em Ann. Phys. (N. Y).}, 136(2):243--272, 1981.

\bibitem{furlani}
E.~P. Furlani.
\newblock {Quantization of the electromagnetic field on static space–times}.
\newblock {\em J. Math. Phys.}, 36(3):1063--1079, 1995.

\bibitem{HHI}
C.~G{\'{e}}rard.
\newblock {The Hartle–Hawking–Israel state on spacetimes with stationary
  bifurcate Killing horizons}.
\newblock {\em Rev. Math. Phys.}, 33(08):2150028, 2021.

\bibitem{GHW}
C.~G{\'{e}}rard, D.~H{\"{a}}fner, and M.~Wrochna.
\newblock {The Unruh state for massless fermions on Kerr spacetime and its
  Hadamard property}.
\newblock {\em arXiv:2008.10995}, 2020.

\bibitem{GOW}
C.~G{\'{e}}rard, O.~Oulghazi, and M.~Wrochna.
\newblock {Hadamard states for the Klein–Gordon equation on Lorentzian
  manifolds of bounded geometry}.
\newblock {\em Commun. Math. Phys.}, 2017.

\bibitem{GS}
C.~G{\'{e}}rard and T.~Stoskopf.
\newblock {Hadamard states for quantized Dirac fields on Lorentzian manifolds
  of bounded geometry}.
\newblock {\em Rev. Math. Phys.}, 34(04), 2022.

\bibitem{GW0}
C.~G{\'{e}}rard and M.~Wrochna.
\newblock {Construction of Hadamard states by pseudo-differential calculus}.
\newblock {\em Commun. Math. Phys.}, 325(2):713--755, 2014.

\bibitem{GW1}
C.~G{\'{e}}rard and M.~Wrochna.
\newblock {Hadamard states for the linearized Yang–Mills equation on curved
  spacetime}.
\newblock {\em Commun. Math. Phys.}, 337(1):253--320, 2015.

\bibitem{GW2}
C.~G{\'{e}}rard and M.~Wrochna.
\newblock {Analytic Hadamard states, Calder{\'{o}}n projectors and Wick
  rotation near analytic Cauchy surfaces}.
\newblock {\em Commun. Math. Phys.}, 366(1):29--65, 2019.

\bibitem{GN}
J.~Guven and D.~N{\'{u}}{\~{n}}ez.
\newblock {Schwarzschild-de Sitter space and its perturbations}.
\newblock {\em Phys. Rev. D}, 42(8):2577--2584, 1990.

\bibitem{HS}
T.-P. Hack and A.~Schenkel.
\newblock {Linear bosonic and fermionic quantum gauge theories on curved
  spacetimes}.
\newblock {\em Gen. Relativ. Gravit.}, 45(5):877--910, 2013.

\bibitem{HHV}
D.~H{\"{a}}fner, P.~Hintz, and A.~Vasy.
\newblock {Linear stability of slowly rotating Kerr black holes}.
\newblock {\em Invent. Math.}, 223(3):1227--1406, 2021.

\bibitem{dS1}
A.~Higuchi, D.~Marolf, and I.~A. Morrison.
\newblock {de Sitter invariance of the dS graviton vacuum}.
\newblock {\em Class. Quantum Gravity}, 28(24):245012, 2011.

\bibitem{Hintz2018a}
P.~Hintz and A.~Vasy.
\newblock {The global non-linear stability of the Kerr–de Sitter family of
  black holes}.
\newblock {\em Acta Math.}, 220(1):1--206, 2018.

\bibitem{hollands}
S.~Hollands.
\newblock {Renormalized quantum Yang-Mills fields in curved spacetime}.
\newblock {\em Rev. Math. Phys.}, 20(09):1033--1172, 2008.

\bibitem{H}
L.~H{\"{o}}rmander.
\newblock {\em {The Analysis of Linear Partial Differential Operators I.
  Distribution Theory and Fourier Analysis}}.
\newblock Springer Verlag, Berlin, second edition, 1990.

\bibitem{junker}
W.~Junker.
\newblock {Hadamard states, adiabatic vacua and the construction of physical
  states for scalar quantum fields on curved spacetime}.
\newblock {\em Rev. Math. Phys.}, 08(08):1091--1159, 1996.

\bibitem{K2}
T.~Kato.
\newblock {Integration of the equation of evolution in a Banach space.}
\newblock {\em J. Math. Soc. Japan}, 5(2):208--234, 1953.

\bibitem{K1}
T.~Kato.
\newblock {\em {Perturbation Theory for Linear Operators}}.
\newblock Springer-Verlag, Berlin Heidelberg, 2nd edition, 1995.

\bibitem{Kh}
I.~Khavkine.
\newblock {Characteristics, conal geometry and causality in locally covariant
  field theory}.
\newblock {\em arXiv:1211.1914}, 2012.

\bibitem{kontsevich}
M.~Kontsevich and G.~Segal.
\newblock {Wick rotation and the positivity of energy in quantum field theory}.
\newblock {\em Q. J. Math.}, 72(1-2):673--699, 2021.

\bibitem{dS2}
S.~P. Miao, P.~J. Mora, N.~C. Tsamis, and R.~P. Woodard.
\newblock {Perils of analytic continuation}.
\newblock {\em Phys. Rev. D}, 89(10):104004, 2014.

\bibitem{moncrief}
V.~Moncrief.
\newblock {Decompositions of gravitational perturbations}.
\newblock {\em J. Math. Phys.}, 16(8):1556--1560, 1975.

\bibitem{N2}
B.~O'Neill.
\newblock {\em {The Geometry of Kerr Black Holes}}.
\newblock Dover Publications, 2014.

\bibitem{rejzner}
K.~Rejzner.
\newblock {\em {Perturbative Algebraic Quantum Field Theory}}.
\newblock Mathematical Physics Studies. Springer International Publishing,
  Cham, 2016.

\bibitem{R}
H.~Ringstr{\"{o}}m.
\newblock {\em {The Cauchy Problem in General Relativity}}.
\newblock European Mathematical Society Publishing House, Z{\"{u}}rich, 2009.

\bibitem{SV}
H.~Sahlmann and R.~Verch.
\newblock {Microlocal spectrum condition and Hadamard form for vector-valued
  quantum fields in curved spacetime}.
\newblock {\em Rev. Math. Phys.}, 13(10):1203--1246, 2001.

\bibitem{schapira}
P.~Schapira.
\newblock {Wick rotation for D-modules}.
\newblock {\em Math. Physics, Anal. Geom.}, 20(3):21, 2017.

\bibitem{SG}
J.~Schmid and M.~Griesemer.
\newblock {Kato's theorem on the integration of non-autonomous linear evolution
  equations}.
\newblock {\em Math. Physics, Anal. Geom.}, 17(3-4):265--271, 2014.

\bibitem{Sh}
M.~A. Shubin.
\newblock {Spectral theory of elliptic operators on non-compact manifolds}.
\newblock In {\em M{\'{e}}thodes semi-classiques Vol. 1 - {\'{E}}cole
  d'{\'{E}}t{\'{e}} (Nantes, juin 1991)}, number 207 in Ast{\'{e}}risque.
  Soci{\'{e}}t{\'{e}} math{\'{e}}matique de France, 1992.

\bibitem{witten}
E.~Witten.
\newblock {A note on boundary conditions in Euclidean gravity}.
\newblock {\em Rev. Math. Phys.}, 33(10), 2021.

\bibitem{wrochnawick}
M.~Wrochna.
\newblock {Wick rotation of the time variables for two-point functions on
  analytic backgrounds}.
\newblock {\em Lett. Math. Phys.}, 110(3):585--609, 2020.

\bibitem{WZ}
M.~Wrochna and J.~Zahn.
\newblock {Classical phase space and Hadamard states in the BRST formalism for
  gauge field theories on curved spacetime}.
\newblock {\em Rev. Math. Phys.}, 29(04):1750014, 2017.

\bibitem{Wuensch}
V.~W{\"{u}}nsch.
\newblock {Cauchy's problem and Huygens' principle for relativistic higher spin
  wave equations in an arbitrary curved space-time}.
\newblock {\em Gen. Relativ. Gravit.}, 17(1):15--38, 1985.

\end{thebibliography}
\end{document}